\newtheorem{lemma}{Lemma}[section]
\newtheorem{defin}[lemma]{Definition}
\newenvironment{definition}{\begin{defin}\em}{\end{defin}}
\newtheorem{proposition}[lemma]{Proposition}
\newtheorem{theorem}[lemma]{Theorem}
\newtheorem{corollary}[lemma]{Corollary}
\newcounter{claim}[lemma]
\newenvironment{claim}{\begin{list}{}{\listparindent\parindent%
                                      \leftmargin0cm\parsep\parskip}%
                                      \item[] \refstepcounter{claim}%
                                      {\em Claim \arabic{claim} }}%
                                      {\end{list}}
\newcommand{\apmap}[3]{#1 \colon #2 \Vdash #3}
\newcommand{\dda}{\mathord{\mbox{\makebox[0pt][l]{\raisebox{-.4ex}
                           {$\downarrow$}}$\downarrow$\,}}}
\newcommand{\down}{\mathclose\downarrow}                           
\newcommand{\fun}[3]{#1 \colon #2 \rightarrow #3}                           
\newcommand{\set}[2]{\{\,#1 \mid #2 \,\}}
\newcommand{\fcbael}[2]{\pair{\mathfrak{#1} | \mathfrak{#2}}} 
\newcommand{\fcbaset}[2]{\pair{\mathcal{#1} | \mathcal{#2}}} 
\newcommand{\fsubset}{\subseteq_\mathrm{fin}}
\newcommand{\low}{\mathopen\downarrow\,}
\newcommand{\pair}[1]{\langle #1 \rangle}
\def\ac{\mathop{\mathstrut\rm AC}}
\def\am{\mathop{\mathstrut\rm AM}}
\def\ap{\mathop{\mathstrut\rm AP}}
\def\Cl{\mathop{\mathstrut\rm CL}}
\def\con{\mathop{\mathstrut\rm Con}\nolimits}
\def\Con{\mathop{\mathstrut\rm CON}\nolimits}
\def\ds{\mathop{\mathstrut\rm DS}}
\def\Ev{\mathop{\mathstrut\rm EV}}
\def\Id{\mathop{\mathstrut\rm Id}}
\def\fct{\mathop{\mathstrut\rm fct}}
\def\pr{\mathop{\mathstrut\rm pr}\nolimits}
\def\rs{\mathop{\mathstrut\rm RS}}
\def\sp{\mathop{\mathstrut\rm SP}}
\def\st{\mathop{\mathstrut\rm st}}
\def\St{\mathop{\mathstrut\rm ST}}
\def\ucl{\overline{\Cl}}
\begin{document}

\title{Information Systems with Witnesses:\\ The Function Space Construction
\thanks{The research leading to these results has received funding from the People Programme (Marie Curie Actions) of the European Union's Seventh Framework Programme FP7/2007-2013/ under REA grant agreement no. PIRSES-GA-2013-612638-CORCON.}}
\author{Dieter Spreen\\
Department of Mathematics, University of Siegen\\
 57068 Siegen, Germany\\
and\\
Department of Decision Sciences, University of South Africa\\
P.O. Box 392, 0003 Pretoria, South Africa}
\date{}

\maketitle

\begin{abstract}
Information systems with witnesses have been introduced in Ref.~\cite{sp21} as a logic-style representation of L-domains: The category of such information systems with approximable mappings as morphisms is equivalent to the category of L-domains with Scott continuous functions, which is known to be Cartesian closed. In the present paper a direct proof of the Cartesian closure of the category of information systems with witnesses and approximable mapppings is given. As is shown, the collection of approximable mappings between two information systems with witnesses comes with a natural information system structure.
\end{abstract}

\section{Introduction}\label{sec-intro}

In a recent paper (Ref.~\cite{sp21}), the author introduced information systems with witnesses as a logic-style representation of  L-domains. L-domains have independently been introduced  by Coquand (cf.\ Ref.~\cite{co89}) and Jung (cf.\ Refs.~\cite{ju89,ju90}) and are known to form one of the two maximal Cartesian closed full subcategories of the continuous domains. They generalise the bounded-complete ones: whereas in a bounded-complete domain every bounded subset has a global least upper bound, in an L-domain such sets may have different local least upper bounds depending on the upper bounds they have.

The idea to represent classes of domains via logical calculi goes back to Dana Scott's seminal 1982 paper (cf.\ Ref.~\cite{sco82}), in which he  introduced information systems to capture the bounded-complete algebraic domains. An information system consists of a set of tokens to be thought of as atomic statements about a computational process, a consistency predicate telling us which finite sets of such statements contain consistent information, and an entailment relation saying what atomic statements are entailed by which consistent sets of these. Theories  of such a logic, also called states, i.e.\ finitely consistent and entailment-closed sets of atomic statements, form a  domain with respect to set inclusion. A state represents consistent information. So, any finite collection of substates must contain consistent information as well, and this fact is witnessed by any of  its upper bounds.  

Whereas in Scott's approach the consistency witnesses are hidden, in the new approach they are made explicit. This allows to consider the more general situation in which a finite set of tokens may have different consistency witnesses, and the result of entailment may depend on them.  As was shown in Ref.~\cite{sp21}, the theories, or states, of such a more general information system form an L-domain, and, up to isomorphism, each L-domain can be obtained in this way. Moreover, there is an equivalence between the categories of information systems with witnesses and L-domains.

The category of information systems has approximable mappings as morphisms. These are relations between the consistent subsets of one information system and the tokens of another. Entailment is a particular approximable mapping.

As  mentioned earlier, the category of L-domains and Scott continuous functions is Cartesian closed. Because of the equivalence of this category with the category of information systems with witnesses we know that the latter one is Cartesian closed as well. However, this means that in concrete situations we have to pass back and forth between information systems and domains in order to construct the exponent of two information systems with witnesses. In this paper we present a direct proof of the Cartesian closure of the category of information systems with witnesses. In particular, we present a  construction of the exponent. This will be an information system with witnesses the states of which are exactly the approximable mappings between the information systems under consideration.  

Whereas for Scott's information systems capturing the bounded-complete algebraic domains, the function space construction is straightforward and well understood, the situation is rather intricate in the present case. This due to the fact that consistency is only locally defined and we have to deal with consistency witnesses explicitly. Moreover, as is known from Hoofman's work (cf.\ Ref.~\cite{ho93}), in the continuous case entailment is required to allow for interpolation.

The paper is organized as follows: Section~\ref{sec-dom} contains basic definitions and results from domain theory. In Section~\ref{sec-infosys} relevant definitions and facts about  information systems with witnesses  are recalled from Ref.~\cite{sp21}. Approximable mappings between such information systems are considered in Section~\ref{sec-am}. Section~\ref{sec-fctsp} is concerned with the function space construction: an information system with witnesses is presented the states of which are exactly the approximable mappings between two given systems.  
As is shown in Section~\ref{sec-cc}, this  information system is indeed the exponent of the two given ones.
The paper closes with the discussion of an application of the results obtained in proof assistants and program extraction.

\section{Domains: basic definitions and results}\label{sec-dom}

For any set $A$, we write $X \fsubset A$ to mean that $X$ is finite subset of $A$. The collection of all subsets of $A$ will be denoted by $\mathcal{P}(A)$ and that of all finite subsets by $\mathcal{P}_f(A)$. Moreover, for two sets $A_1$ and $A_2$, we let $\pr_1$ and $\pr_2$, respectively, be the canonical projections of $A_1 \times A_2$ onto the first and second component. For $\nu, \mu \in \{ 1, 2 \}$, set $\pr_{\nu, \mu} = \pr_\nu \circ \pr_\mu$.

Let $(D, \sqsubseteq)$ be a poset. $D$ is \emph{pointed} if it contains a least element $\bot$. For an element $x \in D$, $\low x$ denotes the principal ideal generated by $x$, i.e., $\low x = \set{y \in D}{y \sqsubseteq x}$. A subset $S$ of $D$ is called \emph{consistent} if it has an upper bound. $S$ is \emph{directed}, if it is nonempty and every pair of elements in $S$ has an upper bound in $S$. $D$ is a \emph{directed-complete partial order} (\emph{dcpo}), if every directed subset $S$ of $D$ has a least upper bound $\bigsqcup S$ in $D$, and $D$ is \emph{bounded-complete} if every consistent subset has a least upper bound.

Assume that $x, y$ are elements of a poset $D$. Then $x$ is said to \emph{approximate} $y$, written $x \ll y$, if for any directed subset $S$ of $D$ the least upper bound of which exists in $D$, the relation $y \sqsubseteq \bigsqcup S$ always implies the existence of some $u \in S$ with $x \sqsubseteq u$. Moreover, $x$ is \emph{compact} if $x \ll x$. A subset $B$ of $D$ is a \emph{basis} of $D$, if for each $x \in D$ the set $\dda\!_B x = \set{u \in B}{u \ll x}$ contains a directed subset with least upper bound $x$. Note that the set of all compact elements of $D$ is included in every basis of $D$.  A directed-complete partial order $D$ is said to be \emph{continuous} (or a \emph{domain}) if it has a basis and it is called \emph{algebraic} (or an \emph{algebraic domain}) if its compact elements form a basis. A pointed bounded-complete domain is called \emph{bc-domain}. Standard references for domain theory and its applications are \cite{gs, gu, aj, dom, ac, gie}.

\begin{lemma}\label{lem-preordprop}
In a poset $D$ the following statements hold for all $x, y, z \in D$: \begin{enumerate}
\item\label{lem-preordprop-0} The approximation relation $\ll$ is transitive.
\item\label{lem-preordprop-1} $x \ll y \Longrightarrow x \sqsubseteq y$.
\item\label{lem-preordprop-2} $x \ll y \sqsubseteq z \Longrightarrow x \ll z$.
\item\label{lem-preordprop-4} If $D$ has a least element $\bot$, then $\bot \ll x$.
\item\label{lem-preordprop-3} If $F \subseteq \low x \cap \low y$ such that the least upper bounds $\bigsqcup^x F$ and $\bigsqcup^y F$, respectively, exist relative to $\low x$ and $\low y$, then 
\[
x, y \sqsubseteq z \Longrightarrow \bigsqcup\nolimits^x F = \bigsqcup\nolimits^y F.
\]
\item\label{lem_preordprop-5} If $D$ is a continuous domain with basis $B$, and $M \fsubset D$, then
\[
M \ll x \Longrightarrow (\exists v \in B) M \ll v \ll x,
\]
where $M \ll x$ means that $m \ll x$, for any $m \in M$.

\end{enumerate}
\end{lemma}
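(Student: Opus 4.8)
The plan is to dispatch the order-theoretic items (\ref{lem-preordprop-0}), (\ref{lem-preordprop-1}), (\ref{lem-preordprop-2}) and (\ref{lem-preordprop-4}) directly from the definition of $\ll$, and to treat (\ref{lem-preordprop-3}) and (\ref{lem_preordprop-5}) separately, since these carry the genuine content. I would prove (\ref{lem-preordprop-2}) first: given $x \ll y \sqsubseteq z$ and a directed $S$ with existing $\bigsqcup S$ and $z \sqsubseteq \bigsqcup S$, transitivity of $\sqsubseteq$ yields $y \sqsubseteq \bigsqcup S$, so $x \ll y$ supplies some $u \in S$ with $x \sqsubseteq u$; hence $x \ll z$. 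Item (\ref{lem-preordprop-1}) follows by instantiating the definition of $x \ll y$ at the directed set $S = \{ y \}$, whose least upper bound is $y$. For (\ref{lem-preordprop-0}) assume $x \ll y$ and $y \ll z$: for directed $S$ with $z \sqsubseteq \bigsqcup S$, $y \ll z$ produces $u \in S$ with $y \sqsubseteq u$, and then $x \ll y \sqsubseteq u$ gives $x \ll u$ by (\ref{lem-preordprop-2}) and $x \sqsubseteq u$ by (\ref{lem-preordprop-1}); as $u \in S$ this is $x \ll z$. Item (\ref{lem-preordprop-4}) is immediate, since directed sets are nonempty, so any $u \in S$ satisfies $\bot \sqsubseteq u$. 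Along the way I would record the dual one-step fact, proved exactly like (\ref{lem-preordprop-2}) straight from the definition, that $m \sqsubseteq r \ll v$ implies $m \ll v$; this is needed below.

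For (\ref{lem-preordprop-3}) write $a = \bigsqcup^{x} F$ and $b = \bigsqcup^{y} F$; both are upper bounds of $F$ with $a \sqsubseteq x$ and $b \sqsubseteq y$, hence $a, b \sqsubseteq z$. One cannot compare $a$ and $b$ directly, since $b$ need not lie in $\low x$, so I would route the argument through a common least upper bound in $\low z$. Using that $\low z$ is bounded-complete (the defining property of the L-domains studied here; note that $z$ itself bounds $F$ inside $\low z$), the least upper bound $c = \bigsqcup^{z} F$ exists. Since $a$ is an upper bound of $F$ lying in $\low z$, minimality of $c$ gives $c \sqsubseteq a$; then $c \sqsubseteq a \sqsubseteq x$, so $c$ is an upper bound of $F$ inside $\low x$, and minimality of $a$ gives $a \sqsubseteq c$, whence $a = c$. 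By symmetry $b = c$, and therefore $\bigsqcup^{x} F = \bigsqcup^{y} F$. I would flag that this is the one item which is not pure order theory: a poset carrying two incomparable local suprema $a, b$ of a two-element set $F$, each below a common $z$, shows the claim fails without the bounded-completeness of $\low z$.

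The main obstacle is the interpolation property (\ref{lem_preordprop-5}). Here $D$ is a continuous domain with basis $B$, so for every $w$ the set $\dda\!_B w$ contains a directed subset; fix one such $S_w \subseteq \dda\!_B w$ with $\bigsqcup S_w = w$. The plan is to build, for the target $x$, the set $R = \bigcup_{s \in S_x} S_s$ of basis elements lying \emph{two steps} below $x$, and to show $R$ is directed with $\bigsqcup R = x$. Every $r \in R$ satisfies $r \ll s \ll x$ for some $s \in S_x$, hence $r \ll x$ by (\ref{lem-preordprop-0}) and $r \sqsubseteq s \sqsubseteq x$ by (\ref{lem-preordprop-1}). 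Directedness of $R$ follows by taking, for $r_1 \in S_{s_1}$ and $r_2 \in S_{s_2}$, a common bound $s_3 \in S_x$ of $s_1, s_2$ (directedness of $S_x$), observing $r_1, r_2 \ll s_3 = \bigsqcup S_{s_3}$ via (\ref{lem-preordprop-2}), and absorbing both into a single element of $S_{s_3}$. As $D$ is a dcpo, $\bigsqcup R$ exists; since $x$ bounds $R$ while $s = \bigsqcup S_s \sqsubseteq \bigsqcup R$ for every $s \in S_x$, one gets $\bigsqcup R = \bigsqcup S_x = x$. Finally, from $M \ll x = \bigsqcup R$ with $M$ finite, directedness of $R$ yields a single $r \in R$ with $m \sqsubseteq r$ for all $m \in M$; choosing the witnessing $s \in S_x$ with $r \ll s$ and setting $v = s \in B$, the dual one-step fact $m \sqsubseteq r \ll v \Rightarrow m \ll v$ gives $M \ll v$, while $v = s \ll x$ completes $M \ll v \ll x$. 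The delicate point is precisely the verification that $R$ is directed, which is where the transitivity and right-monotonicity of $\ll$ established in the earlier items are consumed.
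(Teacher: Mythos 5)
Your proposal is correct, but there is nothing in the paper to measure it against: Lemma~\ref{lem-preordprop} is stated without proof, as standard domain-theoretic background delegated to the cited references. Your treatment of transitivity, of $x \ll y \Rightarrow x \sqsubseteq y$, of right-monotonicity and of $\bot \ll x$ consists of the expected instantiations of the definition of $\ll$, and your proof of the interpolation law~\ref{lem_preordprop-5} is the classical one: form a directed family of two-step approximants of $x$, verify that its least upper bound is $x$, and absorb the finitely many elements of $M$ below a single member of that family. Working with chosen directed subsets $S_w \subseteq \dda\!_B w$ rather than with the full set of $a \in B$ such that $a \ll b \ll x$ for some $b \in B$ is a harmless variant, and in fact fits this paper's definition of a basis (which only guarantees a directed subset of $\dda\!_B x$, not directedness of $\dda\!_B x$ itself) particularly well.

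The one place where you genuinely add something is Property~\ref{lem-preordprop-3}, and your observation there is right: read literally over an arbitrary poset, the statement fails. Your counterexample sketch is sound; concretely, take $F = \{p, q\}$ incomparable with two incomparable common upper bounds $a, b$, and add $x, y, z$ with $a \sqsubseteq x \sqsubseteq z$, $b \sqsubseteq y \sqsubseteq z$ and no relations beyond those forced. Then $\bigsqcup^x F = a$ and $\bigsqcup^y F = b$ exist, $x, y \sqsubseteq z$, yet $a \neq b$; note that $\bigsqcup^z F$ does not exist, since $a$ and $b$ are incomparable minimal upper bounds of $F$ in $\low z$. Your repair is also the right one: once $c = \bigsqcup^z F$ exists, $c \sqsubseteq \bigsqcup^x F \sqsubseteq x$ forces $\bigsqcup^x F = c$, and symmetrically $\bigsqcup^y F = c$. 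This missing hypothesis is available wherever the lemma is actually applied in the paper, since there $D$ is an L-domain. One refinement to your write-up: the L-domain axiom as defined here only yields binary, hence finite, local suprema in $\low z$; for infinite $F$ the existence of $\bigsqcup^z F$ additionally needs directed completeness of $\low z$, so the ``bounded-completeness of $\low z$'' you invoke is a consequence of the L-domain structure rather than its literal definition and deserves a line of justification.
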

Property~\ref{lem_preordprop-5} is known as the \emph{interpolation law}.

\begin{definition}
Let $D$ and $D'$ be posets. A function $\fun{f}{D}{D'}$ is \emph{Scott continuous} if it is monotone and for any directed subset $S$ of $D$ with existing least upper bound,
\[
\bigsqcup f(S) = f(\bigsqcup S).
\]
\end{definition}

With respect to the pointwise order the set $[D \to D']$ of all Scott continuous functions between two dcpo's $D$ and $D'$ is a dcpo again. Observe that it need not be continuous even if $D$ and $D'$ are. This is the case, however, if $D'$ is an L-domain (cf.\ Ref.~\cite{aj}).

\begin{definition}
A pointed\footnote{Note that in Ref.~\cite{gie} pointedness is not required.} domain $D$ is an \emph{L-domain}, if each pair $x, y \in D$ bounded above by $z \in D$ has a least upper bound $x \sqcup^z y$ in $\low z$.
\end{definition}

Obviously, every bc-domain is an L-domain.
As has been shown by Jung (cf.\ Refs.~\cite{ju89,ju90}), the category $\mathbf{L}$ of L-domains is one of the two maximal Cartesian closed full subcategories of the category $\mathbf{CONT_\perp}$ of pointed domains and Scott continuous maps. The same holds for the category $\mathbf{aL}$ of algebraic L-domains with respect to the category $\mathbf{ALG_\perp}$ of pointed algebraic domains.  The one-point domain is the terminal object in these categories and the categorical product $D \times E$  of two domains $D$ and $E$ is the Cartesian product of the underlying sets  ordered coordinatewise.

For domains $D$ and $D'$ and basic elements $d \in D$ and $d' \in D'$ the \emph{single-step function} $\fun{(d \searrow d')}{D}{D'}$ is defined by
\[
(d \searrow d')(x) = \begin{cases}
					d' & \text{if $d \ll x$,}\\
					\bot' & \text{otherwise.}
				\end{cases}
\]				
As is well known, every Scott continuous function $\fun{f}{D}{D'}$ is the least upper-bound of all single-step functions approximating it:
\[
f = \bigsqcup \set{(d \searrow d')}{d' \ll f(d)}.
\]
In general, however, the set of these single-step functions is not directed. A way to get out of this problem is to require the existence of joins of bounded finite collections of single-step functions. Such joins are called \emph{step functions}. 

If $D'$ is bounded-complete, the pointwise least upper bound $\bigsqcup_{\nu = 1}^n (d_\nu \searrow d'_\nu)$ exists, if the set $\set{d'_\nu}{1 \le \nu \le n \wedge d_\nu \ll x}$ is bounded for all $x \in D$. The cost of generalising this to the case of L-domains is at least the burden of bookkeeping where least upper bounds are taken. In particular, if $(d_\nu \searrow d'_\nu)$, $\nu = 1, \ldots, n$, are single-step functions below $f$, then their least upper bound in $\down f$, written $\bigsqcup_{1 \le \nu \le n}^f (d_\nu \searrow d'_\nu)$, is given by
\[
\bigsqcup_{1 \le \nu \le n}\nolimits^f (d_\nu \searrow d'_\nu)(x) = \bigsqcup_{\nu :\, d_\nu \ll x}\nolimits^{f(x)} d'_\nu.
\]

\section{Information systems with witnesses}\label{sec-infosys}

In this section, the ideas outlined in the introduction are made precise:  An information system with witnesses consists of a set $A$ of tokens, a consistency predicate $\Con$ classifying which finite sets of tokens are consistent with which token as witness, and an entailment relation between pairs of consistent sets and associated witnesses on the one side, and arbitrary tokens on the other.  The conditions that have to be satisfied are grouped. There are requirements which consistency predicate and entailment relation have to meet in which the consistency witness is kept fixed. They are well known from Scott's information systems and Hoofman's extension of this notion to the continuous case.  In addition, we find conditions that specify the interplay between consistency witnesses. 

Note that we sometimes write $X \in \Con(i)$ instead of $(i, X) \in \Con$. Proofs of the results can be found in Ref.~\cite{sp21}.

\begin{definition}\label{dn-infsys}
Let $A$ be a set, $\Delta \in A$, $\Con \subseteq A \times \mathcal{P}_f(A)$, and $\mbox{$\vdash$} \subseteq \Con \times A$. Then $(A, \Con, \vdash, \Delta)$ is an \emph{information system with witnesses} if the following conditions hold, for all $i, j, a \in A$ and all finite subsets $X, Y$ of $A$:
\begin{enumerate}
\item\label{dn-infsys-1}
$\{i\} \in \Con(i)$

\item\label{dn-infsys-2}
$Y \subseteq X \wedge X \in \Con(i) \Rightarrow Y\in \Con(i)$

\item\label{dn-infsys-3}
$(i, \emptyset) \vdash \Delta$

\item\label{dn-infsys-4}
$X \in \Con(i) \wedge (i, X) \vdash Y \Rightarrow Y \in \Con(i)$

\item\label{dn-infsys-5}
$X, Y \in \Con(i) \wedge X \subseteq Y \wedge (i, X) \vdash a \Rightarrow (i, Y) \vdash a$

\item\label{dn-infsys-6}
$X \in \Con(i) \wedge (i, X) \vdash Y \wedge (i, Y) \vdash a \Rightarrow (i, X) \vdash a$

\item\label{dn-infsys-10}
$(i,X) \vdash a \Rightarrow (\exists Z \in \Con(i)) (i,X)\vdash Z  \wedge (i, Z) \vdash a$

\item\label{dn-infsys-11}
$(i,X) \vdash Y \Rightarrow (\exists e \in A) (i, X)\vdash e  \wedge Y \in \Con(e)$

\item\label{dn-infsys-7}
$\{i\} \in \Con(j) \Rightarrow\Con(i) \subseteq \Con(j)$

\item\label{dn-infsys-8}
$\{i\}\in \Con(j) \wedge X \in \Con(i) \wedge(i,X) \vdash a \Rightarrow (j,X) \vdash a$

\item\label{dn-infsys-9}
$\{i\}\in\Con(j) \wedge X\in\Con(i) \wedge (j, X) \vdash a \Rightarrow (i,X) \vdash a$

\end{enumerate}
\end{definition}

All requirements are very natural:  Each token witnesses its own consistency (\ref{dn-infsys-1}). If the consistency of some set is witnessed by $i$, the same holds for all of its subsets (\ref{dn-infsys-2}).  $\Delta$ is entailed by any set of information and each of its witnesses, i.e., it represents global truth~(\ref{dn-infsys-3}). By Condition~(\ref{dn-infsys-4})  entailment  preserves consistency. If a set $X$ with consistency witness $i$ entails $a$, so does any bigger set with the same witness (\ref{dn-infsys-5}). For fixed witness, entailment is transitive~(\ref{dn-infsys-6}).  Consistency and entailment are preserved when moving from a witness $i$ to another one $j$ with respect to which $i$ is consistent  (\ref{dn-infsys-7}, \ref{dn-infsys-8}). Moreover, entailment is conservative in that case: what is entailed with respect to witness $j$ from a set consistent with respect to $i$ is already entailed with respect to $i$ (\ref{dn-infsys-9}). Conditions~(\ref{dn-infsys-10}) and (\ref{dn-infsys-11}) are both interpolation properties. They can be combined into one, called \emph{Global Interpolation Property}.

\begin{lemma}\label{lem-globint}
Let $A$ be a set, $\Delta \in A$, $\Con \subseteq A \times \mathcal{P}_f(A)$, and $\mbox{$\vdash$} \subseteq \Con \times A$ such that Axioms~\ref{dn-infsys}(\ref{dn-infsys-4}, \ref{dn-infsys-5}, \ref{dn-infsys-7}-\ref{dn-infsys-9}) are satisfied. Then Axioms~\ref{dn-infsys}(\ref{dn-infsys-10}, \ref{dn-infsys-11}) hold if, and only if, for all $i \in A$, $X \in \con_i$ and $F \fsubset A$,
\begin{equation}\label{eq-gip}
(i, X) \vdash F \Rightarrow (\exists (j, Y) \in \Con) (i, X)  \vdash (j, Y) \wedge (j, Y) \vdash F, \tag{GIP}
\end{equation}
where $(i, X) \vdash (j, Y)$ means that $(i, X) \vdash j$ and $(i, X) \vdash Y$.
\end{lemma}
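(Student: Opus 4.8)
The statement is an equivalence, so the plan is to establish the two implications separately, throughout reading $(i,X) \vdash F$ (and $(i,X) \vdash Y$) as the assertion that $(i,X) \vdash a$ holds for every token $a$ in the finite set $F$ (resp.\ $Y$).

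For the implication from Axioms~\ref{dn-infsys}(\ref{dn-infsys-10},\ref{dn-infsys-11}) to (GIP), assume $(i,X) \vdash F$ with $X \in \Con(i)$. First I would apply Axiom~\ref{dn-infsys}(\ref{dn-infsys-10}) to each $a \in F$ to obtain sets $Z_a \in \Con(i)$ with $(i,X) \vdash Z_a$ and $(i,Z_a) \vdash a$, and then collect them into the finite union $Z = \bigcup_{a \in F} Z_a$. Since $(i,X) \vdash Z$, Axiom~\ref{dn-infsys}(\ref{dn-infsys-4}) yields $Z \in \Con(i)$, and Axiom~\ref{dn-infsys}(\ref{dn-infsys-5}) then lifts each $(i,Z_a) \vdash a$ to $(i,Z) \vdash a$, so that $(i,Z) \vdash F$. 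At this stage the witness is still $i$, so the remaining task is to pass to a witness $j$ with $(i,X) \vdash j$. Applying Axiom~\ref{dn-infsys}(\ref{dn-infsys-11}) to $(i,X) \vdash Z$ produces a token $e$ with $(i,X) \vdash e$ and $Z \in \Con(e)$; setting $j = e$ and $Y = Z$ gives a consistent pair with $(i,X) \vdash (e,Z)$.

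The one point that needs care is transferring $(i,Z) \vdash F$ from witness $i$ to witness $e$. The hinge is the observation that $(i,X) \vdash e$ forces $\{e\} \in \Con(i)$ by Axiom~\ref{dn-infsys}(\ref{dn-infsys-4}); combined with $Z \in \Con(e)$ this is exactly the hypothesis needed to invoke the conservativity Axiom~\ref{dn-infsys}(\ref{dn-infsys-9}), which turns each $(i,Z) \vdash a$ into $(e,Z) \vdash a$. Hence $(e,Z) \vdash F$, and $(e,Z)$ witnesses (GIP). I expect this witness-juggling---recognising that entailing $e$ makes $e$ a legitimate alternative witness, so that Axiom~\ref{dn-infsys}(\ref{dn-infsys-9}) applies---to be the main obstacle; the remainder is routine bookkeeping with Axioms~\ref{dn-infsys}(\ref{dn-infsys-4},\ref{dn-infsys-5}).

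For the converse, assume (GIP). To recover Axiom~\ref{dn-infsys}(\ref{dn-infsys-10}) from $(i,X) \vdash a$, I would apply (GIP) with $F = \{a\}$ to get $(j,Y) \in \Con$ with $(i,X) \vdash j$, $(i,X) \vdash Y$ and $(j,Y) \vdash a$; then $Y \in \Con(i)$ and $\{j\} \in \Con(i)$ both follow from Axiom~\ref{dn-infsys}(\ref{dn-infsys-4}), and Axiom~\ref{dn-infsys}(\ref{dn-infsys-8}) transfers $(j,Y) \vdash a$ into $(i,Y) \vdash a$, so that $Z = Y$ does the job. To recover Axiom~\ref{dn-infsys}(\ref{dn-infsys-11}) from $(i,X) \vdash Y$, I would apply (GIP) with $F = Y$ to obtain $(j,W) \in \Con$ with $(i,X) \vdash j$ and $(j,W) \vdash Y$; since $W \in \Con(j)$, Axiom~\ref{dn-infsys}(\ref{dn-infsys-4}) gives $Y \in \Con(j)$, so $e = j$ is the required witness. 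This direction uses only Axioms~\ref{dn-infsys}(\ref{dn-infsys-4},\ref{dn-infsys-8}) and I expect it to be entirely routine.
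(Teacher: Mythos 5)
Your proof is correct. Note that this paper does not actually print a proof of Lemma~\ref{lem-globint} (it defers all proofs in Section~\ref{sec-infosys} to~\cite{sp?}), so there is nothing to compare line by line; but your argument is the natural one and fits the paper's own toolkit exactly: your first step (collecting the sets $Z_a$ into a union $Z$ and lifting entailment via Axioms~\ref{dn-infsys}(\ref{dn-infsys-4},\ref{dn-infsys-5})) is precisely the statement and proof of Lemma~\ref{lem-setax7}, and the witness transfer via $(i,X)\vdash e \Rightarrow \{e\}\in\Con(i)$ followed by conservativity~\ref{dn-infsys}(\ref{dn-infsys-9}), respectively Axiom~\ref{dn-infsys}(\ref{dn-infsys-8}) in the converse, is exactly the intended use of those axioms. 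Both directions check out against the axioms as stated.
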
 

The next result extends Condition~\ref{dn-infsys}(\ref{dn-infsys-10}).

\begin{lemma}\label{lem-setax7}
Let $A$ be a set, $\Delta \in A$, $\Con \subseteq A \times \mathcal{P}_f(A)$, and $\mbox{$\vdash$} \subseteq \Con \times A$ such that Axioms~\ref{dn-infsys}(\ref{dn-infsys-4}, \ref{dn-infsys-5}, \ref{dn-infsys-10}) are satisfied. Then the following rule holds, for all $a \in A$, $F \fsubset A$ and $(i, X) \in \Con$,
\[
(i,X) \vdash a \Rightarrow (\exists Z \in \Con(i)) (i,X)\vdash Z  \wedge (i, Z) \vdash F.
\]
\end{lemma}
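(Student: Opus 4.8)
The plan is to argue by induction on the cardinality of $F$, throughout reading $(i,X) \vdash F$ as the assertion that $(i,X) \vdash b$ holds for every $b \in F$; this is precisely the extension of Axiom~\ref{dn-infsys}(\ref{dn-infsys-10}) from a single token to a finite set that was announced above, and for $|F| = 1$ the claim is nothing but that axiom. The two base cases are immediate: for $F = \emptyset$ I would take $Z = \emptyset$, noting that $X \in \Con(i)$ (since $(i,X) \in \Con$) together with the vacuously true $(i,X) \vdash \emptyset$ yields $\emptyset \in \Con(i)$ by Axiom~\ref{dn-infsys}(\ref{dn-infsys-4}), after which both $(i,X) \vdash \emptyset$ and $(i,\emptyset) \vdash \emptyset$ hold vacuously; the singleton case is Axiom~\ref{dn-infsys}(\ref{dn-infsys-10}) verbatim.

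For the inductive step I would write $F = F_0 \cup \{a\}$ with $a \notin F_0$ and assume $(i,X) \vdash b$ for all $b \in F$. The induction hypothesis applied to $F_0$ supplies a set $Z_1 \in \Con(i)$ with $(i,X) \vdash Z_1$ and $(i,Z_1) \vdash F_0$, while Axiom~\ref{dn-infsys}(\ref{dn-infsys-10}) applied to the token $a$ supplies a set $Z_2 \in \Con(i)$ with $(i,X) \vdash Z_2$ and $(i,Z_2) \vdash a$. The candidate witness set is their union $Z = Z_1 \cup Z_2$, and it then remains to check that $Z \in \Con(i)$, that $(i,X) \vdash Z$, and that $(i,Z) \vdash F$.

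The crux, and essentially the only place where anything must be verified, is the consistency of the merged set. Since $(i,X) \vdash z$ holds for every $z \in Z_1$ and every $z \in Z_2$, we have $(i,X) \vdash Z$; and because $X \in \Con(i)$, Axiom~\ref{dn-infsys}(\ref{dn-infsys-4}) then delivers $Z \in \Con(i)$ in one stroke. With $Z$ consistent, the two entailments are transported upward by monotonicity: from $Z_1 \subseteq Z$ and $Z_1, Z \in \Con(i)$, Axiom~\ref{dn-infsys}(\ref{dn-infsys-5}) lifts $(i,Z_1) \vdash F_0$ to $(i,Z) \vdash F_0$, and from $Z_2 \subseteq Z$ it lifts $(i,Z_2) \vdash a$ to $(i,Z) \vdash a$, whence $(i,Z) \vdash F$. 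I do not expect a genuine obstacle here; the one point deserving care is that, since Axioms~\ref{dn-infsys}(\ref{dn-infsys-1},\ref{dn-infsys-2}) and reflexivity of $\vdash$ are not among the hypotheses, the consistency facts $\emptyset \in \Con(i)$ and $Z \in \Con(i)$ must be produced from Axiom~\ref{dn-infsys}(\ref{dn-infsys-4}) alone rather than taken for granted.
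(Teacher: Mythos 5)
Your proof is correct and takes essentially the same route as the paper: there, Axiom~\ref{dn-infsys}(\ref{dn-infsys-10}) is applied to each $b \in F$ at once, $Z = \bigcup\{\,Z_b \mid b \in F\,\}$ is formed, $Z \in \Con(i)$ is deduced from $(i,X) \vdash Z$ via Axiom~\ref{dn-infsys}(\ref{dn-infsys-4}), and Axiom~\ref{dn-infsys}(\ref{dn-infsys-5}) lifts each $(i,Z_b) \vdash b$ to $(i,Z) \vdash b$ --- exactly your consistency-and-monotonicity argument, with your induction on the cardinality of $F$ merely performing the union two sets at a time rather than in one stroke. Your reading of the hypothesis as $(i,X) \vdash F$ is also the one the paper's own proof actually uses.
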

\begin{proof}
Let $b \in F$. By Axiom~\ref{dn-infsys}(\ref{dn-infsys-10}) there is some $Z_b \in \Con(i)$ with $(i, X) \vdash Z_b$ and $(i, Z_b) \vdash b$. Set $Z = \bigcup \{\, Z_b \mid b \in F \,\}$. Then $(i, X) \vdash Z$. Hence, $Z \in \Con(i)$, by Condition~\ref{dn-infsys}(\ref{dn-infsys-4}). Because of Axiom~\ref{dn-infsys}(\ref{dn-infsys-5}) we therefore have that $(i, Z) \vdash F$.
\end{proof}

Sometimes a stronger requirement than \ref{dn-infsys}(\ref{dn-infsys-6}) is needed which reverses Condition~\ref{dn-infsys}(\ref{dn-infsys-10}).

\begin{lemma}\label{lem-strong6}
Let $(A, \Con, \vdash, \Delta)$ be an information system with witnesses. Then the following rule holds, for all $a \in A$ and $(i, X),(j, Y) \in \Con$,
\[
(i,X) \vdash (j, Y)  \wedge (j, Y) \vdash a \Rightarrow (i,X) \vdash a.
\]
\end{lemma}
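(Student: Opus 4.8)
The plan is to exploit the consistency-transfer axioms in order to reduce the mixed-witness entailment to a single-witness transitivity. The obstacle is that the hypothesis $(j,Y)\vdash a$ is stated with respect to the witness $j$, whereas the desired conclusion $(i,X)\vdash a$ must be phrased with respect to $i$; so the heart of the argument is to migrate the entailment $(j,Y)\vdash a$ from witness $j$ to witness $i$.

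First I would unfold the abbreviation introduced in Lemma~\ref{lem-globint} and record the three hypotheses $(i,X)\vdash j$, $(i,X)\vdash Y$ and $(j,Y)\vdash a$, together with $X\in\Con(i)$ and $Y\in\Con(j)$, which come from $(i,X),(j,Y)\in\Con$. Applying Axiom~\ref{dn-infsys}(\ref{dn-infsys-4}) to $X\in\Con(i)$ and $(i,X)\vdash j$ yields $\{j\}\in\Con(i)$; this is the key fact, since it places $j$ in the scope of the witness $i$ and thereby unlocks the interplay axioms.

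The crucial step is then to invoke Axiom~\ref{dn-infsys}(\ref{dn-infsys-8}) in the instance obtained by swapping the two witness labels, namely $\{j\}\in\Con(i)\wedge Y\in\Con(j)\wedge(j,Y)\vdash a\Rightarrow(i,Y)\vdash a$. All three premises are now available: $\{j\}\in\Con(i)$ from the previous step, $Y\in\Con(j)$ from $(j,Y)\in\Con$, and $(j,Y)\vdash a$ by assumption. Hence the entailment of $a$ is transported to the witness $i$, giving $(i,Y)\vdash a$.

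Finally I would apply the fixed-witness transitivity Axiom~\ref{dn-infsys}(\ref{dn-infsys-6}). Since $(i,X)\vdash Y$ and $X\in\Con(i)$, Axiom~\ref{dn-infsys}(\ref{dn-infsys-4}) gives $Y\in\Con(i)$, so $(i,Y)\vdash a$ is a well-formed instance of entailment; combining $X\in\Con(i)$, $(i,X)\vdash Y$ and $(i,Y)\vdash a$ then yields $(i,X)\vdash a$, as required. I expect the only subtlety to be the bookkeeping of the witness labels when specialising Axiom~\ref{dn-infsys}(\ref{dn-infsys-8}); once that substitution is set up correctly, every remaining step is an immediate application of an axiom.
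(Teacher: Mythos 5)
Your proof is correct and is the natural derivation from the axioms: you obtain $\{j\}\in\Con(i)$ from $(i,X)\vdash j$ by Axiom~\ref{dn-infsys}(\ref{dn-infsys-4}), transport $(j,Y)\vdash a$ to $(i,Y)\vdash a$ by the transfer Axiom~\ref{dn-infsys}(\ref{dn-infsys-8}) with the witness roles interchanged, secure well-formedness $Y\in\Con(i)$ by another application of \ref{dn-infsys}(\ref{dn-infsys-4}), and conclude with the fixed-witness transitivity Axiom~\ref{dn-infsys}(\ref{dn-infsys-6}). The paper states Lemma~\ref{lem-strong6} without proof, deferring it to the cited reference, and your argument is exactly the expected one.
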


To relate information systems to domains, the notion of state is required.

\begin{definition}\label{dn-st}
Let $(A, \Con, \vdash, \Delta)$ be an information system with witnesses. A subset $x$ of $A$ is a \emph{state} of $(A, \Con, \vdash, \Delta)$ if the following three conditions hold:
\begin{enumerate}
\item\label{dn-st-1}
$(\forall F \fsubset x) (\exists i \in x) F \in \Con(i)$

\item\label{dn-st-2}
$(\forall i \in x)(\forall X \fsubset x) (\forall a \in A) [X \in \Con(i) \wedge (i,X) \vdash a \Rightarrow a \in x]$

\item\label{dn-st-3}
$(\forall a \in x) (\exists i \in x) (\exists X \fsubset x) X \in \Con(i) \wedge (i,X) \vdash a.$ 

\end{enumerate}
\end{definition}

As follows from the definition, states are subsets of tokens that are \emph{finitely consistent} (\ref{dn-st-1}) and \emph{closed under entailment} (\ref{dn-st-2}). Furthermore, each token in a state is \emph{derivable} (\ref{dn-st-3}), i.e.\ for each token the state contains a consistent set and its witness entailing the token.

By Condition~\ref{dn-st}(\ref{dn-st-1}) states are never empty: Choose $F$ to be the empty set. Then the state contains some $i$ with $\emptyset \in \Con(i)$.

Note that Conditions~(\ref{dn-st-1}, \ref{dn-st-3}) in Definition~\ref{dn-st} can be replaced by a single requirement.

\begin{proposition}\label{pn-stsing}
Let $(A, \Con, \vdash, \Delta)$ be an information system with witnesses and $x$ be a subset of $A$. Then Conditions~\ref{dn-st}(\ref{dn-st-1}) and (\ref{dn-st-3}) together are equivalent to the following statement:
\begin{equation}\tag{ST}\label{st}
(\forall F \fsubset x) (\exists i \in x) (\exists X \fsubset x) X \in \Con(i) \wedge (i,X) \vdash F.
\end{equation}
\end{proposition}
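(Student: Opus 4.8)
The plan is to establish the two implications separately, the forward one being essentially immediate. Assuming \eqref{st}, I would derive Condition~\ref{dn-st}(\ref{dn-st-3}) by applying \eqref{st} to the singleton $F = \{a\}$ for a given $a \in x$, which directly produces a witness $i \in x$ and a set $X \fsubset x$ with $X \in \Con(i)$ and $(i,X) \vdash a$. For Condition~\ref{dn-st}(\ref{dn-st-1}), given $F \fsubset x$ I would take the $i$ and $X$ supplied by \eqref{st} and invoke Axiom~\ref{dn-infsys}(\ref{dn-infsys-4}): from $X \in \Con(i)$ and $(i,X) \vdash F$ it follows that $F \in \Con(i)$, with $i \in x$, as needed.

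The real content is the converse. Fix $F \fsubset x$. By Condition~\ref{dn-st}(\ref{dn-st-3}), for each $a \in F$ there are $i_a \in x$ and $X_a \fsubset x$ with $X_a \in \Con(i_a)$ and $(i_a, X_a) \vdash a$. The obstacle is that the witnesses $i_a$ may all differ, whereas \eqref{st} demands a \emph{single} common witness together with a single consistent set entailing all of $F$ at once. My idea is to gather everything relevant into one finite subset of $x$ and let finite consistency produce the joint witness: I would set
\[
G = \set{i_a}{a \in F} \cup \bigcup_{a \in F} X_a,
\]
a finite subset of $x$, and then apply Condition~\ref{dn-st}(\ref{dn-st-1}) to obtain a single $i \in x$ with $G \in \Con(i)$.

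It remains to reroute each entailment through $i$ and to amalgamate them. Since $\{i_a\} \subseteq G \in \Con(i)$, Axiom~\ref{dn-infsys}(\ref{dn-infsys-2}) gives $\{i_a\} \in \Con(i)$; combined with $X_a \in \Con(i_a)$ and $(i_a, X_a) \vdash a$, the witness-change Axiom~\ref{dn-infsys}(\ref{dn-infsys-8}) yields $(i, X_a) \vdash a$. Setting $X = \bigcup_{a \in F} X_a \subseteq G$, we again get $X \in \Con(i)$ by Axiom~\ref{dn-infsys}(\ref{dn-infsys-2}), and since $X_a \subseteq X$, the monotonicity Axiom~\ref{dn-infsys}(\ref{dn-infsys-5}) upgrades $(i, X_a) \vdash a$ to $(i, X) \vdash a$ for every $a \in F$, that is, $(i, X) \vdash F$. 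As $i \in x$ and $X \fsubset x$ with $X \in \Con(i)$, this is exactly \eqref{st}. The degenerate case $F = \emptyset$ is subsumed: then $G = \emptyset$, Condition~\ref{dn-st}(\ref{dn-st-1}) supplies some $i \in x$ with $\emptyset \in \Con(i)$, and $(i, \emptyset) \vdash \emptyset$ holds vacuously.

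I expect the only genuine difficulty to be the passage to a common witness, since a priori the tokens of $F$ are derived under unrelated witnesses while \eqref{st} insists on witnessing them simultaneously. The crux is to pair finite consistency (Condition~\ref{dn-st}(\ref{dn-st-1}), producing the joint witness $i$) with the witness-change Axiom~\ref{dn-infsys}(\ref{dn-infsys-8}) (rerouting the entailments through $i$); once these are in place, the subset-consistency Axiom~\ref{dn-infsys}(\ref{dn-infsys-2}) and the monotonicity Axiom~\ref{dn-infsys}(\ref{dn-infsys-5}) finish the amalgamation as routine bookkeeping.
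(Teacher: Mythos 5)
Your proof is correct: the forward direction via $F=\{a\}$ and Axiom~\ref{dn-infsys}(\ref{dn-infsys-4}), and the converse via collecting all witnesses and sets into one finite $G \fsubset x$, invoking Condition~\ref{dn-st}(\ref{dn-st-1}) for a joint witness and then Axioms~\ref{dn-infsys}(\ref{dn-infsys-2}), (\ref{dn-infsys-8}), (\ref{dn-infsys-5}) to reroute and amalgamate the entailments, is exactly the intended argument (the paper itself defers this proof to~\cite{sp?}). The only bookkeeping left implicit is that $X_a \in \Con(i)$ (needed for Axioms~\ref{dn-infsys}(\ref{dn-infsys-8}) and (\ref{dn-infsys-5})), which is immediate from $X_a \subseteq G$ and Axiom~\ref{dn-infsys}(\ref{dn-infsys-2}).
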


With respect to set inclusion the states of $A$ form a  directed-complete partially ordered set, denoted by $|A|$. Moreover, the consistent subsets of $A$ generate a canonical basis of $|A|$. For $(i,X) \in \Con$ let
\[
[X]_i = \set{a \in A}{(i,X) \vdash a}.
\]
Then $[X]_i$ is a state and for every $z \in |A|$, the set of all $[X]_i$ with  $\{i\} \cup X \subseteq z$ is directed and $z$ is its union.

This result allows characterising the approximation relation on $A$ in terms of the entailment relation. The characterisation nicely reflects the intuition that $x \ll y$, if $x$ is covered by a ``finite part'' of $y$.

\begin{proposition}\label{lem-app}
For $x, y \in |A|$,
\[
x \ll y \Longleftrightarrow (\exists (i,V) \in \Con) \{i\} \cup V \subseteq y \wedge (i,V) \vdash x.
\]
\end{proposition}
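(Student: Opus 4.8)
The plan is to exploit the canonical basis representation recalled just before the statement: for every $y \in |A|$ the family $\set{[X]_i}{\{i\} \cup X \subseteq y}$ is directed and has $y$ as its union, hence as its least upper bound in $|A|$ (recall that the join of a directed family of states is computed as its union). Both implications will be read off from this fact together with the defining clause of $\ll$ and the closure properties of states.

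For the forward implication, I would assume $x \ll y$ and apply the definition of the approximation relation to the directed family $S = \set{[X]_i}{\{i\} \cup X \subseteq y}$. Since $\bigsqcup S = y$, we have $y \sqsubseteq \bigsqcup S$, so by $x \ll y$ there is a member $[V]_i \in S$ with $x \sqsubseteq [V]_i$, i.e. $x \subseteq [V]_i$. By construction $(i,V) \in \Con$ and $\{i\} \cup V \subseteq y$, and $x \subseteq [V]_i$ says precisely that $(i,V) \vdash a$ for every $a \in x$, that is, $(i,V) \vdash x$. This produces the required witness pair.

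For the backward implication, I would start from a pair $(i,V) \in \Con$ with $\{i\} \cup V \subseteq y$ and $(i,V) \vdash x$, and verify the defining clause of $\ll$ directly. Let $S$ be any directed subset of $|A|$ with $y \sqsubseteq \bigsqcup S = \bigcup S$. Since $\{i\} \cup V$ is a finite subset of $y \subseteq \bigcup S$ and $S$ is directed, a single state $u \in S$ contains all of $\{i\} \cup V$; in particular $i \in u$ and $V \fsubset u$ with $V \in \Con(i)$. Now for each $a \in x$ we have $(i,V) \vdash a$, so closure of the state $u$ under entailment, Definition~\ref{dn-st}(\ref{dn-st-2}), gives $a \in u$. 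Hence $x \subseteq u$, i.e. $x \sqsubseteq u$, which is exactly what $x \ll y$ demands.

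The argument is short because the real work is already packaged in the basis representation of $y$. The only points requiring care are, first, that directed joins in $|A|$ are computed as unions, so that the chosen family indeed has $y$ as its supremum; and second, in the backward direction, the passage from ``each element of the finite set $\{i\} \cup V$ lies in some member of $S$'' to ``all of them lie in one common member of $S$'', which is precisely where directedness of $S$ applied to finitely many elements is used, before the entailment-closure of states is invoked. I expect this second point to be the main thing to state carefully, since it is the only place where finiteness of the witness pair and directedness of $S$ genuinely interact.
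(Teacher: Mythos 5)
Your proof is correct, and it takes the natural route: the paper itself states Proposition~\ref{lem-app} without proof (deferring, like the rest of Section~\ref{sec-infosys}, to~\cite{sp?}), and the intended argument is exactly the one you give, reading both directions off the recalled facts that $\set{[X]_i}{\{i\} \cup X \subseteq y}$ is directed with union $y$ and that directed suprema of states are unions, together with entailment-closure of states. The two points you flag as needing care (union-computed joins, and directedness applied to the finite set $\{i\} \cup V$ to land in a single member of $S$) are indeed the only load-bearing steps, and you handle both correctly.
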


Because of Axioms~\ref{dn-infsys}(\ref{dn-infsys-1}, \ref{dn-infsys-2}) we have that $\emptyset \in \Con(i)$, for all $i \in A$. Moreover, with Axioms~\ref{dn-infsys}(\ref{dn-infsys-3}, \ref{dn-infsys-4}), it follows that $\{\Delta\} \in \Con(j)$,  for all $j \in A$. As is easily verified, $[\emptyset]_i = [\{ \Delta \}]_j$, for all $i, j \in A$, and $[\emptyset]_\Delta \subseteq x$, for all $x \in |A|$.

Local least upper bounds can be computed in a similar way as directed least upper bounds. Let $x, y, z \in |A|$ so that $x, y \subseteq z$. Then 
\[
\bigcup\set{[Z]_k}{(k, Z) \in \Con \wedge\, k \in z \wedge Z \fsubset x \cup y}
\]
is the least upper bound of $x$ and $y$ in $\low z$.

\begin{theorem}\label{tm-ldom}
Let $(A, \Con, \vdash, \Delta)$ be an information system with witnesses. Then $\mathcal{L}(A) = (|A|, \subseteq, [\emptyset]_\Delta)$ is an L-domain with basis $\widehat{\Con} = \set{[X]_i}{ (i,X) \in \Con}.$
\end{theorem}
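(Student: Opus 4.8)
The plan is to assemble the structural facts recorded just before the statement into the three defining properties of an L-domain: that $\mathcal{L}(A)$ is a pointed dcpo, that it is continuous with $\widehat{\Con}$ as a basis, and that every bounded pair has a local least upper bound. That $(|A|, \subseteq)$ is a dcpo has already been observed, and $[\emptyset]_\Delta$ is the least element because $[\emptyset]_\Delta \subseteq x$ holds for every state $x$; so the pointed-dcpo part is immediate and needs no further work.

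For continuity I would fix a state $z$ and consider the family $\set{[X]_i}{\{i\} \cup X \subseteq z}$. It has already been noted that this family is directed and has $z$ as its union, so the only thing left to verify is that each of its members approximates $z$. By Proposition~\ref{lem-app} this reduces to producing, for a given $(i,X) \in \Con$ with $\{i\} \cup X \subseteq z$, some $(k,V) \in \Con$ with $\{k\} \cup V \subseteq z$ and $(k,V) \vdash [X]_i$; the choice $(k,V) = (i,X)$ does the job, since $(i,X) \vdash a$ holds by definition for every $a \in [X]_i$. Hence $\dda\!_{\widehat{\Con}} z$ contains a directed subset with least upper bound $z$, and $\widehat{\Con}$ is a basis, so $\mathcal{L}(A)$ is a domain.

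For the L-domain property I would take $x, y \subseteq z$ and set $w = \bigcup\set{[Z]_k}{(k,Z) \in \Con \wedge k \in z \wedge Z \fsubset x \cup y}$, the candidate supplied by the local-lub formula recorded above. Four things must be checked: that $w$ is a state, that $w \subseteq z$, that $x, y \subseteq w$, and that $w$ lies below every upper bound of $x, y$ contained in $\low z$. The containment $w \subseteq z$ is routine, since each generating $[Z]_k$ satisfies $\{k\} \cup Z \subseteq z$ and $z$ is closed under entailment; the inclusions $x, y \subseteq w$ follow from the derivability of states, Definition~\ref{dn-st}(\ref{dn-st-3}). Checking that $w$ is itself a state (finitely consistent, entailment-closed and derivable) amounts to merging the various witnesses $k$ occurring in the union into common witnesses available inside the state $z$, again using the witness axioms.

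The main obstacle is the \emph{minimality} of $w$. Given an upper bound $u$ of $x$ and $y$ with $u \subseteq z$, and a token $a \in w$ witnessed by $(k,Z)$ with $k \in z$, $Z \fsubset x \cup y \subseteq u$ and $(k,Z) \vdash a$, I must deduce $a \in u$. The difficulty is that the witness $k$ need not belong to $u$, and consistency is only locally defined, so the entailment cannot be used directly inside $u$. I would choose a witness $i \in u$ with $Z \in \Con(i)$, which exists because $Z \fsubset u$ and $u$ is finitely consistent; since $\{i,k\} \fsubset z$, finite consistency of $z$ yields a common witness $j \in z$ with $\{i,k\} \in \Con(j)$. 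Axiom~\ref{dn-infsys}(\ref{dn-infsys-8}) then promotes $(k,Z) \vdash a$ to $(j,Z) \vdash a$, and Axiom~\ref{dn-infsys}(\ref{dn-infsys-9}) transfers it down to $(i,Z) \vdash a$; as $i \in u$ and $Z \subseteq u$, closure under entailment gives $a \in u$, whence $w \subseteq u$. This witness-merging step, which rests precisely on Axioms~\ref{dn-infsys}(\ref{dn-infsys-7}--\ref{dn-infsys-9}), is the technical heart of the proof, the rest being a bookkeeping assembly of already-established facts.
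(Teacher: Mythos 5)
Since the paper itself recalls Theorem~\ref{tm-ldom} from \cite{sp?} and gives no in-text proof, the only available comparison is with the facts recorded immediately before the statement, and your proposal assembles exactly those facts in the intended way. The pointed-dcpo part is indeed immediate; the continuity argument is correct (each $[X]_i$ with $\{i\} \cup X \subseteq z$ approximates $z$ by Proposition~\ref{lem-app}, taking $(i,X)$ itself as the approximating pair); and your minimality argument for the local supremum is exactly right: promote $(k,Z) \vdash a$ to $(j,Z) \vdash a$ by Axiom~\ref{dn-infsys}(\ref{dn-infsys-8}), transfer it to $(i,Z) \vdash a$ by Axiom~\ref{dn-infsys}(\ref{dn-infsys-9}) with $i \in u$, and conclude $a \in u$ by entailment-closure of $u$.

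The one step that would not go through as you describe it is the verification that the candidate $w$ is a state. Merging witnesses ``into common witnesses available inside the state $z$'' with Axioms~\ref{dn-infsys}(\ref{dn-infsys-7}--\ref{dn-infsys-9}) produces, for $F \fsubset w$, a pair $(j,Z)$ with $j \in z$, $Z \fsubset x \cup y$ and $(j,Z) \vdash F$; but Conditions~\ref{dn-st}(\ref{dn-st-1}) and (\ref{dn-st-3}) require a witness lying in $w$ itself, and this $j$ need not belong to $w$. To obtain one you must interpolate: by the Global Interpolation Property (Lemma~\ref{lem-globint}) there is $(j',Y') \in \Con$ with $(j,Z) \vdash (j',Y')$ and $(j',Y') \vdash F$, and then $\{j'\} \cup Y' \subseteq [Z]_j \subseteq w$, so $j'$ is the required witness inside $w$. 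Likewise, entailment-closure of $w$ needs Lemma~\ref{lem-strong6} (transitivity across a change of witness), not just Axiom~\ref{dn-infsys}(\ref{dn-infsys-6}). So the statehood check rests essentially on the interpolation axioms \ref{dn-infsys}(\ref{dn-infsys-10},\ref{dn-infsys-11}), which your sketch never invokes. Alternatively, you could simply cite the local-lub formula the paper records just before the theorem, which is stated there as an established fact; your partial re-derivation is then unnecessary, but taken as an independent verification it is incomplete at precisely this point.
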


Let us see next when $\mathcal{L}(A)$ is algebraic.

\begin{definition}\label{dn-refl}
Let $(A, \Con, \vdash, \Delta)$ be an information system with witnesses. An element $(j,V) \in \Con$ is called \emph{reflexive} if $(j,V) \vdash (j, V)$.
\end{definition}

 Obviously, $[V]_j$ is compact, for every reflexive  $(j,V) \in \Con$. We denote the subset of reflexive elements of $\Con$ by $\Con_\mathrm{refl}$.

\begin{theorem}\label{tm-alg}
Let $(A, \Con, \vdash, \Delta)$ be an information system with witnesses. Then $\mathcal{L}(A)$ is algebraic if, and only if, the information system $A$ satisfies Condition~(\ref{alg})  saying that for all $(i,X) \in \Con$ and $F \fsubset A$, 
\begin{equation}\tag{ALG}\label{alg}
(i, X) \vdash F  \Rightarrow (\exists (j,V) \in \Con_\mathrm{refl}) (i, X) \vdash  (j, V) \wedge (j, V) \vdash F.
\end{equation}
\end{theorem}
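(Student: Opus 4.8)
The plan is to prove both implications through the characterisation of the approximation relation in Proposition~\ref{lem-app}, together with the fact recorded before Theorem~\ref{tm-ldom} that for every state $z$ the family $\set{[X]_i}{\{i\} \cup X \subseteq z}$ is directed with union $z$. The conceptual hinge is a description of the compact elements of $\mathcal{L}(A)$: I claim a basis element is compact exactly when it can be written as $[V]_j$ for some reflexive $(j,V)$. One inclusion is already noted in the text (reflexive pairs yield compact states); for the converse, if $c$ is compact then $c \ll c$, so by Proposition~\ref{lem-app} there is $(j,V) \in \Con$ with $\{j\} \cup V \subseteq c$ and $(j,V) \vdash c$. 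Entailment-closedness of $c$ gives $[V]_j \subseteq c$, while $(j,V) \vdash c$ gives $c \subseteq [V]_j$, so $c = [V]_j$; and from $\{j\} \cup V \subseteq c = [V]_j$ we read off $(j,V) \vdash j$ and $(j,V) \vdash V$, so $(j,V)$ is reflexive.

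For the direction from algebraicity to (ALG), suppose $(i,X) \vdash F$, so $F \subseteq [X]_i$. Since $\mathcal{L}(A)$ is algebraic, $[X]_i$ is the union of a directed family of compact states below it, and as $F$ is finite it lies in a single such compact $c \subseteq [X]_i$. Writing $c = [V]_j$ with $(j,V)$ reflexive by the claim, $F \subseteq c$ gives $(j,V) \vdash F$, while $\{j\} \cup V \subseteq c \subseteq [X]_i$ yields $(i,X) \vdash j$ and $(i,X) \vdash V$, that is $(i,X) \vdash (j,V)$. This is precisely (ALG).

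For the converse, assume (ALG) and fix a state $z$; I will show that
\[
\mathcal{C}_z = \set{[V]_j}{(j,V) \in \Con_\mathrm{refl} \wedge \{j\} \cup V \subseteq z}
\]
is a directed family of compact states with union $z$, giving algebraicity. That $\bigcup \mathcal{C}_z = z$ follows pointwise: each member lies in $z$ by entailment-closedness, and conversely any $a \in z$ is, by derivability (Definition~\ref{dn-st}(\ref{dn-st-3})), entailed by some $(i,X)$ with $\{i\} \cup X \subseteq z$, whence (ALG) applied to $(i,X) \vdash \{a\}$ produces a reflexive $(j,V)$ with $(i,X) \vdash (j,V)$ and $a \in [V]_j$; here $\{j\} \cup V \subseteq [X]_i \subseteq z$, so $[V]_j \in \mathcal{C}_z$.

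The main obstacle is directedness of $\mathcal{C}_z$, and this is where (ALG) does the essential work. Given $[V]_j, [W]_k \in \mathcal{C}_z$, I first use directedness of the full family $\set{[X]_i}{\{i\} \cup X \subseteq z}$ to find $(i,X)$ with $\{i\} \cup X \subseteq z$ dominating both, so that $(i,X) \vdash \{j,k\} \cup V \cup W$. Applying (ALG) to this entailment yields a reflexive $(m,U)$ with $(i,X) \vdash (m,U)$ and $(m,U) \vdash \{j,k\} \cup V \cup W$; in particular $\{m\} \cup U \subseteq [X]_i \subseteq z$, so $[U]_m \in \mathcal{C}_z$. Finally $(m,U) \vdash (j,V)$ together with strong transitivity (Lemma~\ref{lem-strong6}) gives $[V]_j \subseteq [U]_m$, and symmetrically $[W]_k \subseteq [U]_m$, so $[U]_m$ is the required common upper bound. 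Since every member of $\mathcal{C}_z$ is compact and the least upper bound of a directed family of states is its union, $z$ is the directed join of the compacts below it, proving algebraicity.
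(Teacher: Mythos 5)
Your argument is correct, but note that this paper does not actually contain a proof of Theorem~\ref{tm-alg} to compare against: the results of Section~\ref{sec-infosys} are recalled from~\cite{sp?} with the remark that their proofs are found there. Judged on its own, your proof is a clean, self-contained derivation from exactly the facts the paper does recall. The hinge is your characterisation of the compact elements of $\mathcal{L}(A)$ as precisely the states $[V]_j$ with $(j,V)$ reflexive, which you obtain from Proposition~\ref{lem-app} applied to $c \ll c$ together with entailment-closedness; the paper only records the easy inclusion (reflexive implies compact), so establishing the converse is the genuine content you supply, and both directions of the theorem then fall out naturally --- (ALG) from placing a finite $F$ inside a single compact below $[X]_i$, and algebraicity from the directedness of $\mathcal{C}_z$, where (ALG) plus Lemma~\ref{lem-strong6} indeed does the essential work. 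One point deserves an explicit line: in the forward direction you use that the supremum of a directed family of states is its union (so that $F \fsubset [X]_i = \bigsqcup S$ lands in a single member of $S$). The paper states only that $|A|$ is a dcpo and that the specific family $\set{[X]_i}{\{i\} \cup X \subseteq z}$ has union $z$; the general fact is true and routine (a directed union satisfies Definition~\ref{dn-st}(\ref{dn-st-1})--(\ref{dn-st-3}) because every relevant finite configuration lies in a single member), but as written it is asserted rather than justified. With that one sentence added, the proof is complete.
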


Condition (\ref{alg})  is a global interpolation requirement. Similarly to  Lemma~\ref{lem-globint} it is equivalent to a local condition.

\begin{lemma}\label{lem-salg}
Condition~(\ref{alg}) holds if, and only if, the following Condition~(\ref{salg}) is satisfied for all $(i, X) \in \Con$ and $a \in A$,
\begin{equation} \tag{SALG}\label{salg}
(i, X) \vdash a \Rightarrow (\exists Z \in \Con(i)) (i, X) \vdash Z \wedge (i, Z) \vdash Z \wedge (i, Z) \vdash a.
\end{equation}
\end{lemma}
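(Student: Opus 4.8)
The plan is to prove the two implications separately, imitating the way the global interpolation property of Lemma~\ref{lem-globint} is reduced to the local Axioms~\ref{dn-infsys}(\ref{dn-infsys-10},\ref{dn-infsys-11}). The forward direction is a straightforward instantiation of (\ref{alg}) at a singleton followed by a change of witness, whereas the converse requires producing a reflexive pair out of purely local data, which is the real content of the statement.

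For (\ref{alg}) $\Rightarrow$ (\ref{salg}) I would instantiate (\ref{alg}) at $F = \{a\}$. Given $(i,X) \vdash a$ this yields a reflexive $(j,V) \in \Con_{\mathrm{refl}}$ with $(i,X) \vdash (j,V)$ and $(j,V) \vdash a$. I then set $Z = V$. The requirements $(i,X) \vdash Z$ and $Z \in \Con(i)$ are immediate from $(i,X) \vdash V$ together with Axiom~\ref{dn-infsys}(\ref{dn-infsys-4}). It only remains to relocate the two derivations $(j,V) \vdash V$ (reflexivity) and $(j,V) \vdash a$ from witness $j$ to witness $i$: since $(i,X) \vdash j$ gives $\{j\} \in \Con(i)$ by Axiom~\ref{dn-infsys}(\ref{dn-infsys-4}) and $V \in \Con(j)$ holds by hypothesis, Axiom~\ref{dn-infsys}(\ref{dn-infsys-8}), applied elementwise, delivers $(i,V) \vdash V$ and $(i,V) \vdash a$, which are exactly $(i,Z) \vdash Z$ and $(i,Z) \vdash a$. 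This direction is routine.

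The converse (\ref{salg}) $\Rightarrow$ (\ref{alg}) is the substantial one. Starting from $(i,X) \vdash F$, I would apply (\ref{salg}) to each $a \in F$ to obtain $Z_a \in \Con(i)$ with $(i,X) \vdash Z_a$, $(i,Z_a) \vdash Z_a$ and $(i,Z_a) \vdash a$, and then form $Z = \bigcup_{a \in F} Z_a$, exactly as in the proof of Lemma~\ref{lem-setax7}. Axiom~\ref{dn-infsys}(\ref{dn-infsys-4}) gives $Z \in \Con(i)$ and $(i,X) \vdash Z$, while monotonicity, Axiom~\ref{dn-infsys}(\ref{dn-infsys-5}), upgrades the local data to $(i,Z) \vdash Z$ and $(i,Z) \vdash F$.

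The crucial obstacle is that this only produces $(i,Z) \vdash Z$, i.e.\ $Z \subseteq [Z]_i$, which is merely ``half'' of reflexivity: there is no reason for the witness $i$ itself to be entailed, so $(i,Z)$ need not lie in $\Con_{\mathrm{refl}}$. My proposed remedy is not to keep $i$ but to manufacture a genuinely entailed witness. Feeding the \emph{self}-derivation $(i,Z) \vdash Z$ into the interpolation Axiom~\ref{dn-infsys}(\ref{dn-infsys-11}) produces some $e \in A$ with $(i,Z) \vdash e$ and $Z \in \Con(e)$; the point of starting from $(i,Z) \vdash Z$ rather than $(i,X) \vdash Z$ is precisely that the resulting witness satisfies $e \in [Z]_i$. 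Since $(i,Z) \vdash e$ gives $\{e\} \in \Con(i)$ by Axiom~\ref{dn-infsys}(\ref{dn-infsys-4}) and $Z \in \Con(e)$ holds, the conservativity Axiom~\ref{dn-infsys}(\ref{dn-infsys-9}) lets me transport all three derivations $(i,Z) \vdash e$, $(i,Z) \vdash Z$ and $(i,Z) \vdash F$ to the witness $e$, yielding $(e,Z) \vdash e$, $(e,Z) \vdash Z$ and $(e,Z) \vdash F$. The first two say exactly that $(e,Z) \in \Con_{\mathrm{refl}}$. Finally $(i,X) \vdash (e,Z)$ reduces to $(i,X) \vdash Z$, which we already have, and to $(i,X) \vdash e$, which follows from $(i,X) \vdash Z$ and $(i,Z) \vdash e$ by transitivity, Axiom~\ref{dn-infsys}(\ref{dn-infsys-6}); hence $(e,Z)$ witnesses (\ref{alg}). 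I expect the one delicate point to be the bookkeeping of witnesses in this last stage, in particular the recognition that Axiom~\ref{dn-infsys}(\ref{dn-infsys-11}) must be applied to the reflexive-type derivation $(i,Z) \vdash Z$ so that its output witness $e$ is entailed and therefore usable in the conservativity axiom.
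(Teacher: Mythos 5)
Your proof is correct. The paper itself defers the proof of this lemma to~\cite{sp?} (the beginning of Section~\ref{sec-infosys} states that proofs of these results are found there), so there is no in-paper argument to compare against; your proposal, however, verifies completely against Definition~\ref{dn-infsys}: the forward direction is the routine instantiation you describe (Axioms~\ref{dn-infsys}(\ref{dn-infsys-4},\ref{dn-infsys-8})), and in the converse the union construction of Lemma~\ref{lem-setax7} combined with your key move---applying Axiom~\ref{dn-infsys}(\ref{dn-infsys-11}) to the self-entailment $(i,Z) \vdash Z$ rather than to $(i,X) \vdash Z$, so that the resulting witness $e$ satisfies $(i,Z) \vdash e$ and $Z \in \Con(e)$, whence conservativity, Axiom~\ref{dn-infsys}(\ref{dn-infsys-9}), transports $(i,Z) \vdash e$, $(i,Z) \vdash Z$ and $(i,Z) \vdash F$ to the witness $e$ and makes $(e,Z)$ reflexive---is exactly what is needed, with the final requirement $(i,X) \vdash (e,Z)$ supplied by transitivity, Axiom~\ref{dn-infsys}(\ref{dn-infsys-6}).
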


In Scott's information systems a finite set of tokens is consistent, if it has a consistency witness, independently of which token this might be. This provides us with a condition forcing an information system with witnesses to generate a bounded-complete domain. 

\begin{theorem}\label{tm-bc}
Let $(A, \Con, \vdash, \Delta)$ be an information system with witnesses. Then $\mathcal{L}(A)$ is bounded-complete, and hence a bc-domain, if the information system $A$ satisfies Condition~(\ref{bc}) saying that for all $X \fsubset A$ and $i, j \in A$,
\begin{equation}\tag{BC}\label{bc}
(i, X), (j, X) \in \Con \Rightarrow (\forall a \in A)[(i, X) \vdash a \Leftrightarrow (j, X) \vdash a].
\end{equation}
\end{theorem}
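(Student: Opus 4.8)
The plan is to lean on Theorem~\ref{tm-ldom}, which already hands us that $\mathcal{L}(A)$ is a pointed domain (with least element $[\emptyset]_\Delta$); what remains is to upgrade the \emph{local} least upper bounds furnished by the L-domain structure to \emph{global} ones. Since $|A|$ is directed-complete, it suffices to produce a least upper bound for every consistent \emph{pair} $x, y \in |A|$: given an arbitrary consistent subset $S$ with upper bound $w$, the joins of its finite subsets — each bounded by $w$ — form a directed family whose least upper bound, taken in the dcpo $|A|$, is the least upper bound of $S$. So I would fix $x, y \in |A|$ together with an upper bound $z$, so that $x, y \subseteq z$.

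Recall from the discussion preceding Theorem~\ref{tm-ldom} that the least upper bound of $x$ and $y$ computed in $\low z$ is
\[
x \sqcup^z y = \bigcup \set{[Z]_k}{(k, Z) \in \Con \wedge k \in z \wedge Z \fsubset x \cup y}.
\]
The entire argument consists in showing that under Condition~(\ref{bc}) the right-hand side does not depend on the chosen upper bound $z$. First, for a fixed consistent finite $Z$ the value $[Z]_k$ is \emph{independent of the witness} $k$: if both $(k, Z)$ and $(k', Z)$ lie in $\Con$, then Condition~(\ref{bc}) gives $(k, Z) \vdash a \Leftrightarrow (k', Z) \vdash a$ for every $a$, i.e.\ $[Z]_k = [Z]_{k'}$. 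Hence for every consistent $Z$ there is a well-defined set $[Z]$, equal to $[Z]_k$ for any witness $k$ of $Z$.

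Second, I would observe that the restriction $k \in z$ in the formula is harmless. Indeed, let $Z \fsubset x \cup y$. Since $x, y \subseteq z$ we have $Z \fsubset z$, so by the finite-consistency requirement~\ref{dn-st}(\ref{dn-st-1}) for the state $z$ there is a witness $k \in z$ with $Z \in \Con(k)$. Thus every finite subset $Z$ of $x \cup y$ is consistent and contributes exactly the witness-independent set $[Z]$ to the union, whence
\[
x \sqcup^z y = \bigcup \set{[Z]}{Z \fsubset x \cup y},
\]
an expression in which $z$ no longer occurs. Therefore $x \sqcup^z y$ is the same for every upper bound $z$ of $x$ and $y$, i.e.\ it is the \emph{global} least upper bound $x \sqcup y$.

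Combining this with the directedness argument of the first paragraph then shows that every consistent subset of $|A|$ has a least upper bound, so $\mathcal{L}(A)$ is bounded-complete; being also a pointed domain, it is a bc-domain. The one point I expect to require care — and which I regard as the crux — is precisely this passage: witness-independence from Condition~(\ref{bc}) must be paired with the guaranteed existence of a witness \emph{inside each} upper bound $z$, supplied by state axiom~\ref{dn-st}(\ref{dn-st-1}). It is exactly this interplay that collapses the a priori $z$-dependent local joins into a single global one.
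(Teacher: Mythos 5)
Your proof is correct. The paper itself defers the proof of Theorem~\ref{tm-bc} to~\cite{sp?}, and your argument is exactly the natural one assembled from the paper's own ingredients: the displayed formula for local joins preceding Theorem~\ref{tm-ldom}, witness-independence of $[Z]_k$ under Condition~(\ref{bc}), the existence of a witness inside each upper bound $z$ via state axiom~\ref{dn-st}(\ref{dn-st-1}) (which together make $x \sqcup^z y$ independent of $z$ and hence a global join), and the standard reduction of bounded-completeness to binary joins in a pointed dcpo.
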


It is unknown whether the requirement on $A$ is also necessary. 

So far, we have seen that information systems with witness generate L-domains. But the converse holds as well: Every L-domain defines a canonical information system with witnesses such that the L-domain generated by it is isomorphic to the given domain. 

Let $(D, \sqsubseteq)$ be an L-domain with basis $B$ and least element $\bot$. Set  $\mathcal{I}(D) = (B, \Con, \vdash, \bot)$ with
\[
 \Con = \set{(i, X)}{i \in B \wedge X \fsubset \low i \cap B} \,\,\text{and}\,\, (i, X) \vdash a \Longleftrightarrow a \ll \bigsqcup\nolimits^i X.
\]

\begin{theorem}\label{tm-dominsys}
Let $D$ be an L-domain. Then $\mathcal{I}(D)$ is an information system with witnesses such that $D$ and $\mathcal{L}(\mathcal{I}(D))$ are isomorphic. In addition,
\begin{enumerate}
\item\label{tm-dominsys-1}
$D$ is algebraic if, and only if, the information system $\mathcal{I}(D)$ satisfies Condition~(\ref{alg}).
\item\label{tm-dominsys-2}
$D$ is bounded-complete if, and only if, Condition~(\ref{bc}) holds in $\mathcal{I}(D)$. 
\end{enumerate}
\end{theorem}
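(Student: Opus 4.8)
The plan is to prove the three assertions in turn, relying throughout on the elementary properties of $\ll$ collected in Lemma~\ref{lem-preordprop} and, decisively, on the interpolation law~\ref{lem-preordprop}(\ref{lem_preordprop-5}); the relative joins $\bigsqcup^i X$ exist because $X \fsubset \low i$ is finite and bounded by $i$.

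First I would check that $\mathcal{I}(D)$ satisfies the eleven axioms of Definition~\ref{dn-infsys}. Note that $\bot$ is compact and hence lies in every basis, so $\Delta = \bot \in B$. Axioms~\ref{dn-infsys}(\ref{dn-infsys-1},\ref{dn-infsys-2}) are immediate from the description of $\Con$, and \ref{dn-infsys}(\ref{dn-infsys-3}) follows from $\bigsqcup^i \emptyset = \bot$ together with~\ref{lem-preordprop}(\ref{lem-preordprop-4}). For \ref{dn-infsys}(\ref{dn-infsys-4}) one uses that $(i,X)\vdash a$ gives $a \ll \bigsqcup^i X \sqsubseteq i$, hence $a \sqsubseteq i$ by~\ref{lem-preordprop}(\ref{lem-preordprop-1}), so every entailed token lies in $\low i \cap B$. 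Axioms~\ref{dn-infsys}(\ref{dn-infsys-5},\ref{dn-infsys-6}) reduce to monotonicity of $X \mapsto \bigsqcup^i X$ followed by~\ref{lem-preordprop}(\ref{lem-preordprop-2}); for \ref{dn-infsys}(\ref{dn-infsys-6}) one observes that $(i,X)\vdash Y$ makes $\bigsqcup^i X$ an upper bound of $Y$ in $\low i$, whence $\bigsqcup^i Y \sqsubseteq \bigsqcup^i X$. The two interpolation axioms are where the work lies: given $a \ll \bigsqcup^i X$, the interpolation law provides $v \in B$ with $a \ll v \ll \bigsqcup^i X$, and $Z = \{v\}$ witnesses \ref{dn-infsys}(\ref{dn-infsys-10}) since $\bigsqcup^i\{v\} = v$; for \ref{dn-infsys}(\ref{dn-infsys-11}) the same law applied to the finite set $Y$ yields $e \in B$ with $Y \ll e \ll \bigsqcup^i X$, so that $Y \fsubset \low e \cap B$ and $(i,X)\vdash e$. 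Finally, \ref{dn-infsys}(\ref{dn-infsys-7}) is just $\low i \subseteq \low j$ when $i \sqsubseteq j$, and \ref{dn-infsys}(\ref{dn-infsys-8},\ref{dn-infsys-9}) follow from the identity $\bigsqcup^i X = \bigsqcup^j X$, valid whenever $X \fsubset \low i \cap \low j$ and $i \sqsubseteq j$, which is exactly~\ref{lem-preordprop}(\ref{lem-preordprop-3}) taken with common upper bound $j$.

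Next I would set up the isomorphism $D \cong \mathcal{L}(\mathcal{I}(D))$ by the mutually inverse maps $\phi(x) = \dda\!_B x = \set{b \in B}{b \ll x}$ and $\psi(S) = \bigsqcup S$. To see $\psi$ is well defined I note that a state $S$ is directed in $D$: by~\ref{dn-st}(\ref{dn-st-1}) any two of its elements have an upper bound lying in $S$. That $\phi(x)$ is a state uses interpolation for~\ref{dn-st}(\ref{dn-st-1},\ref{dn-st-3}), while for closure under entailment~\ref{dn-st}(\ref{dn-st-2}) one argues that if $i, X \ll x$ and $a \ll \bigsqcup^i X$ then, by~\ref{lem-preordprop}(\ref{lem-preordprop-3}) applied with upper bound $x$, $\bigsqcup^i X = \bigsqcup^x X \sqsubseteq x$, so $a \ll x$. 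The identity $\psi\phi = \mathrm{id}_D$ is the basis property $\bigsqcup \dda\!_B x = x$. The converse $\phi\psi = \mathrm{id}$ is the delicate point and the main obstacle: given $a \in B$ with $a \ll \bigsqcup S$, I would interpolate to get $v$ with $a \ll v \ll \bigsqcup S$, use directedness of $S$ to find $s_0 \in S$ with $v \sqsubseteq s_0$ (hence $a \ll s_0$ by~\ref{lem-preordprop}(\ref{lem-preordprop-2})), derive from~\ref{dn-st}(\ref{dn-st-3}) a pair $(i,X)$ with $i, X \in S$ and $s_0 \ll \bigsqcup^i X$, conclude $a \ll \bigsqcup^i X$ by transitivity~\ref{lem-preordprop}(\ref{lem-preordprop-0}), i.e.\ $(i,X)\vdash a$, and finally invoke entailment-closure~\ref{dn-st}(\ref{dn-st-2}) to place $a \in S$; the reverse inclusion $S \subseteq \dda\!_B(\bigsqcup S)$ is the easier use of~\ref{dn-st}(\ref{dn-st-3}) together with~\ref{lem-preordprop}(\ref{lem-preordprop-3}). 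Since $\phi$ and $\psi$ are plainly monotone (using~\ref{lem-preordprop}(\ref{lem-preordprop-2}) for $\phi$), they constitute an order isomorphism, which for L-domains is an isomorphism.

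For the two supplementary equivalences I would exploit the isomorphism just established. Algebraicity and bounded-completeness are isomorphism invariants, so $D$ has either property if, and only if, $\mathcal{L}(\mathcal{I}(D))$ does. For~(\ref{tm-dominsys-1}) this combines with Theorem~\ref{tm-alg} to give directly that $D$ is algebraic if, and only if, $\mathcal{I}(D)$ satisfies~(\ref{alg}). For~(\ref{tm-dominsys-2}) the implication from~(\ref{bc}) is Theorem~\ref{tm-bc} transported across the isomorphism; for the converse, which Theorem~\ref{tm-bc} does not supply, I would argue directly: if $D$ is bounded-complete and $(i,X),(j,X) \in \Con$, then $X$ is a bounded finite set, so its global least upper bound $\bigsqcup X$ exists and satisfies $\bigsqcup^i X = \bigsqcup X = \bigsqcup^j X$, since $\bigsqcup X \sqsubseteq i, j$ forces each relative join to coincide with the global one; hence $(i,X)\vdash a \Leftrightarrow a \ll \bigsqcup^i X \Leftrightarrow a \ll \bigsqcup^j X \Leftrightarrow (j,X)\vdash a$, which is precisely~(\ref{bc}).
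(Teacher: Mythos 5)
Your proposal is correct, but there is nothing in this paper to compare it against: Section~\ref{sec-infosys} only recalls Theorem~\ref{tm-dominsys} from~\cite{sp?} and explicitly defers all proofs of that section to that reference, so the comparison can only be with the argument your proof itself suggests. Judged on its own merits, it is sound and complete in outline. The axiom verification is the natural one: the finite relative joins $\bigsqcup^i X$ exist because $D$ is an L-domain, Axioms~\ref{dn-infsys}(\ref{dn-infsys-10},\ref{dn-infsys-11}) are exactly where the interpolation law \ref{lem-preordprop}(\ref{lem_preordprop-5}) is needed, and Axioms~\ref{dn-infsys}(\ref{dn-infsys-8},\ref{dn-infsys-9}) correctly rest on the coincidence of local joins \ref{lem-preordprop}(\ref{lem-preordprop-3}). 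The isomorphism via $\phi(x) = \dda\!_B x$ and $\psi(S) = \bigsqcup S$ also checks out: states are directed by \ref{dn-st}(\ref{dn-st-1}), entailment-closure of $\dda\!_B x$ follows from \ref{lem-preordprop}(\ref{lem-preordprop-3}) with common upper bound $x$, and your treatment of the delicate inclusion $\dda\!_B(\bigsqcup S) \subseteq S$ (interpolate, pass into $S$ by directedness, derive $(i,X) \vdash a$ via \ref{dn-st}(\ref{dn-st-3}) and transitivity, close up with \ref{dn-st}(\ref{dn-st-2})) is correct. Two points deserve particular credit. First, you recognized that parts (\ref{tm-dominsys-1}) and (\ref{tm-dominsys-2}) do not both follow formally from Theorems~\ref{tm-alg} and~\ref{tm-bc} transported across the isomorphism: Theorem~\ref{tm-bc} gives only sufficiency of (\ref{bc}), and the paper even remarks that necessity is open in general. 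Second, your direct argument for the missing direction — under bounded completeness every local join $\bigsqcup^i X$ coincides with the global join $\bigsqcup X$, so entailment in $\mathcal{I}(D)$ is witness-independent — is exactly the right specialization and is correct. A single cosmetic remark: in the inclusion $S \subseteq \dda\!_B(\bigsqcup S)$ you invoke \ref{lem-preordprop}(\ref{lem-preordprop-3}), but \ref{lem-preordprop}(\ref{lem-preordprop-2}) already suffices, since $a \ll \bigsqcup^i X \sqsubseteq i \sqsubseteq \bigsqcup S$; your appeal is harmless but not needed.
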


\section{Approximable mappings}\label{sec-am}

In the next step we want to consider the appropriate morphisms between information systems with witnesses.  They will be relations between the consistent sets and their consistency witnesses of one information system with witnesses and the tokens of another, just as the entailment relations.

\begin{definition}\label{dn-am}
An \emph{approximable mapping} $H$ between information systems with witnesses $(A, \Con, \vdash, \Delta)$ and $(A', \Con', \vdash', \Delta')$, written $\apmap{H}{A}{A'}$, is a relation between $\Con$ and $A'$ satisfying the following nine conditions, for all $i, j \in A$, $X, X' \fsubset A$, $b, k \in A'$ and $Y, F \fsubset A'$ with $X \in \Con(i)$ and $Y \in \Con'(k)$:
\begin{enumerate}
\item\label{dn-am-6}
$(\Delta, \emptyset) H \Delta'$.

\item\label{dn-am-2}
$X' \in \Con(i) \wedge X \subseteq X' \wedge (i, X) H b \Rightarrow (i, X') H b$

\item\label{dn-am-3}
$(i, X) \vdash X' \wedge (i, X') H b \Rightarrow (i, X) H b$

\item\label{dn-am-5-1}
$(i, X) H b \Rightarrow (\exists U \in \Con(i)) (i, X) \vdash U \wedge (i, U) H b$

\item\label{dn-am-1}
$(i, X) H (k, Y) \wedge (k, Y) \vdash' b \Rightarrow (i, X) H b$

\item\label{dn-am-5-2}
$(i, X) H b \Rightarrow (\exists (d, V) \in \Con') (i, X) H (d, V)  \wedge (d, V) \vdash' b$

\item\label{dn-am-8}
$(i, X) H F \Rightarrow (\exists e \in A') (i, X) H e \wedge F \in \Con'(e)$.

\item\label{dn-am-4}
$\{ i \} \in \Con(j)  \wedge (i, X) H b \Rightarrow (j, X) H b$

\item\label{dn-am-7}
$\{ i \} \in \Con(j) \wedge (j, X) H b \Rightarrow (i, X) H b$

\end{enumerate}
Here, $(i, X) H Y$ means that $(i, X) H c$, for all $c \in Y$,  and $(i, X) H (k, Y)$ that $(i, X) H k$ as well as $(i, X) H Y$.
\end{definition}

Note that because of Condition~\ref{dn-infsys}(\ref{dn-infsys-5}), Condition~\ref{dn-am}(\ref{dn-am-2}) follows from Conditions~\ref{dn-am}(\ref{dn-am-3}, \ref{dn-am-5-1}).
The Left Interpolation Rule~\ref{dn-am}(\ref{dn-am-5-1}) together with the Conservativity Requirement~\ref{dn-am}(\ref{dn-am-7}) can be replaced by one rule.

\begin{lemma}\label{pn-amleft}
Let $(A, \Con, \vdash, \Delta)$ and $(A', \Con', \vdash', \Delta')$ be information systems with witnesses. Then, for any $H \subseteq \Con \times A'$ satisfying Conditions~\ref{dn-am}(\ref{dn-am-2}, \ref{dn-am-3}, \ref{dn-am-4}), any $(i, X) \in \Con$, and all $F \fsubset A'$, Conditions~\ref{dn-am}(\ref{dn-am-5-1}, \ref{dn-am-7}) together are equivalent to the following requirement:
\begin{equation}\label{eq-amleft}
(i, X) H F \Rightarrow (\exists (c, U) \in \Con) (i, X) \vdash (c, U) \wedge (c, U) H F.
\end{equation}
\end{lemma}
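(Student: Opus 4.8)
The proof is an equivalence, so I would treat the two implications separately, keeping the blanket hypotheses $X \in \Con(i)$ (and, whenever a pair $(c,U) \in \Con$ is produced, $U \in \Con(c)$) in force throughout, and freely using Axioms~\ref{dn-infsys}(\ref{dn-infsys-4}), \ref{dn-infsys}(\ref{dn-infsys-9}) and \ref{dn-infsys}(\ref{dn-infsys-11}) of the underlying information system. For the direction ``\ref{dn-am}(\ref{dn-am-5-1},\ref{dn-am-7})\,$\Rightarrow$\,\eqref{eq-amleft}'' I would first reduce the finite set $F$ to a single interpolant: for each $b \in F$ the Left Interpolation Rule~\ref{dn-am}(\ref{dn-am-5-1}) supplies some $U_b \in \Con(i)$ with $(i,X) \vdash U_b$ and $(i,U_b) H b$. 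Putting $U = \bigcup_{b \in F} U_b$, monotonicity of entailment gives $(i,X) \vdash U$, hence $U \in \Con(i)$ by Axiom~\ref{dn-infsys}(\ref{dn-infsys-4}), while the Monotonicity Rule~\ref{dn-am}(\ref{dn-am-2}) lifts each $(i,U_b) H b$ to $(i,U) H b$, so that $(i,U) H F$.

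At this point the witness is still $i$, whereas the conclusion demands a genuine pair $(c,U) \in \Con$ with $(i,X) \vdash c$. To manufacture the witness I would invoke Axiom~\ref{dn-infsys}(\ref{dn-infsys-11}) on $(i,X) \vdash U$, obtaining $e \in A$ with $(i,X) \vdash e$ and $U \in \Con(e)$; thus $(e,U) \in \Con$ and $(i,X) \vdash (e,U)$. It remains to transport the $H$-facts from witness $i$ to witness $e$. Since $(i,X) \vdash e$, Axiom~\ref{dn-infsys}(\ref{dn-infsys-4}) gives $\{e\} \in \Con(i)$, which is exactly the hypothesis of the Conservativity Rule~\ref{dn-am}(\ref{dn-am-7}) read with its $i,j$ instantiated as $e,i$; applied to each $(i,U) H b$ it yields $(e,U) H b$, hence $(e,U) H F$. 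Taking $(c,U) = (e,U)$ closes this direction.

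For the converse ``\eqref{eq-amleft}\,$\Rightarrow$\,\ref{dn-am}(\ref{dn-am-5-1},\ref{dn-am-7})'' I would derive the two rules in turn. To get \ref{dn-am}(\ref{dn-am-5-1}), specialise \eqref{eq-amleft} to $F = \{b\}$: it returns $(c,U) \in \Con$ with $(i,X) \vdash (c,U)$ and $(c,U) H b$. Then $U \in \Con(i)$ and $\{c\} \in \Con(i)$, both by Axiom~\ref{dn-infsys}(\ref{dn-infsys-4}) from $(i,X) \vdash U$ and $(i,X) \vdash c$ respectively, so the Monotonicity Rule~\ref{dn-am}(\ref{dn-am-4}) converts $(c,U) H b$ into $(i,U) H b$, exhibiting $U$ as the interpolant required by \ref{dn-am}(\ref{dn-am-5-1}). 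To get \ref{dn-am}(\ref{dn-am-7}), assume $\{i\} \in \Con(j)$ and $(j,X) H b$ (with $X \in \Con(i)$ by the blanket hypothesis), and apply \eqref{eq-amleft} on the $j$-side to obtain $(c,U) \in \Con$ with $(j,X) \vdash (c,U)$ and $(c,U) H b$. The decisive move is conservativity: as $\{i\} \in \Con(j)$ and $X \in \Con(i)$, Axiom~\ref{dn-infsys}(\ref{dn-infsys-9}) pulls the $j$-entailments $(j,X) \vdash c$ and $(j,X) \vdash U$ back to $(i,X) \vdash c$ and $(i,X) \vdash U$. From $(i,X) \vdash c$ and Axiom~\ref{dn-infsys}(\ref{dn-infsys-4}) I again get $\{c\} \in \Con(i)$, so Rule~\ref{dn-am}(\ref{dn-am-4}) gives $(i,U) H b$, and finally the Entailment Rule~\ref{dn-am}(\ref{dn-am-3}) together with $(i,X) \vdash U$ upgrades this to $(i,X) H b$, which is \ref{dn-am}(\ref{dn-am-7}).

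The main obstacle throughout is the witness bookkeeping, not the routine rule-chasing. In the forward direction the naive choice $c = i$ fails because $(i,X) \vdash i$ need not hold, and the remedy is to extract a legitimate witness $e$ from the interpolation Axiom~\ref{dn-infsys}(\ref{dn-infsys-11}) and then reshuffle witnesses through singleton consistency ($\{e\} \in \Con(i)$ via Axiom~\ref{dn-infsys}(\ref{dn-infsys-4})) feeding Rule~\ref{dn-am}(\ref{dn-am-7}). In the reverse direction the analogous trap is that $X \in \Con(j)$ does \emph{not} entail $X \in \Con(i)$ (Axiom~\ref{dn-infsys}(\ref{dn-infsys-7}) runs the other way), so one is forced to lean on the conservativity Axiom~\ref{dn-infsys}(\ref{dn-infsys-9}) to repatriate the entailments to the witness $i$. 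Once these witness transfers are arranged correctly, everything else reduces to direct applications of the rules already assumed of $H$.
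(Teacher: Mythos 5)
Your proof is correct and follows essentially the same route as the paper's: the forward direction aggregates the interpolants $U_b$ into $U$, manufactures the witness via Axiom~\ref{dn-infsys}(\ref{dn-infsys-11}), and transfers $H$-facts with Rule~\ref{dn-am}(\ref{dn-am-7}), while the converse direction derives \ref{dn-am}(\ref{dn-am-5-1}) as a special case via Rule~\ref{dn-am}(\ref{dn-am-4}) and derives \ref{dn-am}(\ref{dn-am-7}) by repatriating entailments with Axiom~\ref{dn-infsys}(\ref{dn-infsys-9}) before applying Rules~\ref{dn-am}(\ref{dn-am-4}) and \ref{dn-am}(\ref{dn-am-3}), exactly as in the paper. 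Your write-up is in fact slightly more explicit than the paper's (e.g., spelling out the uses of Axiom~\ref{dn-infsys}(\ref{dn-infsys-4}) for singleton consistency), but the argument is the same.
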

\begin{proof}
Assume that Requirement~(\ref{eq-amleft}) holds. Then Condition~\ref{dn-am}(\ref{dn-am-5-1}) follows as a special case because of Condition~\ref{dn-am}(\ref{dn-am-4}). For Condition~\ref{dn-am}(\ref{dn-am-7}) suppose that $\{ i \} \in \Con(j)$ and $(j, X) H b$. With ~Requirement~(\ref{eq-amleft}) we obtain that there is some $(c, U) \in \Con$ so that $(j, X) \vdash (c, U)$ and $(c, U) H b$. Since $X \in \Con(i)$, it follows with Axiom~\ref{dn-infsys}(\ref{dn-infsys-9}) that $(i, X) \vdash (c, U)$. Thus, $\{ c \} \in \Con(i)$ and therefore $(i, U) H b$, by Condition~\ref{dn-am}(\ref{dn-am-4}). With Condition~\ref{dn-am}(\ref{dn-am-3}) we finally obtain that $(i, X) H b$.

Now, conversely, suppose that Conditions \ref{dn-am}(\ref{dn-am-5-1}) and \ref{dn-am}(\ref{dn-am-7}) hold. Moreover, assume that $(i, X) H F$. Then $(i, X) H b$, for all $b \in F$. Thus, there are $U_b \in \Con(i)$ with $(i, X) \vdash U_b$ and $(i, U_b) H b$. Set $U = \set{u_b}{b \in F}$. It follows that $(i, X) \vdash U$. Hence, $U \in \Con(i)$. With Condition~\ref{dn-am}(\ref{dn-am-2}) we therefore have that $(i, U) H F$. By Axiom~\ref{dn-infsys}(\ref{dn-infsys-11}) we moreover obtain from $(i, X) \vdash U$ that there is some $c \in A$ with $(i, X) \vdash c$ and $U \in \Con(c)$. Thus, $\{ c \} \in \Con(i)$. Consequently, $(c, U) H F$, because of Condition~\ref{dn-am}(\ref{dn-am-7}).
\end{proof}

Similarly, the Right Interpolation Rule~\ref{dn-am}(\ref{dn-am-5-2}) and the Witness Generation Rule~\ref{dn-am}(\ref{dn-am-8}) can be combined into one rule.

\begin{lemma}\label{pn-amright}
Let $(A, \Con, \vdash, \Delta)$ and $(A', \Con', \vdash', \Delta')$ be information systems with witnesses. Then, for any $H \subseteq \Con \times A'$, $(i, X) \in \Con$, and $F \fsubset A'$, Conditions~\ref{dn-am}(\ref{dn-am-5-2}, \ref{dn-am-8})  together are equivalent to the following requirement:
\begin{equation}\label{eq-amright}
(i, X) H F \Rightarrow (\exists (e, V) \in \Con') (i, X) H (e, V) \wedge (e, V) \vdash' F.
\end{equation}
\end{lemma}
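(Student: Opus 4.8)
The statement is an equivalence, so the plan is to prove the two implications separately, mirroring the structure of the just-completed proof of Lemma~\ref{pn-amleft}. The direction from~(\ref{eq-amright}) to Conditions~\ref{dn-am}(\ref{dn-am-5-2},\ref{dn-am-8}) should be the routine one, while the reverse direction carries the real work of combining local data into a single witness.

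For the easy direction, I would assume~(\ref{eq-amright}) and recover the two conditions. To obtain the Right Interpolation Rule~\ref{dn-am}(\ref{dn-am-5-2}), simply instantiate~(\ref{eq-amright}) at the singleton $F = \{ b \}$: the resulting $(e, V) \in \Con'$ satisfies $(i, X) H (e, V)$ and $(e, V) \vdash' b$, which is exactly what~\ref{dn-am}(\ref{dn-am-5-2}) demands. To obtain the Witness Generation Rule~\ref{dn-am}(\ref{dn-am-8}), suppose $(i, X) H F$ and take the $(e, V) \in \Con'$ provided by~(\ref{eq-amright}); then $(i, X) H e$ holds in particular, and since $V \in \Con'(e)$ and $(e, V) \vdash' F$, Axiom~\ref{dn-infsys}(\ref{dn-infsys-4}), applied in $A'$, yields $F \in \Con'(e)$, completing~\ref{dn-am}(\ref{dn-am-8}).

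For the reverse direction, I would assume Conditions~\ref{dn-am}(\ref{dn-am-5-2},\ref{dn-am-8}) and suppose $(i, X) H F$. Since $F$ is finite, first apply~\ref{dn-am}(\ref{dn-am-5-2}) to each $c \in F$ separately, obtaining $(d_c, V_c) \in \Con'$ with $(i, X) H (d_c, V_c)$ and $(d_c, V_c) \vdash' c$. Collecting everything now known to be $H$-related to $(i, X)$ into the finite set $W = \bigcup_{c \in F} (\{ d_c \} \cup V_c)$, I get $(i, X) H W$, and an application of~\ref{dn-am}(\ref{dn-am-8}) to $W$ produces a single witness $e \in A'$ with $(i, X) H e$ and $W \in \Con'(e)$. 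Thus $(e, W) \in \Con'$ and $(i, X) H (e, W)$, so taking $V = W$ it only remains to check $(e, W) \vdash' c$ for each $c \in F$: since $d_c \in W \in \Con'(e)$ gives $\{ d_c \} \in \Con'(e)$ by~\ref{dn-infsys}(\ref{dn-infsys-2}), Axiom~\ref{dn-infsys}(\ref{dn-infsys-8}) relocates $(d_c, V_c) \vdash' c$ to $(e, V_c) \vdash' c$, and then Axiom~\ref{dn-infsys}(\ref{dn-infsys-5}), using $V_c \subseteq W$, upgrades this to $(e, W) \vdash' c$. This establishes~(\ref{eq-amright}).

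The main obstacle is exactly the manufacture of a single common witness together with a single premise set serving all of $F$ at once: the witnesses $d_c$ delivered by~\ref{dn-am}(\ref{dn-am-5-2}) are a priori distinct, and because consistency is only local one cannot naively take the union of their associated sets and expect it to be consistent. The role of~\ref{dn-am}(\ref{dn-am-8}) is precisely to surmount this, converting the $H$-relatedness of all the collected tokens into genuine joint consistency under one witness $e$; the witness-transfer Axiom~\ref{dn-infsys}(\ref{dn-infsys-8}) then carries the local entailments over to $e$. This is the same interplay between local consistency and witness transfer that drives the companion Lemma~\ref{pn-amleft}, which is why I expect the two proofs to run in close parallel.
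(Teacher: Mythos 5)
Your proposal is correct and follows essentially the same route as the paper's own proof: the forward direction reads off~\ref{dn-am}(\ref{dn-am-5-2}) as the singleton case and derives~\ref{dn-am}(\ref{dn-am-8}) via Axiom~\ref{dn-infsys}(\ref{dn-infsys-4}), while the converse applies~\ref{dn-am}(\ref{dn-am-5-2}) pointwise, collects $\bigcup_{c \in F}(\{d_c\} \cup V_c)$, invokes~\ref{dn-am}(\ref{dn-am-8}) for the common witness $e$, and transfers the entailments with Axioms~\ref{dn-infsys}(\ref{dn-infsys-2},\ref{dn-infsys-8},\ref{dn-infsys-5}) — exactly the paper's argument, with the axiom citations made more explicit than in the original.
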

\begin{proof}
Assume that Requirement~(\ref{eq-amright}) holds. Then Condition~\ref{dn-am}(\ref{dn-am-5-2}) follows as a special case. For Condition~\ref{dn-am}(\ref{dn-am-8}) suppose that $(i, X) H F$. Then there is some $(e, V) \in \Con')$ so that $(i, X) H (e, V)$ and $(e, V) \vdash' F$. It follows that $(i, X) H e$ and $F \in \Con'(e)$.

Next, conversely, suppose that Conditions~\ref{dn-am}(\ref{dn-am-5-2}) and \ref{dn-am}(\ref{dn-am-8}) hold. Moreover, assume that $(i, X) H F$. Then, for all $b \in F$, there exist $(d_b, V_b) \in \Con'$ such that $(i, X) H (d_b, V_b)$ and $(d_b, V_b) \vdash' b$. Let $V = \set{V_b \cup \{  d_b \}}{b \in F}$. Since we have that $(i, X) H V$, it follows with Requirement~\ref{dn-am}(\ref{dn-am-8}) that there is some $e \in A'$ with $(i, X) H e$ and $V \in \Con'(e)$.  Hence, $\{ d_b \} \in \Con'(e)$ and thus $(e, V_b) \vdash' b$, from which we obtain that $(e, V) \vdash' F$. In addition, we have that $(i, X) H (e, V)$.
\end{proof}

Finally, the extended left and right Interpolation Rules~(\ref{eq-amleft}) and (\ref{eq-amright}) together can be exchanged for one rule.

\begin{lemma}\label{pn-amint}
Let $(A, \Con, \vdash, \Delta)$ and $(A', \Con', \vdash', \Delta')$ be information systems with witnesses. Then, for any $H \subseteq \Con \times A'$, $(i, X) \in \Con$, and $F \fsubset A'$, Conditions~(\ref{eq-amleft}) and (\ref{eq-amright})  together are equivalent to the following rule:
\begin{multline}\label{eq-amint}
(i, X) H F \Rightarrow (\exists (c, U) \in \Con) (\exists (e, V) \in \Con')\\
  (i, X) \vdash (c, U) \wedge (c, U) H (e, V) \wedge (e, V) \vdash' F. 
\end{multline}
\end{lemma}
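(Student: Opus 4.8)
The plan is to prove the two implications separately, treating the equivalence as a \emph{combine/split} pair of directions in the same spirit as Lemmas~\ref{pn-amleft} and \ref{pn-amright}.

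For the direction from (\ref{eq-amint}) to the two rules (\ref{eq-amleft}) and (\ref{eq-amright}), I would start from $(i, X) H F$ and extract the witnesses $(c, U) \in \Con$ and $(e, V) \in \Con'$ furnished by (\ref{eq-amint}), so that $(i, X) \vdash (c, U)$, $(c, U) H (e, V)$, and $(e, V) \vdash' F$. To recover (\ref{eq-amleft}) I already have $(i, X) \vdash (c, U)$, and $(c, U) H F$ follows by applying the Cut Rule~\ref{dn-am}(\ref{dn-am-1}) to $(c, U) H (e, V)$ and $(e, V) \vdash' b$ for each $b \in F$. To recover (\ref{eq-amright}) I already have $(e, V) \vdash' F$, so the remaining work is to produce $(i, X) H (e, V)$ out of $(i, X) \vdash (c, U)$ and $(c, U) H (e, V)$.

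That last step is the main obstacle, because a ``cut on the left'' along the entailment $(i, X) \vdash (c, U)$ is not available as a single axiom and must be assembled. I would argue for each target token $b \in V \cup \{e\}$ (for which $(c, U) H b$ holds, since $(c, U) H (e, V)$): from $(i, X) \vdash c$ together with $X \in \Con(i)$, Axiom~\ref{dn-infsys}(\ref{dn-infsys-4}) gives $\{c\} \in \Con(i)$; the witness-change Rule~\ref{dn-am}(\ref{dn-am-4}) then turns $(c, U) H b$ into $(i, U) H b$; and finally the entailment-cut Rule~\ref{dn-am}(\ref{dn-am-3}), fed with $(i, X) \vdash U$, yields $(i, X) H b$. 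Ranging over all such $b$ gives $(i, X) H (e, V)$, which completes (\ref{eq-amright}).

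For the converse direction I would apply the two rules in the correct order. Starting from $(i, X) H F$, the extended right interpolation (\ref{eq-amright}) produces $(e, V) \in \Con'$ with $(i, X) H (e, V)$ and $(e, V) \vdash' F$. Since $(i, X) H (e, V)$ is exactly $(i, X) H (V \cup \{e\})$, I can now feed the finite set $V \cup \{e\}$ into the extended left interpolation (\ref{eq-amleft}), obtaining $(c, U) \in \Con$ with $(i, X) \vdash (c, U)$ and $(c, U) H (V \cup \{e\})$, that is, $(c, U) H (e, V)$. The three facts $(i, X) \vdash (c, U)$, $(c, U) H (e, V)$, and $(e, V) \vdash' F$ are precisely what (\ref{eq-amint}) asserts. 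The only subtlety in this direction is the sequencing: right interpolation must be invoked first, so that its witness set $V \cup \{e\}$ is available to be interpolated on the left.
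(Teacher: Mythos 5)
You correctly identified the one genuine subtlety in the backward direction: starting from $(i,X) H F$, the extended right interpolation (\ref{eq-amright}) must be applied first, and its output set $V \cup \{e\}$ then fed into the extended left interpolation (\ref{eq-amleft}); this half of your argument is sound and, importantly, uses no assumptions on $H$ beyond it being a relation. The problem is your forward direction. The lemma is stated for an \emph{arbitrary} $H \subseteq \Con \times A'$, but to get $(c,U) H F$ you invoke the cut rule \ref{dn-am}(\ref{dn-am-1}), and to get $(i,X) H (e,V)$ you invoke Rules \ref{dn-am}(\ref{dn-am-3}) and \ref{dn-am}(\ref{dn-am-4}) (in effect re-deriving Lemma~\ref{lem-amstrong3}, which the paper proves only for approximable mappings). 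None of these is available under the stated hypotheses. The paper is deliberate about exactly this kind of bookkeeping: Lemma~\ref{pn-amleft} explicitly assumes \ref{dn-am}(\ref{dn-am-2},\ref{dn-am-3},\ref{dn-am-4}), while Lemma~\ref{pn-amright} assumes nothing, so silently importing approximable-mapping axioms here is not a cosmetic slip.

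Moreover, the gap cannot be closed by a cleverer argument, because for a bare relation the implication from (\ref{eq-amint}) to (\ref{eq-amright}) is false. Take $A$ to be the one-point system $T$, and let $A'$ have tokens $\Delta', b, v, e$ with $\Con'(\Delta') = \mathcal{P}(\{\Delta'\})$, $\Con'(b) = \mathcal{P}(\{\Delta',b\})$, $\Con'(e) = \Con'(v) = \mathcal{P}(\{\Delta',b,v,e\})$, and with entailment given by: every element of $\Con'$ entails $\Delta'$, and $(e,X)$, $(v,X)$ entail each of $b$, $v$, $e$ exactly when $v \in X$; one checks that all conditions of Definition~\ref{dn-infsys} hold. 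Put $H = \{((\Delta,\emptyset),b),\ ((\Delta,\{\Delta\}),e),\ ((\Delta,\{\Delta\}),v)\}$. Every instance of (\ref{eq-amint}) holds, witnessed by $(c,U) = (\Delta,\{\Delta\})$ and $(e,V) = (e,\{v\})$, since $(\Delta,\emptyset) \vdash (\Delta,\{\Delta\})$ and $(e,\{v\}) \vdash' b$. But (\ref{eq-amright}) fails at the instance $(\Delta,\emptyset) H \{b\}$: the only pairs $(e',V') \in \Con'$ with $(\Delta,\emptyset) H (e',V')$ are $(b,\emptyset)$ and $(b,\{b\})$, and neither entails $b$. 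So what you have actually proved is the equivalence under the additional hypothesis that $H$ satisfies Conditions~\ref{dn-am}(\ref{dn-am-1},\ref{dn-am-3},\ref{dn-am-4}). Since the paper defers its own proof to~\cite{sp?}, there is nothing in-house to compare against, but note that in every place the paper uses this lemma (Lemmas~\ref{lem-state-approx}, \ref{lem-evalam}, \ref{lem-lambam}) those conditions have been verified for the relation in question beforehand; your proof becomes correct, and is the natural one, once you state these hypotheses explicitly in the style of Lemma~\ref{pn-amleft} rather than using them tacitly.
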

For a proof see Ref.~\cite{sp21}.

Similarly to Lemma~\ref{lem-strong6} a strengthening of Axiom~\ref{dn-am}(\ref{dn-am-3}) can be derived. It reverses the implication in Condition~\ref{dn-am}(\ref{dn-am-5-1}).

\begin{lemma}\label{lem-amstrong3}
Let $H$ be an approximable mapping between information systems $A$ and $A'$ with witnesses. Then for all $(i, X), (j, Y) \in \Con$ and $b \in A'$,
\[
(i, X) \vdash (j, Y) \wedge (j, Y) H b \Rightarrow (i, X) H b.
\]
\end{lemma}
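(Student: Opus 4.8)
The plan is to mirror the proof of Lemma~\ref{lem-strong6}, the pure-entailment version of the very same statement: first transport the assertion $(j, Y) H b$ from the witness $j$ to the witness $i$, and then contract the antecedent set $Y$ down to $X$ along the entailment $(i, X) \vdash Y$. The only genuine difference from Lemma~\ref{lem-strong6} is that the change-of-witness step now uses the approximable-mapping axiom \ref{dn-am}(\ref{dn-am-4}) in place of the corresponding information-system axiom \ref{dn-infsys}(\ref{dn-infsys-8}); structurally the two arguments are identical.

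Concretely, I would argue in three short steps. First, from the hypothesis $(i, X) \vdash (j, Y)$ I extract $(i, X) \vdash j$, i.e.\ $(i, X) \vdash \{ j \}$, and since $X \in \Con(i)$, Axiom~\ref{dn-infsys}(\ref{dn-infsys-4}) yields $\{ j \} \in \Con(i)$. The same axiom applied to the other half, $(i, X) \vdash Y$, gives $Y \in \Con(i)$, so that the intermediate assertion $(i, Y) H b$ I am about to form is well-typed. Second, with $\{ j \} \in \Con(i)$ in hand together with the hypothesis $(j, Y) H b$, I apply Axiom~\ref{dn-am}(\ref{dn-am-4}) with the roles of the two witnesses interchanged: the axiom reads $\{ p \} \in \Con(q) \wedge (p, X) H b \Rightarrow (q, X) H b$, and I instantiate $p := j$, $q := i$, $X := Y$, obtaining $(i, Y) H b$. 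Third, combining $(i, X) \vdash Y$ with $(i, Y) H b$, Axiom~\ref{dn-am}(\ref{dn-am-3}) (taking $X' := Y$) delivers $(i, X) H b$, as required.

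I do not expect a serious obstacle, since the proof is a routine chaining of three axioms. The one point demanding care is the bookkeeping of consistency witnesses in the middle step: one must verify that $(i, Y)$ indeed lies in $\Con$ before writing $(i, Y) H b$, which is exactly why the deduction $Y \in \Con(i)$ is recorded first — either directly via \ref{dn-infsys}(\ref{dn-infsys-4}), or via $\Con(j) \subseteq \Con(i)$ obtained from \ref{dn-infsys}(\ref{dn-infsys-7}) — and that the witness interchange in \ref{dn-am}(\ref{dn-am-4}) is applied in the correct direction. Once these are in place the three implications compose without incident.
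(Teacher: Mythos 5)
Your proof is correct and is essentially the argument the paper intends: the same three-step chain (extract $\{j\} \in \Con(i)$ and $Y \in \Con(i)$ via \ref{dn-infsys}(\ref{dn-infsys-4}), change witness via \ref{dn-am}(\ref{dn-am-4}), then contract along $(i,X) \vdash Y$ via \ref{dn-am}(\ref{dn-am-3})) is exactly the pattern the paper itself runs inside the proof of Lemma~\ref{pn-amleft}, and it is the announced analogue of Lemma~\ref{lem-strong6} with \ref{dn-am}(\ref{dn-am-4}) replacing \ref{dn-infsys}(\ref{dn-infsys-8}). Your care about well-typedness of the intermediate pair $(i,Y)$ is precisely the right bookkeeping point.
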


As has already been mentioned, entailment relations are special approximable mappings. For $(i, X) \in \Con$ and $a \in A$, set $(i, X) \Id_A a$ if $(i, X) \vdash a$. Then $\apmap{\Id}{A}{A}$ such that for all $\apmap{H}{A}{A'}$, $H \circ \Id_{A'} = H = \Id_A \circ H$, where for approximable mappings $\apmap{H}{A}{A'}$ and $\apmap{G}{A'}{A''}$ their composition $\apmap{H \circ G}{A}{A''}$ is defined by
\[
(i, X) (H \circ G) c \Longleftrightarrow (\exists (j, Y) \in \Con') (i, X) H (j, Y) \wedge (j, Y) G c.
\]

Let $\mathbf{ISW}$ be the category of information systems with witnesses and approximable mappings and $\mathbf{aISW}$, $\mathbf{bcISW}$, and $\mathbf{abcISW}$, respectively, be the full categories of information systems with witnesses that satisfy Condition~(\ref{alg}), Condition~(\ref{bc}) or both of them.

\begin{theorem}\label{tm-eqlisw}
The category $\mathbf{ISW}$ of information systems with witnesses and approximable mappings is equivalent to the category $\mathbf{L}$ of L-domains and Scott continuous functions.
\end{theorem}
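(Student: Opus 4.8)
The plan is to establish the category equivalence $\mathbf{ISW} \simeq \mathbf{L}$ by exhibiting a functor together with a suitable inverse up to natural isomorphism. The natural candidate functor is $\mathcal{L}$, sending an information system with witnesses $A$ to the L-domain $\mathcal{L}(A)$ of Theorem~\ref{tm-ldom}; the reverse direction is given by $\mathcal{I}$ of Theorem~\ref{tm-dominsys}. To use the standard criterion for equivalence, I would show that $\mathcal{L}$ is \emph{full}, \emph{faithful}, and \emph{essentially surjective}. Essential surjectivity is already in hand: Theorem~\ref{tm-dominsys} says every L-domain $D$ is isomorphic to $\mathcal{L}(\mathcal{I}(D))$, so every object of $\mathbf{L}$ lies in the image of $\mathcal{L}$ up to isomorphism. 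Thus the real work is the action of $\mathcal{L}$ on morphisms and the verification of fullness and faithfulness.

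First I would make $\mathcal{L}$ into a functor by defining, for an approximable mapping $\apmap{H}{A}{A'}$, a Scott continuous map $\mathcal{L}(H) = f_H \colon \mathcal{L}(A) \to \mathcal{L}(A')$. The expected definition is
\[
f_H(x) = \set{b \in A'}{(\exists (i, X) \in \Con) \{i\} \cup X \subseteq x \wedge (i, X) H b},
\]
and I would check that $f_H(x)$ is indeed a state of $A'$, using the interpolation and witness-generation rules of Definition~\ref{dn-am} (in the combined form of Lemma~\ref{pn-amint}) to verify finite consistency, entailment-closure, and derivability. Monotonicity is immediate from the definition, and Scott continuity follows because states are unions of the directed family $\set{[X]_i}{\{i\}\cup X\subseteq x}$ (from the discussion preceding Theorem~\ref{tm-ldom}) together with the fact that any witnessing pair $(i,X)$ with $\{i\}\cup X\subseteq \bigsqcup S$ already sits inside some member of a directed set $S$. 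Functoriality — $\mathcal{L}(\Id_A) = \id$ and $\mathcal{L}(H \circ G) = \mathcal{L}(G)\circ\mathcal{L}(H)$ — should reduce to unwinding the definition of composition of approximable mappings together with Lemma~\ref{lem-amstrong3}.

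For \emph{faithfulness} I must recover $H$ from $f_H$: I would show that $(i, X) H b$ holds if and only if $b \in f_H([X]_i)$, which follows from the left-interpolation rule~(\ref{eq-amleft}) and Lemma~\ref{lem-amstrong3}, so distinct approximable mappings induce distinct functions. For \emph{fullness} I would take an arbitrary Scott continuous $g \colon \mathcal{L}(A) \to \mathcal{L}(A')$ and define a relation $H_g$ by $(i, X) H_g b \Longleftrightarrow b \in g([X]_i)$, then verify that $H_g$ satisfies all nine axioms of Definition~\ref{dn-am} and that $f_{H_g} = g$. Verifying the nine axioms is where I expect the genuine difficulty to lie: the interplay-of-witnesses axioms~\ref{dn-am}(\ref{dn-am-4},\ref{dn-am-7}) and the interpolation rules will require translating the approximation relation $\ll$ on $\mathcal{L}(A)$ into entailment via Proposition~\ref{lem-app}, and the continuity of $g$ must be invoked precisely when producing the interpolating pair $(c, U)$ demanded by~(\ref{eq-amleft}), mirroring how Hoofman's continuous-case interpolation is needed. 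The main obstacle is thus not any single axiom but the careful bookkeeping of \emph{local} least upper bounds and consistency witnesses when passing between the order-theoretic picture on $\mathcal{L}(A)$ and the syntactic entailment relation on $A$; once the dictionary $[X]_i \leftrightarrow$ basis elements and $\ll\ \leftrightarrow\ \vdash$ is set up cleanly, each axiom should follow, but establishing that dictionary rigorously for morphisms is the crux.
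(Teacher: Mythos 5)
Your proposal is correct and takes essentially the same route as the paper: your $f_H$ and $H_g$ are exactly the maps $\mathcal{L}(G)$ and $H^g$ that the paper recalls in Section~\ref{sec-fctsp} from~\cite{sp?}, where the identities $\mathcal{L}(H^g) = g$ and $H^{\mathcal{L}(G)} = G$ give fullness and faithfulness of the functor $\mathcal{L}$, and essential surjectivity is Theorem~\ref{tm-dominsys}. Your recovery of $H$ from $f_H$ via $(i,X)\, H\, b \Leftrightarrow b \in f_H([X]_i)$, using the left interpolation rule~(\ref{eq-amleft}) and Lemma~\ref{lem-amstrong3}, is precisely the mechanism underlying these identities.
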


\begin{corollary}\label{cor-eqabc}
The categories $\mathbf{aISW}$, $\mathbf{bcISW}$ and $\mathbf{abcISW}$, respectively, of information systems with witnesses satisfying Conditions~(\ref{alg}), (\ref{bc}), or both of them, and approximable mappings are equivalent to the categories $\mathbf{aL}$, $\mathbf{BC}$ and $\mathbf{aBC}$ of algebraic L-domain, bc-domains and algebraic bc-domains with Scott continuous functions.
\end{corollary}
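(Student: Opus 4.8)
The plan is to obtain all three equivalences by \emph{restricting} the equivalence $\mathbf{ISW} \simeq \mathbf{L}$ of Theorem~\ref{tm-eqlisw} to the relevant full subcategories, using the object-level correspondence theorems already established. Let $\mathcal{L}$ and $\mathcal{I}$ be the functors underlying that equivalence. I would first record the general categorical fact on which everything hinges: if $\mathcal{L}$ and $\mathcal{I}$ witness an equivalence $\mathbf{ISW} \simeq \mathbf{L}$, and $\mathbf{C} \subseteq \mathbf{ISW}$ and $\mathbf{D} \subseteq \mathbf{L}$ are \emph{full} subcategories with $\mathcal{L}(\mathbf{C}) \subseteq \mathbf{D}$ and $\mathcal{I}(\mathbf{D}) \subseteq \mathbf{C}$, then the restricted functors witness an equivalence $\mathbf{C} \simeq \mathbf{D}$. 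Indeed, fullness and faithfulness of the restrictions are inherited; essential surjectivity of $\mathcal{L}|_{\mathbf{C}}$ onto $\mathbf{D}$ follows by sending $D \in \mathbf{D}$ to $\mathcal{I}(D) \in \mathbf{C}$ and invoking $\mathcal{L}(\mathcal{I}(D)) \cong D$; and the two natural isomorphisms of the original equivalence restrict automatically, since their components at objects of $\mathbf{C}$ (resp.\ $\mathbf{D}$) are morphisms between objects of $\mathbf{C}$ (resp.\ $\mathbf{D}$), hence morphisms of the subcategory by fullness. Thus it suffices to verify the two closure conditions in each of the three cases.

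For the algebraic case, closure of $\mathcal{L}$ is Theorem~\ref{tm-alg}: a system $A$ satisfying Condition~(\ref{alg}) is sent to an algebraic L-domain; closure of $\mathcal{I}$ is Theorem~\ref{tm-dominsys}(\ref{tm-dominsys-1}): an algebraic L-domain $D$ yields $\mathcal{I}(D)$ satisfying~(\ref{alg}). Hence $\mathbf{aISW} \simeq \mathbf{aL}$. For the bounded-complete case, closure of $\mathcal{L}$ is Theorem~\ref{tm-bc}, and closure of $\mathcal{I}$ is Theorem~\ref{tm-dominsys}(\ref{tm-dominsys-2}), giving $\mathbf{bcISW} \simeq \mathbf{BC}$. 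The combined case follows by conjunction: Conditions~(\ref{alg}) and~(\ref{bc}) on $A$ together make $\mathcal{L}(A)$ both algebraic and bounded-complete, while an algebraic bc-domain $D$ satisfies both~(\ref{alg}) and~(\ref{bc}) in $\mathcal{I}(D)$ by the two parts of Theorem~\ref{tm-dominsys}, whence $\mathbf{abcISW} \simeq \mathbf{aBC}$.

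The one point deserving care, and the only asymmetry in the argument, is the bounded-complete case. Theorem~\ref{tm-bc} supplies only that Condition~(\ref{bc}) is \emph{sufficient} for $\mathcal{L}(A)$ to be bounded-complete, its necessity being open, so one might worry that $\mathbf{bcISW}$ is ``too small'' relative to $\mathbf{BC}$. I expect this to be the main apparent obstacle, but it is illusory: the equivalence requires essential surjectivity of $\mathcal{L}|_{\mathbf{bcISW}}$ onto $\mathbf{BC}$, not a bijection of objects, and this is routed entirely through $\mathcal{I}$---every $D \in \mathbf{BC}$ is isomorphic to $\mathcal{L}(\mathcal{I}(D))$ with $\mathcal{I}(D) \in \mathbf{bcISW}$ by Theorem~\ref{tm-dominsys}(\ref{tm-dominsys-2}). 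Consequently the unknown converse of Theorem~\ref{tm-bc} is never needed: the possible existence of systems outside $\mathbf{bcISW}$ that nonetheless generate bc-domains is harmless, as such systems simply do not lie in the subcategory under consideration.
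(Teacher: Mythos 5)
Your proposal is correct and follows essentially the same route the paper intends: the corollary is stated as an immediate consequence of Theorem~\ref{tm-eqlisw}, obtained by restricting the equivalence to the full subcategories via the object-level closure results (Theorems~\ref{tm-alg}, \ref{tm-bc} and \ref{tm-dominsys}(\ref{tm-dominsys-1},\ref{tm-dominsys-2})). Your explicit observation that the unknown necessity in Theorem~\ref{tm-bc} is harmless---since essential surjectivity onto $\mathbf{BC}$ is routed through $\mathcal{I}$ and Theorem~\ref{tm-dominsys}(\ref{tm-dominsys-2})---is exactly the point the paper leaves implicit, and it is handled correctly.
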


\section{The function space construction}\label{sec-fctsp}

As mentioned earlier, the categories $\mathbf{L}$ and $\mathbf{aL}$ are Cartesian closed. The same is true for $\mathbf{BC}$ and $\mathbf{aBC}$. Because of the equivalence of theses categories with $\mathbf{ISW}$, $\mathbf{aISW}$, $\mathbf{bcISW}$ and $\mathbf{abcISW}$, respectively, we know that the latter categories are Cartesian closed as well. In this and the next section we present a direct proof of the Cartesian closure of $\mathbf{ISW}$ and its just mentioned full subcategories. To this end we first show that the collection of approximable mappings between two information systems with witnesses comes itself with a natural information system structure. We start with some preliminary definitions and then discuss what will be the tokens of this information system.

\begin{definition}\label{dn-witeq}
Let $(A, \Con, \vdash, \Delta)$ be an information system with witnesses and $X \fsubset A$. Two witnesses $i, j \in A$ are called \emph{$X$-equivalent}, written $i \sim j \,[X]$, if there are $n \in \omega$ and $k_1, \ldots, k_n, a_0, \ldots, a_n \in A$ with $a_0 = i$, $a_n = j$, $X \in \Con(a_{\nu -1}) \cap \Con(a_\nu)$ and $\{ a_{\nu - 1} \}, \{ a_\nu \} \in \Con(k_\nu)$, for $\nu = 1, \ldots, n$.
\end{definition}

If the information system is generated from an L-domain as in Theorem~\ref{tm-dominsys}, the witnesses $i$ and $j$ are basic elements and $X$ is a finite subset of such, then  $i \sim j \,[X]$ implies that the suprema of $X$ with respect to $i$ and $j$, respectively, coincide.

\begin{lemma}\label{lem-witeqprop}
For any $X \fsubset A$, $\sim [X]$ is a partial equivalence relation. Moreover, the following five statements hold, for all $i, j, k \in A$ and $U \fsubset A$:
\begin{enumerate}
\item\label{lem-witeqprop-0} 
$i \sim \Delta \,[\emptyset]$

\item\label{lem-witeqprop-1}
$i \sim j \,[X] \Longrightarrow (\forall a \in A) [(i, X)\vdash a \Leftrightarrow (j, X) 
\vdash a]$

\item\label{lem-witeqprop-2}
$i \sim j \,[X] \wedge U \subseteq X \Longrightarrow i \sim j \,[U]$

\item\label{lem-witeqprop-3}
$i \sim j \,[X] \wedge (i, X) \vdash U \Longrightarrow i \sim j \,[U]$

\item\label{lem-witeqprop-4}
$i \sim j \,[X] \wedge (j, X) \vdash (k, U) \Longrightarrow i \sim k \,[U]$.
\end{enumerate}
\end{lemma}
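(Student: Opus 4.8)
The plan is to analyse throughout the chains $a_0, \ldots, a_n$ that witness the relation $\sim[X]$ in Definition~\ref{dn-witeq}, reducing each claim to a single link of such a chain wherever possible. The \emph{partial equivalence} claim is immediate from the shape of the definition: symmetry follows by reversing a witnessing chain together with its list of auxiliary witnesses $k_1, \ldots, k_n$, since all defining conditions are symmetric in $a_{\nu-1}$ and $a_\nu$; transitivity follows by concatenating two chains at their common endpoint, which is legitimate because every node $a_\nu$ already carries $X \in \Con(a_\nu)$. Statement~(\ref{lem-witeqprop-0}) is handled by the explicit one-link chain $a_0 = i$, $a_1 = \Delta$ with auxiliary witness $k_1 = i$: here $\emptyset \in \Con(i) \cap \Con(\Delta)$ holds by Axioms~\ref{dn-infsys}(\ref{dn-infsys-1},\ref{dn-infsys-2}), $\{i\} \in \Con(i)$ is Axiom~\ref{dn-infsys}(\ref{dn-infsys-1}), and $\{\Delta\} \in \Con(i)$ follows from Axioms~\ref{dn-infsys}(\ref{dn-infsys-3},\ref{dn-infsys-4}).

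The technical core is statement~(\ref{lem-witeqprop-1}), which I would prove by first establishing the single-link case $n = 1$ and then propagating along the chain. Suppose $X \in \Con(a_0) \cap \Con(a_1)$ and $\{a_0\}, \{a_1\} \in \Con(k_1)$, and assume $(a_0, X) \vdash a$. From $\{a_0\} \in \Con(k_1)$ I obtain $\Con(a_0) \subseteq \Con(k_1)$ by Axiom~\ref{dn-infsys}(\ref{dn-infsys-7}), hence $X \in \Con(k_1)$; Axiom~\ref{dn-infsys}(\ref{dn-infsys-8}) then lifts the entailment to $(k_1, X) \vdash a$, and Axiom~\ref{dn-infsys}(\ref{dn-infsys-9}), applied with the witness $a_1$, brings it back down to $(a_1, X) \vdash a$. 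Symmetry of the hypotheses gives the converse implication, so $(a_0, X) \vdash a \Leftrightarrow (a_1, X) \vdash a$. A straightforward induction on the chain length, composing these equivalences link by link, then yields $(i, X) \vdash a \Leftrightarrow (j, X) \vdash a$ for arbitrary chains.

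Statements~(\ref{lem-witeqprop-2}) and~(\ref{lem-witeqprop-3}) are obtained by reusing the very chain witnessing $i \sim j\,[X]$ and checking that it also witnesses the new relation; since the auxiliary witnesses $k_\nu$ are untouched, only the consistency conditions $U \in \Con(a_\nu)$ must be re-verified. For~(\ref{lem-witeqprop-2}) this is immediate from $U \subseteq X$ and Axiom~\ref{dn-infsys}(\ref{dn-infsys-2}). For~(\ref{lem-witeqprop-3}) I would argue node by node: the prefix $a_0, \ldots, a_\nu$ witnesses $i \sim a_\nu\,[X]$, so statement~(\ref{lem-witeqprop-1}) transports $(i, X) \vdash U$ to $(a_\nu, X) \vdash U$, whence $U \in \Con(a_\nu)$ by Axiom~\ref{dn-infsys}(\ref{dn-infsys-4}).

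Finally, for statement~(\ref{lem-witeqprop-4}) I would split the passage from $i$ to $k$ into two parts and glue them by transitivity. Reading $(j,X)\vdash(k,U)$ with $(k,U)\in\Con$, so that $U \in \Con(k)$ is available, I first note that statement~(\ref{lem-witeqprop-1}) turns $(j,X)\vdash U$ into $(i,X)\vdash U$, whence statement~(\ref{lem-witeqprop-3}) gives $i \sim j\,[U]$. It then remains to produce a single link from $j$ to $k$ for $U$: taking $j$ itself as auxiliary witness works, since $\{j\},\{k\}\in\Con(j)$ by Axioms~\ref{dn-infsys}(\ref{dn-infsys-1},\ref{dn-infsys-4}) using $(j,X)\vdash k$, while $U\in\Con(j)$ follows from $(j,X)\vdash U$ by Axiom~\ref{dn-infsys}(\ref{dn-infsys-4}) and $U\in\Con(k)$ is at hand; transitivity of $\sim[U]$ then yields $i \sim k\,[U]$. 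I expect the main obstacle to lie in statement~(\ref{lem-witeqprop-1}): coordinating the three witness-transfer Axioms~\ref{dn-infsys}(\ref{dn-infsys-7}-\ref{dn-infsys-9}) so that the side conditions $X \in \Con(k_\nu)$ are genuinely met, and then in statement~(\ref{lem-witeqprop-4}) verifying that $U$ is consistent with the new endpoint $k$, which is exactly where the convention $(k,U)\in\Con$ is required.
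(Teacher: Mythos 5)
Your proof is correct. For the partial-equivalence claim and statements (\ref{lem-witeqprop-0})--(\ref{lem-witeqprop-3}) it coincides with the paper's argument: statement (\ref{lem-witeqprop-1}) is transported across each link of the chain by Axioms~\ref{dn-infsys}(\ref{dn-infsys-8},\ref{dn-infsys-9}) and then propagated by induction, and statements (\ref{lem-witeqprop-2}) and (\ref{lem-witeqprop-3}) reuse the witnessing chain, re-establishing $U \in \Con(a_\nu)$ at every node exactly as the paper does (you merely spell out the symmetry/transitivity and statement (\ref{lem-witeqprop-0}) arguments that the paper declares obvious). The only genuine divergence is statement (\ref{lem-witeqprop-4}): the paper edits the chain witnessing $i \sim j\,[U]$ in place, replacing its last node $j$ by $k$ and reusing the final auxiliary witness $r_m$, for which it needs $\{ k \} \in \Con(r_m)$ (obtained from $\{ k \} \in \Con(j)$ and $\{ j \} \in \Con(r_m)$ via Axiom~\ref{dn-infsys}(\ref{dn-infsys-7})); you instead leave that chain untouched, append a fresh single link from $j$ to $k$ with $j$ itself as auxiliary witness, and conclude by transitivity of $\sim[U]$. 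Both routes are sound, but yours is slightly more robust: the paper's surgery tacitly assumes the chain for $i \sim j\,[U]$ has at least one link (if it is the trivial chain with $m = 0$, there is no $r_m$ to reuse), whereas your append-and-glue argument needs no such assumption; you also make explicit that $U \in \Con(k)$ comes from the convention $(k, U) \in \Con$, a point the paper uses only implicitly.
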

\begin{proof}
(\ref{lem-witeqprop-0}) is obvious.

For the remaining statements let $i \sim j \,[X]$. Then there are $n \in \omega$ and $k_1$, \ldots, $k_n$, $a_0$, \ldots, $a_n \in A$ such that $a_0 = i$, $a_n = j$, $X \in \Con(a_{\nu -1}) \cap \Con(a_\nu)$ and $\{ a_{\nu - 1} \}, \{ a_\nu \} \in \Con(k_\nu)$, for $\nu = 1, \ldots, n$.

(\ref{lem-witeqprop-1}) With Conditions~\ref{dn-infsys}(\ref{dn-infsys-8}, \ref{dn-infsys-9}) it follows that 
\[(a_{\nu -1},X) \vdash a \Longleftrightarrow (k_\nu, X) \vdash a \Longleftrightarrow (a_\nu, X) \vdash a,
\]
and consequently that $(i, X) \vdash a$, exactly if $(j, X) \vdash a$, for all $a \in A$.

(\ref{lem-witeqprop-3}) Let $1 \le \nu \le n$. Then we have that $a_\nu \sim i \,[X]$ and hence, by Statement~(\ref{lem-witeqprop-1}), that $(a_\nu, X) \vdash U$. Thus $U \in \Con(a_\nu)$. This shows that for all $\nu = 1, \ldots, n$, $U \in \Con(a_{\nu -1})\cap \Con(a_\nu)$ Since also $\{ a_{\nu -1}\}, \{ a_\nu\} \in \Con(k_\nu)$, we obtain that $i \sim j \,[U]$.

(\ref{lem-witeqprop-2}) follows similarly.

(\ref{lem-witeqprop-4}) We have that $(j, X) \vdash U$ and hence that $i \sim j \,[U]$, by Rule~(\ref{lem-witeqprop-2}). Thus there are $m \in \omega$ and $r_1, \ldots, r_m, b_0, \ldots, b_m$ with $b_0 = i$, $b_m = j$, $U \in \Con(b_{\mu -1}) \cap \Con(b_\mu)$ and $\{ b_{\mu -1} \}, \{ b_\mu \} \in \Con(r_\mu)$, with $\mu = 1, \ldots, m$. By assumption $(j, X) \vdash k$. Hence, $\{ k \} \in \Con(j)$. As $\{ j \} \in \Con(r_m)$, we obtain that $\{ k \} \in \Con(r_m)$. Now, set $b_m = k$. Then it follows that $i \sim k \,[U]$.
\end{proof}

\begin{lemma} \label{lem-witeqam}
Let $(A', \Con', \vdash', \Delta')$ be a further information system with witnesses and $H$ be an approximable mapping between $A$ and $A'$. Moreover, let  $X \fsubset A$, $i, j \in A$, and $b \in A'$. Then
\[
i \sim j \,[X] \wedge (i, X) H b  \Longrightarrow (j, X) H b.
\]
\end{lemma}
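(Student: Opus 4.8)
The plan is to exploit the chain structure hidden in the relation $i \sim j\,[X]$ and to push the hypothesis $(i, X) H b$ along it one link at a time, mirroring the argument given for Lemma~\ref{lem-witeqprop}(\ref{lem-witeqprop-1}) but with the approximable-mapping transfer rules in place of the entailment transfer axioms. First I would unfold the definition of $i \sim j\,[X]$ to obtain $n \in \omega$ together with elements $k_1, \ldots, k_n, a_0, \ldots, a_n \in A$ such that $a_0 = i$, $a_n = j$, $X \in \Con(a_{\nu-1}) \cap \Con(a_\nu)$ and $\{a_{\nu-1}\}, \{a_\nu\} \in \Con(k_\nu)$ for $\nu = 1, \ldots, n$. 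The goal then reduces to a single-link transfer: assuming $(a_{\nu-1}, X) H b$, I want to conclude $(a_\nu, X) H b$, after which an induction on $\nu$ from $0$ to $n$ delivers $(a_n, X) H b$, i.e.\ $(j, X) H b$.

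For a single link I would route through the common witness $k_\nu$. From $\{a_{\nu-1}\} \in \Con(k_\nu)$ and Axiom~\ref{dn-infsys}(\ref{dn-infsys-7}) I first note that $\Con(a_{\nu-1}) \subseteq \Con(k_\nu)$, so that $X \in \Con(k_\nu)$ and the pair $(k_\nu, X)$ really is an element of $\Con$; this is the small well-formedness check needed before the intermediate statement $(k_\nu, X) H b$ even makes sense. Then Condition~\ref{dn-am}(\ref{dn-am-4}), applied with $\{a_{\nu-1}\} \in \Con(k_\nu)$ and the hypothesis $(a_{\nu-1}, X) H b$, yields $(k_\nu, X) H b$. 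Finally Condition~\ref{dn-am}(\ref{dn-am-7}), applied with $\{a_\nu\} \in \Con(k_\nu)$ and $(k_\nu, X) H b$, yields $(a_\nu, X) H b$, which completes the step.

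Since each of the two conditions \ref{dn-am}(\ref{dn-am-4}) and \ref{dn-am}(\ref{dn-am-7}) carries the blanket hypothesis that the first-component witness classifies $X$ as consistent, I would double-check that these side conditions hold at every invocation: $X \in \Con(a_{\nu-1})$ and $X \in \Con(a_\nu)$ come directly from the chain, while $X \in \Con(k_\nu)$ is the consequence of \ref{dn-infsys}(\ref{dn-infsys-7}) just noted. I do not expect a genuine obstacle here; the argument is a mechanical walk along the chain, and the only point that requires a moment's attention is precisely this bookkeeping of consistency witnesses, ensuring that every $H$-statement invoked is well-formed and every application of \ref{dn-am}(\ref{dn-am-4},\ref{dn-am-7}) is legitimate.
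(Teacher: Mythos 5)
Your proof is correct, but it takes a genuinely different route from the paper's. You unfold the chain defining $i \sim j\,[X]$ and push $(i, X) H b$ along it one link at a time, using Condition~\ref{dn-am}(\ref{dn-am-4}) to pass from $a_{\nu-1}$ up to the common witness $k_\nu$ and Condition~\ref{dn-am}(\ref{dn-am-7}) to come back down to $a_\nu$, with Axiom~\ref{dn-infsys}(\ref{dn-infsys-7}) supplying the well-formedness fact $X \in \Con(k_\nu)$; all side conditions check out, including the degenerate case $n = 0$. The paper never touches the chain: it first applies the combined interpolation rule (Lemma~\ref{pn-amint}) to obtain $(c, U) \in \Con$ with $(i, X) \vdash (c, U)$ and $(c, U) H b$, then invokes Lemma~\ref{lem-witeqprop}(\ref{lem-witeqprop-1}) --- where the chain-walking was already done once and for all, with the entailment-transfer axioms \ref{dn-infsys}(\ref{dn-infsys-8},\ref{dn-infsys-9}) playing the role that \ref{dn-am}(\ref{dn-am-4},\ref{dn-am-7}) play for you --- to conclude $(j, X) \vdash (c, U)$, and finishes with Lemma~\ref{lem-amstrong3}. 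Your argument is more elementary and slightly more general: it uses only two of the nine conditions defining an approximable mapping (plus one information-system axiom), so it applies verbatim to any relation satisfying just these two transfer rules, with no appeal to the interpolation machinery. The paper's argument buys brevity and modularity: the chain manipulation stays quarantined inside Lemma~\ref{lem-witeqprop}, and the proof reuses already-established lemmas (interpolation, witness-equivalence transfer, and the strengthened transitivity of Lemma~\ref{lem-amstrong3}) as black boxes, a pattern repeated throughout the function-space section.
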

\begin{proof}
Let $i \sim j \,[X]$ and $(i, X) H b$. By Lemma~\ref{pn-amint} there is some $(c, U) \in \Con$ such that $(i, X) \vdash (c, U)$ and $(c, U) H b$. Because of Lemma~\ref{lem-witeqprop}(\ref{lem-witeqprop-1}) it follows that also $(j, X) \vdash (c, U)$. With  Lemma~\ref{lem-amstrong3} we therefore obtain that $(j, X) H b$.
\end{proof}

For $\mathfrak{X} \fsubset \Con$ and $(a, S) \in \Con$ define
\[
\Cl((a, S), \mathfrak{X}) = \set{(i, X) \in \mathfrak{X}}{(\exists j \in A) i \sim j \,[X] \wedge (a, S) \vdash (j, X)}.
\]
Then $\Cl((a, S), \mathfrak{X})$ is finite. Let in addition
\begin{equation*}
\ucl((a, S), \mathfrak{X}) = \\
 \bigcup \set{\{ j \} \cup X}{(\exists i \in A) (i, X) \in \mathfrak{X} \wedge j \sim i \,[X] \wedge (a, S) \vdash (j, X)}.
\end{equation*}

\begin{lemma}\label{lem-clprop}
Let $(a, S), (c, T) \in \Con$ such that $(a, S) \vdash T$ and $a \sim c \,[T]$. Then 
\[
\Cl((c, T), \mathfrak{X}) \subseteq \Cl((a, S), \mathfrak{X}).
\]
\end{lemma}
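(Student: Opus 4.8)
The plan is to prove the inclusion directly by taking an arbitrary element $(i, X) \in \Cl((c, T), \mathfrak{X})$ and showing it lies in $\Cl((a, S), \mathfrak{X})$. Unwinding the definition, membership in $\Cl((c, T), \mathfrak{X})$ supplies $(i, X) \in \mathfrak{X}$ together with a witness $j \in A$ such that $i \sim j \,[X]$ and $(c, T) \vdash (j, X)$, that is, $(c, T) \vdash j$ and $(c, T) \vdash X$. Since $\mathfrak{X}$ is fixed, the condition $(i, X) \in \mathfrak{X}$ transfers for free, so it only remains to produce a witness $j'$ with $i \sim j' \,[X]$ and $(a, S) \vdash (j', X)$. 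My claim is that the same $j$ works, so that $i \sim j \,[X]$ is already in hand and the whole task reduces to establishing $(a, S) \vdash (j, X)$.

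The first key step is to move the entailment from witness $c$ to witness $a$ over the set $T$. Here I invoke the hypothesis $a \sim c \,[T]$: by Lemma~\ref{lem-witeqprop}(\ref{lem-witeqprop-1}) we have $(a, T) \vdash b \Leftrightarrow (c, T) \vdash b$ for every $b \in A$, and hence, reading this element-wise, $(a, T) \vdash j$ and $(a, T) \vdash X$. The second key step is to replace the set $T$ by $S$ using the hypothesis $(a, S) \vdash T$. First, Axiom~\ref{dn-infsys}(\ref{dn-infsys-4}) applied to $(a, S) \vdash T$ secures $T \in \Con(a)$, so that transitivity for a fixed witness, Axiom~\ref{dn-infsys}(\ref{dn-infsys-6}), becomes applicable. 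Feeding it $(a, S) \vdash T$ together with $(a, T) \vdash j$ yields $(a, S) \vdash j$, and feeding it $(a, S) \vdash T$ together with $(a, T) \vdash X$ (again element-wise) yields $(a, S) \vdash X$. Thus $(a, S) \vdash (j, X)$.

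Combining $(a, S) \vdash (j, X)$ with $i \sim j \,[X]$ and $(i, X) \in \mathfrak{X}$ shows $(i, X) \in \Cl((a, S), \mathfrak{X})$, which completes the argument since $(i, X)$ was arbitrary.

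I do not expect a genuine obstacle: the proof is essentially a two-stage transfer, first changing the witness via the partial equivalence relation and then shrinking the entailed set via fixed-witness transitivity. The one point needing care is the side condition $T \in \Con(a)$ required by Axiom~\ref{dn-infsys}(\ref{dn-infsys-6}); this is not among the hypotheses verbatim but follows at once from $(a, S) \vdash T$ by Axiom~\ref{dn-infsys}(\ref{dn-infsys-4}). The only conceptual observation is that the witness $j$ certifying membership in $\Cl((c, T), \mathfrak{X})$ can be reused unchanged as the witness for $\Cl((a, S), \mathfrak{X})$, so no appeal to Lemma~\ref{lem-witeqprop}(\ref{lem-witeqprop-4}) or to transitivity of $\sim[X]$ is needed.
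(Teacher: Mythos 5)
Your proof is correct and follows essentially the same route as the paper's: reuse the witness $j$, transfer the entailment from $c$ to $a$ over $T$ via Lemma~\ref{lem-witeqprop}(\ref{lem-witeqprop-1}), then replace $T$ by $S$ using fixed-witness transitivity (Axiom~\ref{dn-infsys}(\ref{dn-infsys-6})). The paper's one-line proof leaves these appeals implicit; you have merely spelled them out, including the side condition $T \in \Con(a)$ obtained from Axiom~\ref{dn-infsys}(\ref{dn-infsys-4}).
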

\begin{proof}
Let $(i, X) \in \mathfrak{X}$ and $i \sim j \,[X]$ so that $(c, T) \vdash (j, X)$. Then also $(a, T) \vdash (j, X)$ and hence, $(a, S) \vdash (j, X)$. 
\end{proof}

Next, set
\[
W(\mathfrak{X}) = \set{a \in A}{\bigcup \pr_2(\mathfrak{X}) \in \Con(a) \wedge (\forall (i, X) \in \mathfrak{X}) a \sim i \,[X]},
\]
and let $(A', \Con', \vdash', \Delta')$ be a further information system with witnesses.

\begin{definition}
For $\mathfrak{V} \fsubset \Con \times A'$, $\mathfrak{U} \subseteq \mathfrak{V}$ and $a \in W(\pr_1(\mathfrak{V}))$,  $(a, \mathfrak{U})$ is \emph{$\mathfrak{V}$-maximal} if the following condition holds, for all $((i, X), j) \in \mathfrak{V}$:
\[
X \subseteq \bigcup \pr_{2, 1}(\mathfrak{U}) \wedge a \sim i \,[X] \Longrightarrow ((i, X), j) \in \mathfrak{U}.
\]
\end{definition}

\begin{lemma}\label{lem-max}
Let $(a, \mathfrak{U})$ be $\mathfrak{V}$-maximal and $c \in W(\pr_1(\mathfrak{U}))$ with $c \sim a \,[\bigcup \pr_{2, 1}(\mathfrak{U})]$. Then $(c, \mathfrak{U})$ is $\mathfrak{V}$-maximal as well.
\end{lemma}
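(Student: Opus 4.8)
The plan is to check the two ingredients in the definition of $\mathfrak{V}$-maximality for the pair $(c, \mathfrak{U})$. The membership precondition on the first component is already supplied by the hypothesis $c \in W(\pr_1(\mathfrak{U}))$, so the real content is the defining implication: for every $((i, X), j) \in \mathfrak{V}$ with $X \subseteq \bigcup \pr_{2,1}(\mathfrak{U})$ and $c \sim i \,[X]$, we must derive $((i, X), j) \in \mathfrak{U}$.

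My strategy is to transport each such hypothesis about $c$ into the corresponding hypothesis about $a$ and then invoke the $\mathfrak{V}$-maximality of $(a, \mathfrak{U})$, which is already available. So fix a triple $((i, X), j) \in \mathfrak{V}$ with $X \subseteq \bigcup \pr_{2,1}(\mathfrak{U})$ and $c \sim i \,[X]$. First I would restrict the given equivalence $c \sim a \,[\bigcup \pr_{2,1}(\mathfrak{U})]$ to the subset $X$: since $X \subseteq \bigcup \pr_{2,1}(\mathfrak{U})$, Lemma~\ref{lem-witeqprop}(\ref{lem-witeqprop-2}) yields $c \sim a \,[X]$. As $\sim [X]$ is a partial equivalence relation (Lemma~\ref{lem-witeqprop}), it is symmetric and transitive; combining $a \sim c \,[X]$ with the hypothesis $c \sim i \,[X]$ then gives $a \sim i \,[X]$.

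At this point the triple $((i, X), j)$ satisfies all three premises of the maximality condition for $(a, \mathfrak{U})$ --- it lies in $\mathfrak{V}$, still has $X \subseteq \bigcup \pr_{2,1}(\mathfrak{U})$, and now $a \sim i \,[X]$ --- so that condition forces $((i, X), j) \in \mathfrak{U}$, which is exactly what was needed. The argument is short; the only point to watch is the bookkeeping of witnesses: one must push the equivalence $c \sim a$ down to the (possibly far smaller) set $X$ before chaining it, via symmetry, with $c \sim i \,[X]$ to reach the precise relation $a \sim i \,[X]$ that triggers the maximality of $(a, \mathfrak{U})$. I anticipate no deeper obstacle.
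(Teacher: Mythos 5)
Your proposal is correct and is essentially the paper's own proof: both restrict $c \sim a \,[\bigcup \pr_{2,1}(\mathfrak{U})]$ to the subset $X$ via Lemma~\ref{lem-witeqprop}(\ref{lem-witeqprop-2}), chain this (using that $\sim[X]$ is a partial equivalence relation) with $c \sim i \,[X]$ to obtain $a \sim i \,[X]$, and then invoke the $\mathfrak{V}$-maximality of $(a, \mathfrak{U})$. The only difference is presentational: you spell out the symmetry/transitivity step that the paper leaves implicit.
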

\begin{proof}
Let $((i, X), j) \in \mathfrak{V}$ with $X \subseteq \bigcup \pr_{2, 1}(\mathfrak{U})$ with $c \sim i \,[X]$. Then $a \sim c \,[X]$, by Lemma~\ref{lem-witeqprop}(\ref{lem-witeqprop-2}), and hence $a \sim i \,[X]$. Therefore, $((i, X), j) \in \mathfrak{U}$, as $(a, \mathfrak{U})$ is $\mathfrak{V}$-maximal.
\end{proof}

 As in the case of Scott's information systems, the tokens of the function space $A \rightarrow A'$ will be finite subsets  of $\Con \times A'$. Let $\mathfrak{V} = \{ ((i_1, X_1), c_1), \ldots, ((i_n, X_n), c_n)\}$ be such a set and $(a, S) \in \Con$. Assume that $J \subseteq \{ 1, \ldots, n \}$ with $(a, S) \vdash (i_\nu, X_\nu)$, exactly if $\nu \in J$. Then we need a witness for the set $\set{c_\nu}{\nu \in J}$. Moreover, if $J_\mu \subseteq \{ 1, \ldots, n \}$, for $ \mu = 1, \ldots, m$, with $(a, S) \vdash (i_\nu, X_\nu)$, for $\nu \in J_\mu$ and $1 \le \mu \le m$, and $t_\mu$ is a witness for $\set{c_\nu}{\nu \in J_\mu}$, then we also need a witness for $\set{t_\mu}{1 \le \mu \le m}$.

\begin{definition}\label{dn-ass}
$\mathfrak{W} \subseteq \Con \times A'$ is an \emph{associate} of $\mathfrak{V}$ if for each $J \subseteq \{ 1, \ldots, n \}$, $R(\set{(i_\nu, X_\nu)}{\nu \in J})$ is a system of representatives of $W(\{\,(i_\nu, X_\nu) \mid \nu \in J\,\})$ with respect to $\sim [\bigcup\nolimits_{\nu \in J} X_\nu]$ so that $R(\emptyset) = \{ \Delta \}$ and $R(\{ (i_\nu, X_\nu \}) = \{ i_\nu \}$,  and there exists an increasing chain $(\mathfrak{W}^{(\kappa)})_{\kappa \in \omega}$ of subsets of $\Con \times A'$ such that 
\begin{enumerate}

\item\label{dn-ass-1} $\mathfrak{W} = \bigcup_{\kappa \in \omega} \mathfrak{W}^{(\kappa)}$,

\item\label{dn-ass-2}
$\mathfrak{W}^{(0)} = \begin{cases}
					\{ ((\Delta, \emptyset), \Delta') \} & \text{if $(\Delta, \emptyset)$ is $\mathfrak{V}$-maximal,} \\
					\emptyset & \text{otherwise,}
			      \end{cases}	
$

\item\label{dn-ass-3} and for all $\kappa \ge 1$, 
\[
\mathfrak{W}^{(\kappa)} = \mathfrak{W}^{(\kappa-1)} \cup \widehat{\mathfrak{W}}^{(\kappa)},
\]
 where $\widehat{\mathfrak{W}}^{(\kappa)}$ satisfies the following conditions:
\begin{enumerate}
\item\label{dn-ass-3-1}
For all $J \subseteq \{ 1, \ldots, n \}$ with $\| J \| = \kappa$ and all $e \in R(\{\, (i_\nu, X_\nu) \mid \nu \in J \,\})$  for which $(e, \{\, ((i_\nu, X_\nu), c_\nu) \mid \nu \in J \,\})$ is $\mathfrak{V}$-maximal,  there is exactly one $j \in A'$ so that
\begin{enumerate}

\item \label{dn-ass-3-1-1}
$ \{\, t \in A' \mid (\exists a \in A) (\exists K \subseteq \{ 1, \ldots, n \}) \bigcup_{\rho \in K} X_\rho \subseteq \bigcup_{\nu \in J} X_\nu \wedge a \sim e \,[\bigcup_{\rho \in K} X_\rho] \wedge \mbox{}\\
\mbox{}\hspace{1cm}   [(\| K \| = 1 \wedge ((a, \bigcup_{\rho \in K} X_\rho), t) \in \mathfrak{V}) \vee 
 (a \in R(\{\, (i_\rho, X_\rho) \mid \rho \in K \,\}) \wedge \mbox{} \\
\mbox{}\hspace{1cm}  ((a, \bigcup_{\rho \in K} X_\rho), t) \in \mathfrak{W}^{(\kappa-1)})] \,\} \in \Con'(j),
$

\item\label{dn-ass-3-1-2} $((a, \bigcup_{\nu \in J} X_\nu), j) \in \widehat{\mathfrak{W}}^{(\kappa)}$.

\end{enumerate}

\item\label{dn-ass-3-2}
For all $((a, T), b) \in \widehat{\mathfrak{W}}^{(\kappa)}$ there are  $J \subseteq \{ 1, \ldots n \}$ and $a \in R(\{\, (i_\nu, X_\nu) \mid \nu \in J \,\})$ so that $\| J \| = \kappa$, $(a, \{\, ((i_\nu, X_\nu), c_\nu) \mid \nu \in J \,\})$ is $\mathfrak{V}$-maximal,  $T = \bigcup_{\nu \in J} X_\nu$, and $b$ satisfies Condition~(\ref{dn-ass-3-1-1}).
\end{enumerate}			      
\end{enumerate}
\end{definition}

Let $J \subseteq \{ 1, \ldots, n \}$ and $a \in W(\set{(i_\nu, X_\nu)}{\nu \in J})$ so that $(a, \{\, ((i_\nu, X_\nu), c_\nu) \mid \nu \in J \,\})$ is $\mathfrak{V}$-maximal. If $J = \emptyset$, then both sets $\set{(i_\nu, X_\nu)}{\nu \in J}$ and $\set{c_\nu}{\nu \in J}$ are empty as well. Moreover, $\Delta \in W(\emptyset)$ and $\emptyset \in \Con'(\Delta')$. This explains Condition~\ref{dn-ass}(\ref{dn-ass-2}). Note that $(\Delta, \emptyset)$ is not $\mathfrak{V}$-maximal, if for some $1 \le \kappa \le n$, $X_\kappa = \emptyset$.

For the larger cardinalities it follows from Condition~\ref{dn-ass}(\ref{dn-ass-3-1-1}) that the second components of pairs in $\mathfrak{W}$ are the consistency witnesses for the sets of second components in $\mathfrak{V}$ we were looking for above.

Note in Condition~\ref{dn-ass}(\ref{dn-ass-3-1-1}) that if $J \subseteq \{ 1, \ldots, n \}$ with $\|J\| \ge 1$ and 
\[
((a_J, \bigcup_{\nu \in J} X_\nu), t_J) \in \mathfrak{W},
\]
for some $a_j \in A$ and $t_J \in A'$, then $(a_J, \{\, ((i_\nu, X_\nu), c_\nu) \mid \nu \in J \,\})$ must be $\mathfrak{V}$-maximal, by Condition~\ref{dn-ass}(\ref{dn-ass-3-2}). Thus, if for some $K \subseteq \{ 1, \ldots, n \}$ one has that $\bigcup_{\kappa \in K} X_\kappa \subseteq \bigcup_{\nu \in J} X_\nu$, then $K \subseteq J$.

Let $\ac(\mathfrak{V})$ be the set of all associates of $\mathfrak{V}$. We write $\fcbael{W}{V}$ with $\mathfrak{V} \fsubset \Con \times A'$ to mean that $\mathfrak{W} \in \ac(\mathfrak{V})$. Similarly, for sets $\mathcal{V} \fsubset \mathcal{P}_f(\Con \times A')$ and $\mathcal{W} \fsubset \mathcal{P}(\Con\times A')$ of equal cardinality, say $\mathcal{V} = \{ \mathfrak{V}_1, \ldots, \mathfrak{V}_m \}$ and $\mathcal{W} = \{ \mathfrak{W}_1, \ldots, \mathfrak{W}_m \}$, such that $\mathfrak{W}_\nu \in \ac(\mathfrak{V}_\nu)$, for $\nu = 1, \ldots, m$, we write $\fcbaset{W}{V}$ for the set $\set{\pair{\mathfrak{W}_\nu | \mathfrak{V}_\nu}}{1 \le \nu \le m}$. In this sense we also say that $\fcbaset{W}{V} \fsubset \mathcal{P}((\Con \times A')^2)$.

\begin{lemma}\label{lem-singext}
For $(c, U) \in \Con$ and $Z \fsubset A'$ let $\mathfrak{D} = \set{ ((c, U), d) }{d \in Z}$. Moreover, let $b \in A'$ with $Z \cup \{ \Delta' \} \in \Con'(b)$, if  $U$ is not empty, and $Z \in \Con'(b)$, otherwise, and define
\[
\mathfrak{E} = \begin{cases}
           \{ ((\Delta, \emptyset), \Delta'), ((c, U), b) \} & \text{if $U \not = \emptyset$,} \\
           \{ ((c, U), b) \} & \text{otherwise.}
\end{cases}           
\] 
Then $\mathfrak{E} \in \ac(\mathfrak{D})$.
\end{lemma}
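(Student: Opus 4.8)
The plan is to exhibit $\mathfrak{E}$ directly as the union of an increasing chain of the shape demanded by Definition~\ref{dn-ass}. First I would fix notation: writing $Z = \{ d_1, \dots, d_n \}$ with the $d_\nu$ distinct, the set $\mathfrak{D}$ is $\{ ((c,U),d_1), \dots, ((c,U),d_n) \}$, so in the notation of Definition~\ref{dn-ass} every $(i_\nu, X_\nu)$ equals $(c,U)$ and $c_\nu = d_\nu$. The decisive simplification is that for every $J \subseteq \{ 1, \dots, n \}$ the index set collapses: $\set{(i_\nu,X_\nu)}{\nu \in J}$ is $\emptyset$ when $J = \emptyset$ and $\{(c,U)\}$ otherwise, and correspondingly $\bigcup_{\nu \in J} X_\nu$ is $\emptyset$ or $U$. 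For the system of representatives I would take $R(\emptyset) = \{ \Delta \}$ and $R(\{(c,U)\}) = \{ c \}$; this is forced by the singleton requirement of Definition~\ref{dn-ass}, and it is legitimate because $W(\{(c,U)\}) = \set{a \in A}{U \in \Con(a) \wedge a \sim c \,[U]}$ consists of witnesses $a$ all satisfying $a \sim c \,[U]$, while $c$ itself lies in it since $U \in \Con(c)$ and $c \sim c \,[U]$ (immediate from Definition~\ref{dn-witeq}).

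Next I would carry out the maximality analysis, which is the heart of the argument. I claim the only $\mathfrak{D}$-maximal configuration with nonempty index set is $(c, \mathfrak{D})$, i.e.\ $J = \{ 1, \dots, n \}$ and $e = c$. Indeed, if $\mu \notin J$ then the pair $((c,U),d_\mu) \in \mathfrak{D}$ makes the premise of $\mathfrak{D}$-maximality true ($U \subseteq U$ and $c \sim c \,[U]$) while $((c,U),d_\mu)$ is not in $\set{((c,U),d_\nu)}{\nu \in J}$, so maximality fails; for $J = \{ 1, \dots, n \}$ the subset is all of $\mathfrak{D}$ and maximality is immediate. For the empty index set I would invoke the note following Definition~\ref{dn-ass}: $(\Delta, \emptyset)$ is $\mathfrak{D}$-maximal exactly when $U \neq \emptyset$ (if $U = \emptyset$ every $X_\kappa = \emptyset$ destroys maximality, whereas if $U \neq \emptyset$ the defining implication is vacuous and $\Delta \in W(\emptyset)$).

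With this in hand the chain is essentially forced. I would set $\mathfrak{W}^{(0)}$ as prescribed, so it equals $\{ ((\Delta,\emptyset),\Delta') \}$ if $U \neq \emptyset$ and $\emptyset$ if $U = \emptyset$; by the maximality analysis no configuration with $1 \le \| J \| < n$ is maximal, so $\widehat{\mathfrak{W}}^{(\kappa)} = \emptyset$ and $\mathfrak{W}^{(\kappa)} = \mathfrak{W}^{(0)}$ for all $\kappa < n$. The single nonempty-$J$ contribution occurs at $\kappa = n$ through $J = \{ 1, \dots, n \}$, $e = c$, and here the main computation is the witness-defining set of Condition~\ref{dn-ass}(\ref{dn-ass-3-1-1}), which I would evaluate term by term. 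The singleton alternative ($\| K \| = 1$, first disjunct) contributes precisely $Z$, because $((a, U), t) \in \mathfrak{D}$ forces $a = c$ and $t \in Z$; the second disjunct with $K = \emptyset$ contributes $\Delta'$ exactly when $((\Delta,\emptyset),\Delta') \in \mathfrak{W}^{(n-1)} = \mathfrak{W}^{(0)}$, i.e.\ exactly when $U \neq \emptyset$; and every $K$ with $\| K \| \ge 2$ contributes nothing, since it would require a pair $((c,U),t)$ in $\mathfrak{W}^{(0)}$, which contains no such pair. Hence the set equals $Z \cup \{ \Delta' \}$ when $U \neq \emptyset$ and $Z$ when $U = \emptyset$ — in both cases a set for which the hypothesised $b$ is a consistency witness, so I may take $j = b$ and put $((c,U),b)$ into $\widehat{\mathfrak{W}}^{(n)}$.

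Finally I would read off that $\mathfrak{W} = \bigcup_\kappa \mathfrak{W}^{(\kappa)} = \mathfrak{W}^{(0)} \cup \{ ((c,U),b) \}$ coincides with $\mathfrak{E}$ in both cases, and verify Condition~\ref{dn-ass}(\ref{dn-ass-3-2}) for the single added pair (it arises from the maximal configuration $(c, \mathfrak{D})$ with $T = U$ and the witness $b$ just validated), concluding $\mathfrak{E} \in \ac(\mathfrak{D})$. I expect the main obstacle to be exactly the term-by-term evaluation of the set in Condition~\ref{dn-ass}(\ref{dn-ass-3-1-1}): the bookkeeping of which triples $(a, K, t)$ qualify, and in particular pinning down that $\Delta'$ enters the set if and only if $U \neq \emptyset$ — precisely matching the two clauses defining $\mathfrak{E}$ — is where the delicate interplay between $\mathfrak{W}^{(0)}$ and the required witness set lives.
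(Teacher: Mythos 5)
Your proposal is correct and takes exactly the route the paper intends: the paper dismisses this lemma as ``a straightforward exercise'' and records only the key observation---that for nonempty $V \subseteq Z$ the pair $(c, \{\, ((c,U),d) \mid d \in V \,\})$ is $\mathfrak{D}$-maximal exactly when $V = Z$, and that $(\Delta, \emptyset)$ is $\mathfrak{D}$-maximal exactly when $U \neq \emptyset$---which is precisely the maximality analysis you identify as the heart of your argument. Your term-by-term evaluation of the witness set in Condition~\ref{dn-ass}(\ref{dn-ass-3-1-1}) fills in the remaining bookkeeping correctly (the one case you do not mention explicitly, the second disjunct with $\| K \| = 1$, is disposed of by the same observation you use for $\| K \| \ge 2$).
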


The proof is a straightforward exercise. Note that for any nonempty subset $V \subseteq Z$, $(c, \{\, ((c, U), d) \mid d \in V \,\})$ is $\mathfrak{D}$-maximal, exactly if $V = Z$. If $U$ is empty, this is even true for any subset $V$.

For $(a, S) \in \Con$,
\[
\sp((a, S), \mathfrak{V}) = \set{((i, X), j) \in \mathfrak{V}}{a \sim i \,[X] \wedge X \subseteq S},
\]
is the \emph{$\mathfrak{V}$-spectrum} of $(a, S)$. Set $\ds((a, S), \mathfrak{V}) = \pr_1(\sp((a, S), \mathfrak{V})$ and $\rs((a, S), \mathfrak{V}) = \pr_2(\sp((a, S), \mathfrak{V})$. 

Assume again that $\mathfrak{V} = \{ ((i_1, X_1), c_1), \ldots, ((i_n, X_n), c_n)\}$ and let $\mathfrak{W} \in \ac(\mathfrak{V})$.  If $((c, T), d) \in \mathfrak{W}$, if follows from Condition~\ref{dn-ass}(\ref{dn-ass-3-2}) that there is exactly one subset $J \subseteq \{ 1, \ldots, n \}$ such that $(c, \{\, ((i_\nu, X_\nu), c_\nu) \mid \nu \in J\,\})$ is $\mathfrak{V}$-maximal, $c \in W(\{\, (i_\nu, X_\nu) \mid \nu \in J \,\})$, and $T = \bigcup_{\nu \in J} X_\nu$. Then $\ds((c, T), \mathfrak{V}) = \{\, (i_\nu, X_\nu) \mid \nu \in J \,\}$ and $\rs((c, T), \mathfrak{V}) = \{\, c_\nu \mid \nu \in J \,\}$.

Define
\[
\ap((a, S), \mathfrak{V}) = \set{c \in A'}{(\exists (i, X) \in \Cl((a, S), \pr_1(\mathfrak{V}))) ((i, X), c) \in \mathfrak{V}}.
\]

As a consequence of Lemma~\ref{lem-clprop} we have for $(c, T) \in \Con$ with $(a, S)\vdash T$ and $a \sim c \,[T]$ that $\ap((c, T), \mathfrak{V}) \subseteq \ap((a, S), \mathfrak{V})$. 

Let  $M \subseteq \{ 1, \ldots, n \}$ such that 
\[
\Cl((a, S), \pr_1(\mathfrak{V})) = \set{(i_\kappa, X_\kappa)}{\kappa \in M}.
\]
Then $(a, \{\, ((i_\kappa, X_\kappa), c_\kappa) \mid \kappa \in M \,\})$ is $\mathfrak{V}$-maximal.
It follows that 
\[
\bigcup \pr_2(\Cl((a, S), \pr_1(\mathfrak{V}))) = \bigcup\nolimits_{\kappa \in M} X_\kappa \text{ and } \ap((a, S), \mathfrak{V}) = \set{c_\kappa}{\kappa \in M}.
\]
Moreover, there exists exactly one pair $(e,j)$ with $e \in W(\set{(i_\kappa, X_\kappa)}{\kappa \in M})$ and $j \in A'$ so that
\[
a \sim e \,[\bigcup_{\kappa \in M} X_\kappa], \,\,  \set{c_\kappa}{\kappa \in M} \in \Con'(j), \,\, \text{and} \,\,
((e, \bigcup_{\kappa \in M} X_\kappa), j) \in \mathfrak{W}.
\]
Set 
\[
\mathfrak{W}(a, S) = ((e, \bigcup_{\kappa \in M} X_\kappa), j).
\]
Then we also write
\begin{gather*}
\mathfrak{W}_1(a, S) = (e, \bigcup_{\kappa \in M} X_\kappa), \,\, \mathfrak{W}_{1,1}(a, S) = e, \,\, \mathfrak{W}_{1,2}(a, S) =  \bigcup_{\kappa \in M} X_\kappa, 
  \intertext{and} \mathfrak{W}_2(a, S) = j.
\end{gather*}

The next lemma is a consequence of Lemma~\ref{lem-clprop} and Condition~\ref{dn-ass}(\ref{dn-ass-3-1-1})

\begin{lemma}\label{lem-jextprop}
Let $(a, S), (c, T) \in \Con$. Then the following statements hold:
\begin{enumerate}
\item \label{lem-jextprop-1}
If $(a, S) \vdash T$ and $a \sim c \,[T]$. Then $\{ \mathfrak{W}_2(c, T) \} \in \Con'(\mathfrak{W}_2(a, S))$.

\item \label{lem-jextprop-2}
If $T \subseteq S$ and $a \sim c \,[T]$. Then $\{ \mathfrak{W}_2(c, T) \} \in \Con'(\mathfrak{W}_2(a, S))$.

\item \label{lem-jextprop-3}
If $(a, S) \vdash (c, T)$ and $(c, T) \vdash \ucl((a, S), \pr_1(\mathfrak{V})))$. Then 
\[
\Cl((a, S), \pr_1(\mathfrak{V})) = \Cl((c, T), \pr_1(\mathfrak{V}))
\quad \text{and} \quad
\mathfrak{W}_2(c, T) = \mathfrak{W}_2(a, S).
\]

\end{enumerate}
\end{lemma}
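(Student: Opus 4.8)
The plan is to isolate a single computational core and then reduce each of the three items to verifying its hypotheses. Fix the enumeration $\mathfrak{V} = \{((i_1,X_1),c_1), \ldots, ((i_n,X_n),c_n)\}$ and choose index sets $M, M' \subseteq \{1,\ldots,n\}$ with $\Cl((a,S),\pr_1(\mathfrak{V})) = \set{(i_\kappa,X_\kappa)}{\kappa \in M}$ and $\Cl((c,T),\pr_1(\mathfrak{V})) = \set{(i_\kappa,X_\kappa)}{\kappa \in M'}$. The core claim I want is: if $M' \subseteq M$ and $a \sim c\,[\bigcup_{\kappa \in M'} X_\kappa]$, then $\{\mathfrak{W}_2(c,T)\} \in \Con'(\mathfrak{W}_2(a,S))$, with equality $\mathfrak{W}_2(c,T) = \mathfrak{W}_2(a,S)$ whenever $M' = M$. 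To set it up I write $e = \mathfrak{W}_{1,1}(a,S)$ and $e' = \mathfrak{W}_{1,1}(c,T)$, so that $a \sim e\,[\bigcup_{\kappa \in M} X_\kappa]$ and $c \sim e'\,[\bigcup_{\kappa \in M'} X_\kappa]$. Restricting the first equivalence to the smaller set by Lemma~\ref{lem-witeqprop}(\ref{lem-witeqprop-2}) gives $a \sim e\,[\bigcup_{\kappa \in M'} X_\kappa]$, and combining this with $c \sim e'\,[\bigcup_{\kappa \in M'} X_\kappa]$ and the hypothesis $a \sim c\,[\bigcup_{\kappa \in M'} X_\kappa]$ through the partial-equivalence properties yields $e \sim e'\,[\bigcup_{\kappa \in M'} X_\kappa]$.

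I then split on $M'$. If $M' = M$, then $e$ and $e'$ lie in the same system of representatives $R(\set{(i_\kappa,X_\kappa)}{\kappa \in M})$ and are $\sim[\bigcup_{\kappa \in M}X_\kappa]$-equivalent, hence equal; the uniqueness clause in the definition of $\mathfrak{W}(\cdot)$ then forces $\mathfrak{W}(c,T) = \mathfrak{W}(a,S)$, so the conclusion is immediate by Axiom~\ref{dn-infsys}(\ref{dn-infsys-1}). If $\emptyset \neq M' \subsetneq M$, then $\|M'\| \le \|M\|-1$, and I invoke Condition~\ref{dn-ass}(\ref{dn-ass-3-1-1}) for $J = M$ with representative $e$: the witness $\mathfrak{W}_2(a,S)$ is a $\Con'$-witness of the large generating set displayed there. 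Taking $K = M'$ and $a' = e'$, the relations $\bigcup_{\kappa \in M'}X_\kappa \subseteq \bigcup_{\kappa \in M}X_\kappa$ and $e' \sim e\,[\bigcup_{\kappa \in M'}X_\kappa]$ hold, while $((e', \bigcup_{\kappa \in M'}X_\kappa), \mathfrak{W}_2(c,T)) \in \mathfrak{W}^{(\|M'\|)} \subseteq \mathfrak{W}^{(\|M\|-1)}$ with $e' \in R(\set{(i_\kappa,X_\kappa)}{\kappa \in M'})$ by Condition~\ref{dn-ass}(\ref{dn-ass-3-2}); hence $\mathfrak{W}_2(c,T)$ is one of the elements of that generating set, which lies in $\Con'(\mathfrak{W}_2(a,S))$, and downward closure Axiom~\ref{dn-infsys}(\ref{dn-infsys-2}) gives the singleton conclusion. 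The degenerate case $M' = \emptyset$ (so $\mathfrak{W}_2(c,T) = \Delta'$) is covered directly by $\{\Delta'\} \in \Con'(k)$ for every $k$, which follows from Axioms~\ref{dn-infsys}(\ref{dn-infsys-3},\ref{dn-infsys-4}).

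With the core established, each item reduces to checking its two hypotheses. For~(\ref{lem-jextprop-1}), the inclusion $M' \subseteq M$ is Lemma~\ref{lem-clprop}, while $a \sim c\,[\bigcup_{\kappa \in M'}X_\kappa]$ follows from $a \sim c\,[T]$ together with $(c,T) \vdash X_\kappa$ for $\kappa \in M'$ (read off the definition of $\Cl$) via Lemma~\ref{lem-witeqprop}(\ref{lem-witeqprop-3}). For~(\ref{lem-jextprop-2}), reflexivity of $\vdash$ is unavailable, so I prove the inclusion by hand: for $(i_\kappa,X_\kappa) \in \Cl((c,T),\pr_1(\mathfrak{V}))$ with witness $j$, Lemma~\ref{lem-witeqprop}(\ref{lem-witeqprop-1}) transports $(c,T) \vdash (j,X_\kappa)$ to $(a,T) \vdash (j,X_\kappa)$, and the monotonicity Axiom~\ref{dn-infsys}(\ref{dn-infsys-5}) with $T \subseteq S$ lifts this to $(a,S) \vdash (j,X_\kappa)$, placing $(i_\kappa,X_\kappa)$ in $\Cl((a,S),\pr_1(\mathfrak{V}))$; the second hypothesis is obtained exactly as for~(\ref{lem-jextprop-1}). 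For~(\ref{lem-jextprop-3}) I first prove the set equality: the inclusion $\Cl((a,S),\pr_1(\mathfrak{V})) \subseteq \Cl((c,T),\pr_1(\mathfrak{V}))$ uses that $\ucl((a,S),\pr_1(\mathfrak{V}))$ collects the witness and the set $X$ of every member, so that $(c,T) \vdash \ucl((a,S),\pr_1(\mathfrak{V}))$ forces $(c,T) \vdash (j,X)$, whereas the reverse inclusion uses $(a,S) \vdash (c,T)$ and the strong transitivity Lemma~\ref{lem-strong6}. Equality gives $M = M'$, and $a \sim c\,[\bigcup_{\kappa \in M}X_\kappa]$ is witnessed by the length-one chain through $a$ itself, since $\{a\},\{c\} \in \Con(a)$ (Axiom~\ref{dn-infsys}(\ref{dn-infsys-1}) and $(a,S)\vdash c$ via Axiom~\ref{dn-infsys}(\ref{dn-infsys-4})) and $\bigcup_{\kappa \in M}X_\kappa \in \Con(a) \cap \Con(c)$; the $M'=M$ branch of the core then delivers $\mathfrak{W}_2(c,T) = \mathfrak{W}_2(a,S)$.

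The hard part will be the core computation, and specifically matching $\mathfrak{W}_2(c,T)$ against the large generating set of Condition~\ref{dn-ass}(\ref{dn-ass-3-1-1}). One must select the correct disjunct there---the $\mathfrak{W}^{(\kappa-1)}$ one, which remains available even when $\|K\| = 1$---verify the stage bound $\|M'\| \le \|M\|-1$ that places the relevant pair inside $\mathfrak{W}^{(\|M\|-1)}$, and discharge the equivalence $e' \sim e\,[\bigcup_{\kappa \in M'}X_\kappa]$ demanded by the side condition $a' \sim e$. The several degenerate configurations ($M'=\emptyset$, $\|M'\|=1$, and $M'=M$) each have to be reconciled with the piecewise definition of the chain $(\mathfrak{W}^{(\kappa)})_{\kappa \in \omega}$, but none is genuinely difficult once this bookkeeping is in place.
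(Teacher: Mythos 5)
Your proof is correct and is exactly the elaboration the paper intends: the paper gives no detailed proof of this lemma, saying only that it is a consequence of Lemma~\ref{lem-clprop} and Condition~\ref{dn-ass}(\ref{dn-ass-3-1-1}), and your core claim derives the consistency statement from precisely that condition (second disjunct, with the stage bound $\|M'\| \le \|M\|-1$ justified via Condition~\ref{dn-ass}(\ref{dn-ass-3-2})), while the $\Cl$-inclusions are supplied by Lemma~\ref{lem-clprop} and your hand-proved variants of it. The case analysis ($M'=M$ via uniqueness of representatives, $\emptyset \neq M'\subsetneq M$ via membership in the generating set, $M'=\emptyset$ via Axioms~\ref{dn-infsys}(\ref{dn-infsys-3},\ref{dn-infsys-4})) and the reductions of the three items to the core claim all check out.
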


\begin{definition}\label{dn-fctsp}
Let $(A, \Con, \vdash, \Delta)$ and $(A', \Con', \vdash', \Delta')$ be information systems with witnesses. Define
\begin{enumerate}
\item\label{dn-fctsp-1} 
$A \rightarrow A' = \set{\fcbael{W}{V} \in \mathcal{P}(\Con \times A') \times \mathcal{P}_f(\Con \times A')}{\mathfrak{W} \in \ac(\mathfrak{V})} $

\item \label{dn-fctsp-2} 
For $\fcbael{W}{V} \in A \rightarrow A'$ and $\fcbaset{W}{V} \fsubset A \rightarrow A'$, 
$
\fcbaset{W}{V} \in \Con_\rightarrow(\fcbael{W}{V})
$ 
if for all $(a, S) \in \Con$,
\begin{enumerate}

\item\label{dn-fctsp-2-1} $
(\forall  \pair{\mathfrak{G} | \mathfrak{F}} \in \pair{\mathcal{W} | \mathcal{V}}) (\exists k \in A')\,
\mathfrak{G}_2(a, S) \sim k \,[\ap((a, S), \mathfrak{F})] \wedge \{ k \} \in \Con'(\mathfrak{W}_2(a, S)) 
$

\item\label{dn-fctsp-2-2}  $
(\forall r \in A') (r \sim \mathfrak{W}_2(a, S) \, [\ap((a, S), \mathfrak{V})] \Rightarrow 
r \sim \mathfrak{W}_2(a, S) \, [\ap((a, S), \bigcup \mathcal{V})] )
$.
\end{enumerate}

\item \label{dn-fctsp-3} 
For $\pair{\mathfrak{W} | \mathfrak{V}}, \pair{\mathfrak{B} | \mathfrak{A}} \in A \rightarrow A'$ and $\pair{\mathcal{W} | \mathcal{V}} \in \Con_\rightarrow(\pair{\mathfrak{W} | \mathfrak{V}})$,
$
(\pair{\mathfrak{W} | \mathfrak{V}}, \pair{\mathcal{W} | \mathcal{V}}) \vdash_\rightarrow \pair{\mathfrak{B} | \mathfrak{A}} 
$ if
\begin{enumerate}
\item \label{dn-fctsp-3-1}
$
(\forall ((e, Y), f) \in \mathfrak{A}) (\mathfrak{W}_2(e, Y), \ap((e, Y), \bigcup \mathcal{V})) \vdash' f 
$
\item \label{dn-fctsp-3-2}
$  
(\forall ((a, Z), b) \in \mathfrak{B}) (\exists k \in A')\,  b \sim k \,[\rs((a, Z), \mathfrak{A})] \wedge \mbox{}   \\
\mbox{}\hspace{.7em}(\mathfrak{W}_2(a, Z), \bigcup \{\,\ap((e, Y), \bigcup \mathcal{V}) \mid (e, Y) \in \ds((a, Z), \mathfrak{A}) \,\})  \vdash' 
(k, \rs((a, Z), \mathfrak{A}))
$.
\end{enumerate}

\item \label{dn-fctsp-4}  $\Delta_\rightarrow = \pair{\{ ((\Delta, \emptyset), \Delta') \} | \emptyset }$.

\end{enumerate}
\end{definition}

In Condition~\ref{dn-fctsp}(\ref{dn-fctsp-3-1}) one needs that 
$
\ap((e, Y), \bigcup \mathcal{V}) \in \Con'(\mathfrak{W}_2(e, Y)).
$
This is an easy consequence of Condition~\ref{dn-fctsp}(\ref{dn-fctsp-2-2}): choose $r = \mathfrak{W}_2(e, Y)$ and note that by definition of $\mathfrak{W}_2(e, Y)$,
$
\ap((e, Y), \mathfrak{V}) \in \Con'(\mathfrak{W}_2(e, Y)).
$
 
For Condition~\ref{dn-fctsp}(\ref{dn-fctsp-3-2}) observe that 
\[
\bigcup \{\,\ap((e, Y), \bigcup \mathcal{V}) \mid (e, Y) \in \ds((a, Z), \mathfrak{A}) \,\} \subseteq \ap((a, Z), \bigcup \mathcal{V}).
\]
Because of Condition~\ref{dn-infsys}(\ref{dn-infsys-2}) we therefore have that  
\[
\bigcup \{\,\ap((e, Y), \bigcup \mathcal{V}) \mid (e, Y) \in \ds((a, Z), \mathfrak{A}) \,\} \in \Con'(\mathfrak{W}_2(a, Z)).
\]

\begin{lemma}\label{lem-extvdash}
Let $\pair{\mathcal{W} | \mathcal{V}} \in \Con_\rightarrow(\pair{\mathfrak{W} | \mathfrak{V}})$ and $\pair{\mathcal{G} | \mathcal{F}} \fsubset A \rightarrow A'$ such that Condition~\ref{dn-fctsp}(\ref{dn-fctsp-3-1}) holds for all $\fcbael{G}{F} \in \fcbaset{G}{F}$. Then for all $(a, S) \in \Con$,
\begin{equation*}
(\mathfrak{W}_2(a, S), \bigcup \{\,\ap((c, T), \bigcup \mathcal{V}) \mid (c, T) \in \Cl((a, S), \pr_1(\bigcup\mathcal{F})) \,\}) \vdash' 
\ap((a, S), \bigcup\mathcal{F}).
\end{equation*}
\end{lemma}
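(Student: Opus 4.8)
The plan is to fix $(a, S) \in \Con$, abbreviate the second component of the claimed entailment by
\[
B = \bigcup \set{\ap((c, T), \bigcup\mathcal{V})}{(c, T) \in \Cl((a, S), \pr_1(\bigcup\mathcal{F}))},
\]
and to verify the entailment pointwise: for every $b \in \ap((a, S), \bigcup\mathcal{F})$ I will show $(\mathfrak{W}_2(a, S), B) \vdash' b$. By the definition of $\ap$ there is some $(i, X) \in \Cl((a, S), \pr_1(\bigcup\mathcal{F}))$ with $((i, X), b) \in \bigcup\mathcal{F}$, hence $((i, X), b) \in \mathfrak{F}$ for some $\fcbael{G}{F} \in \fcbaset{G}{F}$. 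Applying the assumed Condition~\ref{dn-fctsp}(\ref{dn-fctsp-3-1}) for this $\fcbael{G}{F}$ to the pair $((i, X), b) \in \mathfrak{F}$ yields the ``local'' entailment $(\mathfrak{W}_2(i, X), \ap((i, X), \bigcup\mathcal{V})) \vdash' b$. The whole argument then amounts to transporting this entailment from the witness $\mathfrak{W}_2(i, X)$ and the set $\ap((i, X), \bigcup\mathcal{V})$ to the witness $\mathfrak{W}_2(a, S)$ and the larger set $B$.

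Two bridging facts are needed. First, I claim that membership in the closure forces $(a, S) \vdash X$ and $a \sim i \,[X]$. Indeed, unfolding $(i, X) \in \Cl((a, S), \pr_1(\bigcup\mathcal{F}))$ gives some $j$ with $i \sim j \,[X]$ and $(a, S) \vdash (j, X)$; the latter delivers $(a, S) \vdash X$ at once and, via Axiom~\ref{dn-infsys}(\ref{dn-infsys-4}), both $X \in \Con(a)$ and $\{ j \} \in \Con(a)$. Since $(i, X) \in \pr_1(\bigcup\mathcal{F}) \subseteq \Con$ we also have $X \in \Con(i)$, and $X \in \Con(j)$ follows from $i \sim j \,[X]$. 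Taking the one-step chain $a_0 = a$, $a_1 = j$ with witness $k_1 = a$ (using Axiom~\ref{dn-infsys}(\ref{dn-infsys-1}) for $\{ a \} \in \Con(a)$) shows $a \sim j \,[X]$, so transitivity of $\sim [X]$ (Lemma~\ref{lem-witeqprop}) together with $j \sim i \,[X]$ gives $a \sim i \,[X]$. Second, $B$ is consistent with witness $\mathfrak{W}_2(a, S)$: for each $(c, T) \in \Cl((a, S), \pr_1(\bigcup\mathcal{F}))$ the same reasoning yields $(a, S) \vdash T$ and $a \sim c \,[T]$, whence $\ap((c, T), \bigcup\mathcal{V}) \subseteq \ap((a, S), \bigcup\mathcal{V})$ by the consequence of Lemma~\ref{lem-clprop} recorded after the definition of $\ap$; taking the union gives $B \subseteq \ap((a, S), \bigcup\mathcal{V})$, and since $\ap((a, S), \bigcup\mathcal{V}) \in \Con'(\mathfrak{W}_2(a, S))$ (the consistency noted after Definition~\ref{dn-fctsp}, obtained from Condition~\ref{dn-fctsp}(\ref{dn-fctsp-2-2})), Axiom~\ref{dn-infsys}(\ref{dn-infsys-2}) forces $B \in \Con'(\mathfrak{W}_2(a, S))$.

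With these facts in hand the assembly is routine. From $(a, S) \vdash X$ and $a \sim i \,[X]$, Lemma~\ref{lem-jextprop}(\ref{lem-jextprop-1}) gives $\{ \mathfrak{W}_2(i, X) \} \in \Con'(\mathfrak{W}_2(a, S))$, so by Axiom~\ref{dn-infsys}(\ref{dn-infsys-7}) the witness $\mathfrak{W}_2(a, S)$ dominates $\mathfrak{W}_2(i, X)$, and in particular $\ap((i, X), \bigcup\mathcal{V}) \in \Con'(\mathfrak{W}_2(a, S))$. Applying Axiom~\ref{dn-infsys}(\ref{dn-infsys-8}) to the local entailment $(\mathfrak{W}_2(i, X), \ap((i, X), \bigcup\mathcal{V})) \vdash' b$ changes the witness and yields $(\mathfrak{W}_2(a, S), \ap((i, X), \bigcup\mathcal{V})) \vdash' b$; then Axiom~\ref{dn-infsys}(\ref{dn-infsys-5}), together with $\ap((i, X), \bigcup\mathcal{V}) \subseteq B$ (immediate, as this set is one of the members of the union defining $B$) and $B \in \Con'(\mathfrak{W}_2(a, S))$, weakens this to $(\mathfrak{W}_2(a, S), B) \vdash' b$. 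As $b$ ranges over $\ap((a, S), \bigcup\mathcal{F})$ this is precisely the asserted entailment.

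The main obstacle is the coordination of consistency witnesses in $A'$: the local entailments supplied by Condition~\ref{dn-fctsp}(\ref{dn-fctsp-3-1}) live under the individual witnesses $\mathfrak{W}_2(i, X)$, whereas the conclusion must be phrased under the single witness $\mathfrak{W}_2(a, S)$ and over the aggregated set $B$. Establishing that $\mathfrak{W}_2(a, S)$ dominates every $\mathfrak{W}_2(i, X)$ and that $B$ remains consistent under it --- which rests on reducing closure membership to the relations $(a, S) \vdash X$ and $a \sim i \,[X]$ and then invoking Lemma~\ref{lem-jextprop}(\ref{lem-jextprop-1}) and Condition~\ref{dn-fctsp}(\ref{dn-fctsp-2-2}) --- is where the real work lies; once it is done, the three entailment axioms~\ref{dn-infsys}(\ref{dn-infsys-5},\ref{dn-infsys-7},\ref{dn-infsys-8}) finish the argument mechanically.
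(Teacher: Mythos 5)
Your proof is correct and follows essentially the same route as the paper's: fix $b \in \ap((a, S), \bigcup\mathcal{F})$, invoke Condition~\ref{dn-fctsp}(\ref{dn-fctsp-3-1}) to get the local entailment under $\mathfrak{W}_2(i, X)$, transfer the witness to $\mathfrak{W}_2(a, S)$ via Lemma~\ref{lem-jextprop}(\ref{lem-jextprop-1}), and weaken to the union using Axiom~\ref{dn-infsys}(\ref{dn-infsys-5}). The only difference is that you spell out details the paper leaves implicit (the derivation of $a \sim i\,[X]$ by a one-step chain plus transitivity, and the consistency of the union set under $\mathfrak{W}_2(a, S)$), which is harmless.
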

\begin{proof}
Let $b \in \ap((a, S), \bigcup\mathcal{F})$. Then there are $(j, U) \in \Con$ and $j' \in A$ such that $((j, U), b) \in \bigcup\mathcal{F}$, $j \sim j' \,[U]$, and $(a, S) \vdash (j', U)$. As a consequence of our assumption, we have that $(\mathfrak{W}_2(j, U), \ap((j, U), \bigcup \mathcal{V})) \vdash' b$. Then $\mathfrak{W}_2(j, U) = \mathfrak{W}_2(j', U)$ and $\{ \mathfrak{W}_2(j, U) \} \in \Con'(\mathfrak{W}_2(a, S))$. It follows that also $(\mathfrak{W}_2(a, S), \ap((j, U), \bigcup \mathcal{V})) \vdash' b$. Since moreover $\ap((j, U), \bigcup \mathcal{V}) \subseteq \bigcup \{\,\ap((c, T), \bigcup \mathcal{V}) \mid (c, T) \in \Cl((a, S), \pr_1(\bigcup\mathcal{F})) \,\}$, we finally obtain that $(\mathfrak{W}_2(a, S), \bigcup \{\,\ap((c, T), \bigcup \mathcal{V}) \mid (c, T) \in \Cl((a, S), \pr_1(\bigcup\mathcal{F})) \,\}) \vdash' b$.
\end{proof}

\begin{lemma}\label{lem-extwit}
Let $\pair{\mathcal{W} | \mathcal{V}} \in \Con_\rightarrow(\pair{\mathfrak{W} | \mathfrak{V}})$ and $\pair{\mathfrak{B} | \mathfrak{A}} \in A \rightarrow A'$ so that Condition~\ref{dn-fctsp}(\ref{dn-fctsp-3-2}) holds. Then for all $(a, S) \in \Con$, $\mathfrak{B}_2(a, S) \sim \mathfrak{W}_2(a, S) \,[\ap((a, S), \mathfrak{A})]$.
\end{lemma}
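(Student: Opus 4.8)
The plan is to trace both witnesses $\mathfrak{B}_2(a,S)$ and $\mathfrak{W}_2(a,S)$ back to the same $\mathfrak{A}$-maximal datum over $(a,S)$ and then exhibit a common intermediate token $k \in A'$ to which both are $\ap((a,S),\mathfrak{A})$-equivalent. Throughout, fix $(a,S) \in \Con$ and write $\mathfrak{A} = \{((i_1,X_1),c_1),\dots,((i_n,X_n),c_n)\}$, choosing $M \subseteq \{1,\dots,n\}$ with $\Cl((a,S),\pr_1(\mathfrak{A})) = \{(i_\kappa,X_\kappa) \mid \kappa \in M\}$, so that $\ap((a,S),\mathfrak{A}) = \{c_\kappa \mid \kappa \in M\}$ and $\mathfrak{B}(a,S) = ((e,X_M),j')$ for $X_M = \bigcup_{\kappa\in M}X_\kappa$, where $e \in W(\{(i_\kappa,X_\kappa)\mid\kappa\in M\})$, $a \sim e\,[X_M]$, and $j' = \mathfrak{B}_2(a,S)$.

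First I would apply Condition~\ref{dn-fctsp}(\ref{dn-fctsp-3-2}) to the pair $((e,X_M),j') \in \mathfrak{B}$. The key preliminary observation is that, by the identification of $\ds$ and $\rs$ for members of an associate noted in the discussion following Definition~\ref{dn-ass} (together with the $\mathfrak{A}$-maximality of $(e,\{((i_\kappa,X_\kappa),c_\kappa)\mid\kappa\in M\})$, which follows from Lemma~\ref{lem-max}), the unique index set attached to $(e,X_M)$ is exactly $M$; hence $\ds((e,X_M),\mathfrak{A}) = \{(i_\kappa,X_\kappa)\mid\kappa\in M\}$ and $\rs((e,X_M),\mathfrak{A}) = \ap((a,S),\mathfrak{A})$. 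The condition then produces a token $k \in A'$ with $\mathfrak{B}_2(a,S) \sim k\,[\ap((a,S),\mathfrak{A})]$ and
\[
(\mathfrak{W}_2(e,X_M),\, B) \vdash' (k,\ap((a,S),\mathfrak{A})), \qquad B = \bigcup\{\ap((i_\kappa,X_\kappa),\bigcup\mathcal{V}) \mid \kappa \in M\}.
\]

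Next I would transfer this entailment from the witness $\mathfrak{W}_2(e,X_M)$ to $\mathfrak{W}_2(a,S)$. Since each $(i_\kappa,X_\kappa) \in \Cl((a,S),\pr_1(\mathfrak{A}))$ gives $(a,S)\vdash X_\kappa$, we have $(a,S)\vdash X_M$, and together with $a \sim e\,[X_M]$ Lemma~\ref{lem-jextprop}(\ref{lem-jextprop-1}) yields $\{\mathfrak{W}_2(e,X_M)\} \in \Con'(\mathfrak{W}_2(a,S))$. Axiom~\ref{dn-infsys}(\ref{dn-infsys-8}), read in $A'$, then upgrades the displayed entailment to $(\mathfrak{W}_2(a,S),B) \vdash' (k,\ap((a,S),\mathfrak{A}))$, its consistency hypothesis being supplied by Axiom~\ref{dn-infsys}(\ref{dn-infsys-7}). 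Feeding this into Lemma~\ref{lem-witeqprop}(\ref{lem-witeqprop-4}), started from the reflexive instance $\mathfrak{W}_2(a,S) \sim \mathfrak{W}_2(a,S)\,[B]$ (valid because $B \in \Con'(\mathfrak{W}_2(a,S))$), gives $\mathfrak{W}_2(a,S) \sim k\,[\ap((a,S),\mathfrak{A})]$. Combining this with $\mathfrak{B}_2(a,S) \sim k\,[\ap((a,S),\mathfrak{A})]$ via symmetry and transitivity of the partial equivalence relation $\sim[\ap((a,S),\mathfrak{A})]$ (Lemma~\ref{lem-witeqprop}) then yields the claim.

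I expect the main obstacle to be the bookkeeping in the first step: verifying that the index set governing $(e,X_M)$ inside $\mathfrak{B}$ really coincides with $M$, so that $\rs((e,X_M),\mathfrak{A})$ is \emph{literally} $\ap((a,S),\mathfrak{A})$ rather than some larger or $\sim$-shifted set, and keeping the two distinct witness operators $\mathfrak{W}_2$ (computed over $\mathfrak{V}$) and $\mathfrak{B}_2$ (computed over $\mathfrak{A}$) cleanly separated. Once the spectra are pinned down and the passage from $\mathfrak{W}_2(e,X_M)$ to $\mathfrak{W}_2(a,S)$ is justified by Lemma~\ref{lem-jextprop}(\ref{lem-jextprop-1}), the remaining equivalence manipulations are routine.
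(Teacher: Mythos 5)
Your proof is correct and follows essentially the same route as the paper: apply Condition~\ref{dn-fctsp}(\ref{dn-fctsp-3-2}) to the distinguished pair $(\mathfrak{B}_1(a,S),\mathfrak{B}_2(a,S)) \in \mathfrak{B}$ and use the identification $\rs(\mathfrak{B}_1(a,S),\mathfrak{A}) = \ap((a,S),\mathfrak{A})$. The paper's two-line proof leaves implicit exactly the bookkeeping you spell out, namely the $\mathfrak{A}$-maximality via Lemma~\ref{lem-max}, the passage from $\mathfrak{W}_2(\mathfrak{B}_1(a,S))$ to $\mathfrak{W}_2(a,S)$ via Lemma~\ref{lem-jextprop}(\ref{lem-jextprop-1}) and Axioms~\ref{dn-infsys}(\ref{dn-infsys-7},\ref{dn-infsys-8}), and the equivalence chaining through the intermediate token $k$.
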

\begin{proof}
By definition of $\mathfrak{B}(a, S)$,
\[
\ap((a, S), \mathfrak{A}) = \rs(\mathfrak{B}_1(a, S), \mathfrak{A}) \,\, \text{and} \,\, (\mathfrak{B}_1(a, S), \mathfrak{B}_2(a, S)) \in \mathfrak{B}.
\]
So, Condition~\ref{dn-fctsp}(\ref{dn-fctsp-3-2}) implies that $\mathfrak{B}_2(a, S) \sim \mathfrak{W}_2(a, S) \,[\ap((a, S), \mathfrak{A})]$.
\end{proof}

\begin{proposition}\label{pn-fctinfsys}
Let $(A, \Con, \vdash, \Delta)$ and $(A', \Con', \vdash', \Delta')$ be information systems with witnesses. Then $(A \rightarrow A', \Con_\rightarrow, \vdash_\rightarrow, \Delta_\rightarrow)$ is an information system with witnesses as well.
\end{proposition}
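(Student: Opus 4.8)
\section*{Proof proposal}

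The plan is to check, one axiom at a time, that the quadruple $(A \rightarrow A', \Con_\rightarrow, \vdash_\rightarrow, \Delta_\rightarrow)$ satisfies all eleven conditions of Definition~\ref{dn-infsys}, reading each token $\pair{\mathfrak{W} | \mathfrak{V}}$ both as a token and as a witness, each $\pair{\mathcal{W} | \mathcal{V}}$ as a consistent set, and using the clauses of Definition~\ref{dn-fctsp}(\ref{dn-fctsp-2}) and Definition~\ref{dn-fctsp}(\ref{dn-fctsp-3}) as the working definitions of $\Con_\rightarrow$ and $\vdash_\rightarrow$. First I would record that $\Delta_\rightarrow$ is a legitimate token: with $\mathfrak{V} = \emptyset$ the maximality requirement in Definition~\ref{dn-ass} is vacuous, so $\{((\Delta, \emptyset), \Delta')\} \in \ac(\emptyset)$. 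The guiding principle is that every clause of $\Con_\rightarrow$ and $\vdash_\rightarrow$ unwinds, through the coordinate maps $(a, S) \mapsto \mathfrak{W}_2(a, S)$ and $(a, S) \mapsto \ap((a, S), \cdot)$ together with $\ds$, $\rs$, $\Cl$ and the relation $\sim[\,\cdot\,]$, into a statement already known to hold in $A$ or in $A'$; the helper Lemmas~\ref{lem-jextprop}, \ref{lem-extvdash} and \ref{lem-extwit} are tailored for exactly this transfer. I would also note that, once Axioms~\ref{dn-infsys}(\ref{dn-infsys-4},\ref{dn-infsys-5},\ref{dn-infsys-7},\ref{dn-infsys-8},\ref{dn-infsys-9}) are in hand, the two interpolation axioms \ref{dn-infsys}(\ref{dn-infsys-10},\ref{dn-infsys-11}) may be verified jointly through condition (GIP) of Lemma~\ref{lem-globint}.

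For the fixed-witness axioms I proceed as follows. Reflexivity \ref{dn-infsys}(\ref{dn-infsys-1}) is checked directly: in clause~\ref{dn-fctsp}(\ref{dn-fctsp-2-1}) one takes $k = \mathfrak{W}_2(a, S)$ and uses that $\ap((a, S), \mathfrak{V}) \in \Con'(\mathfrak{W}_2(a, S))$ makes $\sim[\ap((a, S), \mathfrak{V})]$ reflexive at $\mathfrak{W}_2(a, S)$, while clause~\ref{dn-fctsp}(\ref{dn-fctsp-2-2}) is immediate since $\bigcup \mathcal{V} = \mathfrak{V}$. Subset closure \ref{dn-infsys}(\ref{dn-infsys-2}) only removes constraints, so both clauses survive once one observes $\ap((a, S), \bigcup \mathcal{F}) \subseteq \ap((a, S), \bigcup \mathcal{V})$ and invokes Lemma~\ref{lem-clprop}. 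Global truth \ref{dn-infsys}(\ref{dn-infsys-3}) follows from the shape of $\Delta_\rightarrow$ and from \ref{dn-infsys}(\ref{dn-infsys-3}) in $A'$, clause~\ref{dn-fctsp}(\ref{dn-fctsp-3-1}) being vacuous there. For \ref{dn-infsys}(\ref{dn-infsys-4},\ref{dn-infsys-5}) I feed the $\vdash'$-consequences of \ref{dn-fctsp}(\ref{dn-fctsp-3-1},\ref{dn-fctsp-3-2}) through Lemma~\ref{lem-extvdash} and carry the companion $\sim$-equivalences through Lemma~\ref{lem-extwit}; transitivity \ref{dn-infsys}(\ref{dn-infsys-6}) is then the composition of two such chains, combining the two $A'$-entailments by \ref{dn-infsys}(\ref{dn-infsys-6}) in $A'$ and the two equivalences by transitivity of $\sim[\,\cdot\,]$ (Lemma~\ref{lem-witeqprop}).

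The cross-witness axioms \ref{dn-infsys}(\ref{dn-infsys-7},\ref{dn-infsys-8},\ref{dn-infsys-9}) reduce to the analogous facts about $\sim[\,\cdot\,]$ and about $\Con'$ and $\vdash'$. The key observation is that $\{\pair{\mathfrak{W}_0 | \mathfrak{V}_0}\} \in \Con_\rightarrow(\pair{\mathfrak{W}_1 | \mathfrak{V}_1})$ forces, for each $(a, S)$, an equivalence of $(\mathfrak{W}_0)_2(a, S)$ and $(\mathfrak{W}_1)_2(a, S)$ over the relevant argument set, so that Lemma~\ref{lem-witeqprop}(\ref{lem-witeqprop-1}) permits replacing one witness by the other inside every occurrence of $\Con'$ and $\vdash'$ in clauses~\ref{dn-fctsp}(\ref{dn-fctsp-2}) and \ref{dn-fctsp}(\ref{dn-fctsp-3}); conservativity \ref{dn-infsys}(\ref{dn-infsys-9}) applies this replacement in the reverse direction, and Lemma~\ref{lem-witeqam} handles the induced transport of the mapping data.

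The main obstacle is the Global Interpolation Property, i.e.\ Axioms~\ref{dn-infsys}(\ref{dn-infsys-10},\ref{dn-infsys-11}), because these are the only axioms demanding the \emph{construction} of a new token $\pair{\mathfrak{C} | \mathfrak{K}}$ rather than the mere transport of existing data. Given $(\pair{\mathfrak{W} | \mathfrak{V}}, \pair{\mathcal{W} | \mathcal{V}}) \vdash_\rightarrow \pair{\mathfrak{B} | \mathfrak{A}}$, I would interpolate coordinatewise: apply the interpolation of $A'$ to the $\vdash'$-statements delivered by \ref{dn-fctsp}(\ref{dn-fctsp-3-1},\ref{dn-fctsp-3-2}) to split each second component, apply the interpolation of $A$ to split the first components produced by $\mathfrak{W}_1(a, S)$, and then reassemble the fragments into an intermediate set $\mathfrak{K} \fsubset \Con \times A'$ together with an associate $\mathfrak{C} \in \ac(\mathfrak{K})$. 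Building $\mathfrak{C}$ is the crux: it has to be produced layer by layer exactly as the increasing chain of Definition~\ref{dn-ass}, choosing at each cardinality level the unique witness $j$ required by \ref{dn-ass}(\ref{dn-ass-3-1-1}) from the interpolants supplied by $A'$, and one must re-verify $\mathfrak{V}$-maximality (Lemmas~\ref{lem-max}, \ref{lem-clprop}), the consistency clauses~\ref{dn-fctsp}(\ref{dn-fctsp-2-1},\ref{dn-fctsp-2-2}) and the two entailment clauses for the constructed pair. I expect virtually all the bookkeeping to concentrate here: the coordinatewise interpolation is conceptually transparent, but keeping the $\sim[\,\cdot\,]$-classes, the $\Cl$-sets and the maximality conditions aligned across the split, and matching them against the layered associate construction, is where the delicacy lies. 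Lemma~\ref{lem-singext} supplies the associates attached to single argument pairs and will serve as the base building block of the reassembly.
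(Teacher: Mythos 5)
Your skeleton coincides with the paper's: there, Proposition~\ref{pn-fctinfsys} is proved by checking the axioms of Definition~\ref{dn-infsys} one at a time (Lemmas~\ref{lem-fc1}--\ref{lem-fc8}), using exactly the helper lemmas you cite, and your treatment of Axioms~\ref{dn-infsys}(\ref{dn-infsys-1})--(\ref{dn-infsys-9}) agrees with the published verifications up to small misattributions (for Axiom~\ref{dn-infsys}(\ref{dn-infsys-2}) the relevant fact is Lemma~\ref{lem-witeqprop}(\ref{lem-witeqprop-2}), not Lemma~\ref{lem-clprop}; Lemma~\ref{lem-extwit} is used in Lemma~\ref{lem-fc11}, not for Axiom~\ref{dn-infsys}(\ref{dn-infsys-4})). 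Your one structural deviation --- proving the Global Interpolation Property~(\ref{eq-gip}) once instead of Axioms~\ref{dn-infsys}(\ref{dn-infsys-10}) and (\ref{dn-infsys-11}) separately, as the paper does in Lemmas~\ref{lem-fc10} and \ref{lem-fc11} --- is legitimate, since Lemma~\ref{lem-globint} makes the two formulations equivalent once Axioms~\ref{dn-infsys}(\ref{dn-infsys-4},\ref{dn-infsys-5},\ref{dn-infsys-7}-\ref{dn-infsys-9}) are in hand; note, though, that it obliges you to show that the \emph{new} witness together with the \emph{new} consistent set entails the target, which is more than Lemmas~\ref{lem-fc10} and \ref{lem-fc11} each establish on their own.

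For the interpolation axioms, however, your text is a plan rather than a proof, and the plan as stated omits the one idea on which the paper's Lemma~\ref{lem-fc11} turns. You propose to interpolate coordinatewise, assemble the interpolants into a set $\mathfrak{K}$, build $\mathfrak{C} \in \ac(\mathfrak{K})$ ``layer by layer exactly as the increasing chain of Definition~\ref{dn-ass}'', and only afterwards ``re-verify \ldots the consistency clauses~\ref{dn-fctsp}(\ref{dn-fctsp-2-1},\ref{dn-fctsp-2-2})'' for the constructed pair. That last step is where the argument breaks: an associate built according to Definition~\ref{dn-ass} alone bears no relation whatsoever to the witnesses occurring in the target $\pair{\mathcal{G} | \mathcal{F}}$. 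Condition~\ref{dn-fctsp}(\ref{dn-fctsp-2-1}) demands, for every $(a, S) \in \Con$ and every $\pair{\mathfrak{G} | \mathfrak{F}} \in \pair{\mathcal{G} | \mathcal{F}}$, some $k \in A'$ with $\mathfrak{G}_2(a, S) \sim k \,[\ap((a, S), \mathfrak{F})]$ and $\{ k \} \in \Con'(\mathfrak{C}_2(a, S))$; since consistency in $A'$ is purely local, nothing ties the freshly generated witnesses $t_{a, \mathfrak{J}}$ of $\mathfrak{C}$ to the tokens of $\mathfrak{G}$, so this cannot be established after the fact. The paper resolves this by forward-loading: the proxies $k_{c, Z, \pair{\mathfrak{G} | \mathfrak{F}}}$ are extracted from Condition~\ref{dn-fctsp}(\ref{dn-fctsp-3-2}) \emph{before} the construction starts, and at every level of the induction the set to which Axiom~\ref{dn-infsys}(\ref{dn-infsys-11}) of $A'$ is applied contains, besides $\pr_2(\mathfrak{J})$ and the previously created $t_{c, \mathfrak{K}}$, also these proxies --- see (\ref{eq-fc11-9}) and (\ref{eq-fc11-10}); it is only this that makes $\{ k_{c, Z, \pair{\mathfrak{G} | \mathfrak{F}}} \} \in \Con'(\mathfrak{B}_2(a, S))$, and hence Condition~\ref{dn-fctsp}(\ref{dn-fctsp-2-1}), provable at the end. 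The same device is needed again in Lemma~\ref{lem-approx-state}, so it is not incidental bookkeeping but the core of the construction; as it stands, the decisive portion of your proof is missing.
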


In the subsequent lemmas we verify the conditions in Definition~\ref{dn-infsys}.

\begin{lemma}\label{lem-fc1}
$(A \rightarrow A', \Con_\rightarrow, \vdash_\rightarrow, \Delta_\rightarrow)$ satisfies Condition~\ref{dn-infsys}(\ref{dn-infsys-1}).
\end{lemma}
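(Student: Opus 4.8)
The plan is to unfold Definition~\ref{dn-fctsp}(\ref{dn-fctsp-2}) for the singleton family $\pair{\mathcal{W} | \mathcal{V}} = \{ \pair{\mathfrak{W} | \mathfrak{V}} \}$ and check directly that $\{ \pair{\mathfrak{W} | \mathfrak{V}} \} \in \Con_\rightarrow(\pair{\mathfrak{W} | \mathfrak{V}})$ for every token $\pair{\mathfrak{W} | \mathfrak{V}} \in A \rightarrow A'$. The key preliminary observation is that with $\mathcal{V} = \{ \mathfrak{V} \}$ one has $\bigcup \mathcal{V} = \mathfrak{V}$, so that $\ap((a, S), \bigcup \mathcal{V}) = \ap((a, S), \mathfrak{V})$ for every $(a, S) \in \Con$. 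I would then fix such an $(a, S)$ and verify the two clauses separately.

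For Condition~\ref{dn-fctsp}(\ref{dn-fctsp-2-1}) the only member of the family is $\pair{\mathfrak{G} | \mathfrak{F}} = \pair{\mathfrak{W} | \mathfrak{V}}$, so I would simply choose $k = \mathfrak{W}_2(a, S)$. The second conjunct $\{ \mathfrak{W}_2(a, S) \} \in \Con'(\mathfrak{W}_2(a, S))$ is then nothing but Axiom~\ref{dn-infsys}(\ref{dn-infsys-1}) applied in $A'$. For the first conjunct I need the self-equivalence $\mathfrak{W}_2(a, S) \sim \mathfrak{W}_2(a, S) \,[\ap((a, S), \mathfrak{V})]$. This holds because, by the very definition of $\mathfrak{W}_2(a, S)$ recalled just before Lemma~\ref{lem-jextprop}, the witness $j = \mathfrak{W}_2(a, S)$ satisfies $\ap((a, S), \mathfrak{V}) \in \Con'(\mathfrak{W}_2(a, S))$; hence $\mathfrak{W}_2(a, S)$ lies in the domain of the partial equivalence relation $\sim\,[\ap((a, S), \mathfrak{V})]$, the required self-equivalence being witnessed by the trivial length-one chain anchored at $\mathfrak{W}_2(a, S)$ itself, using $\{ \mathfrak{W}_2(a, S) \} \in \Con'(\mathfrak{W}_2(a, S))$.

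For Condition~\ref{dn-fctsp}(\ref{dn-fctsp-2-2}) the implication to be checked has premise $r \sim \mathfrak{W}_2(a, S) \,[\ap((a, S), \mathfrak{V})]$ and conclusion $r \sim \mathfrak{W}_2(a, S) \,[\ap((a, S), \bigcup \mathcal{V})]$. By the preliminary observation the two bracketed sets coincide, so premise and conclusion are literally the same statement and the implication holds trivially. Since $(a, S)$ was arbitrary, both clauses of Definition~\ref{dn-fctsp}(\ref{dn-fctsp-2}) are satisfied, which gives $\{ \pair{\mathfrak{W} | \mathfrak{V}} \} \in \Con_\rightarrow(\pair{\mathfrak{W} | \mathfrak{V}})$, as required.

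I expect no genuine obstacle here, since the statement is essentially a singleton/reflexivity fact. The only point deserving a moment's care is the first conjunct of Condition~\ref{dn-fctsp}(\ref{dn-fctsp-2-1}): one must recognize that it reduces to the partial-reflexivity of $\sim\,[\ap((a, S), \mathfrak{V})]$ at the point $\mathfrak{W}_2(a, S)$, and that this partial-reflexivity is exactly what the defining consistency property $\ap((a, S), \mathfrak{V}) \in \Con'(\mathfrak{W}_2(a, S))$ of the chosen witness provides. Everything else is a direct rewriting using $\bigcup\{\mathfrak{V}\} = \mathfrak{V}$.
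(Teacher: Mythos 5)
Your proof is correct and follows essentially the same route as the paper's: the paper likewise fixes $(a, S) \in \Con$, notes that Condition~\ref{dn-fctsp}(\ref{dn-fctsp-2-2}) is trivial (since $\bigcup\{\mathfrak{V}\} = \mathfrak{V}$), and chooses $k = \mathfrak{W}_2(a, S)$ for Condition~\ref{dn-fctsp}(\ref{dn-fctsp-2-1}). Your additional verification of the self-equivalence $\mathfrak{W}_2(a, S) \sim \mathfrak{W}_2(a, S)\,[\ap((a, S), \mathfrak{V})]$ via the length-one chain and the defining property $\ap((a, S), \mathfrak{V}) \in \Con'(\mathfrak{W}_2(a, S))$ is exactly the detail the paper leaves implicit.
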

\begin{proof}
We have to show that $\{ \pair{\mathfrak{W} | \mathfrak{V}} \} \in \Con_\rightarrow(\pair{\mathfrak{W} | \mathfrak{V}})$. Let $(a, S) \in \Con$. Condition~\ref{dn-fctsp}(\ref{dn-fctsp-2-2}) is vacuously true and for Requirement~\ref{dn-fctsp}(\ref{dn-fctsp-2-1}) choose $k = \mathfrak{W}_2(a, S)$.
\end{proof}

\begin{lemma}\label{lem-fc2}
$(A \rightarrow A', \Con_\rightarrow, \vdash_\rightarrow, \Delta_\rightarrow)$ satisfies Condition~\ref{dn-infsys}(\ref{dn-infsys-2}).
\end{lemma}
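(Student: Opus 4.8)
The plan is to unfold Condition~\ref{dn-infsys}(\ref{dn-infsys-2}) in the function space. Writing $\pair{\mathfrak{W} | \mathfrak{V}}$ for the witness token and $\pair{\mathcal{W} | \mathcal{V}} = \set{\pair{\mathfrak{W}_\nu | \mathfrak{V}_\nu}}{1 \le \nu \le m} \in \Con_\rightarrow(\pair{\mathfrak{W} | \mathfrak{V}})$ for the larger consistent set, I have to show that any subset $\pair{\mathcal{W}' | \mathcal{V}'} = \set{\pair{\mathfrak{W}_\nu | \mathfrak{V}_\nu}}{\nu \in J}$, $J \subseteq \{ 1, \ldots, m \}$, again lies in $\Con_\rightarrow(\pair{\mathfrak{W} | \mathfrak{V}})$. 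First I would note that the witness is unchanged, so for every $(a, S) \in \Con$ the quantities $\mathfrak{W}_2(a, S)$ and $\ap((a, S), \mathfrak{V})$ occurring in Definition~\ref{dn-fctsp}(\ref{dn-fctsp-2}) are identical for both sets; hence I only need to re-establish Conditions~\ref{dn-fctsp}(\ref{dn-fctsp-2-1}) and (\ref{dn-fctsp-2-2}) for the shrunk family, for each fixed $(a, S)$.

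Condition~\ref{dn-fctsp}(\ref{dn-fctsp-2-1}) is a universal statement over the members $\pair{\mathfrak{G} | \mathfrak{F}}$ of the family, so since every member of $\pair{\mathcal{W}' | \mathcal{V}'}$ is a member of $\pair{\mathcal{W} | \mathcal{V}}$, the witnessing $k \in A'$ supplied by (\ref{dn-fctsp-2-1}) for the larger set transfers verbatim; this step is immediate. The work, such as it is, sits entirely in Condition~\ref{dn-fctsp}(\ref{dn-fctsp-2-2}), and the key observation I would use is that $\ap$ is monotone in its second argument: because $\Cl((a, S), \cdot)$ is monotone in its set argument by definition, $\bigcup \mathcal{V}' = \bigcup_{\nu \in J}\mathfrak{V}_\nu \subseteq \bigcup \mathcal{V}$ forces $\ap((a, S), \bigcup\mathcal{V}') \subseteq \ap((a, S), \bigcup\mathcal{V})$.

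With this inclusion in hand, I would argue as follows: take any $r \in A'$ with $r \sim \mathfrak{W}_2(a, S)\,[\ap((a, S), \mathfrak{V})]$; Condition~(\ref{dn-fctsp-2-2}) for $\pair{\mathcal{W} | \mathcal{V}}$ gives $r \sim \mathfrak{W}_2(a, S)\,[\ap((a, S), \bigcup\mathcal{V})]$, and then Lemma~\ref{lem-witeqprop}(\ref{lem-witeqprop-2}), applied to the inclusion $\ap((a, S), \bigcup\mathcal{V}') \subseteq \ap((a, S), \bigcup\mathcal{V})$, yields $r \sim \mathfrak{W}_2(a, S)\,[\ap((a, S), \bigcup\mathcal{V}')]$, which is exactly (\ref{dn-fctsp-2-2}) for the subset. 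Collecting both conditions over all $(a, S) \in \Con$ gives $\pair{\mathcal{W}' | \mathcal{V}'} \in \Con_\rightarrow(\pair{\mathfrak{W} | \mathfrak{V}})$. I expect the only point requiring care to be (\ref{dn-fctsp-2-2}), since that is where passing to the subset actually changes something ($\bigcup\mathcal{V}$ shrinks); but the obstacle is slight and disappears once monotonicity of $\ap$ and the downward closure of $\sim[\,\cdot\,]$ under subsets are invoked.
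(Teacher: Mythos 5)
Your proposal is correct and follows essentially the same route as the paper's proof: Condition~\ref{dn-fctsp}(\ref{dn-fctsp-2-1}) is inherited trivially since it quantifies only over members of the family, and Condition~\ref{dn-fctsp}(\ref{dn-fctsp-2-2}) follows from Lemma~\ref{lem-witeqprop}(\ref{lem-witeqprop-2}) applied to the inclusion $\ap((a,S),\bigcup\mathcal{V}') \subseteq \ap((a,S),\bigcup\mathcal{V})$. Your explicit verification of the monotonicity of $\ap$ in its second argument (via monotonicity of $\Cl$) is exactly the detail the paper leaves implicit.
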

\begin{proof}
Let $\pair{\mathcal{W} | \mathcal{V}} \in \Con_\rightarrow(\pair{\mathfrak{W} | \mathfrak{V}})$ and $\pair{\mathcal{G} | \mathcal{F}} \subseteq \pair{\mathcal{W} | \mathcal{V}}$. We need to verify that $\pair{\mathcal{G} | \mathcal{F}} \in \Con_\rightarrow(\pair{\mathfrak{W} | \mathfrak{V}})$: Condition~\ref{dn-fctsp}(\ref{dn-fctsp-2-1}) holds trivially. For Requirement~\ref{dn-fctsp}(\ref{dn-fctsp-2-2}) apply \linebreak Lemma~\ref{lem-witeqprop}(\ref{lem-witeqprop-2}).
\end{proof}

\begin{lemma}\label{lem-fc3}
$(A \rightarrow A', \Con_\rightarrow, \vdash_\rightarrow, \Delta_\rightarrow)$ satisfies Condition~\ref{dn-infsys}(\ref{dn-infsys-3}).
\end{lemma}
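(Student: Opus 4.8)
The plan is to unwind Condition~\ref{dn-infsys}(\ref{dn-infsys-3}) into its two constituents for the function space, namely $\emptyset \in \Con_\rightarrow(\pair{\mathfrak{W} | \mathfrak{V}})$ and $(\pair{\mathfrak{W} | \mathfrak{V}}, \emptyset) \vdash_\rightarrow \Delta_\rightarrow$, for an arbitrary token $\pair{\mathfrak{W} | \mathfrak{V}} \in A \rightarrow A'$, and to observe that once the empty premise $\pair{\mathcal{W} | \mathcal{V}} = \emptyset$ and the target $\Delta_\rightarrow = \pair{\{ ((\Delta, \emptyset), \Delta') \} | \emptyset}$ are substituted into Definition~\ref{dn-fctsp}, every quantified clause either becomes vacuous or collapses to Condition~\ref{dn-infsys}(\ref{dn-infsys-3}) in the codomain system $A'$.

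For the consistency part, $\emptyset \in \Con_\rightarrow(\pair{\mathfrak{W} | \mathfrak{V}})$ is already immediate from Lemmas~\ref{lem-fc1} and~\ref{lem-fc2} (the singleton $\{\pair{\mathfrak{W} | \mathfrak{V}}\}$ is consistent, and consistency is downward closed), but it can also be read off directly: for every $(a, S) \in \Con$, clause~\ref{dn-fctsp}(\ref{dn-fctsp-2-1}) quantifies over $\pair{\mathfrak{G} | \mathfrak{F}} \in \pair{\mathcal{W} | \mathcal{V}} = \emptyset$ and is vacuous, while in clause~\ref{dn-fctsp}(\ref{dn-fctsp-2-2}) one has $\bigcup \mathcal{V} = \emptyset$, hence $\ap((a, S), \bigcup \mathcal{V}) = \emptyset$, and the consequent $r \sim \mathfrak{W}_2(a, S)\,[\emptyset]$ holds for \emph{every} $r \in A'$. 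Indeed, by Lemma~\ref{lem-witeqprop}(\ref{lem-witeqprop-0}) applied to $A'$ both $r$ and $\mathfrak{W}_2(a, S)$ are $\emptyset$-equivalent to $\Delta'$, and since $\sim[\emptyset]$ is a partial equivalence relation they are $\emptyset$-equivalent to each other; thus the implication in~\ref{dn-fctsp}(\ref{dn-fctsp-2-2}) holds independently of its antecedent.

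For the entailment I would apply Definition~\ref{dn-fctsp}(\ref{dn-fctsp-3}) with $\mathfrak{A} = \emptyset$ and $\mathfrak{B} = \{ ((\Delta, \emptyset), \Delta') \}$. Clause~\ref{dn-fctsp}(\ref{dn-fctsp-3-1}) ranges over $((e, Y), f) \in \mathfrak{A} = \emptyset$ and is vacuous. For clause~\ref{dn-fctsp}(\ref{dn-fctsp-3-2}) the only pair in $\mathfrak{B}$ is $((\Delta, \emptyset), \Delta')$, so I instantiate $((a, Z), b) = ((\Delta, \emptyset), \Delta')$; since $\mathfrak{A} = \emptyset$ we get $\ds((\Delta, \emptyset), \mathfrak{A}) = \emptyset$ and $\rs((\Delta, \emptyset), \mathfrak{A}) = \emptyset$, so the displayed union is empty and the requirement shrinks to the existence of some $k \in A'$ with $\Delta' \sim k\,[\emptyset]$ and $(\mathfrak{W}_2(\Delta, \emptyset), \emptyset) \vdash' (k, \emptyset)$. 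Choosing $k = \Delta'$, the equivalence $\Delta' \sim \Delta'\,[\emptyset]$ follows from Lemma~\ref{lem-witeqprop}(\ref{lem-witeqprop-0}), and $(\mathfrak{W}_2(\Delta, \emptyset), \emptyset) \vdash' \Delta'$ is exactly Condition~\ref{dn-infsys}(\ref{dn-infsys-3}) for the witness $\mathfrak{W}_2(\Delta, \emptyset)$ inside $A'$.

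I do not expect a real obstacle: the statement falls out once one records that the derived operators $\ap$, $\ds$ and $\rs$ return $\emptyset$ against the empty associate, that $\sim[\emptyset]$ links any two witnesses of $A'$ through $\Delta'$, and that the codomain axiom~\ref{dn-infsys}(\ref{dn-infsys-3}) supplies $(\mathfrak{W}_2(\Delta, \emptyset), \emptyset) \vdash' \Delta'$. The only point requiring mild care is to confirm that $\mathfrak{W}(\Delta, \emptyset)$, and hence $\mathfrak{W}_2(\Delta, \emptyset) \in A'$, is well defined for the given associate $\mathfrak{W}$, which is guaranteed by the uniqueness clause recorded just before Lemma~\ref{lem-jextprop}.
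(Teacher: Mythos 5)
Your proposal is correct and follows essentially the same route as the paper's own proof: Condition~\ref{dn-fctsp}(\ref{dn-fctsp-3-1}) is vacuous since the $\mathfrak{A}$-component of $\Delta_\rightarrow$ is empty, and for Condition~\ref{dn-fctsp}(\ref{dn-fctsp-3-2}) one chooses $k = \Delta'$, reducing everything to Axiom~\ref{dn-infsys}(\ref{dn-infsys-3}) in $A'$. Your additional explicit verification that $\emptyset \in \Con_\rightarrow(\pair{\mathfrak{W} | \mathfrak{V}})$ and that $\sim[\emptyset]$ relates any two witnesses via $\Delta'$ merely spells out what the paper leaves implicit (the former being the standard consequence of Conditions~\ref{dn-infsys}(\ref{dn-infsys-1},\ref{dn-infsys-2}), already secured by Lemmas~\ref{lem-fc1} and~\ref{lem-fc2}).
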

\begin{proof}
We have to prove that $(\pair{\mathfrak{W} | \mathfrak{V}}, \emptyset) \vdash_\rightarrow \Delta_\rightarrow$. Condition~\ref{dn-fctsp}(\ref{dn-fctsp-3-1}) is vacuously satisfied and for Condition~\ref{dn-fctsp}(\ref{dn-fctsp-3-2}) choose $k = \Delta'$. 
\end{proof}

\begin{lemma}\label{lem-fc4}
$(A \rightarrow A', \Con_\rightarrow, \vdash_\rightarrow, \Delta_\rightarrow)$ satisfies Condition~\ref{dn-infsys}(\ref{dn-infsys-4}).
\end{lemma}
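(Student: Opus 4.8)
The plan is to unfold Condition~\ref{dn-infsys}(\ref{dn-infsys-4}) in the function space: writing the entailed set of tokens as $\pair{\mathcal{G} | \mathcal{F}}$, I assume $\pair{\mathcal{W} | \mathcal{V}} \in \Con_\rightarrow(\pair{\mathfrak{W} | \mathfrak{V}})$ together with $(\pair{\mathfrak{W} | \mathfrak{V}}, \pair{\mathcal{W} | \mathcal{V}}) \vdash_\rightarrow \pair{\mathcal{G} | \mathcal{F}}$, and I must establish $\pair{\mathcal{G} | \mathcal{F}} \in \Con_\rightarrow(\pair{\mathfrak{W} | \mathfrak{V}})$, i.e.\ verify Conditions~\ref{dn-fctsp}(\ref{dn-fctsp-2-1}) and (\ref{dn-fctsp-2-2}) for every $(a, S) \in \Con$. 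By the convention on sets, the entailment hypothesis means that Conditions~\ref{dn-fctsp}(\ref{dn-fctsp-3-1}) and (\ref{dn-fctsp-3-2}) hold for every token $\pair{\mathfrak{G} | \mathfrak{F}} \in \pair{\mathcal{G} | \mathcal{F}}$. The two interpolation lemmas~\ref{lem-extvdash} and \ref{lem-extwit} are designed precisely to turn these entailment data into the consistency data I need, so the proof is essentially an orchestration of those two results.

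Condition~\ref{dn-fctsp}(\ref{dn-fctsp-2-1}) is the easy half. Fix $(a, S) \in \Con$ and a token $\pair{\mathfrak{G} | \mathfrak{F}} \in \pair{\mathcal{G} | \mathcal{F}}$. Since Condition~\ref{dn-fctsp}(\ref{dn-fctsp-3-2}) holds for it, Lemma~\ref{lem-extwit} (applied with $\pair{\mathfrak{B} | \mathfrak{A}} := \pair{\mathfrak{G} | \mathfrak{F}}$) gives $\mathfrak{G}_2(a, S) \sim \mathfrak{W}_2(a, S) \,[\ap((a, S), \mathfrak{F})]$. I then take $k = \mathfrak{W}_2(a, S)$: the displayed equivalence is exactly what (\ref{dn-fctsp-2-1}) requires, and $\{ \mathfrak{W}_2(a, S) \} \in \Con'(\mathfrak{W}_2(a, S))$ by Axiom~\ref{dn-infsys}(\ref{dn-infsys-1}) in $A'$.

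The work sits in Condition~\ref{dn-fctsp}(\ref{dn-fctsp-2-2}). Because Condition~\ref{dn-fctsp}(\ref{dn-fctsp-3-1}) holds for all tokens of $\pair{\mathcal{G} | \mathcal{F}}$, Lemma~\ref{lem-extvdash} yields, for each $(a, S)$,
\[
(\mathfrak{W}_2(a, S), \bigcup \{\,\ap((c, T), \bigcup \mathcal{V}) \mid (c, T) \in \Cl((a, S), \pr_1(\bigcup\mathcal{F})) \,\}) \vdash' \ap((a, S), \bigcup\mathcal{F}).
\]
The key reduction is to replace the union on the left by $\ap((a, S), \bigcup \mathcal{V})$. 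For this I claim that each $(c, T) \in \Cl((a, S), \pr_1(\bigcup\mathcal{F}))$ satisfies $a \sim c \,[T]$ and $(a, S) \vdash T$: membership provides some $j$ with $c \sim j \,[T]$ and $(a, S) \vdash (j, T)$, whence $(a, S) \vdash T$ and $\{ j \} \in \Con(a)$; since $(c, T) \in \Con$ forces $T \in \Con(c)$ and $\sim[T]$ relates $c$ and $j$, the one-step chain $a_0 = a$, $a_1 = j$ with witness $k_1 = a$ gives $a \sim j \,[T]$, and transitivity of the partial equivalence relation yields $a \sim c \,[T]$. The remark following Lemma~\ref{lem-clprop} then gives $\ap((c, T), \bigcup \mathcal{V}) \subseteq \ap((a, S), \bigcup \mathcal{V})$, so the whole union is contained in $\ap((a, S), \bigcup \mathcal{V})$. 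Invoking $\ap((a, S), \bigcup \mathcal{V}) \in \Con'(\mathfrak{W}_2(a, S))$ (the remark preceding Lemma~\ref{lem-extvdash}) and monotonicity, Axiom~\ref{dn-infsys}(\ref{dn-infsys-5}) in $A'$, I upgrade the displayed entailment to
\[
(\mathfrak{W}_2(a, S), \ap((a, S), \bigcup \mathcal{V})) \vdash' \ap((a, S), \bigcup\mathcal{F}).
\]

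To finish (\ref{dn-fctsp-2-2}), let $r \sim \mathfrak{W}_2(a, S) \,[\ap((a, S), \mathfrak{V})]$. The hypothesis $\pair{\mathcal{W} | \mathcal{V}} \in \Con_\rightarrow(\pair{\mathfrak{W} | \mathfrak{V}})$, through its own instance of Condition~\ref{dn-fctsp}(\ref{dn-fctsp-2-2}), upgrades this to $r \sim \mathfrak{W}_2(a, S) \,[\ap((a, S), \bigcup \mathcal{V})]$. Feeding this equivalence and the entailment just obtained into the $A'$-instance of Lemma~\ref{lem-witeqprop}(\ref{lem-witeqprop-3}) (with $i = \mathfrak{W}_2(a, S)$, $j = r$) produces $r \sim \mathfrak{W}_2(a, S) \,[\ap((a, S), \bigcup \mathcal{F})]$, which is exactly Condition~\ref{dn-fctsp}(\ref{dn-fctsp-2-2}) for $\pair{\mathcal{G} | \mathcal{F}}$. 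I expect the main obstacle to be the third paragraph: marshalling Lemma~\ref{lem-extvdash} and then shrinking its $\Cl$-indexed union down to $\ap((a, S), \bigcup \mathcal{V})$, where the non-obvious step is extracting $a \sim c \,[T]$ from bare membership in $\Cl$ rather than the weaker $c \sim j \,[T]$ that the definition supplies directly.
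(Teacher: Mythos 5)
Your proof is correct, and its skeleton matches the paper's: both arguments reduce the claim to checking Conditions~\ref{dn-fctsp}(\ref{dn-fctsp-2-1}) and (\ref{dn-fctsp-2-2}) for each $(a, S) \in \Con$, both upgrade $r \sim \mathfrak{W}_2(a, S)\,[\ap((a, S), \mathfrak{V})]$ to $r \sim \mathfrak{W}_2(a, S)\,[\ap((a, S), \bigcup\mathcal{V})]$ using the hypothesis $\pair{\mathcal{W} | \mathcal{V}} \in \Con_\rightarrow(\pair{\mathfrak{W} | \mathfrak{V}})$, and both obtain $(\mathfrak{W}_2(a, S), \ap((a, S), \bigcup \mathcal{V})) \vdash' \ap((a, S), \bigcup\mathcal{F})$ from Lemma~\ref{lem-extvdash}. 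Where you diverge is in how each half is closed, and in both places your route is leaner. For Condition~\ref{dn-fctsp}(\ref{dn-fctsp-2-1}) the paper extracts the witness $k$ directly from the instance of \ref{dn-fctsp}(\ref{dn-fctsp-3-2}) supplied by the entailment hypothesis, obtaining $\{k\} \in \Con'(\mathfrak{W}_2(a, S))$ from the entailment via \ref{dn-infsys}(\ref{dn-infsys-4}); you instead invoke Lemma~\ref{lem-extwit} and take $k = \mathfrak{W}_2(a, S)$ itself, with $\{k\} \in \Con'(\mathfrak{W}_2(a, S))$ coming from \ref{dn-infsys}(\ref{dn-infsys-1}) --- essentially the same content, packaged through the lemma designed for it. For Condition~\ref{dn-fctsp}(\ref{dn-fctsp-2-2}) the paper first produces a fresh token $k$ with $(\mathfrak{W}_2(a, S), \ap((a, S), \bigcup \mathcal{V})) \vdash' (k, \ap((a, S), \bigcup \mathcal{F}))$ (citing \ref{dn-infsys}(\ref{dn-infsys-10}), though what is really used is the witness-producing axiom \ref{dn-infsys}(\ref{dn-infsys-11})) and then applies Lemma~\ref{lem-witeqprop}(\ref{lem-witeqprop-4}) twice before finishing with symmetry and transitivity; you bypass the interpolation entirely by feeding the equivalence and the entailment into Lemma~\ref{lem-witeqprop}(\ref{lem-witeqprop-3}), which is the more direct tool here since the witness of the entailment, $\mathfrak{W}_2(a, S)$, is already one of the two tokens being compared. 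Finally, you make explicit a step the paper silently elides: Lemma~\ref{lem-extvdash} literally delivers the entailment with the $\Cl$-indexed union on the left, and your verification that every $(c, T) \in \Cl((a, S), \pr_1(\bigcup\mathcal{F}))$ satisfies $a \sim c\,[T]$ and $(a, S) \vdash T$ --- including the one-step chain with witness $a$ that upgrades the given $c \sim j\,[T]$ to $a \sim c\,[T]$ --- is precisely what is needed to shrink that union into $\ap((a, S), \bigcup \mathcal{V})$ via the remark after Lemma~\ref{lem-clprop} and then apply \ref{dn-infsys}(\ref{dn-infsys-5}); this justification is correct.
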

\begin{proof}
Let $\pair{\mathcal{W} | \mathcal{V}} \in \Con_\rightarrow(\pair{\mathfrak{W} | \mathfrak{V}})$ and $\pair{\mathcal{G} | \mathcal{F}} \fsubset A \rightarrow A'$ with $(\pair{\mathfrak{W} | \mathfrak{V}}, \pair{\mathcal{W} | \mathcal{V}}) \vdash_\rightarrow \pair{\mathcal{G} | \mathcal{F}}$. We must verify  that $\pair{\mathcal{G} | \mathcal{F}} \in \Con_\rightarrow(\pair{\mathfrak{W} | \mathfrak{V}})$.

Let to this end $(a, S) \in \Con$ and $\pair{\mathfrak{B} | \mathfrak{A}} \in \pair{\mathcal{G} | \mathcal{F}}$. As $(\pair{\mathfrak{W} | \mathfrak{V}}, \pair{\mathcal{W} | \mathcal{V}}) \vdash_\rightarrow \pair{\mathcal{G} | \mathcal{F}}$, there is some $k \in A'$ so that $\mathfrak{B}_2(a, S) \sim k \,[\ap((a, S), \mathfrak{A})]$ and
\begin{equation*}
(\mathfrak{W}_2(a, S), \bigcup \{\, \ap((c, T), \bigcup \mathcal{V}) | (c, T) \in \Cl((a, S), \pr_1(\mathfrak{A})) \,\} \vdash'   (k, \ap((a, S), \mathfrak{A})).
\end{equation*}
Thus, $\{ k \} \in \Con'(\mathfrak{W}_2(a, S))$, which proves Condition~\ref{dn-fctsp}(\ref{dn-fctsp-2-1}).

For Condition~\ref{dn-fctsp}(\ref{dn-fctsp-2-2}) let $r \in A'$ with $r \sim \mathfrak{W}_2(a, S) \,[\ap((a, S), \mathfrak{V})]$. Then 
\begin{equation}\label{eq-cond5}
r \sim \mathfrak{W}_2(a, S) \,[\ap((a, S), \bigcup \mathcal{V})],
\end{equation}
 as $\pair{\mathcal{W} | \mathcal{V}} \in \Con_\rightarrow(\pair{\mathfrak{W} | \mathfrak{V}})$.
With Lemma~\ref{lem-extvdash} we moreover have that 
\[
(\mathfrak{W}_2(a, S), \ap((a, S), \bigcup \mathcal{V})) \vdash' \ap((a, S), \bigcup \mathcal{F}).
\] 
By Condition~\ref{dn-infsys}(\ref{dn-infsys-10}) there is thus some $k \in A'$ with 
\[
(\mathfrak{W}_2(a, S), \ap((a, S), \bigcup \mathcal{V})) \vdash' (k, \ap((a, S), \bigcup \mathcal{F})).
\] 
Because of Statement~(\ref{eq-cond5}) we obtain with Lemma~\ref{lem-witeqprop}(\ref{lem-witeqprop-4}) that 
\[
k \sim \mathfrak{W}_2(a, S) \,[\ap((a, S), \bigcup \mathcal{F})],
\]
 and similarly that $k \sim r \,[\ap((a, S), \bigcup \mathcal{F})]$. Therefore, 
\[
r \sim \mathfrak{W}_2(a, S) \,[\ap((a, S), \bigcup \mathcal{F})]. \qedhere
\]
\end{proof}

Requirement~\ref{dn-infsys}(\ref{dn-infsys-5}) is obvious, because for $\pair{\mathcal{W} | \mathcal{V}} \subseteq \pair{\mathcal{G} | \mathcal{F}}$ we have that $\ap((a, s), \bigcup \mathcal{V}) \subseteq \ap((a, s), \bigcup \mathcal{F})$.

\begin{lemma}\label{lem-fc6}
$(A \rightarrow A', \Con_\rightarrow, \vdash_\rightarrow, \Delta_\rightarrow)$ satisfies Condition~\ref{dn-infsys}(\ref{dn-infsys-6}).
\end{lemma}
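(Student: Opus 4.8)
The plan is to unfold Condition~\ref{dn-infsys}(\ref{dn-infsys-6}) in the function space and verify, separately, the two clauses \ref{dn-fctsp}(\ref{dn-fctsp-3-1}) and \ref{dn-fctsp}(\ref{dn-fctsp-3-2}) of $\vdash_\rightarrow$ for the target. Concretely, write the fixed witness as $\pair{\mathfrak{W} | \mathfrak{V}}$ and assume $\pair{\mathcal{W} | \mathcal{V}} \in \Con_\rightarrow(\pair{\mathfrak{W} | \mathfrak{V}})$, that $(\pair{\mathfrak{W} | \mathfrak{V}}, \pair{\mathcal{W} | \mathcal{V}}) \vdash_\rightarrow \pair{\mathfrak{G} | \mathfrak{F}}$ for every $\pair{\mathfrak{G} | \mathfrak{F}} \in \pair{\mathcal{G} | \mathcal{F}}$, and that $(\pair{\mathfrak{W} | \mathfrak{V}}, \pair{\mathcal{G} | \mathcal{F}}) \vdash_\rightarrow \pair{\mathfrak{B} | \mathfrak{A}}$; the goal is $(\pair{\mathfrak{W} | \mathfrak{V}}, \pair{\mathcal{W} | \mathcal{V}}) \vdash_\rightarrow \pair{\mathfrak{B} | \mathfrak{A}}$. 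The first premise says precisely that \ref{dn-fctsp}(\ref{dn-fctsp-3-1}) holds for all tokens of $\pair{\mathcal{G} | \mathcal{F}}$, so Lemma~\ref{lem-extvdash} becomes available; the second premise supplies \ref{dn-fctsp}(\ref{dn-fctsp-3-1}) and \ref{dn-fctsp}(\ref{dn-fctsp-3-2}) for $\pair{\mathfrak{B} | \mathfrak{A}}$. The two workhorses will be Lemma~\ref{lem-extvdash} and the corollary of Lemma~\ref{lem-clprop} stating $\ap((c, T), \bigcup\mathcal{V}) \subseteq \ap((a, S), \bigcup\mathcal{V})$ whenever $(a, S) \vdash T$ and $a \sim c \,[T]$.

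For clause \ref{dn-fctsp}(\ref{dn-fctsp-3-1}) of the target, I would fix $((e, Y), f) \in \mathfrak{A}$. Applying Lemma~\ref{lem-extvdash} at $(e, Y)$ (via the first premise) gives $(\mathfrak{W}_2(e, Y), \bigcup \set{\ap((c, T), \bigcup\mathcal{V})}{(c, T) \in \Cl((e, Y), \pr_1(\bigcup\mathcal{F}))}) \vdash' \ap((e, Y), \bigcup\mathcal{F})$. Each $(c, T)$ occurring here satisfies $(e, Y) \vdash T$ and $e \sim c \,[T]$: the latter because $e \sim e \,[Y]$ (as $Y \in \Con(e)$) together with the witnessing data $(e, Y) \vdash (c', T)$, $c \sim c' \,[T]$ yields $e \sim c' \,[T]$ by Lemma~\ref{lem-witeqprop}(\ref{lem-witeqprop-4}), hence $e \sim c \,[T]$ by symmetry and transitivity. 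Thus the source set is contained in $\ap((e, Y), \bigcup\mathcal{V})$ by the corollary of Lemma~\ref{lem-clprop}, and monotonicity~\ref{dn-infsys}(\ref{dn-infsys-5}) upgrades the entailment to $(\mathfrak{W}_2(e, Y), \ap((e, Y), \bigcup\mathcal{V})) \vdash' \ap((e, Y), \bigcup\mathcal{F})$. The second premise's clause \ref{dn-fctsp}(\ref{dn-fctsp-3-1}) gives $(\mathfrak{W}_2(e, Y), \ap((e, Y), \bigcup\mathcal{F})) \vdash' f$, so transitivity~\ref{dn-infsys}(\ref{dn-infsys-6}) in $A'$ yields $(\mathfrak{W}_2(e, Y), \ap((e, Y), \bigcup\mathcal{V})) \vdash' f$, as required.

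For clause \ref{dn-fctsp}(\ref{dn-fctsp-3-2}) of the target, I would fix $((a, Z), b) \in \mathfrak{B}$ and keep the $k$ furnished by the second premise, so that $b \sim k \,[\rs((a, Z), \mathfrak{A})]$ and $(\mathfrak{W}_2(a, Z), \bigcup \set{\ap((e, Y), \bigcup\mathcal{F})}{(e, Y) \in \ds((a, Z), \mathfrak{A})}) \vdash' (k, \rs((a, Z), \mathfrak{A}))$; this same $k$ will witness the conclusion. For each $(e, Y) \in \ds((a, Z), \mathfrak{A})$ one has $Y \subseteq Z$ and $a \sim e \,[Y]$, whence $\{\mathfrak{W}_2(e, Y)\} \in \Con'(\mathfrak{W}_2(a, Z))$ by Lemma~\ref{lem-jextprop}(\ref{lem-jextprop-2}); combining this with the entailment $(\mathfrak{W}_2(e, Y), \ap((e, Y), \bigcup\mathcal{V})) \vdash' \ap((e, Y), \bigcup\mathcal{F})$ obtained exactly as in the previous paragraph (it does not depend on $f$) and with $\ap((e, Y), \bigcup\mathcal{V}) \in \Con'(\mathfrak{W}_2(e, Y))$, Condition~\ref{dn-infsys}(\ref{dn-infsys-8}) re-bases it to $(\mathfrak{W}_2(a, Z), \ap((e, Y), \bigcup\mathcal{V})) \vdash' \ap((e, Y), \bigcup\mathcal{F})$. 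Since the source set $\bigcup \set{\ap((e, Y), \bigcup\mathcal{V})}{(e, Y) \in \ds((a, Z), \mathfrak{A})}$ lies in $\Con'(\mathfrak{W}_2(a, Z))$ (as already noted in the text, being a subset of $\ap((a, Z), \bigcup\mathcal{V})$), monotonicity~\ref{dn-infsys}(\ref{dn-infsys-5}) assembles these into $(\mathfrak{W}_2(a, Z), \bigcup \set{\ap((e, Y), \bigcup\mathcal{V})}{(e, Y) \in \ds((a, Z), \mathfrak{A})}) \vdash' \bigcup \set{\ap((e, Y), \bigcup\mathcal{F})}{(e, Y) \in \ds((a, Z), \mathfrak{A})}$, the intermediate set being consistent with $\mathfrak{W}_2(a, Z)$ by \ref{dn-infsys}(\ref{dn-infsys-4}); a final application of transitivity~\ref{dn-infsys}(\ref{dn-infsys-6}) in $A'$ with the entailment from the second premise then delivers $(\mathfrak{W}_2(a, Z), \bigcup \set{\ap((e, Y), \bigcup\mathcal{V})}{(e, Y) \in \ds((a, Z), \mathfrak{A})}) \vdash' (k, \rs((a, Z), \mathfrak{A}))$.

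The conceptual content is modest once Lemma~\ref{lem-extvdash} and the $\ap$-shrinking corollary of Lemma~\ref{lem-clprop} are in hand; I expect the main obstacle to be the bookkeeping of consistency witnesses in $A'$. Every entailment has to be re-based from the witness $\mathfrak{W}_2(e, Y)$ to the common witness $\mathfrak{W}_2(a, Z)$, and each such transfer via Lemma~\ref{lem-jextprop}(\ref{lem-jextprop-2}) and Condition~\ref{dn-infsys}(\ref{dn-infsys-8}) carries a consistency side-condition that must already be in place before transitivity~\ref{dn-infsys}(\ref{dn-infsys-6}) can be invoked; keeping these side-conditions and the various unions $\bigcup\mathcal{V}$ and $\bigcup\mathcal{F}$ straight is where the care is needed.
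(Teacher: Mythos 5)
Your proof is correct and follows essentially the same route as the paper's: both verify the two clauses of $\vdash_\rightarrow$ separately, derive clause~\ref{dn-fctsp}(\ref{dn-fctsp-3-1}) and the combined entailment for clause~\ref{dn-fctsp}(\ref{dn-fctsp-3-2}) from Lemma~\ref{lem-extvdash}, shrink the source set via the $\ap$-monotonicity consequence of Lemma~\ref{lem-clprop}, and reuse the $k$ supplied by the second premise. The only difference is one of detail: you spell out the witness re-basing from $\mathfrak{W}_2(e, Y)$ to $\mathfrak{W}_2(a, Z)$ via Lemma~\ref{lem-jextprop}(\ref{lem-jextprop-2}) and Condition~\ref{dn-infsys}(\ref{dn-infsys-8}), which the paper performs implicitly inside its single displayed application of Lemma~\ref{lem-extvdash}.
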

\begin{proof}
Let $\pair{\mathcal{W} | \mathcal{V}} \in \Con_\rightarrow(\pair{\mathfrak{W} | \mathfrak{V}})$, $\pair{\mathcal{G} | \mathcal{F}} \fsubset A \rightarrow A'$, and $\pair{\mathfrak{B} | \mathfrak{A}} \in A \rightarrow A'$ with $(\pair{\mathfrak{W} | \mathfrak{V}}, \pair{\mathcal{W} | \mathcal{V}}) \vdash_\rightarrow \pair{\mathcal{G} | \mathcal{F}}$ and $(\pair{\mathfrak{W} | \mathfrak{V}}, \pair{\mathcal{G} | \mathcal{F}}) \vdash_\rightarrow \pair{\mathfrak{B} | \mathfrak{A}}$. We need to show that 
\[
(\pair{\mathfrak{W} | \mathfrak{V}}, \pair{\mathcal{W} | \mathcal{V}}) \vdash_\rightarrow \pair{\mathfrak{B} | \mathfrak{A}}.
\]

The first condition to be verified is a consequence of Lemma~\ref{lem-extvdash}. For the second condition let $((a, Z), b) \in \mathfrak{B}$. By our assumption there is some $k \in A'$ so that 
\[
k \sim b \,[\rs((a, Z), \mathfrak{A})]
\]
and 
\begin{equation*}
(\mathfrak{W}_2(a, Z), \bigcup \{\, \ap((e, Y), \bigcup \mathcal{F}) \mid (e, Y) \in \ds((a, Z), \mathfrak{A}) \,\}) \vdash' 
 (k, \rs((a, Z), \mathfrak{A})).
\end{equation*}
 Moreover, with Lemma~\ref{lem-extvdash}, we obtain that 
\begin{multline*}
(\mathfrak{W}_2(a, Z), \bigcup \{\, \ap((c, T), \bigcup \mathcal{V}) \mid \\
(\exists (e, Y) \in \ds((a, Z), \mathfrak{A})) 
(c, T) \in \Cl((e, Y), \pr_1(\bigcup \mathcal{F})) \,\} ) \vdash' \\
 \bigcup \{\, \ap((e, Y), \bigcup \mathcal{F}) \mid (e, Y) \in \ds((a, Z), \mathfrak{A}) \,\}.
\end{multline*}
Since 
\begin{multline*}
\bigcup \{\, \ap((c, T), \bigcup \mathcal{V}) \mid \\
 (\exists (e, Y) \in \ds((a, Z), \mathfrak{A})) (c, T) \in \Cl((e, Y), \pr_1(\bigcup \mathcal{F})) \,\} \subseteq \\
 \bigcup \{\, \ap((e, Y), \bigcup \mathcal{V}) \mid (e, Y) \in \ds((a, Z), \mathfrak{A}) \,\},
\end{multline*}
it follows that 
\begin{equation*}
(\mathfrak{W}_2(a, Z), \bigcup \{\, \ap((e, Y), \bigcup \mathcal{V}) \mid (e, Y) \in \ds((a, Z), \mathfrak{A}) \,\} ) \vdash' \\
 (k, \rs((a, Z), \mathfrak{A})),
\end{equation*}
as was to be demonstrated.
\end{proof}

\begin{lemma}\label{lem-fc10}
$(A \rightarrow A', \Con_\rightarrow, \vdash_\rightarrow, \Delta_\rightarrow)$ satisfies Condition~\ref{dn-infsys}(\ref{dn-infsys-10}).
\end{lemma}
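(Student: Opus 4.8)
The plan is to unfold Condition~\ref{dn-infsys}(\ref{dn-infsys-10}) for the function space: given $\pair{\mathcal{W} | \mathcal{V}} \in \Con_\rightarrow(\pair{\mathfrak{W} | \mathfrak{V}})$ together with $(\pair{\mathfrak{W} | \mathfrak{V}}, \pair{\mathcal{W} | \mathcal{V}}) \vdash_\rightarrow \pair{\mathfrak{B} | \mathfrak{A}}$, I must produce an interpolant. I would seek it in the form of a singleton $\{ \pair{\mathfrak{Z} | \mathfrak{Y}} \}$, so that the three obligations become $\{ \pair{\mathfrak{Z} | \mathfrak{Y}} \} \in \Con_\rightarrow(\pair{\mathfrak{W} | \mathfrak{V}})$, $(\pair{\mathfrak{W} | \mathfrak{V}}, \pair{\mathcal{W} | \mathcal{V}}) \vdash_\rightarrow \{ \pair{\mathfrak{Z} | \mathfrak{Y}} \}$, and $(\pair{\mathfrak{W} | \mathfrak{V}}, \{ \pair{\mathfrak{Z} | \mathfrak{Y}} \}) \vdash_\rightarrow \pair{\mathfrak{B} | \mathfrak{A}}$. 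Working with a singleton is convenient because $\bigcup \{ \mathfrak{Y} \} = \mathfrak{Y}$, so Condition~\ref{dn-fctsp}(\ref{dn-fctsp-2-2}) degenerates into a tautology and the consistency obligation reduces to Condition~\ref{dn-fctsp}(\ref{dn-fctsp-2-1}) alone.

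The construction rests on interpolating on both the $A$-side and the $A'$-side. For each $((e, Y), f) \in \mathfrak{A}$, Condition~\ref{dn-fctsp}(\ref{dn-fctsp-3-1}) of the hypothesis gives $(\mathfrak{W}_2(e, Y), \ap((e, Y), \bigcup \mathcal{V})) \vdash' f$. First I relocate the argument: writing $G$ for the finite set $\ucl((e, Y), \pr_1(\mathfrak{V}) \cup \pr_1(\bigcup \mathcal{V}))$ one has $(e, Y) \vdash G$, so the Global Interpolation Property~(\ref{eq-gip}) in $A$ yields $(e^*, Y^*) \in \Con$ with $(e, Y) \vdash (e^*, Y^*)$ and $(e^*, Y^*) \vdash G$; by Lemma~\ref{lem-jextprop}(\ref{lem-jextprop-3}) this simultaneously preserves the closures relative to $\mathfrak{V}$ and to $\bigcup \mathcal{V}$ and the witness, so in particular $\ap((e^*, Y^*), \bigcup \mathcal{V}) = \ap((e, Y), \bigcup \mathcal{V})$ and $\mathfrak{W}_2(e^*, Y^*) = \mathfrak{W}_2(e, Y)$. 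Then I split the value: applying~(\ref{eq-gip}) in $A'$ to the entailment above produces $(t, U) \in \Con'$ with $(\mathfrak{W}_2(e, Y), \ap((e, Y), \bigcup \mathcal{V})) \vdash' (t, U)$ and $(t, U) \vdash' f$. I place the arrows $((e^*, Y^*), u)$, for $u \in U$, into $\mathfrak{Y}$. Since $(e, Y) \vdash (e^*, Y^*)$, the relocated token lies in $\Cl((e, Y), \pr_1(\mathfrak{Y}))$, whence $U \subseteq \ap((e, Y), \mathfrak{Y})$; transporting $(t, U) \vdash' f$ to the witness $\mathfrak{W}_2(e, Y)$ by Conditions~\ref{dn-infsys}(\ref{dn-infsys-8},\ref{dn-infsys-5}) then yields Condition~\ref{dn-fctsp}(\ref{dn-fctsp-3-1}) for $(\pair{\mathfrak{W} | \mathfrak{V}}, \{ \pair{\mathfrak{Z} | \mathfrak{Y}} \}) \vdash_\rightarrow \pair{\mathfrak{B} | \mathfrak{A}}$, while the half $(\mathfrak{W}_2(e^*, Y^*), \ap((e^*, Y^*), \bigcup \mathcal{V})) \vdash' u$ yields Condition~\ref{dn-fctsp}(\ref{dn-fctsp-3-1}) for $(\pair{\mathfrak{W} | \mathfrak{V}}, \pair{\mathcal{W} | \mathcal{V}}) \vdash_\rightarrow \{ \pair{\mathfrak{Z} | \mathfrak{Y}} \}$.

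It remains to equip $\mathfrak{Y}$ with an associate $\mathfrak{Z} \in \ac(\mathfrak{Y})$ and to settle the witness clauses. Here I build the increasing chain demanded by Definition~\ref{dn-ass}, choosing for each index set and each $\mathfrak{Y}$-maximal representative the $A'$-witness supplied by the Global Interpolation Property in $A'$; the crucial point, exactly as in the proof of Lemma~\ref{lem-fc4}, is that these witnesses are consistent relative to the corresponding $\mathfrak{W}_2(\cdot)$, which yields Condition~\ref{dn-fctsp}(\ref{dn-fctsp-2-1}) with $k = \mathfrak{W}_2(a, S)$. The witness clause~\ref{dn-fctsp}(\ref{dn-fctsp-3-2}) of the two entailments is obtained from the corresponding clause of the hypothesis by first interpolating the witness-entailments attached to $\mathfrak{B}$ and then matching witnesses up to $\sim[\,\cdot\,]$ via Lemma~\ref{lem-witeqprop}(\ref{lem-witeqprop-4}) together with Lemma~\ref{lem-extwit}; the monotonicity of $\ap$ along $\bigcup \mathcal{V}$, $\mathfrak{Y}$ and $\mathfrak{A}$ lets Lemma~\ref{lem-extvdash} transport the relevant $A'$-entailments across the three data sets.

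I expect the genuine difficulty to lie in the associate construction and its coherence with the two relocations. Definition~\ref{dn-ass} forces a single, canonically chosen $A'$-witness for every index set along the chain, simultaneously compatible with all smaller index sets and $\mathfrak{V}$-maximal; after relocating domain tokens via $A$-interpolation these index sets and their spectra must still be read off correctly from $\mathfrak{Y}$, and the $A'$-witnesses produced by interpolation must be choosable so that the equivalences in \ref{dn-fctsp}(\ref{dn-fctsp-2-1}) and \ref{dn-fctsp}(\ref{dn-fctsp-3-2}) hold uniformly against the fixed witness data $\mathfrak{W}$. Keeping this two-sided bookkeeping consistent---rather than the transport of individual entailments, which is routine given Lemmas~\ref{lem-witeqprop}, \ref{lem-witeqam}, \ref{lem-extvdash} and \ref{lem-extwit}---is the heart of the argument.
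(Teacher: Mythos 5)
Your interpolation steps themselves (relocating $(e, Y)$ to $(e^*, Y^*)$ by the Global Interpolation Property~(\ref{eq-gip}) in $A$, splitting the value entailment through $(t, U)$ by~(\ref{eq-gip}) in $A'$, and transporting entailments via Lemma~\ref{lem-jextprop}(\ref{lem-jextprop-3}) and Conditions~\ref{dn-infsys}(\ref{dn-infsys-5},\ref{dn-infsys-8})) coincide with the paper's. The genuine gap lies in your packaging: you collect \emph{all} interpolated arrows $((e^*, Y^*), u)$ into a \emph{single} token $\pair{\mathfrak{Z} | \mathfrak{Y}}$. Since the relocated pairs $(e^*, Y^*)$ vary with the elements of $\mathfrak{A}$, your $\mathfrak{Y}$ has several distinct domain entries, so an associate $\mathfrak{Z} \in \ac(\mathfrak{Y})$ must supply, for every subset of these entries and every $\mathfrak{Y}$-maximal representative, a canonical $A'$-witness satisfying~\ref{dn-ass}(\ref{dn-ass-3-1-1}), coherently with all smaller subsets. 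The Global Interpolation Property does not hand out such consistency witnesses; they have to be manufactured by an induction on subset cardinality using the witness-generation axiom~\ref{dn-infsys}(\ref{dn-infsys-11}) in $A'$ --- this is precisely the Claim (the construction of the elements $t_{a, \mathfrak{J}}$) in the paper's proof of Lemma~\ref{lem-fc11}, and it is the hardest part of the entire section. You explicitly defer this (``Here I build the increasing chain demanded by Definition~\ref{dn-ass} \ldots''), so the actual content of the lemma remains unproved. The paper's proof of Lemma~\ref{lem-fc10} is designed to avoid exactly this: since the interpolant in Axiom~\ref{dn-infsys}(\ref{dn-infsys-10}) is allowed to be a \emph{set} of tokens, it takes $\mathcal{D} = \{ \mathfrak{D}_1, \ldots, \mathfrak{D}_m \}$ with one single-column token $\mathfrak{D}_\nu = \set{((e_\nu, U_\nu), j)}{j \in Z_\nu}$ per element of $\mathfrak{F}$, whose associates $\mathfrak{E}_\nu$ exist trivially by Lemma~\ref{lem-singext}. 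That spreading of the arrows over several one-column tokens is the key idea your proposal misses; the multi-column machinery is reserved for Lemma~\ref{lem-fc11}, where it is unavoidable because there the interpolant must be a single token.

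A second, smaller error: your claim that for the singleton $\{ \pair{\mathfrak{Z} | \mathfrak{Y}} \}$ Condition~\ref{dn-fctsp}(\ref{dn-fctsp-2-2}) ``degenerates into a tautology'' is false. For $\{ \pair{\mathfrak{Z} | \mathfrak{Y}} \} \in \Con_\rightarrow(\pair{\mathfrak{W} | \mathfrak{V}})$ that condition reads
\[
r \sim \mathfrak{W}_2(a, S) \,[\ap((a, S), \mathfrak{V})] \Rightarrow r \sim \mathfrak{W}_2(a, S) \,[\ap((a, S), \mathfrak{Y})],
\]
which compares two genuinely different sets, $\ap((a, S), \mathfrak{V})$ and $\ap((a, S), \mathfrak{Y})$; it is a tautology only in the situation of Lemma~\ref{lem-fc1}, where the singleton consists of the witness token $\pair{\mathfrak{W} | \mathfrak{V}}$ itself. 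So the consistency obligation does not reduce to~\ref{dn-fctsp}(\ref{dn-fctsp-2-1}) by inspection. The paper discharges it differently: it first establishes the entailment $(\pair{\mathfrak{W} | \mathfrak{V}}, \pair{\mathcal{W} | \mathcal{V}}) \vdash_\rightarrow \pair{\mathcal{E} | \mathcal{D}}$ and then invokes Lemma~\ref{lem-fc4} (entailment preserves consistency, already proved for the function space) to conclude $\pair{\mathcal{E} | \mathcal{D}} \in \Con_\rightarrow(\pair{\mathfrak{W} | \mathfrak{V}})$. Your argument would need the same detour.
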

\begin{proof}
Let $\pair{\mathfrak{W} | \mathfrak{V}}, \pair{\mathfrak{G} | \mathfrak{F}}  \in A \rightarrow A'$ and $\pair{\mathcal{W} | \mathcal{V}} \in \Con_\rightarrow(\pair{\mathfrak{W} | \mathfrak{V}})$ with 
\[
(\pair{\mathfrak{W} | \mathfrak{V}}, \pair{\mathcal{W} | \mathcal{V}}) \vdash_\rightarrow \pair{\mathfrak{G} | \mathfrak{F}}.
\]
We have to show that there exists $\pair{\mathcal{E} | \mathcal{D}} \fsubset A \rightarrow A'$ so that 
\begin{equation}\label{eq-fc10-cond}
(\pair{\mathfrak{W} | \mathfrak{V}}, \pair{\mathcal{W} | \mathcal{V}}) \vdash_\rightarrow \pair{\mathcal{E} | \mathcal{D}} \,\, \text{and} \,\, (\pair{\mathfrak{W} | \mathfrak{V}}, \pair{\mathcal{E} | \mathcal{D}}\}) \vdash_\rightarrow \pair{\mathfrak{G} | \mathfrak{F}}.
\end{equation}

If $\mathfrak{F}$ is empty, set $\pair{\mathcal{E} | \mathcal{D}} = \{ \Delta_\rightarrow \}$. Otherwise, let 
\[
\mathfrak{F} = \{ ((a_1, T_1), d_1), \ldots, ((a_m, T_m), d_m) \}.
\]
 By assumption we know that for $1 \le \nu \le m$,
\begin{equation}\label{eq-fc10-1}
(\mathfrak{W}_2(a_\nu, T_\nu), \ap((a_\nu, T_\nu), \bigcup \mathcal{V})) \vdash' d_\nu.
\end{equation}
Because of the Global Interpolation Property~(\ref{eq-gip}) there is therefore some $(k_\nu, Z_\nu) \in \Con'$ with
\begin{gather}
(\mathfrak{W}_2(a_\nu, T_\nu), \ap((a_\nu, T_\nu), \bigcup \mathcal{V})) \vdash' (k_\nu, Z_\nu) \label{eq-fc10-2} \\
(k_\nu, Z_\nu)\vdash' d_\nu.\label{eq-fc10-3}
\end{gather}

Since
\begin{equation}\label{eq-fc10-4}
(a_\nu, T_\nu) \vdash \ucl((a_\nu, T_\nu), \pr_1(\bigcup \mathcal{V})),
\end{equation}
we similarly obtain some $(e_\nu, U_\nu) \in \Con$ such that 
\begin{gather}
(a_\nu, T_\nu) \vdash (e_\nu, U_\nu) \label{eq-fc10-5} \\
(e_\nu, U_\nu) \vdash  \ucl((a_\nu, T_\nu), \pr_1(\bigcup \mathcal{V})). \label{eq-fc10-6}
\end{gather}

With Lemma~\ref{lem-jextprop}(\ref{lem-jextprop-3}) it follows that 
\begin{equation}\label{eq-fc10-6+}
\ap((a_\nu, T_\nu), \bigcup\mathcal{V}) = \ap((e_\nu, U_\nu), \bigcup\mathcal{V})\,\, \text{and}\,\,
\mathfrak{W}_2(a_\nu, T_\nu) = \mathfrak{W}_2(e_\nu, U_\nu).
\end{equation}

Let $\mathfrak{D}_\nu = \set{((e_\nu, U_\nu), j)}{j \in Z_\nu}$ and set 
\[
\mathfrak{E}_\nu =  
                \begin{cases}
                            \{ ((\Delta, \emptyset), \Delta'), ((e_\nu, U_\nu), k_\nu) \} & \text{if $U_\nu \not= \emptyset$,} \\
                            \{ ((e_\nu, U_\nu), k_\nu) \} & \text{otherwise.}
                \end{cases}            
\]
Then $\mathfrak{E}_\nu \in \ac(\mathfrak{D}_\nu)$ by Lemma~\ref{lem-singext}. It remains to show that $\pair{\mathcal{E} | \mathcal{D}}$ with $\mathcal{E} = \{ \mathfrak{E}_1, \ldots, \mathfrak{E}_m \}$ and $\mathcal{D} = \{ \mathfrak{D}_1. \ldots, \mathfrak{D}_m \}$ meets Requirements~(\ref{eq-fc10-cond}).

As a consequence of Statements~(\ref{eq-fc10-2}) and (\ref{eq-fc10-6+}) we gain that 
\begin{equation}\label{eq-k-nu}
(\mathfrak{W}_2(e_\nu, U_\nu), \ap((e_\nu, U_\nu), \bigcup \mathcal{V})) \vdash' (k_\nu, Z_\nu).
\end{equation}

For the verification of Condition~\ref{dn-fctsp}(\ref{dn-fctsp-3-2}), let $((a, Z), b) \in \mathfrak{E}_\nu$ and set $k = b$. If 
\[
((a, Z), b) = ((\Delta, \emptyset), \Delta'),
\]
 it follows that $U_\nu$ is not empty and therefore $\ds((\Delta, \emptyset), \mathfrak{D}_\nu))$ and $\rs((\Delta, \emptyset), \mathfrak{D}_\nu)$ are both empty. In case that $((a, Z), b) = ((e_\nu, U_\nu), k_\nu)$, we have that $\ds((e_\nu, U_\nu), \mathfrak{D}_\nu) = \{ (e_\nu, U_\nu) \}$ and $\rs((e_\nu, U_\nu), \mathfrak{D}_\nu) = Z_\nu$.  Thus in both cases, 
\begin{equation*}
(\mathfrak{W}_2(a, Z), \bigcup \{\,\ap((c, T), \bigcup \mathcal{V}) \mid (c, T) \in \ds((a, Z), \mathfrak{D}_\nu) \,\}) 
 \vdash' 
 (k, \rs((a, Z), \mathfrak{D}_\nu)),
\end{equation*}
in the first case because of Axiom~\ref{dn-infsys}(\ref{dn-infsys-3}), and in the other one as a consequence of Statements~(\ref{eq-fc10-2}, \ref{eq-fc10-6}) as well as Axiom~\ref{dn-infsys}(\ref{dn-infsys-9}).
This shows that $(\pair{\mathfrak{W} | \mathfrak{V}}, \pair{\mathcal{W} | \mathcal{V}}) \vdash_\rightarrow \pair{\mathcal{E} | \mathcal{D}}$. 

It remains to show that $(\pair{\mathfrak{W} | \mathfrak{V}}, \pair{\mathcal{E} | \mathcal{D}}) \vdash_\rightarrow \pair{\mathfrak{G} | \mathfrak{F}}$. 
Note that by Statement~(\ref{eq-fc10-2}) and Axiom~\ref{dn-infsys}(\ref{dn-infsys-4}), $\{ k_\nu \} \in \Con(\mathfrak{W}_2(a_\nu, T_\nu))$. Moreover, 
\[
Z_\nu \subseteq \ap((a_\nu, T_\nu), \mathfrak{D}_\nu) \subseteq \ap((a_\nu, T_\nu), \bigcup \mathcal{D}).
\]

With Lemma~\ref{lem-fc4} it follows from what we have just seen that $\pair{\mathcal{E} | \mathcal{D}} \in \Con_\rightarrow(\pair{\mathfrak{W} | \mathfrak{V}})$. Thus, $\ap((a_\nu, T_\nu), \bigcup \mathcal{D}) \in \Con'(\mathfrak{W}_2(a_\nu, T_\nu))$. As a consequence of Statement~(\ref{eq-fc10-3}) we now obtain that 
\begin{equation}\label{eq-fc10-cond2}
(\mathfrak{W}_2(a_\nu, T_\nu), \ap((a_\nu, T_\nu), \bigcup \mathcal{D})) \vdash' d_\nu, 
\end{equation}
for all $1 \le \nu \le m$.

For the second condition let $((a, Z), b) \in \mathfrak{G}$. Since $(\pair{\mathfrak{W} | \mathfrak{V}}, \pair{\mathcal{W} | \mathcal{V}}) \vdash_\rightarrow \pair{\mathfrak{G} | \mathfrak{F}}$, there is some $k \in A'$ with $k \sim b \,[\rs((a, Z), \mathfrak{F})]$ and $\{ k \} \in \Con'(\mathfrak{W}_2(a, Z))$.  With Statement~(\ref{eq-fc10-cond2}) it follows that 
\[
(\mathfrak{W}_2(a, Z),  \bigcup \{\,\ap((c, T), \bigcup \mathcal{D}) \mid (c, T) \in \ds((a, Z), \mathfrak{F}) \,\}) 
\vdash' \rs((a, Z), \mathfrak{F}).
\]
Because of Axiom~\ref{dn-infsys}(\ref{dn-infsys-11}) there is thus some $r \in A'$ such that 
\[
(\mathfrak{W}_2(a, Z),  \bigcup \{\,\ap((c, T), \bigcup \mathcal{D}) \mid (c, T) \in \ds((a, Z), \mathfrak{F}) \,\}) \vdash' r
\]
and $\rs((a, Z), \mathfrak{F}) \in \Con'(r)$. Hence, $\{ r \} \in \Con'(\mathfrak{W}_2(a, Z))$, from which it ensues that $r \sim k \,[\rs((a, Z), \mathfrak{F})]$. So, $r \sim b \,[\rs((a, Z), \mathfrak{F})]$ and 
\begin{equation*}
(\mathfrak{W}_2(a, Z),  \bigcup \{\,\ap((c, T), \bigcup \mathcal{D}) \mid (c, T) \in \ds((a, Z), \mathfrak{F}) \,\}) 
\vdash' \\ 
(r, \rs((a, Z), \mathfrak{F})).    \qedhere \hspace{-1em}
\end{equation*}
\end{proof}

\begin{lemma}\label{lem-fc11}
$(A \rightarrow A', \Con_\rightarrow, \vdash_\rightarrow, \Delta_\rightarrow)$ satisfies Condition~\ref{dn-infsys}(\ref{dn-infsys-11}).
\end{lemma}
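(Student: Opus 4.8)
The plan is to take the union $\mathfrak{D} = \bigcup \mathcal{F}$ as the underlying token and to equip it with a carefully chosen associate $\mathfrak{E}$, so that $\pair{\mathfrak{E} | \mathfrak{D}}$ becomes the single witness required by Condition~\ref{dn-infsys}(\ref{dn-infsys-11}). Writing $\mathcal{F} = \{ \mathfrak{F}_1, \ldots, \mathfrak{F}_p \}$ and $\mathcal{G} = \{ \mathfrak{G}_1, \ldots, \mathfrak{G}_p \}$ with $\mathfrak{G}_\ell \in \ac(\mathfrak{F}_\ell)$, the hypothesis unfolds into $(\pair{\mathfrak{W} | \mathfrak{V}}, \pair{\mathcal{W} | \mathcal{V}}) \vdash_\rightarrow \pair{\mathfrak{G}_\ell | \mathfrak{F}_\ell}$ for every $\ell$. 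Setting $\mathfrak{D} = \bigcup \mathcal{F}$, a finite subset of $\Con \times A'$, the empty case (all $\mathfrak{F}_\ell$ empty) is covered automatically, the construction then returning $\Delta_\rightarrow$.

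The core of the proof is the inductive construction of $\mathfrak{E} \in \ac(\mathfrak{D})$ along the chain $(\mathfrak{E}^{(\kappa)})_{\kappa \in \omega}$ of Definition~\ref{dn-ass}. For a $\mathfrak{D}$-maximal configuration $(e, S)$ with $S = \bigcup_{\nu \in J} X_\nu$, abbreviate $P_{(e, S)} = \bigcup \{\, \ap((c, T), \bigcup \mathcal{V}) \mid (c, T) \in \Cl((e, S), \pr_1(\mathfrak{D})) \,\}$. Definition~\ref{dn-ass}(\ref{dn-ass-3-1-1}) demands a consistency witness $j$ for the recursively defined token set attached to $(e, S)$. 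I would show that each member $t$ of that set is derivable in $A'$ from the single pair $(\mathfrak{W}_2(e, S), P_{(e, S)})$: the direct second components ($\| K \| = 1$) coincide, by the spectrum–maximality identity $\rs((e, S), \mathfrak{D}) = \ap((e, S), \mathfrak{D})$, with $\ap((e, S), \mathfrak{D})$ and are entailed by Lemma~\ref{lem-extvdash} applied to $\mathcal{F}$; each lower-level witness $t$, already produced at a sub-configuration $(a_K, S_K)$ with $a_K \sim e \,[S_K]$ and $S_K \subseteq S$, is transported upward through Lemma~\ref{lem-jextprop}(\ref{lem-jextprop-2}) and Axiom~\ref{dn-infsys}(\ref{dn-infsys-8}), followed by Axiom~\ref{dn-infsys}(\ref{dn-infsys-5}). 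Condition~\ref{dn-infsys}(\ref{dn-infsys-11}) for $A'$ then yields a $j$ with $(\mathfrak{W}_2(e, S), P_{(e, S)}) \vdash' j$ and the whole set in $\Con'(j)$; I set $\mathfrak{E}_2(e, S) = j$.

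Granting the construction, the two verifications are light. For $(\pair{\mathfrak{W} | \mathfrak{V}}, \pair{\mathcal{W} | \mathcal{V}}) \vdash_\rightarrow \pair{\mathfrak{E} | \mathfrak{D}}$, Condition~\ref{dn-fctsp}(\ref{dn-fctsp-3-1}) is immediate since each element of $\mathfrak{D} = \bigcup \mathcal{F}$ sits in some $\mathfrak{F}_\ell$ and inherits the required derivation; Condition~\ref{dn-fctsp}(\ref{dn-fctsp-3-2}) follows by taking, for $((a, Z), b) \in \mathfrak{E}$, the choice $k = b$, noting that for this maximal element the witness set equals $P_{(a, Z)}$, that $(\mathfrak{W}_2(a, Z), P_{(a, Z)})$ derives both $b = \mathfrak{E}_2(a, Z)$ and $\rs((a, Z), \mathfrak{D}) = \ap((a, Z), \mathfrak{D})$, and that $b \sim b \,[\rs((a, Z), \mathfrak{D})]$ holds because $\rs((a, Z), \mathfrak{D}) \in \Con'(b)$. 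For $\pair{\mathcal{G} | \mathcal{F}} \in \Con_\rightarrow(\pair{\mathfrak{E} | \mathfrak{D}})$, Condition~\ref{dn-fctsp}(\ref{dn-fctsp-2-2}) is vacuous because $\ap((a, S), \mathfrak{D}) = \ap((a, S), \bigcup \mathcal{F})$, and for Condition~\ref{dn-fctsp}(\ref{dn-fctsp-2-1}) I take $k = \mathfrak{E}_2(a, S)$: Lemma~\ref{lem-extwit} gives $(\mathfrak{G}_\ell)_2(a, S) \sim \mathfrak{W}_2(a, S) \,[\ap((a, S), \mathfrak{F}_\ell)]$, the construction together with Axiom~\ref{dn-infsys}(\ref{dn-infsys-4}) places both $\mathfrak{E}_2(a, S)$ and $\ap((a, S), \mathfrak{D})$ in $\Con'(\mathfrak{W}_2(a, S))$ and hence yields $\mathfrak{W}_2(a, S) \sim \mathfrak{E}_2(a, S) \,[\ap((a, S), \mathfrak{D})]$, while Lemma~\ref{lem-witeqprop}(\ref{lem-witeqprop-2}) restricts this to $\ap((a, S), \mathfrak{F}_\ell)$, so transitivity of the partial equivalence relation closes the argument, with $\{ k \} \in \Con'(\mathfrak{E}_2(a, S))$ by Axiom~\ref{dn-infsys}(\ref{dn-infsys-1}).

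The main obstacle is the derivability claim inside the inductive construction, and in particular the inclusion $P_{(a_K, S_K)} \subseteq P_{(e, S)}$ needed to transport the lower-level witnesses. This cannot be read off from Lemma~\ref{lem-clprop} directly, whose hypothesis $(e, S) \vdash S_K$ is unavailable since entailment is not reflexive here; instead I would establish $\Cl((a_K, S_K), \pr_1(\mathfrak{D})) \subseteq \Cl((e, S), \pr_1(\mathfrak{D}))$ from $a_K \sim e \,[S_K]$ via Lemma~\ref{lem-witeqprop}(\ref{lem-witeqprop-1}) and the monotonicity Axiom~\ref{dn-infsys}(\ref{dn-infsys-5}), exploiting $S_K \subseteq S$. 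Managing the witness-equivalence bookkeeping across levels so that the $j$ at each stage is the unique one demanded by Definition~\ref{dn-ass}(\ref{dn-ass-3-1}) is the delicate part of the argument.
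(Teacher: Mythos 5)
You have correctly identified the shape of the problem and even the right danger (non‑reflexivity of entailment), but the step your whole construction rests on is false. The ``spectrum--maximality identity'' $\rs((e, S), \mathfrak{D}) = \ap((e, S), \mathfrak{D})$ does not hold: $\rs$ and $\ds$ are computed from the \emph{spectrum}, i.e.\ from the conditions $X \subseteq S$ and $e \sim i \,[X]$, whereas $\ap$ and $\Cl$ are computed from \emph{entailment}, $(e, S) \vdash (j, X)$ for some $j \sim i\,[X]$. Since entailment in an information system with witnesses is interpolative rather than reflexive, a pair $((i, X), d) \in \mathfrak{D}$ lying in the spectrum of a maximal configuration $(e, S)$ need not lie in $\Cl((e, S), \pr_1(\mathfrak{D}))$: in $\mathcal{I}([0,1])$ one has $(1, \{1/2\}) \not\vdash 1/2$ because $1/2 \not\ll 1/2$, so for $\mathfrak{D} = \{ ((1, \{1/2\}), d) \}$ the set $\rs((1, \{1/2\}), \mathfrak{D})$ is $\{ d \}$ while $\ap((1, \{1/2\}), \mathfrak{D})$ is empty. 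This breaks your argument in two places at once. First, Definition~\ref{dn-ass}(\ref{dn-ass-3-1-1}) obliges the witness attached to $(e, S)$ to witness \emph{all} spectrum components $\pr_2(\mathfrak{J})$, whereas Lemma~\ref{lem-extvdash} only entails the ($\Cl$-based, and in general incomparable) set $\ap((e, S), \bigcup \mathcal{F})$; tokens of $\mathfrak{D}$ that are spectrum-below $(e, S)$ but not entailed from it are simply missed, so the $\mathfrak{E}$ you build need not be an associate of $\mathfrak{D}$ at all, i.e.\ $\pair{\mathfrak{E} | \mathfrak{D}}$ need not be a token of $A \rightarrow A'$. Second, Condition~\ref{dn-fctsp}(\ref{dn-fctsp-3-2}) demands entailment of $(k, \rs((a, Z), \mathfrak{D}))$ from the $\ds$-indexed union $\bigcup \{\, \ap((c, T), \bigcup \mathcal{V}) \mid (c, T) \in \ds((a, Z), \mathfrak{D}) \,\}$, which neither contains nor is contained in your $\Cl$-indexed $P_{(a, Z)}$, for exactly the same reason. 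You flag non-reflexivity only for the lower-level transport; the direct components are where it actually bites.

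The repair is the very mechanism you already use one level down: anchor the direct components elementwise instead of through Lemma~\ref{lem-extvdash}. The hypothesis, via Condition~\ref{dn-fctsp}(\ref{dn-fctsp-3-1}), holds at \emph{every} $((a_\nu, T_\nu), d_\nu) \in \bigcup \mathcal{F}$ — whether or not that pair is entailed from $(e, S)$ — giving $(\mathfrak{W}_2(a_\nu, T_\nu), \ap((a_\nu, T_\nu), \bigcup \mathcal{V})) \vdash' d_\nu$; for spectrum elements ($T_\nu \subseteq S$, $e \sim a_\nu \,[T_\nu]$) Lemma~\ref{lem-jextprop}(\ref{lem-jextprop-2}) together with Axioms~\ref{dn-infsys}(\ref{dn-infsys-8}) and \ref{dn-infsys}(\ref{dn-infsys-5}) then transports this into $(\mathfrak{W}_2(e, S), \bigcup \{\, \ap((c, T), \bigcup \mathcal{V}) \mid (c, T) \in \ds((e, S), \mathfrak{D}) \,\})$; in short, replace $\Cl$ by $\ds$ throughout your inductive claim. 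This elementwise anchoring is exactly what the paper's proof does, except that the paper first interpolates via (\ref{eq-gip}), replacing each $((a_\nu, T_\nu), d_\nu)$ by pairs built from $(e_\nu, U_\nu)$ and $Z_\nu$ so that the anchoring entailments (\ref{eq-k-nu-2}) live at elements of the new set, and it additionally threads the hypothesis witnesses $k_{c, Z, \pair{\mathfrak{G} | \mathfrak{F}}}$ into the consistency sets (\ref{eq-fc11-10}) in order to verify Condition~\ref{dn-fctsp}(\ref{dn-fctsp-2-1}). Your alternative transitivity route for that condition, through $\mathfrak{W}_2(a, S)$, is only available because $\mathfrak{F}_\ell \subseteq \mathfrak{D}$ yields $\ap((a, S), \mathfrak{F}_\ell) \subseteq \ap((a, S), \mathfrak{D})$, and even then it needs $\{ \mathfrak{E}_2(a, S) \} \in \Con'(\mathfrak{W}_2(a, S))$ and $\ap((a, S), \mathfrak{D}) \in \Con'(\mathfrak{E}_2(a, S))$ for every $(a, S) \in \Con$, facts you assert but do not derive (they follow from the corrected construction via Lemma~\ref{lem-jextprop}(\ref{lem-jextprop-1}) and Axioms~\ref{dn-infsys}(\ref{dn-infsys-4},\ref{dn-infsys-7})). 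Until the construction itself is repaired, neither of your two final verifications can be carried out.
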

\begin{proof}
Let $\pair{\mathfrak{W} | \mathfrak{V}}  \in A \rightarrow A'$, $\pair{\mathcal{G} | \mathcal{F}} \fsubset A \rightarrow A'$, and $\pair{\mathcal{W} | \mathcal{V}} \in \Con_\rightarrow(\pair{\mathfrak{W} | \mathfrak{V}})$ with 
\[
(\pair{\mathfrak{W} | \mathfrak{V}}, \pair{\mathcal{W} | \mathcal{V}}) \vdash_\rightarrow \pair{\mathcal{G} | \mathcal{F}}.
\]
We will construct some $\pair{\mathfrak{B} | \mathfrak{A}} \in A \rightarrow A'$ so that 
\begin{equation}\label{eq-fc11-cond}
(\pair{\mathfrak{W} | \mathfrak{V}}, \pair{\mathcal{W} | \mathcal{V}}) \vdash_\rightarrow \pair{\mathfrak{B} | \mathfrak{A}} \,\, \text{and} \,\, \pair{\mathcal{G} | \mathcal{F}} \in  \Con_\rightarrow(\pair{\mathfrak{B} | \mathfrak{A}}).
\end{equation}

If $\bigcup\mathcal{F}$ is empty, set $\pair{\mathfrak{B} | \mathfrak{A}} = \Delta_\rightarrow$. Let us now assume that $\bigcup\mathcal{F}$ is non-empty. In particular, let $\bigcup\mathcal{F} = \{ ((a_1, T_1), d_1), \ldots, ((a_m, T_m), d_m) \}$ and $1 \le \nu \le m$. As in the proof of Lemma~\ref{lem-fc10} there exists $Z_\nu \in \Con'(\mathfrak{W}_2(a_\nu, T_\nu))$ with
\begin{gather}
(\mathfrak{W}_2(a_\nu, T_\nu), \ap((a_\nu, T_\nu), \bigcup \mathcal{V})) \vdash' Z_\nu \label{eq-fc11-2} \\
(\mathfrak{W}_2(a_\nu, T_\nu), Z_\nu)\vdash' d_\nu\label{eq-fc11-3}
\end{gather}
and $(e_\nu, U_\nu) \in \Con$ such that 
\begin{gather}
(a_\nu, T_\nu) \vdash (e_\nu, U_\nu) \label{eq-fc11-5} \\
(e_\nu, U_\nu) \vdash  \ucl((a_\nu, T_\nu), \pr_1(\bigcup \mathcal{V}))). \label{eq-fc11-6}
\end{gather}

Set 
\[
\mathfrak{A} = \set{((e_\nu, U_\nu), j)}{j \in Z_\nu \wedge 1 \le \nu \le m}.
\]

Then it follows as in the derivation of Lemma~\ref{lem-fc10} that 
\begin{equation}\label{eq-k-nu-2}
(\mathfrak{W}_2(e_\nu, U_\nu), \ap((e_\nu, U_\nu), \bigcup \mathcal{V})) \vdash' Z_\nu.
\end{equation}

We will now construct some $\mathfrak{B} \in \ac({\mathfrak{A})}$ satisfying the Requirements~(\ref{eq-fc11-cond}). For each $J \subseteq \{ 1, \ldots, m \}$, let $R(\set{(e_\nu, U_\nu)}{\nu \in J})$ be a system of representatives of $W(\{\,(e_\nu, U_\nu) \mid \nu \in J\,\})$ with respect to $\sim [\bigcup\nolimits_{\nu \in J} U_\nu]$ so that $R(\emptyset) = \{ \Delta \}$ und $R(\{ (e_\nu, U_\nu)\}) = \{ e_\nu\}$, for $1 \le \nu \le m$. By Axiom~\ref{dn-fctsp}(\ref{dn-fctsp-3-2}) there is some $k_{a, Z, \pair{\mathfrak{G} |\mathfrak{F}}} \in A'$, for each $\pair{\mathfrak{G} | \mathfrak{F}} \in \pair{\mathcal{G} | \mathcal{F}}$ and $((a, Z), b) \in  \mathfrak{G}$, with $k_{a, Z, \pair{\mathfrak{G} | \mathfrak{F}}} \sim b \,[\rs((a, Z), \mathfrak{F})]$ and
\begin{equation}\label{eq-fc11-1}
(\mathfrak{W}_2(a,  Z), \bigcup \{\, \ap((e, Y), \bigcup \mathcal{V}) \mid (e, Y) \in \ds((a, Z), \mathfrak{F}) \,\})  
\vdash'  
\!(k_{a, Z, \pair{\mathfrak{G} | \mathfrak{F}}}, \rs((a, Z), \mathfrak{F})).
\end{equation}
In case $((a, Z), b) = ((\Delta, \emptyset), \Delta')$, we choose $k_{\Delta, \emptyset, \pair{\mathfrak{G} |\mathfrak{F}}} = \Delta'$.

\begin{claim}
For every $\mathfrak{J} \subseteq \mathfrak{A}$ and each $a \in R(\pr_1(\mathfrak{J}))$ such that $(a, \mathfrak{J})$ is $\mathfrak{A}$-maximal there is some $t_{a, \mathfrak{J}} \in A'$ with the next two properties:
\begin{align} \label{eq-fc11-9}
(\mathfrak{W}_2&(a,  \bigcup \pr_{2,1}(\mathfrak{J})), \bigcup \{\, \ap((e, U), \bigcup \mathcal{V}) \mid (e, U) \in \pr_1(\mathfrak{J}) \,\}) \vdash'  \notag \\
&\{ t_{a, \mathfrak{J}}, \Delta' \} \cup \pr_2(\mathfrak{J})   \cup   \{\, t_{c, \mathfrak{K}} \mid \mathfrak{K} \subsetneqq \mathfrak{J} \wedge c \sim a \,[\bigcup \pr_{2,1}(\mathfrak{K})] \wedge (c, \mathfrak{K})
\notag \\ 
&\text{$\mathfrak{A}$-maximal} \,\} \cup \{\, k_{c, Z, \pair{\mathfrak{G} | \mathfrak{F}}} \mid 
(\exists L \subseteq \{ 1, \ldots, m \}) (\exists b \in A')  \{\, (e_\mu, U_\mu) \mid  \notag \\
& \mu \in L \,\} \subseteq \pr_1(\mathfrak{J}) \wedge Z = \bigcup\nolimits_{\mu \in L} T_\mu  \wedge c \sim a \,[Z] \wedge \pair{\mathfrak{G} | \mathfrak{F}} \in \pair{\mathcal{G} | \mathcal{F}}  \wedge   \mbox{} \notag  \\
&((c, Z), b) \in \mathfrak{G}  \wedge b \sim k_{c, Z, \pair{\mathfrak{G} | \mathfrak{F}}} \,[\rs((c, Z), \mathfrak{F})]  \,\},
\end{align}
\begin{align}\label{eq-fc11-10}
& \Con'(t_{a, \mathfrak{J}}) \ni \{ \Delta' \} \cup \pr_2(\mathfrak{J}) \cup 
 \{\, t_{c, \mathfrak{K}} \mid \mathfrak{K} \subsetneqq \mathfrak{J} \wedge  c \sim a \,[\bigcup \pr_{2,1}(\mathfrak{K})] \wedge \mbox{} \notag \\
&\hspace{2em}   \text{$(c, \mathfrak{K}) $ $\mathfrak{A}$-maximal} \,\}   \cup  \{\, k_{c, Z, \pair{\mathfrak{G} | \mathfrak{F}}} \mid (\exists L \subseteq \{ 1, \ldots, m \}) (\exists b \in A') \notag \\
&\hspace{2em}  \{\, (e_\mu, U_\mu) \mid  \mu \in L \,\} \subseteq \pr_1(\mathfrak{J})  
\wedge Z = \bigcup\nolimits_{\mu \in L} T_\mu \wedge c \sim a \,[Z] \wedge \mbox{} \notag \\
&\hspace{2em} \pair{\mathfrak{G} | \mathfrak{F}} \in \pair{\mathcal{G} | \mathcal{F}}  \wedge  ((c, Z), b) \in \mathfrak{G}  \wedge  b \sim k_{c, Z, \pair{\mathfrak{G} | \mathfrak{F}}} \,[\rs((c, Z), \mathfrak{F})]  \,\}.
\end{align}
\end{claim}

With respect to the last set on the left hand side note that if $((c, Z), b) \in \mathfrak{G}$, then there is some $\mathfrak{L} \subseteq \mathfrak{F}$ so that $(c, \mathfrak{L})$ is $\mathfrak{F}$-maximal and $Z= \bigcup \pr_{2,1}(\mathfrak{L})$. It follows that $\| \mathfrak{L} \| \le \| \mathfrak{J} \|$.

The claim is shown by induction on the cardinality $\kappa$ of the subset $\mathfrak{J}$. The case $\kappa = 0$ is obvious. Set $t_{a, \emptyset} = \Delta'$, if $(a, \emptyset)$ is $\mathfrak{A}$-maximal. All other sets on the right hand side in Statement~(\ref{eq-fc11-9}) are empty in this case, except the last one, which is the singleton $\{ \Delta' \}$, if for some $ \pair{\mathfrak{G} | \mathfrak{F}} \in \pair{\mathcal{G} | \mathcal{F}}$, $((\Delta, \emptyset), \Delta') \in \mathfrak{G}$. 

Assume next that the claim holds for all subsets $\mathfrak{K} \subseteq \mathfrak{A}$ of cardinality $\kappa$ and let $\mathfrak{J} \subseteq \mathfrak{A}$ with $\| \mathfrak{J} \| = \kappa + 1$. Then 
\begin{align*} 
&(\mathfrak{W}_2(a,  \bigcup \pr_{2,1}(\mathfrak{J})), \bigcup \{\, \ap((e, U), \bigcup \mathcal{V}) \mid (e, U) \in \pr_1(\mathfrak{J}) \,\}) \vdash'  \\
&\hspace{.6em}  \{ \Delta' \}  \cup \pr_2(\mathfrak{J}) \cup   \{\, t_{c, \mathfrak{K}} \mid \mathfrak{K} \subsetneqq \mathfrak{J} \wedge c \sim a \,[\bigcup \pr_{2,1}(\mathfrak{K})] \wedge \text{$(c, \mathfrak{K})$ $\mathfrak{A}$-maximal} \,\} \cup \mbox{} \\
&\hspace{.6em}  \{\, k_{c, Z, \pair{\mathfrak{G} | \mathfrak{F}}} \mid  (\exists b \in A')   (\exists L \subseteq \{ 1, \ldots, m \}) \{\, (e_\mu, U_\mu) \mid \mu \in L \,\} \subseteq \pr_1(\mathfrak{J}) \wedge \mbox{} \\
&\hspace{.6em} Z = \bigcup\nolimits_{\mu \in L} T_\mu \wedge c \sim a \,[Z]   \wedge \pair{\mathfrak{G} | \mathfrak{F}} \in \pair{\mathcal{G} | \mathcal{F}} \wedge ((c, Z), b) \in \mathfrak{G}  \wedge \mbox{} \\
&\hspace{.6em}  b \sim k_{c, Z, \pair{\mathfrak{G} | \mathfrak{F}}} \,[\rs((c, Z), \mathfrak{F})]  \,\}, 
 \end{align*}
because of Statement~(\ref{eq-k-nu-2}), Axiom~\ref{dn-fctsp}(\ref{dn-fctsp-3-2}), and the induction hypothesis. By Condition~\ref{dn-infsys}(\ref{dn-infsys-11}) there is hence some $t_{a, \mathfrak{J}} \in A'$ so that Properties~(\ref{eq-fc11-9}) and (\ref{eq-fc11-10}) hold.

With help of the claim we can now define $\mathfrak{B}$. Let $\mathfrak{B}^{(0)}$ be as in Definition~\ref{dn-ass} and for $\kappa \ge 1$ set
\begin{equation*}
\widehat{\mathfrak{B}}^{(\kappa)} = 
 \{\, ((a, \bigcup \pr_{2,1}(\mathfrak{J})), t_{a,\mathfrak{J}}) \mid \mathfrak{J} \subseteq \mathfrak{A} \wedge \| \mathfrak{J} \| = \kappa \wedge  \mbox{} \\
 a \in R(\pr_1(\mathfrak{J})) \wedge \text{$(a, \mathfrak{J})$ $\mathfrak{A}$-maximal} \,\}.
\end{equation*}
Obviously, $\mathfrak{B} \in \ac(\mathfrak{A})$. Moreover, $(\pair{\mathfrak{W} | \mathfrak{V}}, \pair{\mathcal{W} | \mathcal{V}}) \vdash_\rightarrow \pair{\mathfrak{B} | \mathfrak{A}}$: Condition~\ref{dn-fctsp}(\ref{dn-fctsp-3-1}) follows with Statement~(\ref{eq-k-nu-2}), and Condition~\ref{dn-fctsp}(\ref{dn-fctsp-3-2}) is a consequence of Statement~(\ref{eq-fc11-9}). It remains to show that 
\[
\pair{\mathcal{G} | \mathcal{F}} \in \Con_\rightarrow(\pair{\mathfrak{B} | \mathfrak{A}}).
\]

For Condition~\ref{dn-fctsp}(\ref{dn-fctsp-2-1}) let $\pair{\mathfrak{G} | \mathfrak{F}} \in \pair{\mathcal{G} | \mathcal{F}}$ and $(a, S) \in \Con$. Then there is some $M \subseteq \{ 1, \ldots, m \}$ such that 
\[
\Cl((a, S), \mathfrak{F}) = \{\, (a_\mu, T_\mu) \mid \mu \in M \,\}.
\]
In addition, there are $e \in A$ and $b \in A'$ with 
$a \sim e \,[\bigcup_{\mu \in M} T_\mu)]
$
 so that $((e, \bigcup_{\mu \in M} T_\mu), b) = \mathfrak{G}(a, S)$. Moreover, 
 \[
 b \sim k_{e, \bigcup\nolimits_{\mu \in M} T_\mu, \pair{\mathfrak{G} | \mathfrak{F}}} \,[\rs((e, \bigcup\nolimits_{\mu \in M} T_\mu), \mathfrak{F})],
 \]
  that is 
\[
\mathfrak{G}_2(a, S) \sim k_{e, \bigcup\nolimits_{\mu \in M} T_\mu, \pair{\mathfrak{G} | \mathfrak{F}}} \,[\ap((a, S), \mathfrak{F})].
\]
Let $\mathfrak{J} \subseteq \mathfrak{A}$ such that 
\[
\Cl((a, S), \pr_1(\mathfrak{A})) = \pr_1(\mathfrak{J}).
\]
Then $(a, \mathfrak{J})$ is $\mathfrak{A}$-maximal. Hence, there is some $c \in A$ such that $a \sim c \,[\bigcup \pr_{2,1}(\mathfrak{J})]$ and $((c, \bigcup \pr_{2,1}(\mathfrak{J})), t_{c, \mathfrak{J}}) = \mathfrak{B}(a, S)$. Because of Statement~(\ref{eq-fc11-5}) we moreover have that $\{\, (e_\mu, U_\mu) \mid \mu \in M \,\} \subseteq \pr_1(\mathfrak{J})$. Thus, $\{ k_{e, \bigcup\nolimits_{\mu \in M} T_\mu, \pair{\mathfrak{G} | \mathfrak{F}}} \} \in \Con'(\mathfrak{B}_2(a, S))$, by Statement~(\ref{eq-fc11-10}).

For Condition~\ref{dn-fctsp}(\ref{dn-fctsp-2-2}), finally, let $1 \le \nu \le m$. Note that by Statement~(\ref{eq-fc11-5}),
\[
Z_\nu \subseteq \ap((a_\nu, T_\nu),  \mathfrak{A}).
\]
With Lemma~\ref{lem-fc4} it follows from what we have just seen that $\{ \pair{\mathfrak{B} | \mathfrak{A}} \} \in \Con_\rightarrow(\pair{\mathfrak{W} | \mathfrak{V}})$. Thus, $\ap((a_\nu, T_\nu), \mathfrak{A}) \in \Con'(\mathfrak{W}_2(a_\nu, T_\nu))$. As a consequence of Statement~(\ref{eq-fc11-3}) we obtain that 
\begin{equation*}\label{eq-fc11-cond2}
(\mathfrak{W}_2(a_\nu, T_\nu), \ap((a_\nu, T_\nu),  \mathfrak{A})) \vdash' d_\nu,
\end{equation*}
from which we gain with Lemma~\ref{lem-extvdash} that for all $(i, X) \in \Con$,
\[
(\mathfrak{W}_2(i, X), \ap((i, X),  \mathfrak{A})) \vdash' \ap((i, X), \bigcup \mathcal{F}).
\]
Hence,
\[
(\mathfrak{B}_2(i, X), \ap((i, X),  \mathfrak{A})) \vdash' \ap((i, X), \bigcup \mathcal{F}),
\]
because of Lemmas~\ref{lem-extwit}  and \ref{lem-witeqprop}(\ref{lem-witeqprop-1}). Condition~\ref{dn-fctsp}(\ref{dn-fctsp-2-2}) now follows with Lemma~\ref{lem-witeqprop}(\ref{lem-witeqprop-3}). 
\end{proof}

\begin{lemma}\label{lem-fc7}
$(A \rightarrow A', \Con_\rightarrow, \vdash_\rightarrow, \Delta_\rightarrow)$ satisfies Condition~\ref{dn-infsys}(\ref{dn-infsys-7}).
\end{lemma}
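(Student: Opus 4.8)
The statement to prove is the instance of Axiom~\ref{dn-infsys}(\ref{dn-infsys-7}) for the function space, namely that
\[
\{ \pair{\mathfrak{W} | \mathfrak{V}} \} \in \Con_\rightarrow(\pair{\mathfrak{B} | \mathfrak{A}}) \Longrightarrow \Con_\rightarrow(\pair{\mathfrak{W} | \mathfrak{V}}) \subseteq \Con_\rightarrow(\pair{\mathfrak{B} | \mathfrak{A}}).
\]
So I assume $\{ \pair{\mathfrak{W} | \mathfrak{V}} \} \in \Con_\rightarrow(\pair{\mathfrak{B} | \mathfrak{A}})$, pick an arbitrary $\pair{\mathcal{G} | \mathcal{F}} \in \Con_\rightarrow(\pair{\mathfrak{W} | \mathfrak{V}})$, and verify the two requirements of Definition~\ref{dn-fctsp}(\ref{dn-fctsp-2}) for $\pair{\mathcal{G} | \mathcal{F}}$ relative to the witness $\pair{\mathfrak{B} | \mathfrak{A}}$ at a fixed but arbitrary $(a, S) \in \Con$. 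Unfolding the hypothesis on the singleton $\{ \pair{\mathfrak{W} | \mathfrak{V}} \}$ gives, via \ref{dn-fctsp}(\ref{dn-fctsp-2-1}), some $k_0 \in A'$ with $\mathfrak{W}_2(a, S) \sim k_0 \,[\ap((a, S), \mathfrak{V})]$ and $\{ k_0 \} \in \Con'(\mathfrak{B}_2(a, S))$, and via \ref{dn-fctsp}(\ref{dn-fctsp-2-2}) the implication $r \sim \mathfrak{B}_2(a, S) \,[\ap((a, S), \mathfrak{A})] \Rightarrow r \sim \mathfrak{B}_2(a, S) \,[\ap((a, S), \mathfrak{V})]$ for all $r$, since $\bigcup \{ \mathfrak{V} \} = \mathfrak{V}$.

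The first and decisive step is to establish the bridge
\[
\mathfrak{W}_2(a, S) \sim \mathfrak{B}_2(a, S) \,[\ap((a, S), \mathfrak{V})].
\]
As $\sim [\ap((a, S), \mathfrak{V})]$ is a partial equivalence relation, the relation $\mathfrak{W}_2(a, S) \sim k_0$ entails $\ap((a, S), \mathfrak{V}) \in \Con'(k_0)$; then Axiom~\ref{dn-infsys}(\ref{dn-infsys-7}), now applied inside $A'$ to $\{ k_0 \} \in \Con'(\mathfrak{B}_2(a, S))$, gives $\ap((a, S), \mathfrak{V}) \in \Con'(\mathfrak{B}_2(a, S))$. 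Using the length-one chain with link $\mathfrak{B}_2(a, S)$, legitimate because $\{ k_0 \}, \{ \mathfrak{B}_2(a, S) \} \in \Con'(\mathfrak{B}_2(a, S))$ by the hypothesis and Axiom~\ref{dn-infsys}(\ref{dn-infsys-1}), I obtain $k_0 \sim \mathfrak{B}_2(a, S) \,[\ap((a, S), \mathfrak{V})]$, and transitivity of the partial equivalence relation yields the bridge.

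The crucial observation is that the bridge propagates from $\mathfrak{V}$ to $\bigcup \mathcal{F}$ \emph{for free} by instantiating Condition~\ref{dn-fctsp}(\ref{dn-fctsp-2-2}) of the hypothesis $\pair{\mathcal{G} | \mathcal{F}} \in \Con_\rightarrow(\pair{\mathfrak{W} | \mathfrak{V}})$ at $r = \mathfrak{B}_2(a, S)$: its premise is exactly the (symmetric) bridge, so it delivers the extended bridge $\mathfrak{W}_2(a, S) \sim \mathfrak{B}_2(a, S) \,[\ap((a, S), \bigcup \mathcal{F})]$. With both bridges in hand the two target conditions close by chaining links of the partial equivalence relation. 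For \ref{dn-fctsp}(\ref{dn-fctsp-2-1}), given $\pair{\mathfrak{G} | \mathfrak{F}} \in \pair{\mathcal{G} | \mathcal{F}}$ I take $k' = \mathfrak{B}_2(a, S)$: the $k$ supplied by the hypothesis satisfies $\mathfrak{G}_2(a, S) \sim k$ and $\{ k \} \in \Con'(\mathfrak{W}_2(a, S))$, and together with $\ap((a, S), \mathfrak{F}) \subseteq \ap((a, S), \bigcup \mathcal{F}) \in \Con'(\mathfrak{W}_2(a, S))$ (the latter from \ref{dn-fctsp}(\ref{dn-fctsp-2-2}) read reflexively, the inclusion being monotonicity of $\ap$, and Axiom~\ref{dn-infsys}(\ref{dn-infsys-2})) this gives $k \sim \mathfrak{W}_2(a, S) \,[\ap((a, S), \mathfrak{F})]$; restricting the extended bridge to $\ap((a, S), \mathfrak{F})$ by Lemma~\ref{lem-witeqprop}(\ref{lem-witeqprop-2}) connects $\mathfrak{W}_2(a, S)$ to $\mathfrak{B}_2(a, S)$, so $\mathfrak{G}_2(a, S) \sim \mathfrak{B}_2(a, S) \,[\ap((a, S), \mathfrak{F})]$, while $\{ \mathfrak{B}_2(a, S) \} \in \Con'(\mathfrak{B}_2(a, S))$ by Axiom~\ref{dn-infsys}(\ref{dn-infsys-1}). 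For \ref{dn-fctsp}(\ref{dn-fctsp-2-2}), assuming $r \sim \mathfrak{B}_2(a, S) \,[\ap((a, S), \mathfrak{A})]$, the hypothesis $\{ \pair{\mathfrak{W} | \mathfrak{V}} \} \in \Con_\rightarrow(\pair{\mathfrak{B} | \mathfrak{A}})$ passes to $\ap((a, S), \mathfrak{V})$, the bridge crosses to $\mathfrak{W}_2(a, S)$, Condition~\ref{dn-fctsp}(\ref{dn-fctsp-2-2}) of $\pair{\mathcal{G} | \mathcal{F}} \in \Con_\rightarrow(\pair{\mathfrak{W} | \mathfrak{V}})$ reaches $\ap((a, S), \bigcup \mathcal{F})$, and the extended bridge crosses back to $\mathfrak{B}_2(a, S)$.

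The main obstacle is the bridge itself: the canonical witnesses $\mathfrak{W}_2(a, S)$ and $\mathfrak{B}_2(a, S)$ need \emph{not} satisfy $\{ \mathfrak{W}_2(a, S) \} \in \Con'(\mathfrak{B}_2(a, S))$, so one cannot simply transport the witness $k$ of the hypothesis across Axiom~\ref{dn-infsys}(\ref{dn-infsys-7}). The remedy is to route everything through the partial equivalence relation, to recover $\ap((a, S), \mathfrak{V}) \in \Con'(\mathfrak{B}_2(a, S))$ from the auxiliary element $k_0$, and to exploit the reflexive instantiation $r = \mathfrak{B}_2(a, S)$ of \ref{dn-fctsp}(\ref{dn-fctsp-2-2}); this last move is precisely what grows the relevant set of second components from $\mathfrak{V}$ to $\bigcup \mathcal{F}$ without any appeal to entailment.
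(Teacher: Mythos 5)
Your proof is correct and is essentially the paper's own argument (with the roles of the two tokens interchanged): fix $(a, S) \in \Con$, extract the auxiliary token $k_0$ from Condition~\ref{dn-fctsp}(\ref{dn-fctsp-2-1}) of the singleton hypothesis, and close both goal conditions by chaining the partial equivalence relation, using suitable instantiations of Condition~\ref{dn-fctsp}(\ref{dn-fctsp-2-2}) to pass from $\ap((a, S), \mathfrak{V})$ to $\ap((a, S), \bigcup \mathcal{F})$. The only cosmetic differences are that you establish the bridge $\mathfrak{W}_2(a, S) \sim \mathfrak{B}_2(a, S)$ up front, invoking Axiom~\ref{dn-infsys}(\ref{dn-infsys-7}) of $A'$ plus a length-one chain (the paper never needs that axiom, since it keeps its auxiliary token $k_{\mathfrak{A}}$ as the hub of all chains), and that your witness for the goal's Condition~\ref{dn-fctsp}(\ref{dn-fctsp-2-1}) is $\mathfrak{B}_2(a, S)$ itself rather than the auxiliary token.
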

\begin{proof}
Let $\{\pair{\mathfrak{B} | \mathfrak{A}}\} \in \Con_\rightarrow(\pair{\mathfrak{W} | \mathfrak{V}})$ and $\pair{\mathcal{G} | \mathcal{F}} \in \Con_\rightarrow(\pair{\mathfrak{B} | \mathfrak{A}})$. We have to show that $\pair{\mathcal{G} | \mathcal{F}} \in \Con_\rightarrow(\pair{\mathfrak{W} | \mathfrak{V}})$.

Let $(a, S) \in \Con$. As $\{\pair{\mathfrak{B} | \mathfrak{A}}\} \in \Con_\rightarrow(\pair{\mathfrak{W} | \mathfrak{V}})$, there is some $k_{\mathfrak{A}} \in A'$ so that $\{ k_{\mathfrak{A}} \} \in \Con'(\mathfrak{W}_2(a, S))$ and $k_{\mathfrak{A}} \sim \mathfrak{B}_2(a, S) \,[\ap((a, S), \mathfrak{A})]$. Now, using that  $\pair{\mathcal{G} | \mathcal{F}} \in \Con_\rightarrow(\pair{\mathfrak{B} | \mathfrak{A}})$, it follows that $k_{\mathfrak{A}} \sim \mathfrak{B}_2(a, S) \,[\ap((a, S), \bigcup \mathcal{F})]$, from which we obtain with Lemma~\ref{lem-witeqprop}(\ref{lem-witeqprop-2}) that for every $\pair{\mathfrak{G} | \mathfrak{F}} \in \pair{\mathcal{G} | \mathcal{F}}$, $k_{\mathfrak{A}} \sim \mathfrak{B}_2(a, S) \,[\ap((a, S), \mathfrak{F})]$.

Moreover, there is some $k_{\mathfrak{F}} \in A'$ so that $k_{\mathfrak{F}} \sim \mathfrak{G}_2(a, S) \,[\ap((a, S), \mathfrak{F})]$ and $\{k_{\mathfrak{F}} \} \in \Con'(\mathfrak{B}_2(a, S))$. It follows that 
\[
\mathfrak{G}_2(a, S) \sim \mathfrak{B}_2(a, S) \,[\ap((a, S), \mathfrak{F})].
\]
 This shows that 
 \[
 \mathfrak{G}_2(a, S) \sim k_{\mathfrak{A}} \,[\ap((a, S), \mathfrak{A})].
 \]
  As said, $\{ k_{\mathfrak{A}} \} \in \Con'(\mathfrak{W}_2(a, S))$. 

Thus, the first of the two conditions in Def.~\ref{dn-fctsp}(\ref{dn-fctsp-2}) holds. For the second one let $r \in A'$ with $r \sim \mathfrak{W}_2(a, S) \,[\ap((a, S), \mathfrak{V})]$. Then 
\[
r \sim \mathfrak{W}_2(a, S) \,[\ap((a, S), \mathfrak{A})],
\]
 as $\{\pair{\mathfrak{B} | \mathfrak{A}}\} \in \Con_\rightarrow(\pair{\mathfrak{W} | \mathfrak{V}})$. Since $\{ k_{\mathfrak{A}} \} \in \Con'(\mathfrak{W}_2(a, S))$, it follows that 
 \[
 r \sim k_{\mathfrak{A}} \,[\ap((a, S), \mathfrak{A})],
 \]
  and because $k_{\mathfrak{A}} \sim \mathfrak{B}_2(a, S) \,[\ap((a, S), \mathfrak{A})]$, we obtain that $r \sim \mathfrak{B}_2(a, S) \,[\ap((a, S), \mathfrak{A})]$. Moreover, 
 \[
 \mathfrak{B}_2(a, S) \sim \mathfrak{W}_2(a, S) \,[\ap((a, S), \mathfrak{A})].
 \]
  Now, note that for any $s \in A'$ with $s \sim \mathfrak{B}_2(a, S) \,[\ap((a, S), \mathfrak{A})]$ we have that 
\[
s \sim \mathfrak{B}_2(a, S) \,[\ap((a, S), \bigcup \mathcal{F})].
\]
 As consequence we gain that $r \sim \mathfrak{W}_2(a, S) \,[\ap((a, S), \bigcup \mathcal{F})]$.
\end{proof}

\begin{lemma}\label{lem-fc8}
$(A \rightarrow A', \Con_\rightarrow, \vdash_\rightarrow, \Delta_\rightarrow)$ satisfies Condition~\ref{dn-infsys}(\ref{dn-infsys-8}).
\end{lemma}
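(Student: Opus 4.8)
The plan is to reduce the whole statement to a single $X$-equivalence between the $A'$-witnesses that $\pair{\mathfrak{W} | \mathfrak{V}}$ and $\pair{\mathfrak{B} | \mathfrak{A}}$ attach to the consistent sets of $A$, and then to transfer both clauses of Definition~\ref{dn-fctsp}(\ref{dn-fctsp-3}) from the witness $\pair{\mathfrak{B} | \mathfrak{A}}$ to the witness $\pair{\mathfrak{W} | \mathfrak{V}}$ by means of Lemma~\ref{lem-witeqprop}(\ref{lem-witeqprop-1}). Writing the hypotheses as (i) $\{ \pair{\mathfrak{B} | \mathfrak{A}} \} \in \Con_\rightarrow(\pair{\mathfrak{W} | \mathfrak{V}})$, (ii) $\pair{\mathcal{G} | \mathcal{F}} \in \Con_\rightarrow(\pair{\mathfrak{B} | \mathfrak{A}})$ and (iii) $(\pair{\mathfrak{B} | \mathfrak{A}}, \pair{\mathcal{G} | \mathcal{F}}) \vdash_\rightarrow \pair{\mathfrak{C} | \mathfrak{D}}$, the goal is $(\pair{\mathfrak{W} | \mathfrak{V}}, \pair{\mathcal{G} | \mathcal{F}}) \vdash_\rightarrow \pair{\mathfrak{C} | \mathfrak{D}}$. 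The hypotheses (i) and (ii) are precisely those of Lemma~\ref{lem-fc7}, which therefore already gives $\pair{\mathcal{G} | \mathcal{F}} \in \Con_\rightarrow(\pair{\mathfrak{W} | \mathfrak{V}})$; hence the entailment to be shown is well formed.

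The crucial step, which I would carry out first, is the claim $(\ast)$: for every $(a, S) \in \Con$,
\[
\mathfrak{B}_2(a, S) \sim \mathfrak{W}_2(a, S) \,[\ap((a, S), \textstyle\bigcup \mathcal{F})].
\]
To establish it, fix $(a, S)$. From (i) and Condition~\ref{dn-fctsp}(\ref{dn-fctsp-2-1}) pick $k \in A'$ with $\{ k \} \in \Con'(\mathfrak{W}_2(a, S))$ and $\mathfrak{B}_2(a, S) \sim k \,[\ap((a, S), \mathfrak{A})]$. Feeding $r = k$ into Condition~\ref{dn-fctsp}(\ref{dn-fctsp-2-2}) for (ii) upgrades this to $\mathfrak{B}_2(a, S) \sim k \,[\ap((a, S), \bigcup \mathcal{F})]$. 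Consequently $\ap((a, S), \bigcup \mathcal{F}) \in \Con'(k)$, and since $\{ k \} \in \Con'(\mathfrak{W}_2(a, S))$ gives $\Con'(k) \subseteq \Con'(\mathfrak{W}_2(a, S))$ by Axiom~\ref{dn-infsys}(\ref{dn-infsys-7}), also $\ap((a, S), \bigcup \mathcal{F}) \in \Con'(\mathfrak{W}_2(a, S))$. The one-link chain with connecting witness $\mathfrak{W}_2(a, S)$ — legitimate because $\{ k \}, \{ \mathfrak{W}_2(a, S) \} \in \Con'(\mathfrak{W}_2(a, S))$ by the above and Axiom~\ref{dn-infsys}(\ref{dn-infsys-1}) — then yields $k \sim \mathfrak{W}_2(a, S) \,[\ap((a, S), \bigcup \mathcal{F})]$. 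As $\sim [\,\cdot\,]$ is a partial equivalence relation (Lemma~\ref{lem-witeqprop}), transitivity with the previous equivalence gives $(\ast)$.

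With $(\ast)$ available, both clauses of Definition~\ref{dn-fctsp}(\ref{dn-fctsp-3}) transfer directly. For Condition~\ref{dn-fctsp}(\ref{dn-fctsp-3-1}) take $((e, Y), f) \in \mathfrak{D}$: hypothesis (iii) gives $(\mathfrak{B}_2(e, Y), \ap((e, Y), \bigcup \mathcal{F})) \vdash' f$, and $(\ast)$ at $(e, Y)$ together with Lemma~\ref{lem-witeqprop}(\ref{lem-witeqprop-1}) replaces $\mathfrak{B}_2(e, Y)$ by $\mathfrak{W}_2(e, Y)$. For Condition~\ref{dn-fctsp}(\ref{dn-fctsp-3-2}) take $((a, Z), b) \in \mathfrak{C}$ and let $k$ be the witness provided by (iii), so that $b \sim k \,[\rs((a, Z), \mathfrak{D})]$ and $(\mathfrak{B}_2(a, Z), U) \vdash' (k, \rs((a, Z), \mathfrak{D}))$ with $U = \bigcup \{\, \ap((e, Y), \bigcup \mathcal{F}) \mid (e, Y) \in \ds((a, Z), \mathfrak{D}) \,\}$. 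As observed after Definition~\ref{dn-fctsp}, $U \subseteq \ap((a, Z), \bigcup \mathcal{F})$, so Lemma~\ref{lem-witeqprop}(\ref{lem-witeqprop-2}) restricts $(\ast)$ to $\mathfrak{B}_2(a, Z) \sim \mathfrak{W}_2(a, Z) \,[U]$; Lemma~\ref{lem-witeqprop}(\ref{lem-witeqprop-1}) then exchanges $\mathfrak{B}_2(a, Z)$ for $\mathfrak{W}_2(a, Z)$ in the displayed entailment while the same $k$ and the clause $b \sim k \,[\rs((a, Z), \mathfrak{D})]$ are kept. This is exactly the goal.

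The main obstacle is the proof of $(\ast)$: it is the place where the explicit handling of consistency witnesses is unavoidable, since one must combine the consistency clause of $\pair{\mathfrak{B} | \mathfrak{A}}$ relative to $\pair{\mathfrak{W} | \mathfrak{V}}$ with that of $\pair{\mathcal{G} | \mathcal{F}}$ relative to $\pair{\mathfrak{B} | \mathfrak{A}}$, and use Axioms~\ref{dn-infsys}(\ref{dn-infsys-1},\ref{dn-infsys-7}) of $A'$ to pass first from the index set $\ap((a, S), \mathfrak{A})$ to $\ap((a, S), \bigcup \mathcal{F})$ and then from the auxiliary witness $k$ to $\mathfrak{W}_2(a, S)$. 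Once $(\ast)$ is in hand, the transfer of the two entailment clauses is routine within the $X$-equivalence calculus of Lemma~\ref{lem-witeqprop}.
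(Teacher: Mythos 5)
Your proposal is correct and takes essentially the same approach as the paper's proof: both obtain the auxiliary token $k$ from Condition~\ref{dn-fctsp}(\ref{dn-fctsp-2-1}) applied to $\{\pair{\mathfrak{B} | \mathfrak{A}}\} \in \Con_\rightarrow(\pair{\mathfrak{W} | \mathfrak{V}})$, upgrade from $\ap((a,S), \mathfrak{A})$ to $\ap((a,S), \bigcup\mathcal{F})$ via Condition~\ref{dn-fctsp}(\ref{dn-fctsp-2-2}) for $\pair{\mathcal{G} | \mathcal{F}} \in \Con_\rightarrow(\pair{\mathfrak{B} | \mathfrak{A}})$, and then move both entailment clauses from $\mathfrak{B}_2$ to $\mathfrak{W}_2$ using Lemma~\ref{lem-witeqprop}. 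The only difference is packaging: the paper transfers each entailment in two steps (to $k$ by Lemma~\ref{lem-witeqprop}(\ref{lem-witeqprop-1}), then to $\mathfrak{W}_2$ by Axiom~\ref{dn-infsys}(\ref{dn-infsys-8}) in $A'$), whereas you first compose the equivalences into the single statement $(\ast)$ and transfer once, in the process spelling out the Condition~\ref{dn-fctsp}(\ref{dn-fctsp-2-2}) step that the paper's wording leaves implicit.
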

\begin{proof}
Let $\{\pair{\mathfrak{B} | \mathfrak{A}}\} \in \Con_\rightarrow(\pair{\mathfrak{W} | \mathfrak{V}})$,   $\pair{\mathcal{G} | \mathcal{F}} \in \Con_\rightarrow(\pair{\mathfrak{B} | \mathfrak{A}})$ and $\pair{\mathfrak{Z} | \mathfrak{Y}} \in A \rightarrow A'$ so that $(\pair{\mathfrak{B} | \mathfrak{A}}, \pair{\mathcal{G} | \mathcal{F}}) \vdash_\rightarrow \pair{\mathfrak{Z} | \mathfrak{Y}}$. We must prove that $(\pair{\mathfrak{W} | \mathfrak{V}}, \pair{\mathcal{G} | \mathcal{F}}) \vdash_\rightarrow \pair{\mathfrak{Z} | \mathfrak{Y}}$. 

Let $((c, T), b) \in \mathfrak{Y}$ and choose $k_{\mathfrak{A}}$ according to Condition~\ref{dn-fctsp}(\ref{dn-fctsp-2-1}). Then $\mathfrak{B}_2(c, T) \sim k_{\mathfrak{A}} \,[\ap((c, T), \mathfrak{A})]$ and $\{ k_{\mathfrak{A}} \} \in \Con'(\mathfrak{W}_2(c, T))$.
    Moreover, we have that 
 \[
 (\mathfrak{B}_2(c, T), \ap((c, T), \bigcup \mathcal{F})) \vdash' b.
 \]
  With Lemma~\ref{lem-witeqprop}(\ref{lem-witeqprop-1}) it follows that also $(k_{\mathfrak{A}}, \ap((c, T), \mathfrak{A})) \vdash' b$. Hence, 
  \[
  (\mathfrak{W}_2(c, T), \ap((c, T), \mathfrak{A})) \vdash' b,
  \]
   by Axiom~\ref{dn-infsys}(\ref{dn-infsys-8}). The other condition follows similarly.
\end{proof}

The verification of Condition~\ref{dn-infsys}(\ref{dn-infsys-9}) proceeds in an analogous way. 

\begin{proposition}\label{prop-fct-alg}
Let $(A, \Con, \vdash, \Delta)$ and $(A', \Con', \vdash', \Delta')$ be information systems with witnesses satisfying Condition~(\ref{alg}). Then $(A \rightarrow A', \Con_\rightarrow, \vdash_\rightarrow, \Delta_\rightarrow)$ satisfies (\ref{alg}) as well.
\end{proposition}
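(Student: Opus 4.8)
By Lemma~\ref{lem-salg}, which applies to the information system $(A \rightarrow A', \Con_\rightarrow, \vdash_\rightarrow, \Delta_\rightarrow)$ established in Proposition~\ref{pn-fctinfsys}, it suffices to verify Condition~(\ref{salg}) for the function space. So the plan is: assuming $(\pair{\mathfrak{W} | \mathfrak{V}}, \pair{\mathcal{W} | \mathcal{V}}) \vdash_\rightarrow \pair{\mathfrak{G} | \mathfrak{F}}$ for a single token $\pair{\mathfrak{G} | \mathfrak{F}} \in A \rightarrow A'$, produce $\pair{\mathcal{E} | \mathcal{D}} \in \Con_\rightarrow(\pair{\mathfrak{W} | \mathfrak{V}})$ with
\[
(\pair{\mathfrak{W} | \mathfrak{V}}, \pair{\mathcal{W} | \mathcal{V}}) \vdash_\rightarrow \pair{\mathcal{E} | \mathcal{D}}, \quad (\pair{\mathfrak{W} | \mathfrak{V}}, \pair{\mathcal{E} | \mathcal{D}}) \vdash_\rightarrow \pair{\mathcal{E} | \mathcal{D}}, \quad (\pair{\mathfrak{W} | \mathfrak{V}}, \pair{\mathcal{E} | \mathcal{D}}) \vdash_\rightarrow \pair{\mathfrak{G} | \mathfrak{F}}.
\]

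First I would rerun the construction from the proof of Lemma~\ref{lem-fc10} essentially verbatim, writing $\mathfrak{F} = \{ ((a_1, T_1), d_1), \ldots, ((a_m, T_m), d_m) \}$ and interpolating each entailment $(\mathfrak{W}_2(a_\nu, T_\nu), \ap((a_\nu, T_\nu), \bigcup \mathcal{V})) \vdash' d_\nu$. The single change is that at each interpolation step I invoke Condition~(\ref{alg}) in place of the Global Interpolation Property: applying~(\ref{alg}) in $A'$ yields a \emph{reflexive} pair $(k_\nu, Z_\nu) \in \Con'_\mathrm{refl}$ with $(\mathfrak{W}_2(a_\nu, T_\nu), \ap((a_\nu, T_\nu), \bigcup \mathcal{V})) \vdash' (k_\nu, Z_\nu)$ and $(k_\nu, Z_\nu) \vdash' d_\nu$, while applying~(\ref{alg}) in $A$ to~(\ref{eq-fc10-4}) yields a \emph{reflexive} $(e_\nu, U_\nu) \in \Con_\mathrm{refl}$ with $(a_\nu, T_\nu) \vdash (e_\nu, U_\nu)$ and $(e_\nu, U_\nu) \vdash \ucl((a_\nu, T_\nu), \pr_1(\bigcup \mathcal{V}))$. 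Setting $\mathfrak{D}_\nu$, $\mathfrak{E}_\nu$, $\mathcal{D}$, and $\mathcal{E}$ exactly as in Lemma~\ref{lem-fc10}, the first and the third displayed requirements follow by the arguments given there, since those only use the entailments that the Global Interpolation Property already supplies and~(\ref{alg}) supplies a fortiori; in particular $\pair{\mathcal{E} | \mathcal{D}} \in \Con_\rightarrow(\pair{\mathfrak{W} | \mathfrak{V}})$ by Lemma~\ref{lem-fc4}, so $\ap(\cdot, \bigcup \mathcal{D})$ is consistent with the relevant witnesses, and by~(\ref{eq-fc10-6+}) we may freely identify $\mathfrak{W}_2(a_\nu, T_\nu) = \mathfrak{W}_2(e_\nu, U_\nu)$.

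The substantial new point is the middle requirement, the self-entailment $(\pair{\mathfrak{W} | \mathfrak{V}}, \pair{\mathcal{E} | \mathcal{D}}) \vdash_\rightarrow \pair{\mathfrak{E}_\nu | \mathfrak{D}_\nu}$ for each $\nu$, which I would check against Definition~\ref{dn-fctsp}(\ref{dn-fctsp-3}). For Condition~\ref{dn-fctsp}(\ref{dn-fctsp-3-1}) I note that reflexivity of $(e_\nu, U_\nu)$ places $(e_\nu, U_\nu)$ in $\Cl((e_\nu, U_\nu), \pr_1(\bigcup \mathcal{D}))$, whence $Z_\nu \subseteq \ap((e_\nu, U_\nu), \bigcup \mathcal{D})$; reflexivity of $(k_\nu, Z_\nu)$ gives $(k_\nu, Z_\nu) \vdash' Z_\nu$, and since $\{ k_\nu \} \in \Con'(\mathfrak{W}_2(e_\nu, U_\nu))$ by Axiom~\ref{dn-infsys}(\ref{dn-infsys-4}), Axiom~\ref{dn-infsys}(\ref{dn-infsys-8}) upgrades this to $(\mathfrak{W}_2(e_\nu, U_\nu), Z_\nu) \vdash' Z_\nu$ and then Axiom~\ref{dn-infsys}(\ref{dn-infsys-5}) to $(\mathfrak{W}_2(e_\nu, U_\nu), \ap((e_\nu, U_\nu), \bigcup \mathcal{D})) \vdash' Z_\nu$. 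For Condition~\ref{dn-fctsp}(\ref{dn-fctsp-3-2}) applied to $((e_\nu, U_\nu), k_\nu) \in \mathfrak{E}_\nu$, where $\ds((e_\nu, U_\nu), \mathfrak{D}_\nu) = \{ (e_\nu, U_\nu) \}$ and $\rs((e_\nu, U_\nu), \mathfrak{D}_\nu) = Z_\nu$, I feed the entailment just obtained into Axiom~\ref{dn-infsys}(\ref{dn-infsys-11}) to produce a witness $k$ with $(\mathfrak{W}_2(e_\nu, U_\nu), \ap((e_\nu, U_\nu), \bigcup \mathcal{D})) \vdash' k$ and $Z_\nu \in \Con'(k)$; as $Z_\nu \in \Con'(k_\nu)$ as well and both $k$ and $k_\nu$ are consistent with respect to $\mathfrak{W}_2(e_\nu, U_\nu)$, we obtain $k \sim k_\nu \,[Z_\nu]$, which is what \ref{dn-fctsp}(\ref{dn-fctsp-3-2}) demands. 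The degenerate token $((\Delta, \emptyset), \Delta')$, when present in $\mathfrak{E}_\nu$, is handled by Axiom~\ref{dn-infsys}(\ref{dn-infsys-3}) exactly as in Lemma~\ref{lem-fc10}.

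The main obstacle I anticipate is the mismatch of reference sets: self-entailment at the function-space level is measured through the aperture $\ap(\cdot, \bigcup \mathcal{D})$ of the interpolant itself, whereas the entailments coming out of the interpolation step are stated through $\ap(\cdot, \bigcup \mathcal{V})$, and the two are a priori unrelated. Bridging this gap is precisely where the two reflexivity properties are indispensable: reflexivity of $(e_\nu, U_\nu)$ guarantees that $Z_\nu$ lies inside the new aperture, and reflexivity of $(k_\nu, Z_\nu)$ converts the entailment of $Z_\nu$ into one that survives enlargement of the antecedent set via Axiom~\ref{dn-infsys}(\ref{dn-infsys-5}). Beyond this, the residual care is the routine but delicate bookkeeping of $\Cl$, $\ds$, $\rs$, and $\mathfrak{V}$-maximality already familiar from Lemmas~\ref{lem-fc10} and \ref{lem-fc11}.
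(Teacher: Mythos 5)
Your proposal is correct and follows essentially the same route as the paper: reduce to Condition~(\ref{salg}) via Lemma~\ref{lem-salg}, rerun the construction of Lemma~\ref{lem-fc10} with the interpolants $(e_\nu, U_\nu)$ and $(k_\nu, Z_\nu)$ chosen reflexive by Condition~(\ref{alg}), and obtain the self-entailment of $\pair{\mathcal{E} | \mathcal{D}}$ from the two reflexivity properties, exactly as the paper does (you are in fact more explicit than the paper about why reflexivity of $(e_\nu, U_\nu)$ yields $Z_\nu \subseteq \ap((e_\nu, U_\nu), \bigcup \mathcal{D})$). The only harmless deviation is that for Condition~\ref{dn-fctsp}(\ref{dn-fctsp-3-2}) you manufacture the witness $k$ via Axiom~\ref{dn-infsys}(\ref{dn-infsys-11}) and a chain argument, whereas the paper simply takes $k = k_\nu$, its entailment $(\mathfrak{W}_2(e_\nu, U_\nu), \ap((e_\nu, U_\nu), \bigcup \mathcal{D})) \vdash' \{ k_\nu \} \cup Z_\nu$ already covering $k_\nu$ itself.
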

\begin{proof}
As follows from a slight modification of the construction in Lemma~\ref{lem-fc10}, Condition~(\ref{salg}) is satisfied: Because of Condition~(\ref{alg}) $(e_\nu, U_\nu) \in \Con$ and $(k_\nu, Z_\nu) \in \Con'$ can be chosen as reflexive, for each $1 \le \nu \le m$. Moreover, as a consequence of Statement~(\ref{eq-k-nu}), $\{ k_\nu \} \in \Con'(\mathfrak{W}_2(e_\nu, U_\nu))$. Therefore, we have that $(\mathfrak{W}_2(e_\nu, U_\nu), Z_\nu) \vdash' (k_\nu, Z_\nu)$. By definition of $\pair{\mathcal{E} | \mathcal{D}}$, $Z_\nu \subseteq \ap((e_\nu, U_\nu), \bigcup \mathcal{D})$. Hence,
\[
(\mathfrak{W}_2(e_\nu, U_\nu), \ap((e_\nu, U_\nu), \bigcup \mathcal{D})) \vdash' \{k_\nu \} \cup Z_\nu,
\]
which implies that $(\pair{\mathfrak{W} | \mathfrak{V}}, \pair{\mathcal{E} | \mathcal{D}}) \vdash_\rightarrow \pair{\mathcal{E} | \mathcal{D}}$.
\end{proof}

Condition~(\ref{bc}) as well is inherited to the function space.

\begin{proposition}\label{prop-fct-bc}
Let $(A, \Con, \vdash, \Delta)$ and $(A', \Con', \vdash', \Delta')$ be information systems with witnesses. If $(A', \Con', \vdash', \Delta')$ satisfies Condition~(\ref{bc}), so does $(A \rightarrow A', \Con_\rightarrow, \vdash_\rightarrow, \Delta_\rightarrow)$.
\end{proposition}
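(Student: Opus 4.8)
The plan is to exploit the fact that in Definition~\ref{dn-fctsp}(\ref{dn-fctsp-3-1},\ref{dn-fctsp-3-2}) a function-space witness enters the entailment $\vdash_\rightarrow$ only through its derived second component. Concretely, I would fix a finite set $\pair{\mathcal{W} | \mathcal{V}} \fsubset A \rightarrow A'$ together with two witnesses $\pair{\mathfrak{W} | \mathfrak{V}}$ and $\pair{\mathfrak{B} | \mathfrak{A}}$, so that $\pair{\mathcal{W} | \mathcal{V}} \in \Con_\rightarrow(\pair{\mathfrak{W} | \mathfrak{V}})$ and $\pair{\mathcal{W} | \mathcal{V}} \in \Con_\rightarrow(\pair{\mathfrak{B} | \mathfrak{A}})$. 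Condition~(\ref{bc}) for the function space asks that these two witnesses entail exactly the same tokens $\pair{\mathfrak{Z} | \mathfrak{Y}}$, and by symmetry it suffices to prove one implication. Inspecting Definition~\ref{dn-fctsp}(\ref{dn-fctsp-3-1},\ref{dn-fctsp-3-2}), one observes that the sets $\bigcup\mathcal{V}$, $\ap((e, Y), \bigcup\mathcal{V})$, $\ds((a, Z), \mathfrak{Y})$, $\rs((a, Z), \mathfrak{Y})$ and the existentially quantified token $k$ are all determined by the fixed set $\pair{\mathcal{W} | \mathcal{V}}$ and the target $\pair{\mathfrak{Z} | \mathfrak{Y}}$; the witness occurs solely as $\mathfrak{W}_2(\cdot,\cdot)$, respectively $\mathfrak{B}_2(\cdot,\cdot)$, on the left of the $\vdash'$-relations.

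The key step is then the claim that for every $(a, S) \in \Con$ both $\mathfrak{W}_2(a, S)$ and $\mathfrak{B}_2(a, S)$ are consistency witnesses in $A'$ of the \emph{same} finite set $\ap((a, S), \bigcup\mathcal{V})$, that is, $\ap((a, S), \bigcup\mathcal{V}) \in \Con'(\mathfrak{W}_2(a, S)) \cap \Con'(\mathfrak{B}_2(a, S))$. This is precisely the remark following Definition~\ref{dn-fctsp}, obtained by applying Condition~\ref{dn-fctsp}(\ref{dn-fctsp-2-2}) once with $r = \mathfrak{W}_2(a, S)$ and once with $r = \mathfrak{B}_2(a, S)$, using that by definition $\ap((a, S), \mathfrak{V}) \in \Con'(\mathfrak{W}_2(a, S))$ and $\ap((a, S), \mathfrak{A}) \in \Con'(\mathfrak{B}_2(a, S))$, so that each witness is related to itself under the respective partial equivalence. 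Crucially, $\bigcup\mathcal{V}$ is read off from the common set $\pair{\mathcal{W} | \mathcal{V}}$ and hence coincides for both witnesses.

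With this claim in hand the verification is mechanical. For Condition~\ref{dn-fctsp}(\ref{dn-fctsp-3-1}) the relevant $\vdash'$-premise is taken over $\ap((e, Y), \bigcup\mathcal{V})$, a set witnessed by both $\mathfrak{W}_2(e, Y)$ and $\mathfrak{B}_2(e, Y)$, so Condition~(\ref{bc}) in $A'$ gives $(\mathfrak{W}_2(e, Y), \ap((e, Y), \bigcup\mathcal{V})) \vdash' f$ iff $(\mathfrak{B}_2(e, Y), \ap((e, Y), \bigcup\mathcal{V})) \vdash' f$. For Condition~\ref{dn-fctsp}(\ref{dn-fctsp-3-2}) the premise is taken over $\bigcup\{\,\ap((e, Y), \bigcup\mathcal{V}) \mid (e, Y) \in \ds((a, Z), \mathfrak{Y})\,\}$, which is a subset of $\ap((a, Z), \bigcup\mathcal{V})$ and is therefore, by Axiom~\ref{dn-infsys}(\ref{dn-infsys-2}), likewise witnessed by both $\mathfrak{W}_2(a, Z)$ and $\mathfrak{B}_2(a, Z)$; Condition~(\ref{bc}) in $A'$ again equates the two entailments token-by-token, so the same $k$, with the unchanged requirement $b \sim k\,[\rs((a, Z), \mathfrak{Y})]$, serves both witnesses. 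Hence each clause transfers from $\pair{\mathfrak{W} | \mathfrak{V}}$ to $\pair{\mathfrak{B} | \mathfrak{A}}$, which establishes Condition~(\ref{bc}) for $A \rightarrow A'$.

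I expect the main obstacle to be conceptual rather than computational: recognising that $\vdash_\rightarrow$ depends on the witness only through the second-component map $\mathfrak{W}_2$, and that the two competing witnesses always witness one and the same $A'$-set $\ap((a, S), \bigcup\mathcal{V})$, so that Condition~(\ref{bc}) of $A'$ can be invoked pointwise. Once this is isolated, the downward-closure Axiom~\ref{dn-infsys}(\ref{dn-infsys-2}) handles the fact that the premise in \ref{dn-fctsp}(\ref{dn-fctsp-3-2}) is only a subset of $\ap((a, Z), \bigcup\mathcal{V})$, and nothing further is needed.
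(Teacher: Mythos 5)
Your proposal is correct and follows essentially the same route as the paper's proof: both reduce everything to the observation (via Condition~\ref{dn-fctsp}(\ref{dn-fctsp-2-2}) with $r$ the respective witness itself) that the two second-component witnesses $\mathfrak{W}_2(a,S)$ and $\mathfrak{B}_2(a,S)$ witness the common set $\ap((a,S),\bigcup\mathcal{V})$ in $A'$, pass to the relevant subsets by Axiom~\ref{dn-infsys}(\ref{dn-infsys-2}), and then invoke Condition~(\ref{bc}) in $A'$ pointwise in both clauses of $\vdash_\rightarrow$, with the existentially quantified $k$ carried over unchanged.
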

\begin{proof}
Let $\pair{\mathfrak{B} | \mathfrak{A}}, \pair{\mathfrak{W} | \mathfrak{V}} \in A \rightarrow A'$ and $\pair{\mathcal{G} | \mathcal{F}} \in \Con_\rightarrow(\pair{\mathfrak{B} | \mathfrak{A}}) \cap \Con_\rightarrow(\pair{\mathfrak{W} | \mathfrak{V}})$. We need to show that for any $\pair{\mathfrak{I} | \mathfrak{H}} \in A \rightarrow A'$, 
\begin{equation}\label{eq-fct-bc}
(\pair{\mathfrak{B} | \mathfrak{A}}, \pair{\mathcal{G} | \mathcal{F}}) \vdash_\rightarrow \pair{\mathfrak{I} | \mathfrak{H}}
\Longleftrightarrow 
(\pair{\mathfrak{W} | \mathfrak{V}}, \pair{\mathcal{G} | \mathcal{F}}) \vdash_\rightarrow \pair{\mathfrak{I} | \mathfrak{H}}.
\end{equation}

As a consequence of Axiom~\ref{dn-fctsp}(\ref{dn-fctsp-2-2}) we have that for any $(a, S) \in \Con$, 
\[
\ap((a, S), \bigcup \mathcal{F}) \in \Con'(\mathfrak{B}_2(a, S)) \cap \Con'(\mathfrak{W}_2(a, S)).
\]
 Then also 
\begin{equation*}
\bigcup \{\, \ap((e, Y), \bigcup \mathcal{F}) \mid (e, Y) \in \ds((a, S), \mathfrak{H}) \,\} \in  \\ \Con'(\mathfrak{B}_2(a, S)) \cap \Con'(\mathfrak{W}_2(a, S)). 
\end{equation*}
Because of Condition~(\ref{bc}) it follows for all $((e, Y), f) \in \mathfrak{H}$ that
\[
(\mathfrak{B}_2(e, Y), \ap((e, Y), \bigcup \mathcal{F})) \vdash' f
\Longleftrightarrow 
(\mathfrak{W}_2(e, Y), \ap((e, Y), \bigcup \mathcal{F})) \vdash' f,
\]
and for all $((c, Z), b) \in \mathfrak{I})$ and $k \in A'$ that
\begin{equation*}
(\mathfrak{B}_2(c, Z), \bigcup \{\, \ap((e, Y), \bigcup \mathcal{F}) \mid (e, Y) \in \ds((c, Z), \mathfrak{H}) \,\}) \vdash' \\
(k, \rs((c, Z), \mathfrak{H}))
\end{equation*}
\centerline{$\Updownarrow$}
 \begin{equation*}
(\mathfrak{W}_2(c, Z), \bigcup \{\, \ap((e, Y), \bigcup \mathcal{F}) \mid (e, Y) \in \ds((c, Z), \mathfrak{H}) \,\}) \vdash'  \\
(k, \rs((c, Z), \mathfrak{H})),
\end{equation*}
which implies the Statement~(\ref{eq-fct-bc}).
\end{proof}

Let $f \in |A \rightarrow A'|$. Then $f \subseteq A \rightarrow A'$, i.e., $f$ is a set of pairs in $\mathcal{P}(\Con \times A') \times \mathcal{P}_f(\Con \times A')$ such that the first component is an associate of the second. We will now show that the states of $A \rightarrow A'$ correspond to the approximable mappings between $A$ and $A'$ in a one-to-one way.

\begin{lemma}\label{lem-state-approx}
For $f \in |A \rightarrow A'|$, let $\am(f) = \bigcup \pr_2(f)$. Then $\am(f) : A \Vdash A'$.
\end{lemma}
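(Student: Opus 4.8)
The plan is to show that $\am(f)=\bigcup\pr_2(f)$ satisfies each of the nine conditions of Definition~\ref{dn-am}. Unwinding the definition, $(i,X)\,\am(f)\,b$ holds exactly when some token $\pair{\mathfrak{W} | \mathfrak{V}}\in f$ has $((i,X),b)\in\mathfrak{V}$. The three available tools are the defining properties of a state (Definition~\ref{dn-st}): finite consistency, closure under $\vdash_\rightarrow$, and derivability. By Proposition~\ref{pn-stsing} the first and third combine into the single rule~(\ref{st}), which for the function space reads: for every finite $\pair{\mathcal{G} | \mathcal{F}}\fsubset f$ there exist $\pair{\mathfrak{W} | \mathfrak{V}}\in f$ and $\pair{\mathcal{W} | \mathcal{V}}\fsubset f$ with $\pair{\mathcal{W} | \mathcal{V}}\in\Con_\rightarrow(\pair{\mathfrak{W} | \mathfrak{V}})$ and $(\pair{\mathfrak{W} | \mathfrak{V}},\pair{\mathcal{W} | \mathcal{V}})\vdash_\rightarrow\pair{\mathfrak{B} | \mathfrak{A}}$ for each $\pair{\mathfrak{B} | \mathfrak{A}}\in\pair{\mathcal{G} | \mathcal{F}}$.

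The central point to keep in mind is that neither $\vdash$ nor $\vdash'$ is reflexive, so the bare membership $((i,X),b)\in\mathfrak{V}$ does \emph{not} by itself produce an entailment $(\mathfrak{W}_2(i,X),\ap((i,X),\mathfrak{V}))\vdash' b$. The remedy, used throughout, is to route every such membership through~(\ref{st}): applying it to the finitely many tokens of $f$ that witness the hypotheses yields a common premise $\pair{\mathfrak{W} | \mathfrak{V}}\in f$ and a consistent $\pair{\mathcal{W} | \mathcal{V}}\fsubset f$, and then Condition~\ref{dn-fctsp}(\ref{dn-fctsp-3-1}) of $\vdash_\rightarrow$ delivers \emph{genuine} derivations $(\mathfrak{W}_2(i,X),\ap((i,X),\bigcup\mathcal{V}))\vdash' c$ for the relevant $c\in A'$. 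This gives a uniform template for the ``easy'' axioms: (i)~gather the witnessing tokens and invoke~(\ref{st}); (ii)~extract the $\vdash'$-derivations via~\ref{dn-fctsp}(\ref{dn-fctsp-3-1}); (iii)~manipulate them in $A'$; (iv)~assemble a conclusion token $\pair{\mathfrak{B} | \mathfrak{A}}$ containing the target pair, check $(\pair{\mathfrak{W} | \mathfrak{V}},\pair{\mathcal{W} | \mathcal{V}})\vdash_\rightarrow\pair{\mathfrak{B} | \mathfrak{A}}$, and conclude $\pair{\mathfrak{B} | \mathfrak{A}}\in f$ by closure of $f$ under $\vdash_\rightarrow$, so that the target pair lies in $\am(f)$. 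For the base case~\ref{dn-am}(\ref{dn-am-6}) I take any token of the nonempty $f$ and entail the token $\pair{\{((\Delta,\emptyset),\Delta')\} | \{((\Delta,\emptyset),\Delta')\}}$, using global truth~\ref{dn-infsys}(\ref{dn-infsys-3}). The right-composition rule~\ref{dn-am}(\ref{dn-am-1}) is the clearest instance of the template: after step~(ii) one has $(\mathfrak{W}_2(i,X),\ap(\cdots))\vdash'(k,Y)$ and $(k,Y)\vdash' b$, and Lemma~\ref{lem-strong6} yields $(\mathfrak{W}_2(i,X),\ap(\cdots))\vdash' b$. The left-composition rule~\ref{dn-am}(\ref{dn-am-3}) and the witness-transfer rule~\ref{dn-am}(\ref{dn-am-4}) reduce, after step~(ii), to transporting a derivation along $(i,X)\vdash X'$ or along $i\sim j\,[X]$: here Lemma~\ref{lem-clprop} (monotonicity of $\ap$), Lemma~\ref{lem-jextprop}(\ref{lem-jextprop-3}), Lemma~\ref{lem-witeqprop}(\ref{lem-witeqprop-1}), and Axioms~\ref{dn-infsys}(\ref{dn-infsys-5},\ref{dn-infsys-8}) do the work, with~\ref{dn-infsys}(\ref{dn-infsys-11}) supplying the auxiliary witness $e$ with $i\sim e\,[X']$ needed to place $(i,X')$ into $\Cl((i,X),\pr_1(\mathfrak{V}))$. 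Condition~\ref{dn-am}(\ref{dn-am-2}) I do not prove separately: it follows from~\ref{dn-am}(\ref{dn-am-3}) and~\ref{dn-am}(\ref{dn-am-5-1}) as noted after Definition~\ref{dn-am}.

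For the interpolation and conservativity conditions~\ref{dn-am}(\ref{dn-am-5-1},\ref{dn-am-5-2},\ref{dn-am-8},\ref{dn-am-7}) I streamline the bookkeeping using Lemmas~\ref{pn-amleft}, \ref{pn-amright}, and~\ref{pn-amint}: it suffices to establish the combined rule~(\ref{eq-amint}), from which all four follow. Concretely, from $(i,X)\,\am(f)\,F$ one again applies~(\ref{st}) and~\ref{dn-fctsp}(\ref{dn-fctsp-3-1}) to obtain $(\mathfrak{W}_2(i,X),\ap((i,X),\bigcup\mathcal{V}))\vdash' F$; feeding this into the Global Interpolation Property~(\ref{eq-gip}) of $A'$ produces an intermediate $(e,V)\in\Con'$, while~\ref{dn-infsys}(\ref{dn-infsys-11}) applied on the $A$-side produces the domain witness $(c,U)$, and these are exactly the data required by~(\ref{eq-amint}). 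Assembling the two conclusion tokens demands producing valid associates; for the singleton-style tokens arising here this is precisely Lemma~\ref{lem-singext}, and the aggregation of $\ap$-entailments over a spectrum is handled by Lemma~\ref{lem-extvdash}, mirroring the constructions already carried out in Lemmas~\ref{lem-fc10} and~\ref{lem-fc11}. Condition~\ref{dn-am}(\ref{dn-am-7}) then comes out of Lemma~\ref{pn-amleft} together with the conservativity Axiom~\ref{dn-infsys}(\ref{dn-infsys-9}) for $A$.

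The hard part will not be any single axiom but the recurring need to supply a bona fide conclusion token: whenever I want a pair in $\am(f)$, I must exhibit a token $\pair{\mathfrak{B} | \mathfrak{A}}$ that is $\vdash_\rightarrow$-entailed by $f$, and this forces me to construct an associate $\mathfrak{B}\in\ac(\mathfrak{A})$ meeting the $\mathfrak{V}$-maximality and layered witness-generation requirements of Definition~\ref{dn-ass}, and then to re-verify Condition~\ref{dn-fctsp}(\ref{dn-fctsp-3-2}) (the second clause of $\vdash_\rightarrow$) for it. The non-reflexivity of $\vdash$ and $\vdash'$ compounds this, since it rules out the naive shortcut of reading entailments off membership and makes the witness-equivalence bookkeeping via $\sim[\,\cdot\,]$ (Lemmas~\ref{lem-witeqprop},~\ref{lem-witeqam}) unavoidable at every transfer step. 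Keeping the $\ap$, $\Cl$, $\rs$, $\ds$, and $\mathfrak{W}_2$ data consistent across the premise token and the constructed conclusion token is therefore where the real effort lies, whereas the logical skeleton of each verification is the fixed four-step template described above.
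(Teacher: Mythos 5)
Your proposal is correct and takes essentially the same route as the paper's proof: the identical four-step template (gather the witnessing tokens and invoke~(\ref{st}), extract genuine $\vdash'$-derivations via Condition~\ref{dn-fctsp}(\ref{dn-fctsp-3-1}), manipulate them in $A'$, then assemble singleton conclusion tokens via Lemma~\ref{lem-singext} and conclude by closure of $f$ under $\vdash_\rightarrow$), together with the same reduction of Conditions~\ref{dn-am}(\ref{dn-am-5-1},\ref{dn-am-5-2},\ref{dn-am-8},\ref{dn-am-7}) to the combined rule~(\ref{eq-amint}) using the Global Interpolation Property on both sides and Lemma~\ref{lem-jextprop}(\ref{lem-jextprop-3}). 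The remaining differences are cosmetic: the paper proves \ref{dn-am}(\ref{dn-am-2}) in detail and treats (\ref{dn-am-3}),(\ref{dn-am-4}) as analogous, whereas you prove (\ref{dn-am-3}),(\ref{dn-am-4}) and recover (\ref{dn-am-2}) from the remark after Definition~\ref{dn-am}, and your explicit token construction for \ref{dn-am}(\ref{dn-am-6}) is, if anything, slightly more careful than the paper's one-line appeal to $\Delta_\rightarrow \in f$.
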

\begin{proof}
In order to show that $\am(f)$ is an approximable mapping we need to verify Conditions~(\ref{dn-am})(\ref{dn-am-6}-\ref{dn-am-7}).

(\ref{dn-am-6}) is a consequence of the fact that because of Axiom~\ref{dn-infsys}(\ref{dn-infsys-3}) and Condition~\ref{dn-st}(\ref{dn-st-2}), $\Delta_\rightarrow$ is contained in every state.

(\ref{dn-am-2}) Let $i \in A$, $X, X' \in \Con(i)$, and $b \in A'$ so that $X \subseteq X'$ and $(i, X)\am(f)b$. We need to show that $(i, X')\am(f)b$. 

Since $(i, X)\am(f) b$, there is some $\fcbael{B}{A} \in f$ with $((i, X), b) \in \mathfrak{A}$. Because of Condition~\ref{dn-st}(\ref{dn-st-3}) it follows that $(\fcbael{W}{V}, \fcbaset{W}{V}) \vdash_\rightarrow \fcbael{B}{A}$, for some $\fcbael{W}{V} \in f$ and $\fcbaset{W}{V} \fsubset f$ with $\fcbaset{W}{V} \in \Con_\rightarrow(\fcbael{W}{V})$. Thus, $(\mathfrak{W}_2(i, X), \ap((i, X), \bigcup\mathcal{V})) \vdash' b$.  By Lemma~\ref{lem-jextprop}(\ref{lem-jextprop-2}), we have that $\{ \mathfrak{W}_2(i, X) \} \in \Con'(\mathfrak{W}_2(i, X'))$. Moreover, $\ap((i, X), \bigcup\mathcal{V}) \subseteq \ap((i, X'), \bigcup\mathcal{V})$. Therefore,
\begin{equation}\label{eq-sa-1+}
(\mathfrak{W}_2(i, X'), \ap((i, X'), \bigcup\mathcal{V})) \vdash' b.
\end{equation}
Because of  Condition~\ref{dn-infsys}(\ref{dn-infsys-3}) we also obtain that 
\[
(\mathfrak{W}_2(i, X'), \ap((i, X'), \bigcup\mathcal{V})) \vdash' \Delta'.
\]
 Consequently, by Axiom~\ref{dn-infsys}(\ref{dn-infsys-11}), there is some $e \in A'$ with 
 \[(\mathfrak{W}_2(i, X'), \ap((i, X'), \bigcup\mathcal{V})) \vdash' e
 \]
  and $\{ b, \Delta' \} \in \Con'(e)$.

Now, set $\mathfrak{D} = \{ ((i, X'), b) \}$ and let $\mathfrak{E} \in \ac(\mathfrak{D}$ be as in Lemma~\ref{lem-singext}. Then Statement~(\ref{eq-sa-1+}) means that $(\fcbael{W}{V}, \fcbaset{W}{V}) \vdash_\rightarrow \fcbael{E}{D}$: for Condition~\ref{dn-fctsp}(\ref{dn-fctsp-3-2}) choose $k = e$ or $k = \Delta'$, respectively. By Condition~\ref{dn-st}(\ref{dn-st-2}) it follows that $\pair{\mathfrak{E} | \mathfrak{D}} \in f$. Hence $(i, X') \am(f) b$.

(\ref{dn-am-3}) and (\ref{dn-am-4})  follow similary.

(\ref{dn-am-1}) Let $(i, X) \in \Con$, $(k, Y) \in \Con'$ and $b \in A'$ so that $(i, X)\am(f) (k, Y)$ and $(k, Y) \vdash' b$. We must show that $(i, X)\am(f) b$.

Since $(i, X)\am(f) (k, Y)$, we have again that for all $c \in \{ k \} \cup Y$ there are tokens $\pair{\mathfrak{B}_c | \mathfrak{A}_c} \in f$ so that $((i, X), c) \in \mathfrak{A}_c$. Let $\pair{\mathcal{G} | \mathcal{F}} = \{\, \pair{\mathfrak{B}_c | \mathfrak{A})_c} \mid c \in \{ k \} \cup Y \, \}$. Because of Condition~(\ref{st}) there exist $\pair{\mathfrak{W} | \mathfrak{V}} \in f$ and $\pair{\mathcal{W} | \mathcal{V}} \fsubset f$ so that $(\pair{\mathfrak{W} | \mathfrak{V}}, \pair{\mathcal{W} | \mathcal{V}}) \vdash_\rightarrow \pair{\mathcal{G} | \mathcal{F}}$. It follows that
\[
(\mathfrak{W}_2(i, X), \ap((i, X), \bigcup \mathcal{V})) \vdash' (k, Y).
\]
Since $(k, Y) \vdash' b$, we have that $(\mathfrak{W}_2(i, X), \ap((i, X), \bigcup \mathcal{V})) \vdash' b$, from which it follows as above that  $(i, X) \am(f) b$.

It remains to verify Conditions~\ref{dn-am}(\ref{dn-am-5-1}, \ref{dn-am-5-2}, \ref{dn-am-8}, \ref{dn-am-7}). Because of Lemmas~\ref{pn-amleft}, \ref{pn-amright} and~\ref{pn-amint} it suffices to verify Requirement~(\ref{eq-amint}) instead.

 Let $(i, X) \in \Con$ and $F \fsubset A'$ with $(i, X)\am(f) F$. We have to show that there are $(c, U) \in \Con$ and $(e, V) \in \Con'$ so that $(i, X) \vdash (c, U)$, $(c, U)\am(f) (e, V)$ and $(e,V) \vdash' F$.

As we have already seen, there are $\pair{\mathfrak{W} | \mathfrak{V}} \in f$ and $\pair{\mathcal{W} | \mathcal{V}} \fsubset f$ with 
\[
(\mathfrak{W}_2(i, X), \ap((i, X), \bigcup \mathcal{V})) \vdash' F.
\]
Moreover, by the Global Interpolation Property~(\ref{eq-gip}), there exists $(c, U) \in \Con$ such that $(i, X) \vdash (c, U)$ and $(c, U) \vdash \ucl((i, X), \pr_1(\bigcup\mathcal{V}))$.  It follows with Lemma~\ref{lem-jextprop}(\ref{lem-jextprop-3}) that also $(\mathfrak{W}_2(c,  U), \ap((c, U), \bigcup \mathcal{V})) \vdash' F$. Apply the Global Interpolation Property again to obtain some $(e, V) \in \Con'$ so that $(\mathfrak{W}_2(c,  U), \ap((c, U), \bigcup \mathcal{V})) \vdash' (e, V)$ and $(e, V) \vdash' F$. Let $V \cup \{ e \} = \{ b_1, \ldots, b_n \}$, and for $1 \le \nu \le n$, set $\mathfrak{D}_\nu = \{ ((c, U), b_\nu) \}$. Moreover, let $\mathfrak{E}_\nu \in \ac(\mathfrak{D}_\nu)$ be as in Lemma~\ref{lem-singext}. Then $(\pair{\mathfrak{W} | \mathfrak{V}}, \pair{\mathcal{W} | \mathcal{V}}) \vdash_\rightarrow \pair{\mathfrak{E}_\nu | \mathfrak{D}_\nu}$. It follows that $\pair{\mathfrak{E}_\nu | \mathfrak{D}_\nu} \in f$ and hence that $(c, U)\am(f)(e, V)$.
\end{proof}

\begin{lemma}\label{lem-approx-state} 
For $\apmap{H}{A}{A'}$, let $\St(H)$ be the set of all $\pair{\mathfrak{B} | \mathfrak{A}}\in A \rightarrow A'$ such that the following two conditions hold:
\begin{enumerate}
\item \label{lem-approx-state-1}
$\mathfrak{A} \subseteq H$

\item \label{lem-approx-state-2}
$(\forall ((a, Z), b) \in \mathfrak{B}) (\exists j \in A')\, (a, Z) H j \wedge j \sim b \,[\rs((a, Z), \mathfrak{A})].$

\end{enumerate}
Then $\St(H) \in |A \rightarrow A'|$.
\end{lemma}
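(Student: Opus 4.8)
The plan is to show that $\St(H)$ meets the three defining conditions of a state (Definition~\ref{dn-st}), handling closure under entailment~(\ref{dn-st-2}) directly and finite consistency~(\ref{dn-st-1}) and derivability~(\ref{dn-st-3}) by separate constructions (or jointly via Proposition~\ref{pn-stsing}). Two translation observations, converting the data defining $\St(H)$ back into $H$, will do most of the work. \emph{First}, for any $\pair{\mathfrak{B} | \mathfrak{A}} \in \St(H)$ and any $(a, S) \in \Con$, condition~\ref{lem-approx-state}(\ref{lem-approx-state-1}) gives $\mathfrak{A} \subseteq H$, whence $(a, S) H \ap((a, S), \mathfrak{A})$: each $c$ in that set arises from $((i, X), c) \in \mathfrak{A}$ with $(i, X) \in \Cl((a, S), \pr_1(\mathfrak{A}))$, so $i \sim j \,[X]$ and $(a, S) \vdash (j, X)$ for some $j$, giving $(j, X) H c$ by Lemma~\ref{lem-witeqam} and $(a, S) H c$ by Lemma~\ref{lem-amstrong3}; the same holds with $\bigcup\mathcal{F}$ in place of $\mathfrak{A}$. \emph{Second}, applying condition~\ref{lem-approx-state}(\ref{lem-approx-state-2}) to $\mathfrak{B}(a, S) = ((e, T), \mathfrak{B}_2(a, S)) \in \mathfrak{B}$, together with $a \sim e \,[T]$, $(a, S) \vdash T$, Lemma~\ref{lem-witeqam}, and Definition~\ref{dn-am}(\ref{dn-am-3}), yields some $j_0$ with $(a, S) H j_0$ and $j_0 \sim \mathfrak{B}_2(a, S) \,[\ap((a, S), \mathfrak{A})]$.

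For closure~(\ref{dn-st-2}), suppose $\pair{\mathcal{G} | \mathcal{F}} \fsubset \St(H)$ lies in $\Con_\rightarrow(\pair{\mathfrak{W} | \mathfrak{V}})$ with $\pair{\mathfrak{W} | \mathfrak{V}} \in \St(H)$ and $(\pair{\mathfrak{W} | \mathfrak{V}}, \pair{\mathcal{G} | \mathcal{F}}) \vdash_\rightarrow \pair{\mathfrak{B} | \mathfrak{A}}$. To prove $\mathfrak{A} \subseteq H$, take $((e, Y), f) \in \mathfrak{A}$, so $(\mathfrak{W}_2(e, Y), \ap((e, Y), \bigcup\mathcal{F})) \vdash' f$ by \ref{dn-fctsp}(\ref{dn-fctsp-3-1}). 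The second observation gives $j_0$ with $(e, Y) H j_0$ and $j_0 \sim \mathfrak{W}_2(e, Y) \,[\ap((e, Y), \mathfrak{V})]$; I would extend this equivalence to $[\ap((e, Y), \bigcup\mathcal{F})]$ using \ref{dn-fctsp}(\ref{dn-fctsp-2-2}), transport the entailment along it by Lemma~\ref{lem-witeqprop}(\ref{lem-witeqprop-1}), and combine $(e, Y) H (j_0, \ap((e, Y), \bigcup\mathcal{F}))$ (first observation) with Definition~\ref{dn-am}(\ref{dn-am-1}) to get $(e, Y) H f$. Condition~\ref{lem-approx-state}(\ref{lem-approx-state-2}) for $\mathfrak{B}$ follows in the same spirit from \ref{dn-fctsp}(\ref{dn-fctsp-3-2}): one derives $(a, Z) H k$ for the witness $k$ furnished there and takes $j = k$, the equivalence $k \sim b$ holding by hypothesis.

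For finite consistency~(\ref{dn-st-1}), given $\pair{\mathcal{G} | \mathcal{F}} \fsubset \St(H)$ I would set $\mathfrak{V} = \bigcup\mathcal{F}$, so that $\mathfrak{V} \subseteq H$ and \ref{dn-fctsp}(\ref{dn-fctsp-2-2}) becomes trivial, and build an associate $\mathfrak{W} \in \ac(\mathfrak{V})$ whose witnesses are produced, level by level as in Definition~\ref{dn-ass}, by the Witness Generation Rule~\ref{dn-am}(\ref{dn-am-8}) applied to the relevant $H$-reachable sets, enlarged to contain the finitely many auxiliary witnesses $j_{\mathfrak{G}}$ supplied by the second observation for each $\pair{\mathfrak{G} | \mathfrak{F}}$. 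Here the enlargement stays finite because $\ap((a, S), \mathfrak{F})$ and $\mathfrak{G}_2(a, S)$ depend only on the $\mathfrak{V}$-spectrum of $(a, S)$, hence are constant across the inputs sharing a representative; the construction makes $(a, S) H \mathfrak{W}_2(a, S)$ (so $\pair{\mathfrak{W} | \mathfrak{V}} \in \St(H)$, its witnesses being $H$-reachable and $\sim$-reflexive) and forces $j_{\mathfrak{G}} \in \Con'(\mathfrak{W}_2(a, S))$, giving \ref{dn-fctsp}(\ref{dn-fctsp-2-1}) with $k = j_{\mathfrak{G}}$. For derivability~(\ref{dn-st-3}) I would instead replace each $((e, Y), f) \in \mathfrak{A}$ by an interpolant $(c_\nu, U_\nu) H (d_\nu, V_\nu)$ with $(e, Y) \vdash (c_\nu, U_\nu)$ and $(d_\nu, V_\nu) \vdash' f$ (Lemma~\ref{pn-amint}), let $\mathfrak{V}$ gather the pairs $((c_\nu, U_\nu), v)$ for $v \in \{d_\nu\} \cup V_\nu \subseteq H$, build an associate with $H$-reachable witnesses, and take the single premise $\{\pair{\mathfrak{W} | \mathfrak{V}}\}$ (consistent by Lemma~\ref{lem-fc1}); I would then verify $(\pair{\mathfrak{W} | \mathfrak{V}}, \{\pair{\mathfrak{W} | \mathfrak{V}}\}) \vdash_\rightarrow \pair{\mathfrak{B} | \mathfrak{A}}$ via \ref{dn-fctsp}(\ref{dn-fctsp-3-1},\ref{dn-fctsp-3-2}), Lemma~\ref{lem-jextprop}(\ref{lem-jextprop-2}), Definition~\ref{dn-infsys}(\ref{dn-infsys-8},\ref{dn-infsys-5},\ref{dn-infsys-11}), and a common-witness argument (Definition~\ref{dn-am}(\ref{dn-am-8})) establishing the compatibility $b \sim k$.

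The main obstacle I anticipate is precisely this inductive construction of the associate $\mathfrak{W}$: at each cardinality level the witness attached to a $\mathfrak{V}$-maximal subset must witness the prescribed set of Definition~\ref{dn-ass}(\ref{dn-ass-3-1-1}) — which refers back to the witnesses already chosen at lower levels — while staying $H$-reachable from its source and, in the consistency argument, also absorbing the extra representative values. Maintaining the $H$-reachability invariant through the induction, exactly as in the Claim inside the proof of Lemma~\ref{lem-fc11}, is the delicate point; once it is in place, everything else reduces to the two translation observations together with the interpolation and witness-equivalence lemmas already at hand.
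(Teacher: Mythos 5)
Your proposal is correct and takes essentially the same approach as the paper: the closure argument for Condition~\ref{dn-st}(\ref{dn-st-2}) and your two translation observations match the paper's proof step for step, and the hard part---constructing an associate whose witnesses are $H$-reachable and $\sim$-reflexive via the level-by-level induction of the Claim in Lemma~\ref{lem-fc11}, with Requirement~\ref{lem-approx-state}(\ref{lem-approx-state-2}) used in place of Condition~\ref{dn-fctsp}(\ref{dn-fctsp-3-2})---is exactly what the paper does. The only difference is organizational: the paper runs this construction once, on interpolants obtained via (\ref{eq-amint}), to establish the combined Condition~(\ref{st}), whereas you split it into a non-interpolated construction for \ref{dn-st}(\ref{dn-st-1}) (where taking $\mathfrak{V} = \bigcup\mathcal{F}$ indeed suffices, since no entailment has to be produced) and an interpolated one for \ref{dn-st}(\ref{dn-st-3}), so you pay for the cleaner consistency step by running the delicate induction twice.
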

\begin{proof}
In order to verify that $\St(H)$ is a state of $|A \rightarrow A'|$, we check Conditions~\ref{dn-st}(\ref{dn-st-2}) and (\ref{st}).

\ref{dn-st}(\ref{dn-st-2}) Let $\pair{\mathfrak{W} | \mathfrak{V}}, \pair{\mathfrak{B} | \mathfrak{A}} \in \Con_\rightarrow$ and $\pair{\mathcal{W} | \mathcal{V}} \fsubset \Con_\rightarrow(\pair{\mathfrak{W} | \mathfrak{V}})$ such that $\{ \pair{\mathfrak{W} | \mathfrak{V}} \} \cup \pair{\mathcal{W} | \mathcal{V}} \subseteq \St(H)$ and $(\pair{\mathfrak{W} | \mathfrak{V}}, \pair{\mathcal{W} | \mathcal{V}}) \vdash_\rightarrow \pair{\mathfrak{B} | \mathfrak{A}}$. We must show that  $\pair{\mathfrak{B} | \mathfrak{A}} \in \St(H)$.

For the first requirement let $((c, U), d) \in \mathfrak{A}$. Then we have that 
\[
(\mathfrak{W}_2(c, U), \ap((c, U), \bigcup\mathcal{V})) \vdash' d.
\]

By assumption, $\pair{\mathfrak{W} | \mathfrak{V}} \in \St(H)$. Thus, there is some $j \in A'$ such that $(c, U) H j$ and $j \sim \mathfrak{W}_2(c, U) \,[\ap((c, U), \mathfrak{V})]$. Since $\pair{\mathcal{W} | \mathcal{V}} \in \Con_\rightarrow(\pair{\mathfrak{W} | \mathfrak{V}})$, we conclude that  $j \sim \mathfrak{W}_2(c, U) \,[\ap((c, U), \bigcup\mathcal{V})]$. With Lemma~\ref{lem-witeqprop}(\ref{lem-witeqprop-1}) we thus obtain that also
\[
(j, \ap((c, U), \bigcup\mathcal{V})) \vdash' d.
\]

Now, let $b \in \ap((c, U), \bigcup\mathcal{V})$. Then there is $(e, Y) \in \Con$ so that $((e, Y), b) \in \bigcup\mathcal{V}$ and for some $e' \in A$ with $e \sim e' \,[Y]$, $(c, U) \vdash (e', Y)$. As $\bigcup \mathcal{V} \subseteq H$, it follows with Lemma~\ref{lem-witeqam} that $(e', Y) H b$ and hence with Lemma~\ref{lem-amstrong3} that $(c, U) H b$. This shows that 
\[
(c, U) H (j, \ap((c, U), \bigcup\mathcal{V})).
\]
Consequently, by Condition~\ref{dn-am}(\ref{dn-am-1}) $(c, U) H d$.

For the second requirement let $((a, Z), b) \in \mathfrak{B}$. By assumption, $(\pair{\mathfrak{W} | \mathfrak{V}}, \pair{\mathcal{W} | \mathcal{V}}) \vdash_\rightarrow \pair{\mathfrak{B} | \mathfrak{A}}$. Hence there is some $k \in A'$ such that $b \sim k \,[\rs((a, Z), \mathfrak{A})]$ and
\[
(\mathfrak{W}_2(a, Z), \ap((a, Z), \bigcup\mathcal{V})) \vdash' (k, \rs((a, Z), \mathfrak{A})),
\]
from which it follows as above that $(a, Z) H k$.

(\ref{st}) Let $\pair{\mathcal{G} | \mathcal{F}} \fsubset \St(H)$. We need to show that there are $\pair{\mathfrak{W} | \mathfrak{V}} \in \St(H)$ and $\pair{\mathcal{W} | \mathcal{V}} \fsubset \St(H)$ with $\pair{\mathcal{W} | \mathcal{V}} \in \Con_\rightarrow(\pair{\mathfrak{W} | \mathfrak{V}})$ so that $(\pair{\mathfrak{W} | \mathfrak{V}}, \pair{\mathcal{W} | \mathcal{V}}) \vdash_\rightarrow \pair{\mathcal{G} | \mathcal{F}}$. The construction proceeds as in Lemma~\ref{lem-fc11}.  Instead of Condition~\ref{dn-fctsp}(\ref{dn-fctsp-3-2}), however, we now make use of Requirement~\ref{lem-approx-state}(\ref{lem-approx-state-2}).

Let $\bigcup\mathcal{F} = \{\, ((c_\nu, T_\nu), e_\nu) \mid 1 \le \nu \le n \,\}$. Then $(c_\nu, T_\nu) H e_\nu$. By Rule~(\ref{eq-amint}) there are $(i_\nu, Y_\nu) \in \Con$ and $(j_\nu, Z_\nu) \in \Con'$ with
\begin{gather}
(c_\nu, T_\nu) \vdash (i_\nu, Y_\nu) \label{eq-amst-1}\\
(i_\nu, Y_\nu) H (j_\nu, Z_\nu) \label{eq-amst-2} \\
(j_\nu, Z_\nu) \vdash' e_\nu. \label{eq-amst-3}
\end{gather}
Set
\[
\mathfrak{V} = \{\, ((i_\nu, Y_\nu), d) \mid d \in Z_\nu \wedge 1 \le \nu \le n \,\}.
\]
Then $\mathfrak{V} \subseteq H$. We will next construct $\mathfrak{W} \in \ac(\mathfrak{V})$ such that $\pair{\mathfrak{W} | \mathfrak{V}} \in \St(H)$ and $(\pair{\mathfrak{W} \ \mathfrak{V}}, \{ \pair{\mathfrak{W} \ \mathfrak{V}} \}) \vdash_\rightarrow \pair{\mathcal{G} | \mathcal{F}}$. If $\bigcup\mathcal{F}$ is empty, set $\pair{\mathfrak{W} | \mathfrak{V}} = \Delta_\rightarrow$. 

Let us now assume that $\bigcup\mathcal{F}$ is non-empty and that, for each $J \subseteq \{ 1, \ldots, n \}$, $R(\{\, (i_\nu, Y_\nu) \mid \nu \in J \,\})$ is a system of representatives of $W(\{\,(i_\nu, Y_\nu) \mid \nu \in J\,\})$ with respect to $\sim [\bigcup\nolimits_{\nu \in J} Y_\nu]$ so that $R(\emptyset) = \{ \Delta \}$ und $R(\{ (i_\nu, Y_\nu)\}) = \{ i_\nu\}$, for $1 \le \nu \le n$. By Condition~\ref{lem-approx-state}(\ref{lem-approx-state-2}) there is some $j_{a, Z, \pair{\mathfrak{G} |\mathfrak{F}}} \in A'$, for each $\pair{\mathfrak{G} | \mathfrak{F}} \in \pair{\mathcal{G} | \mathcal{F}}$ and $((a, Z), b) \in  \mathfrak{G}$, with $j_{a, Z, \pair{\mathfrak{G} | \mathfrak{F}}} \sim b \,[\rs((a, Z), \mathfrak{F})]$ and $(a, Z) H j_{a, Z, \pair{\mathfrak{G} | \mathfrak{F}}}$. In case  $((c, Z), b) = ((\Delta, \emptyset), \Delta')$, choose $j_{\Delta, \emptyset, \pair{\mathfrak{G} |\mathfrak{F}}} = \Delta'$.
 
 \begin{claim}\, \label{cl-amcl}
 For every $\mathfrak{J} \subseteq \mathfrak{V}$ and each $a \in R(\pr_1(\mathfrak{J}))$ so that $(a, \mathfrak{J})$ is $\mathfrak{V}$-maximal there is some $t_{a, \mathfrak{J}} \in A'$ with the following two properties:
\begin{align}\label{eq-amst-4}
\{ \Delta' &\} \cup \bigcup \{\, \{ k_\nu \} \cup Z_\nu \mid 1 \le \nu \le n \wedge (i_\nu, Y_\nu) \in \pr_1(\mathfrak{J}) \,\} \cup \mbox{} \notag \\
&\{\, t_{c, \mathfrak{K}} \mid \mathfrak{K} \subsetneqq \mathfrak{J} \wedge c \sim a \,[\bigcup \pr_{2, 1}(\mathfrak{K})] \wedge \text{$(c, \mathfrak{K})$ $\mathfrak{V}$-maximal} \,\} \cup \{\, j_{c, Z, \pair{\mathfrak{G} | \mathfrak{F}}} \mid \notag \\
&  (\exists L \subseteq \{ 1, \ldots, n \})(\exists b \in A') \{\, (i_\mu, Y_\mu) \mid \mu \in L \,\} \subseteq \pr_1(\mathfrak{J}) \wedge Z = \notag \\
&  \bigcup\nolimits_{\mu \in L} T_\mu \wedge c \sim a \,[Z] \wedge \pair{\mathfrak{G} | \mathfrak{F}} \in \pair{\mathcal{G} | \mathcal{F}} \wedge ((c, Z), b) \in \mathfrak{G} \wedge \mbox{} \notag \\
&b \sim j_{c, Z, \pair{\mathfrak{G} | \mathfrak{F}}} \,[\rs((c, Z), \mathfrak{F})] \,\} \in \Con'(t_{a, \mathfrak{J}}),
\end{align}
\begin{align}\label{eq-amst-5}
(a, &\bigcup \pr_{2, 1}(\mathfrak{J})) H (t_{a, \mathfrak{J}}, \{ \Delta' \} \cup \bigcup \{\, \{ k_\nu \} \cup Z_\nu \mid 1 \le \nu \le n \wedge (i_\nu, Y_\nu) \in \notag \\
& \pr_1(\mathfrak{J}) \,\} \cup 
\{\, t_{c, \mathfrak{K}} \mid \mathfrak{K} \subsetneqq \mathfrak{J} \wedge c \sim a \,[\bigcup \pr_{2, 1}(\mathfrak{K})] \wedge \text{$(c, \mathfrak{K})$ $\mathfrak{V}$-maximal} \,\} \cup \mbox{}  \notag \\
& \{\, j_{c, Z, \pair{\mathfrak{G} | \mathfrak{F}}} \mid (\exists L \subseteq \{ 1, \ldots, n \})(\exists b \in A') \{\, (i_\mu, Y_\mu) \mid \mu \in L \,\} \subseteq \pr_1(\mathfrak{J}) \wedge \mbox{} \notag \\
& Z = \bigcup\nolimits_{\mu \in L} T_\mu \wedge c \sim a \,[Z] \wedge \pair{\mathfrak{G} | \mathfrak{F}} \in \pair{\mathcal{G} | \mathcal{F}} \wedge ((c, Z), b) \in \mathfrak{G} \wedge \mbox{} \notag \\
& b \sim j_{c, Z, \pair{\mathfrak{G} | \mathfrak{F}}} \,[\rs((c, Z), \mathfrak{F})] \,\}).
\end{align}
\end{claim}

With respect to the last set on the right hand side note that if $((c, Z), b) \in \mathfrak{G}$, then there is some $\mathfrak{L} \subseteq \mathfrak{F}$ so that $(c, \mathfrak{L})$ is $\mathfrak{F}$-maximal and $Z= \bigcup \pr_{2,1}(\mathfrak{L})$. It follows that $\| \mathfrak{L} \| \le \| \mathfrak{J} \|$.

The claim is shown by induction on the cardinality $\kappa$ of the subset $\mathfrak{J}$. The case $\kappa = 0$ is obvious. Set $t_{a, \emptyset} = \Delta'$, if $(a, \emptyset)$ is $\mathfrak{V}$-maximal. All other sets on the left hand side in Statement~(\ref{eq-amst-4}) are empty in this case, except the last one, which is the singleton $\{ \Delta' \}$, if for some $ \pair{\mathfrak{G} | \mathfrak{F}} \in \pair{\mathcal{G} | \mathcal{F}}$, $((\Delta, \emptyset), \Delta') \in \mathfrak{G}$. 

Assume next that the claim holds for all subsets $\mathfrak{K} \subseteq \mathfrak{V}$ of cardinality $\kappa$ and let $\mathfrak{J} \subseteq \mathfrak{V}$ with $\| \mathfrak{J} \| = \kappa + 1$. Then 
\begin{align*}
(a, &\bigcup \pr_{2, 1}(\mathfrak{J})) H \{ \Delta' \} \cup \bigcup \{\, \{ k_\nu \} \cup Z_\nu \mid 1 \le \nu \le n \wedge (i_\nu, Y_\nu) \in \pr_1(\mathfrak{J}) \,\} \cup \mbox{} \notag \\
&\{\, t_{c, \mathfrak{K}} \mid \mathfrak{K} \subsetneqq \mathfrak{J} \wedge c \sim a \,[\bigcup \pr_{2, 1}(\mathfrak{K})] \wedge \text{$(c, \mathfrak{K})$ $\mathfrak{V}$-maximal} \,\} \cup \{\, j_{c, Z, \pair{\mathfrak{G} | \mathfrak{F}}} \mid  \notag \\
& (\exists L \subseteq \{ 1, \ldots, n \})(\exists b \in A') \{\, (i_\mu, Y_\mu) \mid \mu \in L \,\} \subseteq \pr_1(\mathfrak{J}) \wedge Z = \notag \\
& \bigcup\nolimits_{\mu \in L} T_\mu \wedge c \sim a \,[Z] \wedge \pair{\mathfrak{G} | \mathfrak{F}} \in \pair{\mathcal{G} | \mathcal{F}} \wedge ((c, Z), b) \in \mathfrak{G} \wedge \mbox{} \notag \\
 &b \sim j_{c, Z, \pair{\mathfrak{G} | \mathfrak{F}}} \,[\rs((c, Z), \mathfrak{F})] \,\},
\end{align*}
because of Statement~(\ref{eq-amst-2}), Condition~\ref{lem-approx-state}(\ref{lem-approx-state-2}), and the induction hypothesis. By Condition~\ref{dn-am}(\ref{dn-am-8}) there is hence some $t_{a, \mathfrak{J}} \in A'$ so that Properties~(\ref{eq-amst-4}) and (\ref{eq-amst-5}) hold.

With help of Claim~\ref{cl-amcl} we can now define $\mathfrak{W}$. Let $\mathfrak{W}^{(0)}$ be as in Definition~\ref{dn-ass} and for $\kappa \ge 1$ set
\begin{equation*}
\widehat{\mathfrak{W}}^{(\kappa)} = \{\, ((a, \bigcup \pr_{2, 1}(\mathfrak{J}), t_{a, \mathfrak{J}}) \mid \mathfrak{J} \subseteq \mathfrak{V} \wedge \mbox{} \\
 \| \mathfrak{J} \| = \kappa \wedge a \in R(\pr_1(\mathfrak{J})) \wedge \text{$(a, \mathfrak{J})$ $\mathfrak{V}$-maximal} \,\}. 
\end{equation*}
Obviously, $\mathfrak{W} \in \ac(\mathfrak{V})$. 

\begin{claim}\,
$\pair{\mathfrak{W} | \mathfrak{V}} \in St(H)$.
\end{claim}

By definition, $\mathfrak{V} \subseteq H$. For Requirement~\ref{lem-approx-state}(\ref{lem-approx-state-2}) let $(a, Z), b) \in \mathfrak{W}$. Then there is some subset $\mathfrak{J} \subseteq \mathfrak{V}$ so that $(a, \mathfrak{J})$ is $\mathfrak{V}$-maximal, $a \in R(\pr_1(\mathfrak{J}))$, $Z = \bigcup \pr_{2,1}(\mathfrak{J})$ and $b = t_{a, \mathfrak{J}}$. Hence, $\pr_2(\mathfrak{J}) \in \Con'(b)$ and $(a, Z) H b$. Note that $\pr_2(\mathfrak{J}) = \rs((a, Z), \mathfrak{V})]$. Thus, $b \sim b \,[\rs((a, Z), \mathfrak{J})]$.

\begin{claim}\,
$(\pair{\mathfrak{W} | \mathfrak{V}}, \{ \pair{\mathfrak{W} | \mathfrak{V}} \}) \vdash_\rightarrow \pair{\mathcal{G} | \mathcal{F}}$.
\end{claim}

Let $1 \le \nu \le n$, Then $Z_\nu \subseteq \ap((c_\nu, T_\nu), \mathfrak{V})$ and $\{ k_\nu \} \in \Con'(\mathfrak{W}_2(c_\nu, T_\nu))$. As a consequence of Statement~(\ref{eq-amst-3}) we therefore have that $(\mathfrak{W}_2(c_\nu, T_\nu), \ap((c_\nu, T_\nu), \mathfrak{V})) \vdash' e_\nu$, which means that Condition~\ref{dn-fctsp}(\ref{dn-fctsp-3-1}) holds.

For Condition~\ref{dn-fctsp}(\ref{dn-fctsp-3-2}), let $\pair{\mathfrak{G} | \mathfrak{F}} \in \pair{\mathcal{G} | \mathcal{F}}$ and $((a, Z), b) \in \mathfrak{G}$. As a consequence of Condition~\ref{dn-fctsp}(\ref{dn-fctsp-3-1}) we have that 
$
(\mathfrak{W}_2(a, Z), \ap((a, Z), \mathfrak{V})) \vdash' \rs((a, Z), \mathfrak{F}).
$
Because of Condition~\ref{dn-infsys}(\ref{dn-infsys-11}) there is thus some $k \in A'$ so that $(\mathfrak{W}_2(a, Z), \ap((a, Z), \mathfrak{V})) \vdash' k$ and  
\begin{equation} \label{eq-amst-6}
\rs((a, Z), \mathfrak{F}) \in \Con'(k).
\end{equation} 
It follows that 
\begin{equation}\label{eq-amst-7}
\{ k \} \in \Con'(\mathfrak{W}_2(a, Z)).
\end{equation}

Let $\mathfrak{J} = \{\, ((i, Y), d) \in \mathfrak{V} \mid (i, Y) \in \Cl((a, Z), \pr_1(\mathfrak{V})) \,\}$ and $\mathfrak{W}(a, Z) = ((e, U), t)$. Then $(e, \mathfrak{J})$ is $\mathfrak{V}$-maximal, $e \sim a \,[U]$, $U = \bigcup\pr_{2,1}(\mathfrak{J})$, and $\mathfrak{W}_2(a, Z) = t_{e, \mathfrak{J}}$. Since $((a, Z), b) \in \mathfrak{G}$, there is some $\mathfrak{K} \subseteq \mathfrak{F}$ such that $Z = \bigcup \pr_{2,1}(\mathfrak{K})$ and $\{\, (i_\mu, Y_\mu) \mid 1 \le \mu \le n \wedge (c_\mu, T_\mu) \in \pr_1(\mathfrak{K}) \,\} \subseteq \pr_1(\mathfrak{J})$. It follows that 
\begin{equation}\label{eq-amst-8}
\{ j_{a, Z, \pair{\mathfrak{G} | \mathfrak{F}}} \} \in \Con'(\mathfrak{W}_2(a, Z)).
\end{equation}
Moreover, 
\begin{equation}\label{eq-amst-9}
b \sim  j_{a, Z, \pair{\mathfrak{G} | \mathfrak{F}}} \,[\rs((a, Z), \mathfrak{F})],
\end{equation}
since $\pair{\mathfrak{G} | \mathfrak{F}} \in \St(H)$, by assumption. As a consequence of Statements~(\ref{eq-amst-6}--\ref{eq-amst-9}) we obtain that $b \sim k \,[\rs((a, Z), \mathfrak{F})]$, as was be shown.
\end{proof}

In the next two lemmas we verify that the operators $\am$ and $\St$ between $| A \rightarrow A' |$ and the set of all approximable mappings between $A$ and $A'$ are inverse to each other.

\begin{lemma}\label{lem-amtost}
For $f \in | A \rightarrow A' |$, $\St(\am(f)) = f$.
\end{lemma}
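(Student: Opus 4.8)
The plan is to set $H = \am(f) = \bigcup \pr_2(f)$, which is an approximable mapping by Lemma~\ref{lem-state-approx}, and to prove the two inclusions $f \subseteq \St(H)$ and $\St(H) \subseteq f$ separately. Throughout I would exploit that $f$ is a state, hence closed under $\vdash_\rightarrow$ by~\ref{dn-st}(\ref{dn-st-2}) and derivable in the sense of~(\ref{st}); most steps reduce to the single-step construction of Lemma~\ref{lem-singext} together with entailment-closure, exactly as in the proof of Lemma~\ref{lem-state-approx}.

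For $f \subseteq \St(H)$, let $\pair{\mathfrak{B} | \mathfrak{A}} \in f$ and verify the two defining conditions of $\St(H)$. Condition~\ref{lem-approx-state}(\ref{lem-approx-state-1}) is immediate, since $\mathfrak{A} = \pr_2(\pair{\mathfrak{B} | \mathfrak{A}}) \subseteq \bigcup \pr_2(f) = H$. For~\ref{lem-approx-state}(\ref{lem-approx-state-2}), fix $((a, Z), b) \in \mathfrak{B}$; by~(\ref{st}) there are $\pair{\mathfrak{W} | \mathfrak{V}} \in f$ and $\pair{\mathcal{W} | \mathcal{V}} \fsubset f$ with $\pair{\mathcal{W} | \mathcal{V}} \in \Con_\rightarrow(\pair{\mathfrak{W} | \mathfrak{V}})$ and $(\pair{\mathfrak{W} | \mathfrak{V}}, \pair{\mathcal{W} | \mathcal{V}}) \vdash_\rightarrow \pair{\mathfrak{B} | \mathfrak{A}}$, so Condition~\ref{dn-fctsp}(\ref{dn-fctsp-3-2}) yields $k \in A'$ with $b \sim k \,[\rs((a, Z), \mathfrak{A})]$ and, after enlarging the antecedent to $\ap((a, Z), \bigcup \mathcal{V})$ via~\ref{dn-infsys}(\ref{dn-infsys-5}), $(\mathfrak{W}_2(a, Z), \ap((a, Z), \bigcup \mathcal{V})) \vdash' k$. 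Forming the single-step token of Lemma~\ref{lem-singext} for $((a, Z), k)$ (whose consistency witness comes from~\ref{dn-infsys}(\ref{dn-infsys-11})), it is entailed by $(\pair{\mathfrak{W} | \mathfrak{V}}, \pair{\mathcal{W} | \mathcal{V}})$, hence lies in $f$ by~\ref{dn-st}(\ref{dn-st-2}), giving $(a, Z) H k$; taking $j = k$ settles~\ref{lem-approx-state}(\ref{lem-approx-state-2}).

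For the converse inclusion $\St(H) \subseteq f$, let $\pair{\mathfrak{B} | \mathfrak{A}} \in \St(H)$. Since $f$ is entailment-closed, it suffices to produce $\pair{\mathfrak{W} | \mathfrak{V}} \in f$ and $\pair{\mathcal{W} | \mathcal{V}} \fsubset f$ with $\pair{\mathcal{W} | \mathcal{V}} \in \Con_\rightarrow(\pair{\mathfrak{W} | \mathfrak{V}})$ such that $(\pair{\mathfrak{W} | \mathfrak{V}}, \pair{\mathcal{W} | \mathcal{V}}) \vdash_\rightarrow \pair{\mathfrak{B} | \mathfrak{A}}$. The plan is to collect, for each pair of $\mathfrak{A}$ (which lies in $H$ by~\ref{lem-approx-state}(\ref{lem-approx-state-1})) and for each witness $j$ supplied by~\ref{lem-approx-state}(\ref{lem-approx-state-2}), a token of $f$ whose second component contains it, forming a finite $\pair{\mathcal{G} | \mathcal{F}} \fsubset f$, and then apply~(\ref{st}) to $\pair{\mathcal{G} | \mathcal{F}}$ to obtain a uniform antecedent $\pair{\mathfrak{W} | \mathfrak{V}}, \pair{\mathcal{W} | \mathcal{V}}$ with $(\pair{\mathfrak{W} | \mathfrak{V}}, \pair{\mathcal{W} | \mathcal{V}}) \vdash_\rightarrow \pair{\mathcal{G} | \mathcal{F}}$. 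The subtle point, which forces the collection to be done carefully, is that entailment on $A$ need not be reflexive, so $g$ need not lie in $\ap((e, Y), \bigcup \mathcal{F})$ merely because $((e, Y), g) \in \bigcup \mathcal{F}$. To circumvent this for Condition~\ref{dn-fctsp}(\ref{dn-fctsp-3-1}), I would apply the interpolation rule~(\ref{eq-amint}) to $(e, Y) H g$, obtaining $(c, U) \in \Con$ and $(d, V) \in \Con'$ with $(e, Y) \vdash (c, U)$, $(c, U) H (d, V)$ and $(d, V) \vdash' g$, and include in $\mathcal{F}$ tokens of $f$ for the pairs $((c, U), v)$ with $v \in V \cup \{ d \}$; then $(c, U) \in \Cl((e, Y), \pr_1(\bigcup \mathcal{F}))$ because $(e, Y) \vdash (c, U)$, so $V \cup \{ d \} \subseteq \ap((e, Y), \bigcup \mathcal{F})$, and Lemma~\ref{lem-extvdash} gives $(\mathfrak{W}_2(e, Y), \ap((e, Y), \bigcup \mathcal{V})) \vdash' (d, V)$, whence $\vdash' g$ by Lemma~\ref{lem-strong6}. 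For Condition~\ref{dn-fctsp}(\ref{dn-fctsp-3-2}), fix $((a, Z), b) \in \mathfrak{B}$; running the same argument on each member of $\ds((a, Z), \mathfrak{A})$ and lifting to the witness $\mathfrak{W}_2(a, Z)$ via Lemma~\ref{lem-jextprop}(\ref{lem-jextprop-2}) and~\ref{dn-infsys}(\ref{dn-infsys-8}) gives $(\mathfrak{W}_2(a, Z), \bigcup \{\, \ap((e, Y), \bigcup \mathcal{V}) \mid (e, Y) \in \ds((a, Z), \mathfrak{A}) \,\}) \vdash' \rs((a, Z), \mathfrak{A})$, and then~\ref{dn-infsys}(\ref{dn-infsys-11}) produces a witness $k'$ for $\rs((a, Z), \mathfrak{A})$ with the same antecedent. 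Finally I must match $b$ with $k'$: the witness $j$ from~\ref{lem-approx-state}(\ref{lem-approx-state-2}) satisfies $(a, Z) H j$, so the same interpolation-plus-Lemma~\ref{lem-extvdash} route gives $\{ j \} \in \Con'(\mathfrak{W}_2(a, Z))$, while $\{ k' \} \in \Con'(\mathfrak{W}_2(a, Z))$ and $\rs((a, Z), \mathfrak{A}) \in \Con'(j) \cap \Con'(k')$; hence $j \sim k' \,[\rs((a, Z), \mathfrak{A})]$, and combined with $j \sim b \,[\rs((a, Z), \mathfrak{A})]$ the partial-equivalence calculus of Lemma~\ref{lem-witeqprop} yields $b \sim k' \,[\rs((a, Z), \mathfrak{A})]$. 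Thus $(\pair{\mathfrak{W} | \mathfrak{V}}, \pair{\mathcal{W} | \mathcal{V}}) \vdash_\rightarrow \pair{\mathfrak{B} | \mathfrak{A}}$, and~\ref{dn-st}(\ref{dn-st-2}) places $\pair{\mathfrak{B} | \mathfrak{A}}$ in $f$.

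I expect the main obstacle to be precisely the second inclusion: first, realizing that one cannot use $(e, Y)$ directly inside the closures and apices (non-reflexivity of $\vdash$) and must interpolate on $H$ via~(\ref{eq-amint}) to bring the relevant targets into $\ap((e, Y), \bigcup \mathcal{F})$; and second, the witness bookkeeping in Condition~\ref{dn-fctsp}(\ref{dn-fctsp-3-2}), where the shared witness $k'$ must be shown $\rs((a, Z), \mathfrak{A})$-equivalent to $b$ through the common witness $\mathfrak{W}_2(a, Z)$ using the $\sim[\cdot]$-calculus of Lemma~\ref{lem-witeqprop}. Everything else is routine application of the single-step construction (Lemma~\ref{lem-singext}), the aggregation lemma (Lemma~\ref{lem-extvdash}), and the two state axioms~\ref{dn-st}(\ref{dn-st-2}) and~(\ref{st}).
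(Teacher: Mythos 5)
Your proof is correct, and although it follows the paper's overall skeleton (both inclusions are reduced, via the state axioms \ref{dn-st}(\ref{dn-st-2}) and (\ref{st}), to checking the two clauses of Definition~\ref{dn-fctsp}(\ref{dn-fctsp-3})), the arguments inside both halves genuinely differ from the paper's. For $f \subseteq \St(\am(f))$ the paper does not, as you do, push the witness $k$ from \ref{dn-fctsp}(\ref{dn-fctsp-3-2}) directly into $f$ via the single-step token of Lemma~\ref{lem-singext}; instead it manufactures a second witness $j$ by applying \ref{dn-am}(\ref{dn-am-8}) to $\am(f)$ and then runs a lengthy $\sim[\cdot]$-chain (through an auxiliary upper-bound token supplied by (\ref{st}) together with Conditions~\ref{dn-fctsp}(\ref{dn-fctsp-2-1},\ref{dn-fctsp-2-2})) to relate $j$, $k$ and $\mathfrak{W}_2(a,Z)$; your route is shorter and sound, since the entailment $(\pair{\mathfrak{W}|\mathfrak{V}},\pair{\mathcal{W}|\mathcal{V}}) \vdash_\rightarrow \pair{\mathfrak{E}|\mathfrak{D}}$ is verified exactly as in the paper's own proof of Lemma~\ref{lem-state-approx}. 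For the converse inclusion your interpolation detour via (\ref{eq-amint}), Lemma~\ref{lem-extvdash} and Lemma~\ref{lem-strong6} is sound but avoidable: the paper simply picks, for each $((c_\nu,T_\nu),e_\nu)\in\mathfrak{A}$, a token $\pair{\mathfrak{E}^{(\nu)}|\mathfrak{D}^{(\nu)}}\in f$ containing it, applies (\ref{st}) to these, and reads $(\mathfrak{W}_2(c_\nu,T_\nu),\ap((c_\nu,T_\nu),\bigcup\mathcal{V}))\vdash' e_\nu$ directly off clause \ref{dn-fctsp}(\ref{dn-fctsp-3-1}) of the resulting entailment --- that clause quantifies over all pairs in $\mathfrak{D}^{(\nu)}$, so the non-reflexivity obstacle you circumvent (that $e_\nu$ need not lie in $\ap((c_\nu,T_\nu),\bigcup\mathcal{F})$) never arises on the paper's route. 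On the other hand, your explicit inclusion, in the set to which (\ref{st}) is applied, of tokens for the witnesses $j$ coming from \ref{lem-approx-state}(\ref{lem-approx-state-2}) is more careful than the paper, which obtains $(\mathfrak{W}_2(a,Z),\ap((a,Z),\bigcup\mathcal{V}))\vdash' j$ with an ``as above'' that tacitly assumes such tokens are covered by the earlier (\ref{st}) application; the closing $\sim[\cdot]$-bookkeeping that matches $b$ with the \ref{dn-infsys}(\ref{dn-infsys-11})-witness is essentially the same in both proofs.
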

\begin{proof}
Let $\pair{\mathfrak{B} | \mathfrak{A}} \in f$. Then $\mathfrak{A} \subseteq \bigcup \pr_2(f)$. Thus, $\mathfrak{A} \subseteq \am(f)$, i.e., Condition~\ref{lem-approx-state}(\ref{lem-approx-state-1}) holds.

For Condition~\ref{lem-approx-state}(\ref{lem-approx-state-2}), let $((a, Z), b) \in \mathfrak{B}$. Since $\pair{\mathfrak{B} | \mathfrak{A}} \in f$, there are $\pair{\mathfrak{W} | \mathfrak{V}} \in f$ and $\pair{\mathcal{W} | \mathcal{V}} \fsubset f$ with $\pair{\mathcal{W} | \mathcal{V}} \in \Con_\rightarrow(\pair{\mathfrak{W} | \mathfrak{V}})$ and $(\pair{\mathfrak{W} | \mathfrak{V}}, \pair{\mathcal{W} | \mathcal{V}}) \vdash_\rightarrow \pair{\mathfrak{B} | \mathfrak{A}}$. It follows by Condition~\ref{dn-fctsp}(\ref{dn-fctsp-3-2}) that there is some $k \in A'$ such that 
\begin{equation}\label{eq-amtost-1}
k \sim b \,[\rs((a, Z), \mathfrak{A})] \,\,\text{and}\,\, (\mathfrak{W}_2(a, Z), \ap((a, Z), \bigcup \mathcal{V})) \vdash' (k, \rs((a, Z), \mathfrak{A})).
\end{equation}
Let $((i, X), c) \in \bigcup \mathcal{V}$ such that $(i, X) \in \Cl((a, Z), \pr_1(\bigcup \mathcal{V}))$. Then $(i, X) \am(f) c$, as $\mathcal{V} \subseteq \pr_2(f)$ and hence $\bigcup \mathcal{V} \subseteq \bigcup\pr_2(\mathcal{V}) = \am(f)$. So, we have that 
\begin{equation}\label{eq-amtost-2}
(a, Z) \am(f) \ap((a, Z), \bigcup \mathcal{V}).
\end{equation}
By Condition~\ref{dn-am}(\ref{dn-am-8}) there is now some $j \in A'$ such that 
\begin{equation}\label{eq-amtost-3}
(a, Z) \am(f) j
\end{equation}
 and $\ap((a, Z), \bigcup\mathcal{V}) \in \Con'(j)$. Furthermore, with Lemma~\ref{pn-amleft}, we obtain some some $(c, U) \in \Con$ such that $(a, Z) \vdash (c, U)$ and $(c, U) \am(f) j$. It follows that there is some $\fcbael{E}{D} \in f$ with $((c, U), j) \in \mathfrak{D}$. Moreover, since $f$ is a state, there exists $\fcbael{G}{F} \in f$ with $\{ \fcbael{W}{V}, \fcbael{E}{D} \} \Con_\rightarrow(\fcbael{G}{F})$. Because of Condition~\ref{dn-fctsp}(\ref{dn-fctsp-2-1}) there are thus $e_{\fcbael{W}{V}}, e_{\fcbael{E}{D}} \in A'$ so that
\begin{gather}
\mathfrak{W}_2(a, Z) \sim e_{\fcbael{W}{V}} \,[\ap((a, Z), \mathfrak{V})] \,\,\text{and}\,\, \{ e_{\fcbael{W}{V}} \} \in \Con'(\mathfrak{G}_2(a, Z)) \label{eq-amtost-3-1} \\
\mathfrak{E}_2(a, Z) \sim e_{\fcbael{E}{D}} \,[\ap((a, Z), \mathfrak{D}] \,\,\text{and}\,\, \{ e_{\fcbael{E}{D}} \} \in \Con'(\mathfrak{G}_2(a, Z)) \label{eq-amtost-3-2}
\end{gather}

Note that by construction $j \in \ap((a, Z), \mathfrak{D})$. With Lemma~\ref{lem-witeqprop}(\ref{lem-witeqprop-2}) we hence obtain from the first of Statements~(\ref{eq-amtost-3-2}) that $\mathfrak{E}_2(a, Z) \sim e_{\fcbael{E}{D}} \,[\{ j \}]$. Since $\ap((a, Z), \mathfrak \bigcup\mathcal{V}) \in \Con'(j)$, it follows that $\mathfrak{E}_2(a, Z) \sim e_{\fcbael{E}{D}} \,[\ap((a, Z), \bigcup\mathcal{V})]$.

With Axiom~\ref{dn-fctsp}(\ref{dn-fctsp-2-2}) we obtain from the first of Statements~(\ref{eq-amtost-3-1}) that also 
\[
\mathfrak{W}_2(a, Z) \sim e_{\fcbael{W}{V}} \,[\ap((a, Z), \bigcup\mathcal{V})].
\]
  So, since $\{ e_{\fcbael{E}{D}} \} , \{ e_{\fcbael{W}{V}} \} \in \Con'(\mathfrak{G}_2(a, Z))$, we have that 
  \[
  \mathfrak{E}_2(a, Z) \sim \mathfrak{W}_2(a, Z) \, [\ap((a, Z), \bigcup \mathcal{V})].
  \]
    Finally, as $\{ j \} \in \Con'(\mathfrak{E}_2(a, Z))$ by Axiom~\ref{dn-ass}(\ref{dn-ass-3-1}), we find that 
 \[
    j \sim \mathfrak{W}_2(a, Z) \, \ap((a, Z),[\bigcup\mathcal{V})].
 \]
    
Because of Statement~(\ref{eq-amtost-1}) it follows  that also
\begin{equation}\label{eq-amtost-4}
(j, \ap((a, Z), \bigcup \mathcal{V})) \vdash' (k, \rs((a, Z), \mathfrak{A})).
\end{equation}
Statements~(\ref{eq-amtost-2}, \ref{eq-amtost-3}, \ref{eq-amtost-4}) now imply that $(a, Z) \am(f) k$. Moreover, if follows with Statement~(\ref{eq-amtost-4}) that $\{ k \} \in \Con'(j)$. With Statement~(\ref{eq-amtost-1}) we thus have that $j \sim b \,[\rs((a, Z), \mathfrak{A})]$. This shows that $\pair{\mathfrak{B} | \mathfrak{A}} \in \St(\am(f))$.

For the converse inclusion let $\pair{\mathfrak{B} | \mathfrak{A}} \in \St(\am(f))$. Then $\mathfrak{A} \subseteq \am(f)$. 
Let $\mathfrak{A} = \{\, ((c_\nu, T_\nu), e_\nu) \mid 1 \le \nu \le n \,\}$. As $((c_\nu, T_\nu), e_\nu) \in \am(f)$, there is some $\pair{\mathfrak{E}^{(\nu)} | \mathfrak{D}^{(\nu)}} \in f$ with $((c_\nu, T_\nu), e_\nu) \in \mathfrak{D}^{(\nu)}$, for each $1 \le \nu \le n$. Because of Condition~(\ref{st}) there are furthermore some $\pair{\mathfrak{W} | \mathfrak{V}} \in f$ and a subset $\pair{\mathcal{W} | \mathcal{V}} \fsubset f$ with $(\pair{\mathfrak{W} | \mathfrak{V}}, \pair{\mathcal{W} | \mathcal{V}}) \vdash_\rightarrow \{\, \pair{\mathfrak{E}^{(\nu)} | \mathfrak{D}^{(\nu)}} \mid 1 \le \nu \le n \,\}$. Hence, we have for every $1 \le \nu \le n$ that 
\begin{equation}\label{eq-amtost-5}
(\mathfrak{W}_2(c_\nu, T_\nu), \ap((c_\nu, T_\nu), \bigcup \mathcal{V})) \vdash' e_\nu.
\end{equation}
Let $((a, Z), b) \in \mathfrak{B}$. Then it follows that
\[
(\mathfrak{W}_2(a, Z), \bigcup \{\, \ap((c, T), \bigcup \mathcal{V}) \mid (c, T) \in \ds((a, Z), \mathfrak{A}) \,\}) \vdash' \rs((a, Z), \mathfrak{A}).
\]
Because of Condition~\ref{dn-infsys}(\ref{dn-infsys-11}) there is  thus some $k \in A'$ with $\rs((a, Z), \mathfrak{A}) \in \Con'(k)$ and
\begin{equation*}
(\mathfrak{W}_2(a, Z), \bigcup \{\, \ap((c, T), \bigcup \mathcal{V}) \mid (c, T) \in \ds((a, Z), \mathfrak{A}) \,\}) \vdash' 
(k, \rs((a, Z), \mathfrak{A})).
\end{equation*}
Hence, $\{ k \} \in \Con'(\mathfrak{W}_2(a, Z))$.  By Requirement~\ref{lem-approx-state}(\ref{lem-approx-state-2}) we furthermore obtain some $j \in A'$ with $j \sim b \,[\rs((a, Z), \mathfrak{A})]$ and $(a, Z) \am(f) j$. As above, it follows that 
\[
(\mathfrak{W}_2(a, Z), \ap((a, Z), \bigcup \mathcal{V})) \vdash' j.
\]
Thus, $\{ j \} \in \Con'(\mathfrak{W}_2(a, Z))$ as well. Consequently, $b \sim k \,[\rs((a, Z), \mathfrak{A})]$. With Statement~(\ref{eq-amtost-5}) we now obtain that 
\[
(\pair{\mathfrak{W} | \mathfrak{V}}, \pair{\mathcal{W} | \mathcal{V}}) \vdash_\rightarrow \pair{\mathfrak{B} | \mathfrak{A}}.
\]
Therefore, $\pair{\mathfrak{B} | \mathfrak{A}} \in f$.
\end{proof}

 \begin{lemma}\label{lem-sttoam}
For $\apmap{H}{A}{A'}$, $\am(\St(H)) = H$.
\end{lemma}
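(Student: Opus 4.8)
The plan is to prove the set equality $\am(\St(H)) = H$ (an equality of relations in $\Con \times A'$) by establishing the two inclusions separately, with the forward one being immediate and the reverse one doing all the work.

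For $\am(\St(H)) \subseteq H$, recall $\am(f) = \bigcup \pr_2(f)$, so any pair in $\am(\St(H))$ lies in some $\mathfrak{A}$ with $\pair{\mathfrak{B} | \mathfrak{A}} \in \St(H)$. Condition~\ref{lem-approx-state}(\ref{lem-approx-state-1}) gives $\mathfrak{A} \subseteq H$, so the pair belongs to $H$. This inclusion is essentially by definition.

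For $H \subseteq \am(\St(H))$, I would fix $(i, X) H b$ and build a single-token state element whose second component contains $((i,X), b)$. Concretely, set $\mathfrak{A} = \{((i, X), b)\}$ and produce an associate $\mathfrak{B} \in \ac(\mathfrak{A})$ via Lemma~\ref{lem-singext} with $(c, U) = (i, X)$ and $Z = \{b\}$; that lemma needs a witness $w$ with $\{b, \Delta'\} \in \Con'(w)$ when $X \neq \emptyset$, and with $\{b\} \in \Con'(w)$ when $X = \emptyset$. In the empty case I can take $w = b$, since $\{b\} \in \Con'(b)$ by Axiom~\ref{dn-infsys}(\ref{dn-infsys-1}). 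In the nonempty case I must first derive $(i, X) H \Delta'$: the Right Interpolation Rule~\ref{dn-am}(\ref{dn-am-5-2}) applied to $(i, X) H b$ yields $(d, V) \in \Con'$ with $(i, X) H (d, V)$ and $(d, V) \vdash' b$; since $(d, \emptyset) \vdash' \Delta'$ by~\ref{dn-infsys}(\ref{dn-infsys-3}) and $\emptyset \subseteq V$, Axiom~\ref{dn-infsys}(\ref{dn-infsys-5}) gives $(d, V) \vdash' \Delta'$, whence $(i, X) H \Delta'$ by~\ref{dn-am}(\ref{dn-am-1}). Thus $(i, X) H \{b, \Delta'\}$, and the Witness Generation Rule~\ref{dn-am}(\ref{dn-am-8}) supplies $w$ with $(i, X) H w$ and $\{b, \Delta'\} \in \Con'(w)$. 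By Lemma~\ref{lem-singext}, $\mathfrak{B} = \{((\Delta, \emptyset), \Delta'), ((i, X), w)\}$ when $X \neq \emptyset$, and $\mathfrak{B} = \{((i, \emptyset), b)\}$ when $X = \emptyset$, is an associate of $\mathfrak{A}$, so $\pair{\mathfrak{B} | \mathfrak{A}} \in A \rightarrow A'$.

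It then remains to verify that $\pair{\mathfrak{B} | \mathfrak{A}} \in \St(H)$. Condition~\ref{lem-approx-state}(\ref{lem-approx-state-1}) holds since $\mathfrak{A} = \{((i,X), b)\}$ and $(i, X) H b$. For Condition~\ref{lem-approx-state}(\ref{lem-approx-state-2}) I would check each pair of $\mathfrak{B}$, noting first that $\rs((i, X), \mathfrak{A}) = \{b\}$: the pair $((i,X), b)$ lies in its own $\mathfrak{A}$-spectrum because $i \sim i \,[X]$ (take the one-step chain $a_0 = a_1 = i$, $k_1 = i$, using $X \in \Con(i)$ and $\{i\} \in \Con(i)$) and $X \subseteq X$. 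For the pair $((i, X), w)$ (resp.\ $((i, \emptyset), b)$) take $j = w$ (resp.\ $j = b$): indeed $(i, X) H w$ holds by the application of~\ref{dn-am}(\ref{dn-am-8}) above (resp.\ $(i, \emptyset) H b$ by hypothesis), and $w \sim w \,[\{b\}]$ by reflexivity of $\sim \,[\{b\}]$ at $w$, which uses $\{b\} \in \Con'(w)$. For the extra pair $((\Delta, \emptyset), \Delta')$ occurring when $X \neq \emptyset$, one computes $\rs((\Delta, \emptyset), \mathfrak{A}) = \emptyset$ since $X \not\subseteq \emptyset$, so $j = \Delta'$ works via $(\Delta, \emptyset) H \Delta'$ from~\ref{dn-am}(\ref{dn-am-6}) and $\Delta' \sim \Delta' \,[\emptyset]$ (the $A'$-analogue of Lemma~\ref{lem-witeqprop}(\ref{lem-witeqprop-0})). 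This shows $\pair{\mathfrak{B} | \mathfrak{A}} \in \St(H)$, and since $((i, X), b) \in \mathfrak{A} = \pr_2(\pair{\mathfrak{B} | \mathfrak{A}})$, we conclude $((i, X), b) \in \am(\St(H))$.

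I expect the main obstacle to be organizational rather than conceptual: correctly forming the singleton associate from Lemma~\ref{lem-singext} in both the $X = \emptyset$ and $X \neq \emptyset$ cases, and — crucially — producing the auxiliary witness $w$ \emph{from $H$ itself} (so that $H$ actually relates $w$ to $(i, X)$), which is what allows the spectrum-based equivalence in Condition~\ref{lem-approx-state}(\ref{lem-approx-state-2}) to be met. Once $w$ is obtained with $(i,X) H w$, the remaining verifications reduce to routine bookkeeping with the spectra $\rs$ and the partial equivalence $\sim \,[\cdot]$.
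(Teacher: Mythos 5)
Your proof is correct and takes essentially the same route as the paper's: the forward inclusion is definitional via Condition~\ref{lem-approx-state}(\ref{lem-approx-state-1}), and for the reverse inclusion both you and the paper wrap $\{((i,X),b)\}$ into a singleton token using Lemma~\ref{lem-singext}, obtaining the required consistency witness \emph{from $H$ itself} via the Witness Generation Rule~\ref{dn-am}(\ref{dn-am-8}) and then checking membership in $\St(H)$ by the same two-case spectrum analysis. The only immaterial difference is how $(i,X)\,H\,\Delta'$ is derived: you use right interpolation~\ref{dn-am}(\ref{dn-am-5-2}) together with~\ref{dn-am}(\ref{dn-am-1}), whereas the paper propagates $(\Delta,\emptyset)\,H\,\Delta'$ from Axiom~\ref{dn-am}(\ref{dn-am-6}).
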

\begin{proof}
Let $((a, S), b) \in \am(\St(H))$. Then $((a, S), b) \in \bigcup \pr_2(\St(H))$. Thus, there is some $\pair{\mathfrak{B} | \mathfrak{A}} \in \St(H)$ with $((a, S), b) \in \mathfrak{A}$. Since $\mathfrak{A} \subseteq H$, it follows that $((a, S), b) \in H$.

For the converse implication, let $((c, T), e) \in H$. As $(\Delta, \emptyset) H \Delta'$, we have that 
\[
(c, T) H \{ \Delta', e \}.
\]
  Hence, there is some $d \in A'$ with $(c, T) H d$ and $\{ \Delta', e \} \in \Con'(d)$. Set $\mathfrak{D} = \{ ((c, T), e) \}$ and let $\mathfrak{E} \in \ac(\mathfrak{D})$ be as in Lemma~\ref{lem-singext}. Then $\mathfrak{D} \subseteq H$. Next, let $((a, Z), b) \in \mathfrak{E}$. Then either $((a, Z), b) = ((c, T), d)$ or $((a, Z), b) = ((\Delta, \emptyset), \Delta')$. In the latter case $T$ is non-empty. Therefore $\rs((a, Z), \mathfrak{D}) = \emptyset$. Choose $j = \Delta'$ in this case. In the other case $\rs((a, Z), \mathfrak{D}) = \{ d \}$. Now, choose $j = d$. Then we have in both cases that $j \sim b \,[\rs((a, Z), \mathfrak{D})]$ and $(a, Z) H j$. This shows that $\pair{\mathfrak{E} | \mathfrak{D}} \in \St(H)$. Thus, $\mathfrak{D} \subseteq \bigcup \pr_2(\St(H))$, which means that $((c, T), e) \in \am(\St(H))$.
\end{proof}

\begin{proposition}\label{pn-arrowst-am}
Let $(A, \Con, \vdash, \Delta)$ and $(A', \Con', \vdash', \Delta')$ be information systems with witnesses. Then the states of $A \rightarrow A'$ are in a one-to-one correspondence to the approximable mappings between $A$ and $A'$:
\begin{enumerate}
\item \label{pn-arrowst-am-1}
$\set{\bigcup\pr_2(f)}{f \in | A \rightarrow A'|}$ is the set of all approximable mappings between $A$ and $A'$.

\item \label{pn-arrowst-am-2}
$| A \rightarrow A' |$ is the collection of all sets $\set{\fcbael{B}{A} \in A \rightarrow A'}{\mathfrak{A} \subseteq H \wedge (\forall ((a, Z), b) \in \mathfrak{B})  (\exists j \in A') (a, Z) H j \wedge j \sim b \,[\rs((a, Z), \mathfrak{A})]}$, where $H$ is an approximable mapping $H$ between $A$ and $A'$.

\end{enumerate}
\end{proposition}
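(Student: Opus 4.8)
The plan is to recognise that this proposition is a direct consequence of the four preceding lemmas, which together show that the operators $\am$ and $\St$ are mutually inverse bijections between $|A \rightarrow A'|$ and the set of approximable mappings from $A$ to $A'$. So no new construction is required; the entire task reduces to assembling these lemmas and checking that the two set descriptions in the statement coincide, respectively, with the images of $\am$ and $\St$.

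For Part~(\ref{pn-arrowst-am-1}), I would first note that $\bigcup\pr_2(f) = \am(f)$ holds by definition, so the displayed set is exactly $\set{\am(f)}{f \in |A \rightarrow A'|}$. By Lemma~\ref{lem-state-approx} every $\am(f)$ is an approximable mapping, which gives one inclusion. For the reverse inclusion, let $H$ be any approximable mapping; by Lemma~\ref{lem-approx-state} we have $\St(H) \in |A \rightarrow A'|$, and by Lemma~\ref{lem-sttoam} $\am(\St(H)) = H$, so $H$ lies in the image of $\am$. Hence the two sets are equal.

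For Part~(\ref{pn-arrowst-am-2}), I would observe that the set displayed there is, by definition, precisely $\St(H)$ as introduced in Lemma~\ref{lem-approx-state}. That lemma shows each such $\St(H)$ is a state, giving one inclusion. Conversely, for any state $f$, Lemma~\ref{lem-state-approx} yields the approximable mapping $\am(f)$, and Lemma~\ref{lem-amtost} gives $f = \St(\am(f))$, so $f$ is of the required form. Thus $|A \rightarrow A'|$ coincides with the collection of all $\St(H)$. Finally, Lemmas~\ref{lem-amtost} and~\ref{lem-sttoam} together state that $\St \circ \am = \mathrm{id}_{|A \rightarrow A'|}$ and that $\am \circ \St$ is the identity on approximable mappings, so $\am$ and $\St$ are mutually inverse, which is exactly the asserted one-to-one correspondence.

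There is no genuine obstacle at this level: all the substantive work has already been carried out, in particular the delicate inductive construction of the associate $\mathfrak{W}$ inside the proofs of Lemmas~\ref{lem-approx-state} and~\ref{lem-amtost}. The only points demanding attention are purely a matter of bookkeeping: confirming that $\bigcup\pr_2(f)$ is literally the definiens of $\am(f)$, and that the explicit set-builder expression in Part~(\ref{pn-arrowst-am-2}) matches the defining conditions~\ref{lem-approx-state}(\ref{lem-approx-state-1},\ref{lem-approx-state-2}) of $\St(H)$ verbatim. Once these identifications are made, the proposition follows immediately from the four lemmas.
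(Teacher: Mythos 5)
Your proposal is correct and matches the paper's own treatment: the paper offers no separate argument for Proposition~\ref{pn-arrowst-am}, but presents it as the immediate consequence of Lemmas~\ref{lem-state-approx}, \ref{lem-approx-state}, \ref{lem-amtost} and~\ref{lem-sttoam}, which is exactly the assembly you carry out. Your identification of the two displayed sets with the images of $\am$ and $\St$, and of the bijectivity with the two composite identities, is precisely the intended reading.
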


In Ref.~\cite{sp21} it was shown how approximable mappings between two information systems with witnesses $A$ and $A'$ as well as  Scott continuous functions from $|A|$ to $|A'|$ correspond to each other. As we will show next, this correspondence establishes an isomorphism between the domains $|A \rightarrow A'|$ and $[|A| \rightarrow |A'|]$.

For $\apmap{G}{A}{A'}$ and $x \in |A|$  let 
\[
\mathcal{L}(G)(x) = \set{b \in A'}{(\exists (i, X) \in \Con) \{ i \} \cup X \subseteq x \wedge (i, X) G b}.
\]
Then $\mathcal{L}(G) \in [|A| \rightarrow |A'|]$. Since for $f \in |A \rightarrow A'|$,  $\apmap{\am(f)}{A}{A'}$, it follows that $\mathcal{L}(\am(f)) \in [|A| \rightarrow |A'|]$. Set
\[
\fct(f) = \mathcal{L}(\am(f)).
\]
Then we have for $x \in |A|$ that
\begin{align*}
&\fct(f)(x)\\
&\hspace{3mm}= \set{b \in A'}{(\exists (i, X) \in \Con) \{ i \} \cup X \subseteq x \wedge (i, X) \am(f) b} \\
&\hspace{3mm}= \set{b \in A'}{(\exists (i, X) \in \Con) (\exists \fcbael{B}{A} \in f) \{ i \} \cup X \subseteq x \wedge 
((i, X), b) \in \mathfrak{A}} \\
&\hspace{3mm}= \set{b \in A'}{(\exists (i, X) \in \Con) (\exists \fcbael{B}{A}, \fcbael{W}{V} \in A \rightarrow A') \\ 
& \hspace{1cm} (\exists \fcbaset{W}{V} \in \Con_\rightarrow(\fcbael{W}{V})) 
 \{ \fcbael{W}{V} \} \cup  \fcbaset{W}{V} \subseteq f  \wedge \{ i \} \cup X \subseteq x \wedge \mbox{} \\
& \hspace{1cm} (\fcbael{W}{V}, \fcbaset{W}{V}) \vdash_\rightarrow \fcbael{B}{A} \wedge ((i, X), b) \in \mathfrak{A}} \\
&\hspace{3mm}= \set{b \in A'}{(\exists (i, X) \in \Con)  (\exists \fcbael{B}{A}, \fcbael{W}{V} \in A \rightarrow A') \\ 
& \hspace{1cm} (\exists \fcbaset{W}{V} \in \Con_\rightarrow(\fcbael{W}{V})) 
\{ \fcbael{W}{V} \} \cup  \fcbaset{W}{V} \subseteq f  \wedge \{ i \} \cup X \subseteq x \wedge \mbox{} \\
& \hspace{1cm}  (\mathfrak{B}_2(i, X), \ap((i, X), \bigcup \mathcal{V})) \vdash' b},
\end{align*}
where the last equality follows as in the second part of the proof of Lemma~\ref{lem-sttoam}. As is now easily seen,  $\fct \in [|A \rightarrow A'| \rightarrow [|A| \rightarrow |A'|]]$.

Conversely, let $g \in [|A| \rightarrow |A'|]$ and for $(i, X) \in \Con$ and $b \in A'$ define
\[
(i, X) H^g b \Longleftrightarrow b \in g([X]_i).
\]
Then $\apmap{H^g}{A}{A'}$. Set
\[
\st(g) = \St(H^g).
\]
By Lemma~\ref{lem-approx-state}, $\st(g) \in |A \rightarrow A'|$.  

\begin{lemma}\label{lem-st-cont}
The function $\fun{\st}{[|A| \rightarrow |A'|]}{|A \rightarrow A'|}$ is Scott continuous.
\end{lemma}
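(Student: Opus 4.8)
The plan is to factor $\st$ as $g \mapsto H^g \mapsto \St(H^g)$ and show that each stage interacts well with the order and with directed suprema. Recall that on $|A \rightarrow A'|$ the order is inclusion and directed suprema are unions, while on $[|A| \rightarrow |A'|]$ the order is pointwise and, for a directed family $S$, one has $(\bigsqcup S)(x) = \bigcup_{h \in S} h(x)$ for every $x \in |A|$.

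First I would establish monotonicity. If $g_1 \sqsubseteq g_2$, then $g_1([X]_i) \subseteq g_2([X]_i)$ for all $(i,X) \in \Con$, so $(i,X) H^{g_1} b$ implies $(i,X) H^{g_2} b$; that is, $H^{g_1} \subseteq H^{g_2}$. Next, inspecting the two defining conditions of $\St$ in Lemma~\ref{lem-approx-state}, both $\mathfrak{A} \subseteq H$ and the existence of a $j$ with $(a,Z) H j$ are preserved when $H$ is enlarged, while the side condition $j \sim b \,[\rs((a,Z),\mathfrak{A})]$ does not mention $H$ at all. Hence $H \subseteq H'$ gives $\St(H) \subseteq \St(H')$, and composing yields monotonicity of $\st$.

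The core step is preservation of directed suprema. Let $S$ be directed with supremum $g = \bigsqcup S$. Using the pointwise formula for the supremum, I would first record the identity $H^g = \bigcup_{h \in S} H^h$: indeed $(i,X) H^g b$ iff $b \in g([X]_i) = \bigcup_{h} h([X]_i)$ iff $(i,X) H^h b$ for some $h \in S$, and this union is directed because $g \mapsto H^g$ is monotone. It therefore suffices to prove that $\St$ carries a directed union of approximable mappings to the union of their images. The inclusion $\bigcup_{h} \St(H^h) \subseteq \St(H^g)$ is immediate from monotonicity, so the content is the reverse inclusion. Take a token $\pair{\mathfrak{B} | \mathfrak{A}} \in \St(H^g)$. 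The set $\mathfrak{A}$ is finite, and $\mathfrak{B}$, being an associate of the finite $\mathfrak{A}$, is finite as well; thus membership of this token refers to $H^g$ only through finitely many value facts. Concretely, each $((c,T),e) \in \mathfrak{A}$ gives $e \in g([T]_c)$, hence $e \in h([T]_c)$ for some $h \in S$; and for each $((a,Z),b) \in \mathfrak{B}$ the defining condition supplies some $j$ with $j \in g([Z]_a)$ and $j \sim b \,[\rs((a,Z),\mathfrak{A})]$, so $j \in h'([Z]_a)$ for some $h' \in S$. Collecting the finitely many indices arising this way and using directedness of $S$, I would choose a single $h^\ast \in S$ above all of them. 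Monotonicity of the functions in $S$ then transports every value fact to $h^\ast$, while the $\sim$-conditions, which involve only $A'$ and the fixed finite $\mathfrak{A}$, are unchanged; consequently $\pair{\mathfrak{B} | \mathfrak{A}} \in \St(H^{h^\ast}) = \st(h^\ast)$. This proves $\st(g) \subseteq \bigcup_{h \in S} \st(h)$, and together with monotonicity gives $\st(\bigsqcup S) = \bigsqcup_{h \in S} \st(h)$, so $\st$ is Scott continuous.

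The main obstacle I anticipate is bookkeeping rather than conceptual: one must verify that $H^g$ enters the token-membership condition of $\St(H^g)$ exclusively through finitely many facts of the form $c \in g([X]_i)$, so that directedness of $S$ genuinely suffices to absorb all of them into one member $h^\ast$. This hinges on $\mathfrak{B}$ being finite, which should be read off from the associate construction in Definition~\ref{dn-ass} (the chain $\mathfrak{W}^{(\kappa)}$ stabilises at $\kappa = \|\mathfrak{A}\|$ and contributes one element per $\mathfrak{A}$-maximal pair), and on the observation that the side conditions $j \sim b \,[\rs((a,Z),\mathfrak{A})]$ are independent of the chosen function, and hence require no adjustment when passing from $g$ to $h^\ast$.
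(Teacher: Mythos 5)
Your overall strategy --- factoring $\st$ as $g \mapsto H^g \mapsto \St(H^g)$, proving monotonicity, establishing $H^{\bigsqcup S} = \bigcup_{h \in S} H^h$, and reducing everything to showing that $\St$ of a directed union is the union of the $\St$'s --- is the same as the paper's, and those preliminary steps are correct. The gap is in the core step: your argument rests on the claim that $\mathfrak{B}$ is finite, and that claim is false in general. By Definition~\ref{dn-fctsp}(\ref{dn-fctsp-1}), a token of $A \rightarrow A'$ is a pair $\pair{\mathfrak{B} | \mathfrak{A}} \in \mathcal{P}(\Con \times A') \times \mathcal{P}_f(\Con \times A')$: only the \emph{second} component is required to be finite. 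Indeed, in Definition~\ref{dn-ass} the level $\widehat{\mathfrak{B}}^{(\kappa)}$ contributes one pair for \emph{each} representative $e \in R(\{\,(i_\nu, X_\nu) \mid \nu \in J\,\})$ for which the corresponding pair is $\mathfrak{A}$-maximal, and $R(\cdot)$ is a system of representatives of the witness set $W(\cdot)$ modulo $\sim[\cdot]$, which can have infinitely many classes: in the system $\mathcal{I}(D)$ of an L-domain $D$, two witnesses $i, j$ of $X$ with $i \sim j\,[X]$ must satisfy $\bigsqcup\nolimits^{i} X = \bigsqcup\nolimits^{j} X$, so a finite $X$ with infinitely many pairwise distinct local suprema produces infinitely many classes. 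So the chain does stabilise at $\kappa = \|\mathfrak{A}\|$, as you say, but each level can be infinite. Consequently, membership of $\pair{\mathfrak{B} | \mathfrak{A}}$ in $\St(H^g)$ involves one value fact per element of $\mathfrak{B}$, i.e.\ possibly infinitely many, and directedness of $S$ does not allow you to absorb them all into a single $h^\ast$.

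The paper's proof shows how to get around this. Choose $h^\ast \in S$ only so that $\mathfrak{A} \subseteq H^{h^\ast}$ --- legitimate, since $\mathfrak{A}$ \emph{is} finite --- and then prove that this same $h^\ast$ verifies Condition~\ref{lem-approx-state}(\ref{lem-approx-state-2}) for \emph{every} $((a, Z), b) \in \mathfrak{B}$, with a witness manufactured inside $H^{h^\ast}$ rather than transported from the supremum. From $\mathfrak{A} \subseteq H^{h^\ast}$ one gets $(a, Z)\, H^{h^\ast}\, \rs((a, Z), \mathfrak{A})$ (using Lemma~\ref{lem-witeqam} and \ref{dn-am}(\ref{dn-am-2})), so the witness generation rule \ref{dn-am}(\ref{dn-am-8}) yields some $j$ with $(a, Z)\, H^{h^\ast}\, j$ and $\rs((a, Z), \mathfrak{A}) \in \Con'(j)$. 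That this $j$ also satisfies $j \sim b\,[\rs((a, Z), \mathfrak{A})]$ is then obtained by comparison with the witness $k$ supplied by $\pair{\mathfrak{B} | \mathfrak{A}} \in \St(H^{\bigsqcup S})$: both $j$ and $k$ are values of $H^{\bigsqcup S}$ at $(a, Z)$, so a further application of \ref{dn-am}(\ref{dn-am-8}) in $H^{\bigsqcup S}$ gives a common consistency witness $r$ with $\{j, k\} \in \Con'(r)$, whence $j \sim k \,[\rs((a, Z), \mathfrak{A})]$ and therefore $j \sim b\,[\rs((a, Z), \mathfrak{A})]$. This uniformity --- one fixed $h^\ast$ determined by $\mathfrak{A}$ alone, with witnesses generated per element of $\mathfrak{B}$ instead of being selected in advance --- is exactly what your finiteness assumption was meant to buy, and it is the idea missing from your argument.
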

\begin{proof}
Obviously, $\st$ is monotone. Let $\mathbb{G} \subseteq [|A| \rightarrow |A'|]$ be directed and $\fcbael{B}{A} \in \st(\bigsqcup \mathbb{G})$. Then $\mathfrak{A} \subseteq H^{\bigsqcup\mathbb{G}}$. Since $\mathfrak{A}$ is finite and $\mathbb{G}$ directed, we gain some $g \in \mathbb{G}$ so that $\mathfrak{A} \subseteq H^g$. 

Now, let $((a, Z), b) \in \mathfrak{B}$. Then $(a, Z) H^g \rs((a, Z), \mathfrak{A})$. Because of Condition~\ref{dn-am}(\ref{dn-am-7}) there is hence some $j \in A'$ with $(a, Z) H^g j$ and $\rs((a, Z), \mathfrak{A}) \in \Con'(j)$. It follows that $(a, Z) H^{\bigsqcup\mathbb{G}} j$ as well. Since $\fcbael{B}{A} \in \st(\bigsqcup \mathbb{G})$, there is also some $k \in A'$ with $(a, Z) H^{\bigsqcup\mathbb{G}} k$ and $k \sim b \,[\rs((a, Z), \mathfrak{A})]$. Again by Condition~\ref{dn-am}(\ref{dn-am-7}) there is some $r \in A'$ with $(a, Z) H^{\bigsqcup\mathbb{G}} r$ and $\{ j, k \} \in \Con'(r)$. Consequently, $j \sim b \,[\rs((a, Z), \mathfrak{A})]$, which shows that $\fcbael{B}{A} \in \st(g)$. The converse inclusion follows by monotonicity.
\end{proof}

As shown in Ref.~\cite{sp21}, $\mathcal{L}(H^g) = g$ and $H^{\mathcal{L}(G)} = G$. With Lemmas~\ref{lem-amtost} and \ref{lem-sttoam} we therefore obtain that the two functions $\fct$ and $\st$ are inverse to each other.

\begin{proposition}\label{pn-domiso}
Let $(A, \Con, \vdash, \Delta)$ and $(A', \Con', \vdash', \Delta')$ be information systems with witnesses. Then the domains $|A \rightarrow A'|$ and $[|A| \rightarrow |A'|]$ are isomorphic.
\end{proposition}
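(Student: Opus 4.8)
The plan is to assemble the isomorphism directly from the two maps already constructed, namely $\fct\colon |A \rightarrow A'| \to [|A| \rightarrow |A'|]$ and $\st\colon [|A| \rightarrow |A'|] \to |A \rightarrow A'|$, by showing that they are mutually inverse Scott continuous bijections. Continuity of $\st$ is exactly Lemma~\ref{lem-st-cont}, and continuity of $\fct$ was already observed immediately after its definition (it factors through $\mathcal{L}$ and $\am$, which respect directed unions). So the only substantive remaining work is to check that the two maps compose to the identity in both directions; once that is done, the isomorphism is immediate.

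First I would verify $\st \circ \fct = \mathrm{id}_{|A \rightarrow A'|}$. Fixing $f \in |A \rightarrow A'|$ and unfolding definitions gives $\st(\fct(f)) = \St(H^{\mathcal{L}(\am(f))})$. Since $\am(f)$ is an approximable mapping by Lemma~\ref{lem-state-approx}, the identity $H^{\mathcal{L}(G)} = G$ imported from \cite{sp?} (instantiated at $G = \am(f)$) yields $H^{\mathcal{L}(\am(f))} = \am(f)$, so $\st(\fct(f)) = \St(\am(f))$, which equals $f$ by Lemma~\ref{lem-amtost}. Dually I would verify $\fct \circ \st = \mathrm{id}_{[|A| \rightarrow |A'|]}$: for $g \in [|A| \rightarrow |A'|]$ one has $\fct(\st(g)) = \mathcal{L}(\am(\St(H^g)))$, and Lemma~\ref{lem-sttoam} gives $\am(\St(H^g)) = H^g$, whence $\fct(\st(g)) = \mathcal{L}(H^g) = g$, using the second imported identity $\mathcal{L}(H^g) = g$.

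With a pair of mutually inverse Scott continuous maps in hand, I would conclude as follows: each map is monotone, and being inverse to a monotone map forces each to be order-reflecting as well, so $\fct$ and $\st$ are order isomorphisms; an order isomorphism between dcpos automatically preserves and reflects directed suprema, hence is an isomorphism of domains. Therefore $|A \rightarrow A'|$ and $[|A| \rightarrow |A'|]$ are isomorphic.

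I do not expect a genuine obstacle here, since the hard analytic content has been discharged in Lemmas~\ref{lem-amtost}, \ref{lem-sttoam}, and \ref{lem-st-cont} together with Propositions~\ref{pn-fctinfsys} and \ref{pn-arrowst-am}; the remaining argument is purely organizational bookkeeping of the four identities. The one place where care is needed is that the two facts $\mathcal{L}(H^g) = g$ and $H^{\mathcal{L}(G)} = G$ are cited from \cite{sp?} rather than reproved, so the proof must make explicit that they are being invoked; granting them, chaining the equalities above is routine.
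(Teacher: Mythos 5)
Your proof is correct and follows essentially the same route as the paper: the text preceding Proposition~\ref{pn-domiso} establishes the isomorphism by exactly this combination of the imported identities $\mathcal{L}(H^g) = g$ and $H^{\mathcal{L}(G)} = G$ from \cite{sp?} with Lemmas~\ref{lem-amtost} and \ref{lem-sttoam} to show $\fct$ and $\st$ are mutually inverse, together with Lemma~\ref{lem-st-cont} and the observed continuity of $\fct$. Your explicit chaining of the compositions and the closing remark that mutually inverse monotone maps form an order isomorphism merely spell out what the paper leaves implicit.
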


\section{Cartesian closure}\label{sec-cc}

As was shown in Ref.~\cite{sp21}, the category $\mathbf{ISW}$, as well as its full subcategories $\mathbf{aISW}$, $\mathbf{bcISW}$ and $\mathbf{abcISW}$, possess a terminal object and are closed under taking finite products.

The one-point information system with witnesses $T = (\{ \Delta \}, \Con_T, \vdash_T, \Delta)$ with $\Con_T = \{ (\Delta, \emptyset), (\Delta, \{ \Delta \}))$ and \mbox{$\vdash_T$} = $\Con_T \times \{\Delta \}$ is a terminal object.

For $\nu = 1, 2$, let $(A_\nu, \Con_\nu, \vdash_\nu, \Delta_\nu)$ be an information system with witnesses. Set $A_\times = A_1 \times A_2$, $\Delta_\times = (\Delta_1, \Delta_2)$,
\begin{equation*}
\Con_\times = \{\, ((i, j), X) \in A_\times \times \mathcal{P}_f(A_\times) \mid \\
 \pr_1(X) \in \Con_1(i) \wedge \pr_2(X) \in \Con_2(j) \,\},
\end{equation*}
and for $((i, j), X) \in \Con_\times$ and $(a_1, a_2) \in A_\times$ define
\[
((i, j), X) \vdash_\times (a_1, a_2) \Longleftrightarrow (i, \pr_1(X)) \vdash_1 a_1 \wedge (j, \pr_2(X)) \vdash_2 a_2.
\]
Moreover, let the relations $\Pr_\nu \subseteq \Con_\times \times A_\nu$, with $\nu = 1, 2$, be given by
\[
((i_1, i_2), X) \Pr\nolimits_\nu a_\nu \Longleftrightarrow (i_\nu, \pr_\nu(X)) \vdash_\nu a_\nu.
\]
Then $(A_\times, \Pr_1, \Pr_2)$ is the categorical product of $A_1$ and $A_2$.

The aim of this section is to show for information systems with witnesses $A$ and $A'$ that $| A \rightarrow A' |$ is the exponent of $A$ and $A'$ in the category $\mathbf{ISW}$.
For $a \in A$, $\pair{\mathfrak{B} | \mathfrak{A}} \in A \rightarrow A'$, $\mathbb{Z} \in \Con_{(A \rightarrow A') \times A}(\pair{\mathfrak{B} | \mathfrak{A}}, a)$, and $b \in A'$, let
\[
((\pair{\mathfrak{B} | \mathfrak{A}}, a), \mathbb{Z}) \Ev b  \Longleftrightarrow
(\mathfrak{B}_2(a, \pr_2(\mathbb{Z})), \ap((a, \pr_2(\mathbb{Z})), \bigcup \pr_{2,1}(\mathbb{Z}))) \vdash' b.
\]

\begin{lemma}\label{lem-evalam} 
$\apmap{\Ev}{(A \rightarrow A') \times A}{A'}$.
\end{lemma}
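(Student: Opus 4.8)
The plan is to verify the nine requirements of Definition~\ref{dn-am} for $\Ev$ directly. Throughout I would use the unfolding: for a product witness $(\pair{\mathfrak{B} | \mathfrak{A}}, a)$ and $\mathbb{Z} \in \Con_\times(\pair{\mathfrak{B} | \mathfrak{A}}, a)$, writing $S = \pr_2(\mathbb{Z})$ and abbreviating the family $\pr_{2,1}(\mathbb{Z})$ by $\mathcal{V}$ (so $\pr_1(\mathbb{Z}) \in \Con_\rightarrow(\pair{\mathfrak{B} | \mathfrak{A}})$ and $S \in \Con(a)$), the relation $((\pair{\mathfrak{B} | \mathfrak{A}}, a), \mathbb{Z}) \Ev b$ reads $(\mathfrak{B}_2(a, S), \ap((a, S), \bigcup \mathcal{V})) \vdash' b$. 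A first reduction I would record is that whenever $a \sim a' \,[S]$ — in particular whenever $\{ a \} \in \Con(a')$ and $S \in \Con(a)$, taking the one-step chain through $a'$ in Definition~\ref{dn-witeq} — Lemma~\ref{lem-witeqprop}(\ref{lem-witeqprop-1}) makes $\Cl((a, S), \cdot)$, hence $\ap((a, S), \cdot)$ and (via the system of representatives in Definition~\ref{dn-ass}) the witness $\mathfrak{B}_2(a, S)$, invariant under replacing $a$ by $a'$; this turns every witness change on the $A$-coordinate into the identity.

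The rules that keep the witness fixed are then mostly mechanical. Condition~\ref{dn-am}(\ref{dn-am-6}) is immediate with $\mathbb{Z} = \emptyset$, giving $(\Delta', \emptyset) \vdash' \Delta'$ by~\ref{dn-infsys}(\ref{dn-infsys-3}); condition~\ref{dn-am}(\ref{dn-am-1}) is Lemma~\ref{lem-strong6} in $A'$, since an $\Ev$-derivation is literally a $\vdash'$-derivation; condition~\ref{dn-am}(\ref{dn-am-2}) follows from Lemma~\ref{lem-jextprop}(\ref{lem-jextprop-2}), monotonicity of $\ap$ in both arguments, and~\ref{dn-infsys}(\ref{dn-infsys-5},\ref{dn-infsys-8}); and the output-side rules~\ref{dn-am}(\ref{dn-am-5-2},\ref{dn-am-8}) follow at once from the Global Interpolation Property~(\ref{eq-gip}) and~\ref{dn-infsys}(\ref{dn-infsys-11}) in $A'$. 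For~\ref{dn-am}(\ref{dn-am-3}) I would transport the $A'$-derivation across the product entailment: from $(\pair{\mathfrak{B} | \mathfrak{A}}, \pr_1(\mathbb{Z})) \vdash_\rightarrow \pr_1(\mathbb{Z}')$, Lemma~\ref{lem-extvdash} yields $(\mathfrak{B}_2(a, S), \ap((a, S), \bigcup \mathcal{V})) \vdash' \ap((a, S), \bigcup \mathcal{V}')$ (after enlarging the witness-set by~\ref{dn-infsys}(\ref{dn-infsys-5})), while $(a, S) \vdash S'$ gives $\{ \mathfrak{B}_2(a, S') \} \in \Con'(\mathfrak{B}_2(a, S))$ by Lemma~\ref{lem-jextprop}(\ref{lem-jextprop-1}) and $\ap((a, S'), \bigcup \mathcal{V}') \subseteq \ap((a, S), \bigcup \mathcal{V}')$; chaining the two derivations through~\ref{dn-infsys}(\ref{dn-infsys-6},\ref{dn-infsys-8}) delivers the claim.

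For the witness-change rules~\ref{dn-am}(\ref{dn-am-4}) and~\ref{dn-am}(\ref{dn-am-7}) I would combine the reduction above on the $A$-coordinate with the defining clauses of $\Con_\rightarrow$. Assuming $\{ \pair{\mathfrak{B} | \mathfrak{A}} \} \in \Con_\rightarrow(\pair{\mathfrak{B}' | \mathfrak{A}'})$, clause~\ref{dn-fctsp}(\ref{dn-fctsp-2-1}) supplies $k \in A'$ with $\mathfrak{B}_2(a, S) \sim k \,[\ap((a, S), \mathfrak{A})]$ and $\{ k \} \in \Con'(\mathfrak{B}'_2(a, S))$, and clause~\ref{dn-fctsp}(\ref{dn-fctsp-2-2}) for $\pr_1(\mathbb{Z}) \in \Con_\rightarrow(\pair{\mathfrak{B} | \mathfrak{A}})$ upgrades this to $\mathfrak{B}_2(a, S) \sim k \,[\ap((a, S), \bigcup \mathcal{V})]$. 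Lemma~\ref{lem-witeqprop}(\ref{lem-witeqprop-1}) then moves the $\vdash'$-derivation of $b$ between $\mathfrak{B}_2(a, S)$ and $k$, while~\ref{dn-infsys}(\ref{dn-infsys-8}) and~\ref{dn-infsys}(\ref{dn-infsys-9}) move it between $k$ and $\mathfrak{B}'_2(a, S)$, yielding the ``up'' and ``down'' directions respectively.

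The main obstacle is the input-side interpolation~\ref{dn-am}(\ref{dn-am-5-1}): given $(\mathfrak{B}_2(a, S), \ap((a, S), \bigcup \mathcal{V})) \vdash' b$ I must produce $U \in \Con_\times(\pair{\mathfrak{B} | \mathfrak{A}}, a)$ with $((\pair{\mathfrak{B} | \mathfrak{A}}, a), \mathbb{Z}) \vdash_\times U$ and $((\pair{\mathfrak{B} | \mathfrak{A}}, a), U) \Ev b$. The construction I have in mind mirrors Lemma~\ref{lem-fc10}. First split the derivation by~(\ref{eq-gip}) in $A'$ into $(e_0, V_0) \in \Con'$ with $(\mathfrak{B}_2(a, S), \ap((a, S), \bigcup \mathcal{V})) \vdash' (e_0, V_0)$ and $(e_0, V_0) \vdash' b$; form the single-step function token $\pair{\mathfrak{E}_0 | \mathfrak{D}_0}$ with $\mathfrak{D}_0 = \set{((a, S), v)}{v \in V_0 \cup \{ e_0 \}}$ via Lemma~\ref{lem-singext}, choosing its witness at $(a,S)$ through~\ref{dn-infsys}(\ref{dn-infsys-11}) so that exactly condition~\ref{dn-fctsp}(\ref{dn-fctsp-3-2}) holds, whence $(\pair{\mathfrak{B} | \mathfrak{A}}, \pr_1(\mathbb{Z})) \vdash_\rightarrow \pair{\mathfrak{E}_0 | \mathfrak{D}_0}$ (clause~\ref{dn-fctsp}(\ref{dn-fctsp-3-1}) is the derivation of $V_0 \cup \{ e_0 \}$ we already have). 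Next use~(\ref{eq-gip}) in $A$, via Lemma~\ref{lem-setax7}, to get $(e, T) \in \Con$ with $(a, S) \vdash (e, T)$ and $(e, T) \vdash \{a\} \cup S \cup \ucl((a, S), \pr_1(\mathfrak{A}))$, so that $a \sim e \,[T]$, the token $\pair{\mathfrak{E}_0 | \mathfrak{D}_0}$ fires at $(a, T)$, and $\mathfrak{B}_2(a, T) = \mathfrak{B}_2(a, S)$ by Lemma~\ref{lem-jextprop}(\ref{lem-jextprop-3}). Setting $U = \set{(\pair{\mathfrak{E}_0 | \mathfrak{D}_0}, c)}{c \in T}$ gives $((\pair{\mathfrak{B} | \mathfrak{A}}, a), \mathbb{Z}) \vdash_\times U$ (using Lemma~\ref{lem-fc4} for $\pr_1(U) \in \Con_\rightarrow(\pair{\mathfrak{B} | \mathfrak{A}})$), and then $\ap((a, T), \mathfrak{D}_0) = V_0 \cup \{ e_0 \}$, from which $(\mathfrak{B}_2(a, S), V_0 \cup \{ e_0 \}) \vdash' b$ follows from $(e_0, V_0) \vdash' b$ by~\ref{dn-infsys}(\ref{dn-infsys-8},\ref{dn-infsys-5}). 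I expect the real work to lie precisely in the witness bookkeeping for~\ref{dn-fctsp}(\ref{dn-fctsp-3-2}) when certifying $\vdash_\rightarrow$, which is exactly the phenomenon already handled in Lemma~\ref{lem-fc10}, so I would mirror that argument here.
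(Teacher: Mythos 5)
Your plan and most of your individual verifications track the paper closely: \ref{dn-am}(\ref{dn-am-6}) with the empty consistency set, \ref{dn-am}(\ref{dn-am-1}) from Lemma~\ref{lem-strong6}, \ref{dn-am}(\ref{dn-am-2},\ref{dn-am-3}) from monotonicity of $\Cl$ and $\ap$ together with Lemma~\ref{lem-jextprop}, the output-side rules \ref{dn-am}(\ref{dn-am-5-2},\ref{dn-am-8}) from interpolation in $A'$, and the witness-change arguments from the two clauses of Definition~\ref{dn-fctsp}(\ref{dn-fctsp-2}); all of this is sound and is essentially what the paper does. The gap is your treatment of the input-side interpolation \ref{dn-am}(\ref{dn-am-5-1}), and it is not the ``witness bookkeeping'' you flag at the end, but the construction itself.

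You require $(e, T) \in \Con$ with $(a, S) \vdash (e, T)$ and $(e, T) \vdash \{ a \} \cup S \cup \ucl((a, S), \pr_1(\mathfrak{A}))$. Neither (\ref{eq-gip}) nor Lemma~\ref{lem-setax7} can deliver this: they only interpolate below finite sets that $(a, S)$ already entails, and $(a, S) \vdash a$, $(a, S) \vdash S$ are not available, since Definition~\ref{dn-infsys} has no reflexivity axiom --- precisely because the setting is continuous. Worse, such an $(e, T)$ in general does not exist: in the systems $\mathcal{I}(D)$ of Theorem~\ref{tm-dominsys}, the two requirements $(a, S) \vdash (e, T)$ and $(e, T) \vdash a$ force $a \ll \bigsqcup^e T \sqsubseteq e \sqsubseteq \bigsqcup^a S \sqsubseteq a$, hence $a \ll a$; and the same kind of computation shows that no token based at the original pair $(a, S)$ can ever ``fire'' at a properly interpolated pair $(a, T)$ unless the state $[S]_a$ is compact. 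So your step fails in every non-algebraic system. This is exactly why the paper never verifies \ref{dn-am}(\ref{dn-am-5-1}) with the witness held fixed. Instead it verifies the combined rule (\ref{eq-amint}), in which the interpolant carries a fresh witness on both coordinates: it interpolates in $A$ to get $(i, X)$ with $(a, \pr_2(\mathbb{Z})) \vdash (i, X)$ and $(i, X) \vdash \ucl((a, \pr_2(\mathbb{Z})), \bigcup \pr_{2,1}(\mathbb{Z}))$ (only $\ucl$, which \emph{is} entailed by definition); it bases the single-step token $\fcbael{E}{D}$ of Lemma~\ref{lem-singext} at the interpolant $(i, X)$, not at the original pair, so that it fires at $(a, \pr_2(\mathbb{Z}))$ rather than the other way around; and then --- this is the step your proposal is missing entirely --- it applies the Global Interpolation Property \emph{inside the information system} $A \rightarrow A'$ itself (legitimate by Proposition~\ref{pn-fctinfsys} and Lemma~\ref{lem-globint}) to obtain a token $\fcbael{W}{V}$ together with a consistency set $\fcbaset{W}{V} \in \Con_\rightarrow(\fcbael{W}{V})$ such that $(\fcbael{B}{A}, \pr_1(\mathbb{Z})) \vdash_\rightarrow (\fcbael{W}{V}, \fcbaset{W}{V})$ and $(\fcbael{W}{V}, \fcbaset{W}{V}) \vdash_\rightarrow \fcbael{E}{D}$. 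The product interpolant is then $((\fcbael{W}{V}, i), \fcbaset{W}{V} \times X)$, and the fixed-witness rules \ref{dn-am}(\ref{dn-am-5-1}) and \ref{dn-am}(\ref{dn-am-7}) come for free from Lemma~\ref{pn-amleft}, whose proof transfers the witness back via \ref{dn-am}(\ref{dn-am-4}) --- a condition you have verified independently. You should restructure your proof along these lines.
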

\begin{proof}
We have to verify the conditions in Def.~\ref{dn-am}.
In case of Condition~\ref{dn-am}(\ref{dn-am-6}) we have to verify that 
\[
((\Delta_\rightarrow, \Delta), \emptyset) \Ev \Delta',
\]
 i.e., we have to check whether $(\Delta', \ap((\Delta, \emptyset), \emptyset)) \vdash' \Delta'$, which holds by Axiom~\ref{dn-infsys}(\ref{dn-infsys-3}). Note that $\ap((\Delta, \emptyset), \emptyset)$ is empty.

For Condition~\ref{dn-am}(\ref{dn-am-2}) let $\mathbb{Z}, \mathbb{Z}' \in \Con_{(A \rightarrow A') \times A}(\fcbael{B}{A}, a))$ with $\mathbb{Z} \subseteq \mathbb{Z}'$. Then 
\[
\Cl((a, \pr_2(\mathbb{Z}), \mathfrak{A})) \subseteq \Cl((a, \pr_2(\mathbb{Z}')), \mathfrak{A}).\]
 Thus, 
\[
\{ \mathfrak{B}_2(a, \pr_2(\mathbb{Z})) \} \in \Con'(\mathfrak{B}_2(a, \pr_2(\mathbb{Z}'))).
\]
 Moreover, 
 \[
 \ap((a, \pr_2(\mathbb{Z})), \bigcup \pr_{2,1}(\mathbb{Z})) \subseteq \ap((a, \pr_2(\mathbb{Z})), \bigcup \pr_{2,1}(\mathbb{Z})).
 \]
 Therefore, the condition follows with Axioms~\ref{dn-infsys}(\ref{dn-infsys-5}, \ref{dn-infsys-7}).

For Condition~\ref{dn-am}(\ref{dn-am-3}) assume that 
\[
((\pair{\mathfrak{B} | \mathfrak{A}}, a), \mathbb{Z}) \vdash_{(A \rightarrow A') \times A} \mathbb{T}
\]
 and $((\pair{\mathfrak{B} | \mathfrak{A}}, a), \mathbb{T}) \Ev b$. Then we have that
\begin{gather}
(\pair{\mathfrak{B} | \mathfrak{A}}, \pr_1(\mathbb{Z})) \vdash_\rightarrow \pr_1(\mathbb{T}) \label{eq-evalam-1} \\
(a, \pr_2(\mathbb{Z})) \vdash \pr_2(\mathbb{T}) \label{eq-evalam-2} \\
(\mathfrak{B}_2(a, \pr_2(\mathbb{T})), \ap((a, \pr_2(\mathbb{T})), \bigcup \pr_{2,1}(\mathbb{T}))) \vdash' b. \label{eq-evalam-3}
\end{gather}
As a consequence of Statement~(\ref{eq-evalam-2}), 
\[
\{ \mathfrak{B}_2(a, \pr_2(\mathbb{T})) \} \in \Con(\mathfrak{B}_2(a, \pr_2(\mathbb{Z})))
\]
 and hence
\[
(\mathfrak{B}_2(a, \pr_2(\mathbb{Z})), \ap((a, \pr_2(\mathbb{T})), \bigcup \pr_{2,1}(\mathbb{T}))) \vdash' b.
\]
Because of Statements~(\ref{eq-evalam-1}) and (\ref{eq-evalam-2}) it moreover follows that 
\[
\ap((a, \pr_2(\mathbb{T})), \bigcup \pr_{2,1}(\mathbb{T})) \subseteq \ap((a, \pr_2(\mathbb{Z})), \bigcup \pr_{2,1}(\mathbb{Z})).
\]
Therefore, $(\mathfrak{B}_2(a, \pr_2(\mathbb{Z})), \ap((a, \pr_2(\mathbb{Z})), \bigcup \pr_{2,1}(\mathbb{Z}))) \vdash' b$, i.e., 
\[((\pair{\mathfrak{B} | \mathfrak{A}}, a), \mathbb{Z}) \Ev b.
\]

For Conditions~\ref{dn-am}(\ref{dn-am-5-1}, \ref{dn-am-5-2}, \ref{dn-am-8}, \ref{dn-am-7}) we make use of Lemmas~\ref{pn-amleft}, \ref{pn-amright} and \ref{pn-amint}, and verify Requirement~(\ref{eq-amint}) instead. Suppose to this end that $((\fcbael{B}{A}, a), \mathbb{Z}) \Ev F$. Then
\[
(\mathfrak{B}_2(a, \pr_2(\mathbb{Z})), \ap((a, \pr_2(\mathbb{Z})), \bigcup \pr_{2,1}(\mathbb{Z}))) \vdash' F.
\]
Because of the Global Interpolation Property there is some $(j, Y) \in \Con'$ so that
\begin{gather}
(\mathfrak{B}_2(a, \pr_2(\mathbb{Z})), \ap((a, \pr_2(\mathbb{Z})), \bigcup \pr_{2,1}(\mathbb{Z}))) \vdash' (j, Y) \label{eq-evalam-6} \\
(j, Y) \vdash F. \label{eq-evalam-7}
\end{gather}
By definition we moreover have that 
\[
(a, \pr_2(\mathbb{Z})) \vdash \ucl((a, \pr_2(\mathbb{Z})), \bigcup \pr_{2,1}(\mathbb{Z})).
\]
Hence, there exists $(i, X) \in \Con$ with
\begin{gather}
(a, \pr_2(\mathbb{Z})) \vdash (i, X) \label{eq-evalam-8} \\
(i, X) \vdash \ucl((a, \pr_2(\mathbb{Z})), \bigcup \pr_{2,1}(\mathbb{Z})). \label{eq-evalam-9}
\end{gather}
It follows that $\Cl((i, X), \pr_1(\mathfrak{A})) \subseteq \Cl((a, \pr_2(\mathbb{Z})), \pr_1(\mathfrak{A}))$ and thus that $\{ \mathfrak{B}_2(i, X) \} \in \Con'(\mathfrak{B}_2(a, \pr_2(\mathbb{Z})))$. Moreover, $\ap((i, X), \bigcup \pr_{2,1}(\mathbb{Z})) = \ap((a, \pr_2(\mathbb{Z})), \bigcup \pr_{2,1}(\mathbb{Z}))$. So, we gain that
\begin{equation}\label{eq-evalam-10}
(\mathfrak{B}_2(i, X), \ap((i, X), \bigcup \pr_{2,1}(\mathbb{Z}))) \vdash' Y \cup \{ j \}.
\end{equation}
By Axiom~\ref{dn-infsys}(\ref{dn-infsys-11}) there is thus some $\bar{\jmath} \in A'$ such that 
\[
(\mathfrak{B}_2(i, X), \ap((i, X), \bigcup \pr_{2,1}(\mathbb{Z}))) \vdash'\bar{\jmath}
\]
 and $Y \cup \{ j, \Delta' \} \in \Con'(\bar{\jmath})$.

Let $\mathfrak{D}= \set{((i, X), e)}{e \in Y \cup \{ j \}}$ and $\mathfrak{E} \in \ac(\mathfrak{D})$ be as in Lemma~\ref{lem-singext}. Then we have for $((d, V), b) \in \mathfrak{E}$ that either $((d, V), b) = ((i, X), \bar{\jmath})$, $\ds((d, V), \mathfrak{D}) = \{ (i, X) \}$ and $\rs((d, V), \mathfrak{D}) = Y \cup \{ j \}$, or $((d, V), b) = ((\Delta, \emptyset), \Delta')$, $X$ is not empty, and $\ds((d, V), \mathfrak{D})$ and $\rs((d, V), \mathfrak{D}) $ are both empty. It follows that in either case
\begin{equation*}
(\mathfrak{B}_2(d, V), \bigcup \set{\ap((c, U), \bigcup \pr_{2,1}(\mathbb{Z}))}{(c, U) \in \ds((d, V), \mathfrak{D})}) \vdash'  
(b, \rs((c, U), \mathfrak{D})).
\end{equation*}
With Statement~(\ref{eq-evalam-10}) we hence obtain that $(\fcbael{B}{A}, \pr_1(\mathbb{Z})) \vdash_\rightarrow \fcbael{E}{D}$. By the Global Interpolation Property there are now $\fcbael{W}{V} \in A \rightarrow A'$ and $\fcbaset{W}{V} \in \Con_\rightarrow(\fcbael{W}{V})$ such that 
\[
(\fcbael{B}{A}, \pr_1(\mathbb{Z})) \vdash_\rightarrow (\fcbael{W}{V}, \fcbaset{W}{V}) \,\,\text{and}\,\, (\fcbael{W}{V}, \fcbaset{W}{V}) \vdash_\rightarrow \fcbael{E}{D}.
\]
In particular, we have that
\[
(\mathfrak{W}_2(i, X), \ap((i, X), \bigcup \mathcal{V})) \vdash' (j, Y).
\]

Set $\mathbb{T} = \fcbaset{W}{V} \times X$. Then 
\begin{gather*}
\mathbb{T} \in \Con_{(A \rightarrow A') \times A}(\fcbael{W}{V}, i), \\
((\fcbael{B}{A}, a), \mathbb{Z}) \vdash_{(A \rightarrow A') \times A} ((\fcbael{W}{V}, i), \mathbb{T}), \\
((\fcbael{W}{V}, i), \mathbb{T}) \Ev (j, Y) 
\intertext{and} 
(j, Y) \vdash' F.
\end{gather*}

 Condition~\ref{dn-am}(\ref{dn-am-1}) is a consequence of Lemma~\ref{lem-strong6}.

For Condition~\ref{dn-am}(\ref{dn-am-4}) let 
\[
\{ (\pair{\mathfrak{B} | \mathfrak{A}}, j) \} \in \Con_{(A \rightarrow A') \times A}(\pair{\mathfrak{W} | \mathfrak{V}}, i), 
\]
$\mathbb{Z} \in \Con_{(A \rightarrow A') \times A}(\pair{\mathfrak{B} | \mathfrak{A}}, j)$, and 
$((\pair{\mathfrak{B} | \mathfrak{A}}, j), \mathbb{Z}) \Ev b$.
Then we have that
\[
(\mathfrak{B}_2(j, \pr_2(\mathbb{Z})), \ap((j, \pr_2(\mathbb{Z})), \bigcup \pr_{2,1}(\mathbb{Z}))) \vdash' b,
\]
$\pr_1(\mathbb{Z}) \in \Con_\rightarrow(\pair{\mathfrak{B} | \mathfrak{A}})$, 
$\pr_2(\mathbb{Z}) \in \Con(j)$, 
$\{ \pair{\mathfrak{B} | \mathfrak{A}} \} \in \Con_\rightarrow(\pair{\mathfrak{W} | \mathfrak{V}})$, and
$\{ j \} \in \Con(i)$, from which it follows that $\pr_1(\mathbb{Z}) \in \Con_\rightarrow(\pair{\mathfrak{W} | \mathfrak{V}})$ and $\pr_2(\mathbb{Z}) \in \Con(i)$. Thus, $\mathbb{Z} \in \Con_{(A \rightarrow A') \times A}(\pair{\mathfrak{W} | \mathfrak{V}}, i)$.

We have to show that 
\begin{equation}\label{eq-evalam-4}
(\mathfrak{W}_2(i, \pr_2(\mathbb{Z})), \ap((i, \pr_2(\mathbb{Z})), \bigcup \pr_{2,1}(\mathbb{Z}))) \vdash' b.
\end{equation}
Note that because of Axioms~\ref{dn-infsys}(\ref{dn-infsys-8}, \ref{dn-infsys-9}), 
\[
\ap((j, \pr_2(\mathbb{Z})), \bigcup \pr_{2,1}(\mathbb{Z})) = \ap((i, \pr_2(\mathbb{Z})), \bigcup \pr_{2,1}(\mathbb{Z})).
\]
 Similarly,  $\Cl((j, \pr_2(\mathbb{Z})), \pr_1(\mathfrak{A})) = \Cl((i, \pr_2(\mathbb{Z})), \pr_1(\mathfrak{A}))$ which implies that 
\[
\mathfrak{B}_2(j, \pr_2(\mathbb{Z})) = \mathfrak{B}_2(i, \pr_2(\mathbb{Z})).
\]
Thus,
\[
(\mathfrak{B}_2(i, \pr_2(\mathbb{Z})), \ap((i, \pr_2(\mathbb{Z})), \bigcup \pr_{2,1}(\mathbb{Z}))) \vdash' b.
\]
Since $\fcbael{B}{A} \in \Con_\rightarrow(\fcbael{W}{V})$, there is some $ r \in A'$ with
\begin{equation}\label{eq-evalam-5}
r \sim \mathfrak{B}_2(i, \pr_2(\mathbb{Z})) \,[\ap((i, \pr_2(\mathbb{Z})), \mathfrak{A})]\,\,\text{and}\,\, 
\{ r \} \in \Con'(\mathfrak{W}_2(i, \pr_2(\mathbb{Z}))).
\end{equation}
By assumption $\pr_1(\mathbb{Z}) \in \Con_\rightarrow(\fcbael{B}{A})$. With Axiom~\ref{dn-fctsp}(\ref{dn-fctsp-2-2}) it follows that 
\[
r \sim \mathfrak{B}_2(i, \pr_2(\mathbb{Z})) \,[\ap((i, \pr_2(\mathbb{Z})), \bigcup \pr_{2,1}(\mathbb{Z}))]
\]
as well. Therefore also
\[
(r, \ap((i, \pr_2(\mathbb{Z})), \bigcup \pr_{2,1}(\mathbb{Z}))) \vdash' b.
\]
With Statements~(\ref{eq-evalam-5}), Statement~(\ref{eq-evalam-4}) is now a consequence.
\end{proof}

Let $(A'', \Con'', \vdash'', \Delta'')$ be a further information system with witnesses. 

\begin{definition}\label{dn-lambda}
For $\apmap{H}{A \times A'}{A''}$, $(a, S) \in \Con$ and $\fcbael{B}{A} \in A' \rightarrow A''$ define 
\[
(a, S) \Lambda(H) \fcbael{B}{A}
\]
 if
\begin{enumerate}
\item \label{dn-lambda-1}
$(\forall ((i, X), e) \in \mathfrak{A}) ((a, i), S \times X) H e$

\item \label{dn-lambda-2}
$(\forall ((c, T), b) \in \mathfrak{B}) (\exists j \in A'') ((a, c), S \times T) H j \wedge j \sim b \,[\rs((c, T), \mathfrak{A})]$.

\end{enumerate}
\end{definition}

\begin{lemma}\label{lem-lambam}
$\apmap{\Lambda(H)}{A}{A' \rightarrow A''}$.
\end{lemma}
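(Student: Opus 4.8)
The plan is to verify the nine conditions of Definition~\ref{dn-am} for the relation $\Lambda(H) \subseteq \Con \times (A' \rightarrow A'')$, at each step translating a statement about $\Lambda(H)$ into one about the approximable mapping $\apmap{H}{A \times A'}{A''}$ by unfolding Definition~\ref{dn-lambda}. As in the proof of Lemma~\ref{lem-evalam}, I would dispatch the order-theoretic closure conditions \ref{dn-am}(\ref{dn-am-6},\ref{dn-am-2},\ref{dn-am-3},\ref{dn-am-1},\ref{dn-am-4}) directly, and reduce the four interpolation/witness conditions \ref{dn-am}(\ref{dn-am-5-1},\ref{dn-am-5-2},\ref{dn-am-8},\ref{dn-am-7}) to the single combined rule~(\ref{eq-amint}) via Lemmas~\ref{pn-amleft}, \ref{pn-amright} and~\ref{pn-amint}. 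Throughout, $\Con_\times$ and $\vdash_\times$ denote the consistency predicate and entailment of the product $A \times A'$.

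For \ref{dn-am}(\ref{dn-am-6}) the $\mathfrak{A}$-part of $\Delta_\rightarrow$ is empty, so Condition~\ref{dn-lambda}(\ref{dn-lambda-1}) is vacuous, while \ref{dn-lambda}(\ref{dn-lambda-2}) for $((\Delta',\emptyset),\Delta'')$ is met by $j = \Delta''$, using \ref{dn-am}(\ref{dn-am-6}) for $H$ and Lemma~\ref{lem-witeqprop}(\ref{lem-witeqprop-0}). Condition~\ref{dn-am}(\ref{dn-am-2}) follows because $S \times X \subseteq S' \times X \in \Con_\times(a,i)$, so \ref{dn-am}(\ref{dn-am-2}) for $H$ applies coordinatewise; and \ref{dn-am}(\ref{dn-am-4}) follows because $\{a\} \in \Con(a_0)$ lifts to $\{(a,i)\} \in \Con_\times(a_0,i)$ by \ref{dn-infsys}(\ref{dn-infsys-1}), whence \ref{dn-am}(\ref{dn-am-4}) for $H$ applies. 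The only wrinkle is \ref{dn-am}(\ref{dn-am-3}): since $\vdash$ need not be reflexive, I cannot entail $S' \times X$ from $S \times X$ directly. Instead I would first apply the Left Interpolation Rule~\ref{dn-am}(\ref{dn-am-5-1}) to $((a,i),S' \times X)\,H\,e$, obtaining an entailed product set, and then push it below $((a,i),S \times X)$ using transitivity of $\vdash$ in the first coordinate (Condition~\ref{dn-infsys}(\ref{dn-infsys-6})) together with $(a,S) \vdash S'$, before invoking \ref{dn-am}(\ref{dn-am-3}) for $H$. Condition~\ref{dn-am}(\ref{dn-am-1}) is the transpose of the entailment-closure of states and is verified exactly as Condition~\ref{dn-st}(\ref{dn-st-2}) in Lemma~\ref{lem-approx-state}, reading $(a,S)\,\Lambda(H)\,(-)$ in place of membership in the state.

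The heart of the proof is the combined rule~(\ref{eq-amint}) for $\Lambda(H)$: given $(a,S)\,\Lambda(H)\,\pair{\mathcal{G} | \mathcal{F}}$, I must produce $(c,U) \in \Con$ and $(\pair{\mathfrak{W} | \mathfrak{V}}, \pair{\mathcal{W} | \mathcal{V}}) \in \Con_\rightarrow$ with $(a,S) \vdash (c,U)$, $(c,U)\,\Lambda(H)\,(\pair{\mathfrak{W} | \mathfrak{V}}, \pair{\mathcal{W} | \mathcal{V}})$ and $(\pair{\mathfrak{W} | \mathfrak{V}}, \pair{\mathcal{W} | \mathcal{V}}) \vdash_\rightarrow \pair{\mathcal{G} | \mathcal{F}}$. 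Writing $\bigcup \mathcal{F} = \set{((i_\nu,X_\nu),e_\nu)}{1 \le \nu \le q}$, Condition~\ref{dn-lambda}(\ref{dn-lambda-1}) gives $((a,i_\nu),S \times X_\nu)\,H\,e_\nu$ for each $\nu$. Applying rule~(\ref{eq-amint}) for $H$ to each of these yields $(p_\nu,Q_\nu) \in \Con_\times$ and $(d_\nu,R_\nu) \in \Con''$ with $((a,i_\nu),S \times X_\nu) \vdash_\times (p_\nu,Q_\nu)$, $(p_\nu,Q_\nu)\,H\,(d_\nu,R_\nu)$ and $(d_\nu,R_\nu) \vdash'' e_\nu$. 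I would then separate the two coordinates of the product: the first-coordinate data $\bigcup_\nu(\{\pr_1(p_\nu)\} \cup \pr_1(Q_\nu))$ is entailed from $(a,S)$, and an application of the Global Interpolation Property~(\ref{eq-gip}) in $A$, together with Lemma~\ref{lem-jextprop}, produces a single $(c,U) \in \Con$ with $(a,S) \vdash (c,U)$ uniformly dominating all first coordinates; the second coordinates, paired with $Z_\nu = R_\nu \cup \{d_\nu\}$, furnish $\mathfrak{V} = \set{((\pr_2(p_\nu),\pr_2(Q_\nu)),z)}{z \in Z_\nu,\ 1 \le \nu \le q} \fsubset \Con' \times A''$. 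I then build an associate $\mathfrak{W} \in \ac(\mathfrak{V})$ by the same inductive chain and Claim as in Lemma~\ref{lem-approx-state}, the witnesses $t_{a',\mathfrak{J}}$ now being generated by the Witness Generation Rule~\ref{dn-am}(\ref{dn-am-8}) for $H$ applied with left side $(c,-)$, $U \times -$. Taking $\pair{\mathcal{W} | \mathcal{V}} = \{\pair{\mathfrak{W} | \mathfrak{V}}\}$ (consistent by Lemma~\ref{lem-fc1}), it remains to check $(c,U)\,\Lambda(H)\,\pair{\mathfrak{W} | \mathfrak{V}}$ and $(\pair{\mathfrak{W} | \mathfrak{V}}, \{\pair{\mathfrak{W} | \mathfrak{V}}\}) \vdash_\rightarrow \pair{\mathcal{G} | \mathcal{F}}$, which are the exact analogues of the final two Claims in Lemma~\ref{lem-approx-state}.

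The hard part will be the coordinate separation. Because the left arguments of $H$ live in $A \times A'$, I must factor the $A$-coordinate out into one consistent set $(c,U)$ that works simultaneously for every $\nu$ and at every stage of the associate construction, while the $A'$-coordinate becomes the argument coordinate of the exponent and must be tracked through the $X$-equivalence bookkeeping of $\sim[\cdot]$ (Lemma~\ref{lem-witeqprop}, Lemma~\ref{lem-witeqam}). A second recurring difficulty, already present in Condition~\ref{dn-am}(\ref{dn-am-3}), is the non-reflexivity of $\vdash$: whenever I wish to replace a consistent set by a larger one—for instance to derive $((c,\pr_2(p_\nu)),U \times \pr_2(Q_\nu))\,H\,z$ from $(p_\nu,Q_\nu)\,H\,z$ in establishing $(c,U)\,\Lambda(H)\,\pair{\mathfrak{W} | \mathfrak{V}}$—I must first interpolate on the $H$-side by \ref{dn-am}(\ref{dn-am-5-1}) and only then entail coordinatewise (using Lemma~\ref{lem-amstrong3} and \ref{dn-infsys}(\ref{dn-infsys-6})). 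Once these are arranged, the verification of Conditions~\ref{dn-fctsp}(\ref{dn-fctsp-2-1},\ref{dn-fctsp-2-2}) for $\Con_\rightarrow$-consistency and of the two clauses of $\vdash_\rightarrow$ is routine bookkeeping as in Lemmas~\ref{lem-fc11} and~\ref{lem-approx-state}, and the degenerate cases in which $S$, some $X_\nu$, or $\bigcup \mathcal{F}$ is empty are absorbed by the global-truth axiom~\ref{dn-infsys}(\ref{dn-infsys-3}) and the conventions $R(\emptyset) = \{\Delta\}$, exactly as in the cited lemmas.
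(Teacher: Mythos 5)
Your architecture is exactly the paper's: verify \ref{dn-am}(\ref{dn-am-6},\ref{dn-am-2},\ref{dn-am-3},\ref{dn-am-1},\ref{dn-am-4}) directly, reduce the remaining four conditions to the combined rule~(\ref{eq-amint}) via Lemmas~\ref{pn-amleft}, \ref{pn-amright}, \ref{pn-amint}, and, for (\ref{eq-amint}), apply (\ref{eq-amint}) for $H$ to each $((i_\nu,X_\nu),e_\nu) \in \bigcup\mathcal{F}$, collapse the first coordinates into a single pair entailed from $(a,S)$ by (\ref{eq-gip}) in $A$, form the token set from the second coordinates, build the associate by the inductive claim with witnesses from \ref{dn-am}(\ref{dn-am-8}), take the singleton consistency set, and prove the two closing claims -- your $(c,U)$, $\mathfrak{V}$, $\mathfrak{W}$ are the paper's $(\bar a,\overline M)$, $\mathfrak{A}$, $\mathfrak{B}$, and your $(p_\nu,Q_\nu)$, $(d_\nu,R_\nu)$ are its $((\bar a_\nu,\bar{\imath}_\nu),M_\nu)$, $(\hat{\jmath}_\nu,\widehat N_\nu)$. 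Your same-witness treatment of \ref{dn-am}(\ref{dn-am-3}) (interpolate by \ref{dn-am}(\ref{dn-am-5-1}), chain with \ref{dn-infsys}(\ref{dn-infsys-6}), then \ref{dn-am}(\ref{dn-am-3}) for $H$) is sound, and arguably cleaner than the paper's.

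There is, however, a genuine gap at exactly the point you call the hard part. You take the raw interpolants $(\pr_2(p_\nu),\pr_2(Q_\nu))$ as the first components of $\mathfrak{V}$, so establishing $(c,U)\,\Lambda(H)\,\pair{\mathfrak{W} | \mathfrak{V}}$ requires $((c,\pr_2(p_\nu)),\,U\times\pr_2(Q_\nu))\,H\,z$ starting from $(p_\nu,Q_\nu)\,H\,z$, and your recipe -- interpolate by \ref{dn-am}(\ref{dn-am-5-1}), then conclude by Lemma~\ref{lem-amstrong3} and \ref{dn-infsys}(\ref{dn-infsys-6}) -- does not go through. Condition~\ref{dn-am}(\ref{dn-am-5-1}) keeps the witness $p_\nu$, and Lemma~\ref{lem-amstrong3} then demands the full entailment $((c,\pr_2(p_\nu)),U\times\pr_2(Q_\nu)) \vdash_{A\times A'} (p_\nu,Q')$, witness included; in the second coordinate this reads $(\pr_2(p_\nu),\pr_2(Q_\nu)) \vdash' \pr_2(p_\nu)$, which is precisely the reflexivity that entailment lacks (this is why your trick works for \ref{dn-am}(\ref{dn-am-3}), where the witness never changes, but not here, where it does). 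The paper's remedy is a second application of (\ref{eq-gip}), this time in $A'$: it inserts $(\hat{\imath}_\nu,\widehat M_\nu)$ with $(i_\nu,X_\nu) \vdash' (\hat{\imath}_\nu,\widehat M_\nu)$ and $(\hat{\imath}_\nu,\widehat M_\nu) \vdash' (\bar{\imath}_\nu,\pr_2(M_\nu))$, and uses these interpolants, not $(\bar{\imath}_\nu,\pr_2(M_\nu))$, as the first components of $\mathfrak{A}$; since then $\{\bar{\imath}_\nu\} \in \Con'(\hat{\imath}_\nu)$, the needed $H$-statements follow from a witness change \ref{dn-am}(\ref{dn-am-4}) followed by a plain set entailment \ref{dn-am}(\ref{dn-am-3}), and no witness ever has to entail itself. (An alternative repair of your version: use Lemma~\ref{pn-amleft} in place of \ref{dn-am}(\ref{dn-am-5-1}), so the interpolant $(c',U')$ comes with its witness entailed from $(p_\nu,Q_\nu)$; then $((c,\pr_2(p_\nu)),U\times\pr_2(Q_\nu)) \vdash_{A\times A'} (c',U')$ holds coordinatewise, using Lemma~\ref{lem-strong6} in the first coordinate, and Lemma~\ref{lem-amstrong3} applies.) Without one of these devices the $H$-statements that seed your associate-building induction cannot be established, so the gap, though local and reparable, is real.
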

\begin{proof}
Again we have to verify the conditions in Def.~\ref{dn-am}.

For Condition~\ref{dn-am}(\ref{dn-am-6}) we need to show that $(\Delta, \emptyset) \Lambda(H) \Delta_\rightarrow$. Condition~\ref{dn-lambda}(\ref{dn-lambda-1}) holds vacuously. In case of Condition~\ref{dn-lambda}(\ref{dn-lambda-2}) it only needs  be shown that $((\Delta, \Delta'), \emptyset) H \Delta''$, which is valid by Axiom~\ref{dn-am}(\ref{dn-am-6}). The remaining condition follows with Lemma~\ref{lem-witeqprop}(\ref{lem-witeqprop-0}).

Condition~\ref{dn-am}(\ref{dn-am-2}) is obviously satisfied, as it holds for $H$.

For Condition~\ref{dn-am}(\ref{dn-am-3}) assume that $(a, S) \vdash S'$. Moreover, let $(i, X) \in \Con'$ and $c \in A''$ such that $((a, i), S' \times X) H c$. By Lemma~\ref{pn-amleft} there is some $((\bar{a}, \bar{\imath}), U) \in \Con_{A \times A'}$ such that 
\[
((a, i), S' \times X) \vdash_{A \times A'} ((\bar{a}, \bar{\imath}), U)
\]
 and $((\bar{a}, \bar{\imath}), U) H c$. It follows that $(a, S') \vdash (\bar{a}, \pr_1(U))$. Hence, $(a, S) \vdash (\bar{a}, \pr_1(U))$ and thus, $((a, i), S \times X) \vdash_{A \times A'} ((\bar{a}, \bar{\imath}), U)$. Therefore, $((a, i), S \times X) H c$. It is now an easy consequence that from  $(a, S') \Lambda(H) \fcbael{E}{D}$ one obtains  $(a, S) \Lambda(H) \fcbael{E}{D}$.

As in the preceding proof, we verify Requirement~(\ref{eq-amint}) instead of  Conditions~\ref{dn-am}(\ref{dn-am-5-1}, \ref{dn-am-5-2}, \ref{dn-am-8}, \ref{dn-am-7}). Suppose that 
\[
(a, S) \Lambda(H) \fcbaset{G}{F}.
\]
We have to show that there exists $(\bar{\imath}, \overline{U}) \in \Con$ and $(\fcbael{B}{A}, \fcbaset{W}{V}) \in \Con_{A' \rightarrow A''}$ so that
\begin{gather*}
(a, S) \vdash (\bar{\imath}, \overline{U}) \label{eq-lambam-7b} \\
(\bar{\imath}, \overline{U}) \Lambda(H) (\fcbael{B}{A}, \fcbaset{W}{V}) \label{eq-lambam-8} \\
(\fcbael{B}{A}, \fcbaset{W}{V}) \vdash_{A' \rightarrow A''} \fcbaset{G}{F}. \label{eq-lambam-9}
\end{gather*}

If $\bigcup\mathcal{F}$ is empty, let $(\bar{\imath}, \overline{U}) = (\Delta, \emptyset)$, $\fcbael{B}{A} = \Delta_{A' \rightarrow A''}$ and $\fcbaset{W}{V} = \{ \Delta_{A' \rightarrow A''} \}$. Otherwise, assume that $\bigcup \mathcal{F} = \set{((i_\nu, X_\nu), e_\nu)}{1 \le \nu \le n}$. Then we have for all $1 \le \nu \le n$ that $((a, i_\nu), S \times X_\nu) H e_\nu$. Thus, for each such $\nu$ there are $((\bar{a}_\nu, \bar{\imath}_\nu), M_\nu) \in \Con_{A \times A'}$ and $(\hat{\jmath}_\nu, \widehat{N}_\nu) \in \Con''$ with
\begin{gather}
((a, i_\nu), S \times X_\nu) \vdash_{A \times A'} ((\bar{a}_\nu, \bar{\imath}_\nu), M_\nu) \label{eq-lambam-10} \\
((\bar{a}_\nu, \bar{\imath}_\nu), M_\nu) H (\hat{\jmath}_\nu, \widehat{N}_\nu) \label{eq-lambam-11} \\
(\hat{\jmath}_\nu, \widehat{N}_\nu) \vdash'' e_\nu \label{eq-lambam-12} 
\end{gather}
It follows that
\begin{equation*}\label{eq-lambam-13}
(a, S) \vdash \bigcup_{\nu = 1}^n \{ \bar{a}_\nu \} \cup \pr_1(M_\nu).
\end{equation*}
Thus, there is $(\bar{a}, \overline{M}) \in \Con$ with $(a, S) \vdash (\bar{a}, \overline{M})$ and $(\bar{a}, \overline{M}) \vdash \bigcup_{\nu = 1}^n \{ \bar{a}_\nu \} \cup \pr_1(M_\nu)$. Moreover, $((a, i_\nu), S \times X_\nu) \vdash_{A \times A'} (\bar{a}, \bar{\imath}_\nu)$.

By definition of product information systems we have that 
\[
\pr_1(M_\nu) \times \pr_2(M_\nu) \in \Con_{A \times A'}(\bar{a}, \bar{\imath}_\nu).
\]
 Since $M_\nu \subseteq \pr_1(M_\nu) \times \pr_2(M_\nu)$, it follows with Statement~(\ref{eq-lambam-11}) that 
\begin{equation}\label{eq-lambam-13+}
((\bar{a}, \bar{\imath}_\nu), \pr_1(M_\nu) \times \pr_2(M_\nu)) H (\hat{\jmath}_\nu, \widehat{N}_\nu).
\end{equation}

By Statement~(\ref{eq-lambam-10}) we have that $(i_\nu, X_\nu) \vdash' (\bar{\imath}_\nu, \pr_2(M_\nu))$. Thus, there is some $(\hat{\imath}_\nu, \widehat{M}_\nu) \in \Con'$ with $(i_\nu, X_\nu) \vdash' (\hat{\imath}_\nu, \widehat{M}_\nu)$ and $(\hat{\imath}_\nu, \widehat{M}_\nu) \vdash' (\bar{\imath}_\nu, \pr_2(M_\nu))$. Here is what we have obtained so far:
\begin{gather}
((a, i_\nu), S \times X_\nu) \vdash_{A \times A'} ((\bar{a}, \hat{\imath}_\nu), \overline{M} \times \widehat{M}_\nu) \label{eq-lambam-14} \\
((\bar{a}, \hat{\imath}_\nu), \overline{M} \times \widehat{M}_\nu) \vdash_{A \times A'} ((\bar{a}, \hat{\imath}_\nu), \pr_1(M_\nu) \times \pr_2(M_\nu)) \label{eq-lambam-15} 
\end{gather}
Since $\{ (\bar{a}, \bar{\imath}_\nu) \} \in \Con_{A \times A'}(\bar{a}, \hat{\imath}_\nu)$, it follows with Statement~(\ref{eq-lambam-13+}) that 
\[
((\bar{a}, \hat{\imath}_\nu), \pr_1(M_\nu) \times \pr_2(M_\nu)) H (\hat{\jmath}_\nu, \widehat{N}_\nu)
\]
 and hence because of Statement~(\ref{eq-lambam-15}) that
\begin{equation}\label{eq-lambam-16}
((\bar{a}, \hat{\imath}_\nu), \overline{M} \times \widehat{M}_\nu) H (\hat{\jmath}_\nu, \widehat{N}_\nu).
\end{equation}

Set 
\[
\mathfrak{A} = \set{((\hat{\imath}_\nu, \widehat{M}_\nu), \hat{d})}{\hat{d} \in \widehat{N}_\nu \wedge 1 \le \nu \le n}.
\]
Next, we will construct $\mathfrak{B} \in \ac(\mathfrak{A})$ so that $(\bar{a}, \overline{M}) \Lambda(H) (\fcbael{B}{A}, \{ \fcbael{B}{A} \})$ and 
\[
(\fcbael{B}{A}, \{ \fcbael{B}{A} \}) \vdash_{A' \rightarrow A''} \fcbaset{G}{F}.
\]

For each $J \subseteq \{ 1, \ldots, n \}$, let $R(\set{(\hat{\imath}_\nu, \widehat{M}_\nu)}{\nu \in J})$ be a system of representatives of $W(\set{(\hat{\imath}_\nu, \widehat{M}_\nu)}{\nu \in J})$ with respect to $\sim [\bigcup\nolimits_{\nu \in J} \widehat{M}_\nu]$ so that $R(\emptyset) = \{ \Delta \}$ und $R(\{ \hat{\imath}_\nu, \widehat{M}_\nu) \}) = \{ \hat{\imath}_\nu \}$, for $1 \le \nu \le n$. By \ref{dn-lambda}(\ref{dn-lambda-2}) there is some $j_{(e, Z), (c, T), \pair{\mathfrak{G} |\mathfrak{F}}} \in A''$, for each $(e, Z) \in \Con$, $\pair{\mathfrak{G} | \mathfrak{F}} \in \pair{\mathcal{G} | \mathcal{F}}$ and $((c, T), b) \in  \mathfrak{G}$, with 
\[
j_{(e, Z), (c, T), \pair{\mathfrak{G} | \mathfrak{F}}} \sim b \,[\rs((c, T), \mathfrak{F})]
\]
 and $((e, c), Z \times T) H j_{(e, Z), (c, T), \pair{\mathfrak{G} | \mathfrak{F}}}$. In case that $((c, T), b) = ((\Delta', \emptyset), \Delta'')$, choose 
 \[
 j_{(e, Z), (\Delta', \emptyset), \pair{\mathfrak{G} |\mathfrak{F}}} = \Delta''.
 \]

\begin{claim}\, \label{cl-lambam16+}
 For every $\mathfrak{J} \subseteq \mathfrak{A}$ and each $e \in R(\pr_1(\mathfrak{J}))$ so that $(e, \mathfrak{J})$ is $\mathfrak{A}$-maximal there is some $t_{e, \mathfrak{J}} \in A''$ with the following two properties:
\begin{align}\label{eq-lambam-17}
\{ \Delta'' \} &\cup \bigcup \{\, \{ \hat{\jmath}_\nu \} \cup \widehat{N}_\nu \mid 1 \le \nu \le n \wedge (\hat{\imath}_\nu, \widehat{M}_\nu) \in \pr_1(\mathfrak{J}) \,\} \cup \{\, t_{c, \mathfrak{K}} \mid  \notag \\
& \mathfrak{K} \subsetneqq \mathfrak{J} \wedge c \sim e \,[\bigcup \pr_{2, 1}(\mathfrak{K})] \wedge \text{$(c, \mathfrak{K})$ $\mathfrak{A}$-maximal} \,\} \cup \{\, j_{(\bar{a}, \overline{M}), (c, T), \pair{\mathfrak{G} | \mathfrak{F}}} \mid  \notag \\
&  (\exists L \subseteq \{ 1, \ldots, n \})(\exists b \in A'') \{\, (\hat{\imath}_\mu, \widehat{M}_\mu) \mid \mu \in L \,\} \subseteq \pr_1(\mathfrak{J}) \wedge \mbox{} \notag \\ 
&T = \bigcup\nolimits_{\mu \in L} X_\mu \wedge c \sim e \,[T] \wedge \pair{\mathfrak{G} | \mathfrak{F}} \in \pair{\mathcal{G} | \mathcal{F}} \wedge ((c, T), b) \in \mathfrak{G} \wedge \mbox{} \notag \\
&b \sim j_{(\bar{a}, \overline{M}), (c, T), \pair{\mathfrak{G} | \mathfrak{F}}} \,[\rs((c, T), \mathfrak{F})] \,\} \in \Con'(t_{e, \mathfrak{J}}),
\end{align}
\begin{align}\label{eq-lambam-18}
((\bar{a}, &e),\, \overline{M} \times \bigcup \pr_{2, 1}(\mathfrak{J})) H (t_{e, \mathfrak{J}}, \{ \Delta'' \} \cup \bigcup \{\, \{ \hat{\jmath}_\nu \} \cup \widehat{N}_\nu \mid 1 \le \nu \le n \wedge \mbox{} \notag \\
&(\hat{\imath}_\nu, \widehat{M}_\nu) \in \pr_1(\mathfrak{J}) \,\} \cup \{\, t_{c, \mathfrak{K}} \mid \mathfrak{K} \subsetneqq \mathfrak{J} \wedge c \sim e \,[\bigcup \pr_{2, 1}(\mathfrak{K})] \wedge \mbox{} \notag \\
&  \text{$(c, \mathfrak{K})$ $\mathfrak{A}$-maximal} \,\} \cup  \{\, j_{(\bar{a}, \overline{M}), (c, T), \pair{\mathfrak{G} | \mathfrak{F}}} \mid (\exists L \subseteq \{ 1, \ldots, n \})(\exists b \in A'') \notag  \\
& T = \bigcup\nolimits_{\mu \in L} X_\mu \wedge c \sim e \,[T] \wedge   \{\, (\hat{\imath}_\mu, \widehat{M}_\mu) \mid \mu \in L \,\} \subseteq \pr_1(\mathfrak{J}) \wedge \pair{\mathfrak{G} | \mathfrak{F}} \in   \notag \\
& \pair{\mathcal{G} | \mathcal{F}} \wedge  ((c, T), b) \in \mathfrak{G} \wedge b \sim j_{(\bar{a}, \overline{M}), (c, T), \pair{\mathfrak{G} | \mathfrak{F}}} \,[\rs((c, T), \mathfrak{F})] \,\}).
\end{align}
\end{claim}

With respect to the last set on the right hand side note that if $((c, T), b) \in \mathfrak{G}$, then there is some $\mathfrak{L} \subseteq \mathfrak{F}$ so that $(c, \mathfrak{L})$ is $\mathfrak{F}$-maximal and $T= \bigcup \pr_{2,1}(\mathfrak{L})$. It follows that $\| \mathfrak{L} \| \le \| \mathfrak{J} \|$.

The claim is demonstrated by induction on the cardinality $\kappa$ of $\mathfrak{J}$. The case $\kappa=0$ is obvious. Set $t_{e, \emptyset} = \Delta''$, if $(e, \emptyset)$ is $\mathfrak{A}$-maximal. All other sets on the left hand side in Statement~(\ref{eq-lambam-17}) are empty in this case, except the last one, which is the singleton $\{ \Delta'' \}$, if for some $\fcbael{G}{F} \in \fcbaset{G}{F}$, $((\Delta', \emptyset), \Delta'') \in \mathfrak{G}$.

Assume next that the claim holds for all $\mathfrak{K} \subseteq \mathfrak{A}$ of cardinality $\kappa$ and let $\mathfrak{J} \subseteq \mathfrak{A}$ of cardinality $\kappa + 1$. Then 
\begin{align*}
((\bar{a},\, &e), \overline{M} \times \bigcup \pr_{2, 1}(\mathfrak{J})) H \{ \Delta'' \} \cup \bigcup \{\, \{ \hat{\jmath}_\nu \} \cup \widehat{N}_\nu \mid 1 \le \nu \le n \wedge (\hat{\imath}_\nu, \widehat{M}_\nu) \in  \\
&\pr_1(\mathfrak{J}) \,\} \cup \{\, t_{c, \mathfrak{K}} \mid \mathfrak{K} \subsetneqq \mathfrak{J} \wedge c \sim e \,[\bigcup \pr_{2, 1}(\mathfrak{K})] \wedge \text{$(c, \mathfrak{K})$ $\mathfrak{A}$-maximal} \,\} \cup \mbox{} \\
&\{\, j_{(\bar{a}, \overline{M}), (c, T), \pair{\mathfrak{G} | \mathfrak{F}}} \mid    (\exists L \subseteq \{ 1, \ldots, n \})(\exists b \in A'')\, T = \bigcup\nolimits_{\mu \in L} X_\mu \wedge  \mbox{}   \\
&c \sim e \,[T] \wedge   \{\, (\hat{\imath}_\mu, \widehat{M}_\mu) \mid \mu \in L \,\} \subseteq \pr_1(\mathfrak{J}) \wedge  \pair{\mathfrak{G} | \mathfrak{F}} \in \pair{\mathcal{G} | \mathcal{F}} \wedge \mbox{} \\
&((c, T), b) \in \mathfrak{G} \wedge b \sim j_{(\bar{a}, \overline{M}), (c, T), \pair{\mathfrak{G} | \mathfrak{F}}} \,[\rs((c, T), \mathfrak{F})] \,\},
\end{align*}
because of Statement~(\ref{eq-lambam-16}), Axiom~\ref{dn-lambda}(\ref{dn-lambda-2}) and the induction hypothesis. By Condition~\ref{dn-am}(\ref{dn-am-7}) there is hence some $t_{e, \mathfrak{J}} \in A''$ so that Properties~(\ref{eq-lambam-17}) and (\ref{eq-lambam-18}) hold.

With help of Claim~\ref{cl-lambam16+} we can now define $\mathfrak{B}$. Let $\mathfrak{B}^{(0)}$ be as in Definition~\ref{dn-ass} and for $\kappa \ge 1$ set
\begin{equation*}
\mathfrak{B}^{(\kappa)} = \set{((e, \pr_{2,1}(\mathfrak{J})), t_{e, \mathfrak{J}})}{\mathfrak{J} \subseteq \mathfrak{A} \wedge \| \mathfrak{J} \| = \kappa \wedge \mbox{} \\
 e \in R(\pr_1(\mathfrak{J})) \wedge \text{$(e, \mathfrak{J})$ $\mathfrak{A}$-maximal}}.
\end{equation*}
Obviously, $\mathfrak{B} \in \ac(\mathfrak{A})$.

\begin{claim}\, \label{cl-lambam-19}
$(\bar{a}, \overline{M}) \Lambda(H) \fcbael{B}{A}$.
\end{claim}

Because of Statement~(\ref{eq-lambam-16}), Condition~\ref{dn-lambda}(\ref{dn-lambda-1}) is satisfied. For Condition~\ref{dn-lambda}(\ref{dn-lambda-2}) let 
\[
((e, T), b) \in \mathfrak{B}.
\]
 Then there is some $\mathfrak{J} \subseteq \mathfrak{A}$ such that $(e, \mathfrak{J})$ is $\mathfrak{A}$-maximal, $T= \bigcup \pr_{2,1}(\mathfrak{J})$ and $b = t_{e, \mathfrak{J}}$. As follows from Properties~(\ref{eq-lambam-17}) and (\ref{eq-lambam-18}), $\pr_2(\mathfrak{J}) \in \Con''(t_{e, \mathfrak{J}})$ and $((\bar{a}, e), \overline{M} \times T) H t_{e, \mathfrak{J}}$. Note that $\pr_2(\mathfrak{J}) = \rs((e, T), \mathfrak{A})$ and choose $j = t_{e, \mathfrak{J}}$.

\begin{claim}\, \label{cl-lambam-19+}
$(\fcbael{B}{A},  \{ \fcbael{B}{A} \}) \vdash_{A' \rightarrow A''} \fcbaset{G}{F}$.
\end{claim}

Let $\fcbael{G}{F} \in \fcbaset{G}{F}$ and $((i_\nu, X_\nu), e_\nu) \in \mathfrak{F}$. Then $(\hat{\jmath}_\nu, \widehat{N}_\nu) \vdash'' e_\nu$. Moreover, $\widehat{N}_\nu \subseteq \ap((i_\nu, X_\nu), \mathfrak{A})$, and by construction of $\mathfrak{B}$, $\{ \hat{\jmath}_\nu \} \in \Con''(\mathfrak{B}_2(i_\nu, X_\nu))$. Thus,
\begin{equation}\label{eq-lambam-20}
(\mathfrak{B}_2(i_\nu, X_\nu), \ap((i_\nu, X_\nu), \mathfrak{A})) \vdash'' e_\nu,
\end{equation}
 which means that Condition~\ref{dn-fctsp}(\ref{dn-fctsp-3-1}) holds. For Condition~\ref{dn-fctsp}(\ref{dn-fctsp-3-2}) let $(c, T), b) \in \mathfrak{G}$ and $L \subseteq \{ 1, \ldots, n \}$ such that $\ds((c, T), \mathfrak{F}) = \set{(i_\nu, X_\nu)}{\nu \in L}$. Then it follows with Statement~(\ref{eq-lambam-20}) that 
\[
 (\mathfrak{B}_2(c, T), \bigcup \set{\ap((i_\nu, X_\nu), \mathfrak{A})}{\nu \in L}) \vdash'' \set{e_\nu}{\nu \in L}.
\] 
 Note that $\set{e_\nu}{\nu \in L} = \rs((c, T), \mathfrak{F})$. By Axiom~\ref{dn-infsys}(\ref{dn-infsys-11}) there is thus some $k \in A''$ with $\set{e_\nu}{\nu \in L} \in \Con''(k)$ and
 \[
  (\mathfrak{B}_2(c, T), \bigcup \set{\ap((i_\nu, X_\nu), \mathfrak{A})}{\nu \in L}) \vdash'' k.
\]
Therefore, $\{ k \} \in \Con''(\mathfrak{B}_2(c, T))$. 

By the definition of $\mathfrak{B}(c, T)$ there is some $\mathfrak{J} \subseteq \mathfrak{A}$ such that $(\mathfrak{B}_{1,1}(c, T), \mathfrak{J})$ is $\mathfrak{A}$-maximal and $\mathfrak{B}_{1,2}(c, T) = \bigcup \pr_{2,1}(\mathfrak{J})$. Moreover, $\set{((\hat{\imath}_\nu, \widehat{M}_\nu), d)}{d \in \widehat{N}_\nu \wedge \nu \in L} \subseteq \mathfrak{J}$ and $\mathfrak{B}_2(c, T) = t_{\mathfrak{B}_{1,1}(c, T), \mathfrak{J}}$. It follows that 
\[
j_{(\bar{a}, \overline{M}), (c, T), \fcbael{G}{F}} \in \Con''(\mathfrak{B}_2(c, T)).
\]
 Since in addition, $j_{(\bar{a}, \overline{M}), (c, T), \fcbael{G}{F}} \sim b \,[\rs((c, T), \mathfrak{F})]$, we obtain that also $k \sim b \,[\rs((c, T), \mathfrak{F})]$.

For Condition~\ref{dn-am}(\ref{dn-am-1}) assume that 
\begin{gather}
(a, S) \Lambda(H) (\fcbael{B}{A}, \fcbaset{W}{V}) \label{eq-lambam-1} \\
(\fcbael{B}{A}, \fcbaset{W}{V}) \vdash_\rightarrow \fcbael{E}{D}. \label{eq-lambam-2} 
\end{gather}
We need to show that $(a, S) \Lambda(H) \fcbael{E}{D}$.

Let to this end $((j, Y), d) \in \mathfrak{D}$. As a consequence of Statement~(\ref{eq-lambam-2}) we obtain that 
\begin{equation}\label{eq-lambam-3}
(\mathfrak{B}_2(j, Y), \ap((j, Y), \bigcup\mathcal{V})) \vdash'' d.
\end{equation}
Moreover, because of Statement~(\ref{eq-lambam-1}), we have for $((i, X), e) \in \mathfrak{A} \cup \bigcup\mathcal{V}$ that
\begin{equation}\label{eq-lambam-4}
((a, i), S \times X) H e.
\end{equation}

Now, suppose that $(i, X) \in \Cl((j, Y), \pr_1(\mathfrak{A} \cup \bigcup\mathcal{V}))$. Then $i \sim j \,[X]$ and hence $(a, i) \sim (a, j) \,[S \times X]$. With Lemma~\ref{lem-witeqam} it therefore follows with Statement~(\ref{eq-lambam-4}) that $((a, j), S \times X) H e$. Note that $X \subseteq \ucl((j, Y), \pr_1(\mathfrak{A} \cup \bigcup\mathcal{V}))$ and $\ucl((j, Y), \pr_1(\mathfrak{A} \cup \bigcup\mathcal{V})) \in \Con'(j)$. Thus,
\[
((a, j), S \times \ucl((j, Y), \pr_1(\mathfrak{A} \cup \bigcup\mathcal{V}))) H e,
\]
by Axiom~\ref{dn-am}(\ref{dn-am-2}). Since $(j, Y) \vdash' \ucl((j, Y), \pr_1(\mathfrak{A} \cup \bigcup\mathcal{V}))$, it ensues that $((a, j), S \times Y) H e$, which shows that
\begin{equation*}\label{eq-lambam-5}
((a, j), S \times Y) H \ap((j, Y), \mathfrak{A} \cup \bigcup \mathcal{V}).
\end{equation*}
Because of Condition~\ref{dn-am}(\ref{dn-am-7}) there is hence some $r \in A''$ with 
\begin{equation}\label{eq-lambam-6}
((a, j), S \times Y) H (r, \ap((j, Y), \mathfrak{A} \cup \bigcup \mathcal{V})).
\end{equation}
Furthermore, as a consequence of Axiom~\ref{dn-lambda}(\ref{dn-lambda-2}) and Statement~(\ref{eq-lambam-1}), there is some $\bar{r} \in A''$ with $((a, j), S \times Y) H (\bar{r}, \ap((j, Y), \mathfrak{A}))$ and 
\[
\bar{r} \sim \mathfrak{B}_2(j, Y) \,[\ap((j, Y), \mathfrak{A})].
\]
 It follows that for some $\hat{r} \in A''$, $((a, j), S \times Y) H \hat{r}$ and $\{ r, \bar{r} \} \in \Con''(\hat{r})$. Thus, $r \sim \mathfrak{B}_2(j, Y) \,[\ap((j, Y), \mathfrak{A})]$. As $\fcbaset{W}{V} \in \Con_\rightarrow(\fcbael{B}{A})$, we obtain with Axiom~\ref{dn-fctsp}(\ref{dn-fctsp-2-2}) that $r \sim \mathfrak{B}_2(j, Y) \,[\ap((j, Y), \bigcup\mathcal{V})]$. With Statement~(\ref{eq-lambam-3}) we therefore have that
\begin{equation*}\label{eq-lambam-7a}
(r, \ap((j, Y), \bigcup\mathcal{V})) \vdash'' d,
\end{equation*}
from which we gain with Statement~(\ref{eq-lambam-6}) that
\[
((a, j), S \times Y) H d.
\]

Next, let $((c, T), e) \in \mathfrak{E}$. Then it follows that
\[
((a, c), S \times T) H \rs((c, T), \mathfrak{D}).
\]
Because of Statement~(\ref{eq-lambam-2}) and Axiom~\ref{dn-fctsp}(\ref{dn-fctsp-3-2}) there is some $k \in A''$ so that 
\[
e \sim k \,[\rs((c, T), \mathfrak{D})]
\] and 
\begin{equation*}
(\mathfrak{B}_2(c, T), \bigcup \set{\ap((j, Y), \bigcup \mathcal{V})}{(j, Y) \in \ds((c, T), \mathfrak{D})}) \vdash'' 
 (k, \rs((c, T), \mathfrak{D})).
\end{equation*}
As in the first part of the proof this implies that $((a, c), S \times T) H k$. Thus, $(a, S) \Lambda(H) \fcbael{E}{D}$.

For Condition~\ref{dn-am}(\ref{dn-am-4}) let $\{ a \} \in \Con(b)$, $S \in \Con(a)$, $(i, X) \in \Con'$ and $c \in A''$ so that $((a, i), S \times X) H c$. We have that $\{ (a, i) \} \in \Con_{A \times A'}(b, i)$ and $S \times X \in \Con_{A \times A'}(a, i)$. Hence, $S \times X \in \Con_{A \times A'}(b, i)$ and $((b, i), S \times X) H c$. Therefore, if $(a, S) \Lambda(H) \fcbael{E}{D}$ then also $(b, S) \Lambda(H) \fcbael{E}{D}$.
\end{proof}

\begin{lemma}\label{lem-lamev1}
$(\Lambda(H) \times \Id_{A'}) \circ \Ev = H$. for all $\apmap{H}{A\times A'}{A''}$.
\end{lemma}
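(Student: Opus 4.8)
The plan is to prove the set-theoretic equality of the two approximable mappings $A\times A'\Vdash A''$ by showing, for every $((a,a'),W)\in\Con_\times$ and $e\in A''$, that $((a,a'),W)\,[(\Lambda(H)\times\Id_{A'})\circ\Ev]\,e$ holds iff $((a,a'),W)\,H\,e$. Both sides are approximable mappings by Lemmas~\ref{lem-evalam} and~\ref{lem-lambam} together with closure under products and composition. Two preliminary observations streamline everything. First, since $\vdash_\times$ depends on a consistent set only through its two projections, $((a,a'),W)$ and $((a,a'),S\times X')$ with $S=\pr_1(W)$ and $X'=\pr_2(W)$ entail each other's members, so by Conditions~\ref{dn-am}(\ref{dn-am-2},\ref{dn-am-3}) any approximable mapping relates them to exactly the same tokens; I may therefore assume $W=S\times X'$. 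Second, by the universal property of the product the morphism $\Lambda(H)\times\Id_{A'}$ acts coordinatewise: $((a,a'),W)\,(\Lambda(H)\times\Id_{A'})\,(\pair{\mathfrak{B}|\mathfrak{A}},c)$ iff $(a,S)\,\Lambda(H)\,\pair{\mathfrak{B}|\mathfrak{A}}$ and $(a',X')\vdash' c$. Unfolding the composition then says that $((a,a'),W)\,[(\Lambda(H)\times\Id_{A'})\circ\Ev]\,e$ holds iff there is a consistent $(M,\mathbb{Z})$ in $(A'\to A'')\times A'$, with $M=(\pair{\mathfrak{B}|\mathfrak{A}},c)$, satisfying the coordinatewise condition for $M$ and for every member of $\mathbb{Z}$, and $(M,\mathbb{Z})\Ev e$.

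For the inclusion $H\subseteq(\Lambda(H)\times\Id_{A'})\circ\Ev$ I would construct the witnessing datum explicitly. Given $((a,a'),S\times X')\,H\,e$, the Right Interpolation Rule~\ref{dn-am}(\ref{dn-am-5-2}) yields $(d,V)\in\Con''$ with $((a,a'),S\times X')\,H\,(d,V)$ and $(d,V)\vdash'' e$. Put $\mathfrak{A}=\set{((a',X'),g)}{g\in\{d\}\cup V}$, so that Condition~\ref{dn-lambda}(\ref{dn-lambda-1}) holds, take $\mathfrak{B}\in\ac(\mathfrak{A})$ as in Lemma~\ref{lem-singext} (with $d$ witnessing $\{d\}\cup V$), and verify Condition~\ref{dn-lambda}(\ref{dn-lambda-2}) with $j=d$; this gives $(a,S)\,\Lambda(H)\,\pair{\mathfrak{B}|\mathfrak{A}}$. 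Now set $M=(\pair{\mathfrak{B}|\mathfrak{A}},a')$ and $\mathbb{Z}=\set{(\pair{\mathfrak{B}|\mathfrak{A}},v)}{v\in X'\cup\{a'\}}$. Then $(M,\mathbb{Z})$ is consistent, the coordinatewise condition holds for $M$ and every member of $\mathbb{Z}$ by reflexivity of $\vdash'$ (equivalently, because $\Id_{A'}$ is the identity morphism), and a direct computation gives $\ap((a',X'\cup\{a'\}),\mathfrak{A})=\{d\}\cup V$ and $\mathfrak{B}_2(a',X'\cup\{a'\})=d$, so $(M,\mathbb{Z})\Ev e$ reduces to $(d,\{d\}\cup V)\vdash'' e$, which follows from $(d,V)\vdash'' e$ by Condition~\ref{dn-infsys}(\ref{dn-infsys-5}).

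For the reverse inclusion, fix a witnessing $(M,\mathbb{Z})$ and write $Y=\pr_2(\mathbb{Z})$ and $\mathfrak{Q}=\bigcup\pr_{2,1}(\mathbb{Z})$. The coordinatewise condition supplies $(a,S)\,\Lambda(H)\,\pair{\mathfrak{B}|\mathfrak{A}}$, $(a,S)\,\Lambda(H)\,\pair{\mathfrak{D}|\mathfrak{C}}$ for every $(\pair{\mathfrak{D}|\mathfrak{C}},d)\in\mathbb{Z}$, together with $(a',X')\vdash' c$ and $(a',X')\vdash' Y$, whence $(a',X')\vdash'(c,Y)$ by Lemma~\ref{lem-strong6}. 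For $g\in\ap((c,Y),\mathfrak{Q})$ choose $((i,X),g)\in\mathfrak{C}$ with $(i,X)\in\Cl((c,Y),\pr_1(\mathfrak{Q}))$; Condition~\ref{dn-lambda}(\ref{dn-lambda-1}) gives $((a,i),S\times X)\,H\,g$, and from $i\sim i'\,[X]$ with $(a',X')\vdash'(i',X)$ one derives $i\sim a'\,[X]$ (using $\{i'\}\in\Con'(a')$ and transitivity of $\sim[X]$ from Lemma~\ref{lem-witeqprop}). A coordinatewise argument shows $(a,i)\sim(a,a')\,[S\times X]$, so Lemma~\ref{lem-witeqam} upgrades this to $((a,a'),S\times X)\,H\,g$, and then $((a,a'),S\times X')\,H\,g$ by Condition~\ref{dn-am}(\ref{dn-am-3}) using $(a',X')\vdash' X$. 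Hence $((a,a'),S\times X')\,H\,\ap((c,Y),\mathfrak{Q})$.

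It remains to handle the value witness, and I expect this to be the main obstacle. Applying Condition~\ref{dn-lambda}(\ref{dn-lambda-2}) to $\mathfrak{B}(c,Y)=(\mathfrak{B}_1(c,Y),\mathfrak{B}_2(c,Y))$ produces $j$ with $((a,\mathfrak{B}_{1,1}(c,Y)),S\times\mathfrak{B}_{1,2}(c,Y))\,H\,j$ and $j\sim\mathfrak{B}_2(c,Y)\,[\ap((c,Y),\mathfrak{A})]$, and the same pull-down as above yields $((a,a'),S\times X')\,H\,j$. The crucial step is that the consistency $\pr_1(\mathbb{Z})\in\Con_\rightarrow(\pair{\mathfrak{B}|\mathfrak{A}})$ delivers Condition~\ref{dn-fctsp}(\ref{dn-fctsp-2-2}), which promotes the equivalence over $\ap((c,Y),\mathfrak{A})$ to one over $\ap((c,Y),\mathfrak{Q})$; Lemma~\ref{lem-witeqprop}(\ref{lem-witeqprop-1}) then turns the defining entailment $(\mathfrak{B}_2(c,Y),\ap((c,Y),\mathfrak{Q}))\vdash'' e$ of $\Ev$ into $(j,\ap((c,Y),\mathfrak{Q}))\vdash'' e$. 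Since $((a,a'),S\times X')\,H\,(j,\ap((c,Y),\mathfrak{Q}))$, Condition~\ref{dn-am}(\ref{dn-am-1}) gives $((a,a'),S\times X')\,H\,e$, and the rectangle reduction finishes the proof. The delicate point throughout is recognising that Condition~\ref{dn-fctsp}(\ref{dn-fctsp-2-2}) is exactly what reconciles the value witness $\mathfrak{B}_2(c,Y)$, read off from $\mathfrak{A}$, with the value set $\ap((c,Y),\mathfrak{Q})$, gathered from all of $\mathbb{Z}$, and then carrying the coordinatewise witness-equivalence bookkeeping through cleanly.
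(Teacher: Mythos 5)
Your inclusion $(\Lambda(H)\times\Id_{A'})\circ\Ev\subseteq H$ follows the paper's own argument in its essentials -- Condition~\ref{dn-lambda}(\ref{dn-lambda-1}) to relate $H$ to the $\ap$-set, Condition~\ref{dn-lambda}(\ref{dn-lambda-2}) for the value witness, the promotion via \ref{dn-fctsp}(\ref{dn-fctsp-2-2}), then Lemma~\ref{lem-witeqprop}(\ref{lem-witeqprop-1}) and \ref{dn-am}(\ref{dn-am-1}) -- but your inclusion $H\subseteq(\Lambda(H)\times\Id_{A'})\circ\Ev$ has a genuine gap: it is built on reflexivity of entailment, which fails in this setting. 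You verify the $\Id_{A'}$-components for the members of $\mathbb{Z}$ ``by reflexivity of $\vdash'$'', i.e.\ you use $(a',X')\vdash' v$ for $v\in X'\cup\{a'\}$; and your computation $\ap((a',X'\cup\{a'\}),\mathfrak{A})=\{d\}\cup V$ and $\mathfrak{B}_2(a',X'\cup\{a'\})=d$ needs $(a',X')\in\Cl((a',X'\cup\{a'\}),\{(a',X')\})$, i.e.\ $(a',X'\cup\{a'\})\vdash'(j,X')$ for some $j\sim a'\,[X']$. Neither holds: entailment in an information system with witnesses is not reflexive -- pairs with $(j,V)\vdash(j,V)$ are exactly the reflexive ones (Definition~\ref{dn-refl}), and their availability characterises the algebraic case (Theorem~\ref{tm-alg}). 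Throwing $a'$ into the set does not help: $X'\cup\{a'\}\in\Con'(a')$ is itself not derivable from the axioms, and $(a',X'\cup\{a'\})\vdash' X'$ still is not. Likewise, taking $d$ as the witness in Lemma~\ref{lem-singext} needs $\{d\}\cup V\cup\{\Delta''\}\in\Con''(d)$, which does not follow from $V\in\Con''(d)$, since consistency is not closed under unions. In a non-algebraic system your closure $\Cl$ is empty, so $\mathfrak{B}_2(a',X'\cup\{a'\})=\Delta''$ and your instance of $\Ev$ degenerates to $(\Delta'',\emptyset)\vdash'' e$, which is false in general.

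The paper avoids all of this by interpolating on the left \emph{twice} before constructing anything: by Lemmas~\ref{pn-amleft}--\ref{pn-amint} one gets $((a,b),U)\vdash_{A\times A'}((a',b'),U')\vdash_{A\times A'}((a'',b''),U'')$ together with $((a'',b''),U'')\,H\,(j,V\cup\{\Delta''\})$ and $(j,V)\vdash'' c$; then $\mathfrak{A}$ is anchored at the inner point $(b'',\pr_2(U''))$, the $\Ev$-evaluation is made at the middle point $(b',\pr_2(U'))$, and the product-morphism component is checked from the outer point. The two facts you tried to obtain by reflexivity -- that the evaluation point entails the anchor of $\mathfrak{A}$, and that the $\Id_{A'}$-component is entailed -- are then genuine entailments supplied by the two interpolation steps; this is exactly the Hoofman-style interpolation discipline the introduction warns is unavoidable in the continuous case. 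A smaller but related flaw runs through your other inclusion and your opening ``rectangle reduction'': replacing $W$ by $\pr_1(W)\times\pr_2(W)$, and each ``pull-down by Condition~\ref{dn-am}(\ref{dn-am-3})'', would literally require $(a,S)\vdash S$, reflexivity again. Those claims are true, but must be routed through Lemma~\ref{pn-amleft} first: interpolate to some $(d,W')$ with $(d,W')\,H\,g$, check that the larger point also entails $(d,W')$ (on the $A$-side because $\vdash_{A\times A'}$ only sees projections, on the $A'$-side via Lemma~\ref{lem-strong6} and \ref{dn-infsys}(\ref{dn-infsys-6})), and conclude with Lemma~\ref{lem-amstrong3}.
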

\begin{proof}
Let $((a, b), U) \in \Con_{A \times A'}$ and $c \in A''$ with $((a, b), U) ((\Lambda(H) \times \Id_{A'}) \circ \Ev) c$. Because of Lemma~\ref{pn-amleft} there exists $((a', b'), U') \in \Con_{A \times A'}$ such that
\begin{gather}
((a, b), U) \vdash_{A \times A'} ((a', b'), U') \label{eq-lamev1-1} \\
((a', b'), U') ((\Lambda(H) \times \Id\nolimits_{A'}) \circ \Ev) c. \label{eq-lamev1-2}
\end{gather}
In addition there is some $((\fcbael{B}{A}, \bar{b}), \overline{U}) \in \Con_{(A' \rightarrow A'') \times A'}$ with
\begin{gather}
((a', b'), U') (\Lambda(H) \times \Id\nolimits_{A'}) ((\fcbael{B}{A}, \bar{b}), \overline{U}) \label{eq-lamev1-3} \\
((\fcbael{B}{A}, \bar{b}), \overline{U}) \Ev c \label{eq-lamev1-4},
\end{gather}
where we obtain with Statement~(\ref{eq-lamev1-3})  that
\begin{gather}
(a', \pr_1(U')) \Lambda(H) (\fcbael{B}{A}, \pr_1(\overline{U})) \label{eq-lamev1-5} \\
(b', \pr_2(U')) \vdash' (\bar{b}, \pr_2(\overline{U})). \label{eq-lamev1-6}
\end{gather}
As a consequence of Statement~(\ref{eq-lamev1-5}) we gain that for $((i, X), e) \in \bigcup \pr_{2,1}(\overline{U})$,
\begin{equation}\label{eq-lamev1-7}
((a', i), \pr_1(U') \times X) H e,
\end{equation}
which implies that
\begin{equation}\label{lem-lamev1-8}
((a', \bar{b}), \pr_1(U') \times \pr_2(\overline{U})) H \ap((\bar{b}, \pr_2(\overline{U})), \bigcup \pr_{2,1}(\overline{U})).
\end{equation}
With Statements~(\ref{eq-lamev1-1}) and (\ref{eq-lamev1-6}) we therefore have that
\begin{equation}\label{eq-lamev1-9}
((a, b), U) H \ap((\bar{b}, \pr_2(\overline{U})), \bigcup \pr_{2,1}(\overline{U})).
\end{equation}
By Condition~\ref{dn-lambda}(\ref{dn-lambda-2}) there is now some $j \in A''$ so that 
\begin{gather}
((a, b), U) H (j, \ap((\bar{b}, \pr_2(\overline{U})), \bigcup \pr_{2,1}(\overline{U}))) \label{eq-lamev1-10} \\
j \sim \mathfrak{B}_2(\bar{b}, \pr_2(\overline{U})) \,[\ap((\bar{b}, \pr_2(\overline{U})), \bigcup \pr_{2,1}(\overline{U}))]. \label{lamev1-11}
\end{gather}
Since by Statement~(\ref{eq-lamev1-4})
\[
(\mathfrak{B}_2(\bar{b}, \pr_2(\overline{U})), \ap((\bar{b}, \pr_2(\overline{U})), \bigcup \pr_{2,1}(\overline{U}))) \vdash'' c,
\]
it follows that also
\[
(j, \ap((\bar{b}, \pr_2(\overline{U})), \bigcup \pr_{2,1}(\overline{U}))) \vdash'' c.
\]
Together with Statement~(\ref{eq-lamev1-10}) we thus obtain that $((a, b), U) Hc$.

Next assume conversely that $((a, b), U) Hc$. Because of Lemmas~\ref{pn-amleft}, \ref{pn-amright} and \ref{pn-amint} there exist $((a', b'), U')$, $((a'', b''), U'') \in \Con_{A \times A'}$ and $(j, V) \in \Con''$ such that
\begin{gather}
((a, b), U) \vdash_{A \times A'} ((a', b'), U') \label{eq-lamev1-12+} \\
((a', b'), U') \vdash_{A \times A'} ((a'', b''), U'') \\
((a'', b''), U'') H (j, V \cup \{ \Delta'' \}) \\
(j, V) \vdash'' c. \label{eq-lamev1-12}
\end{gather}
Set 
\[
\mathfrak{A} = \set{((b'', \pr_2(U'')), d)}{d \in V}
\]
 and let $\mathfrak{B} \in \ac(\mathfrak{A})$ be as in Lemma~\ref{lem-singext}. Then we have for $(i, X) \in \Con'$ that
\[
((a'', i), \pr_1(U'') \times X) H (\mathfrak{B}_2(i, X), \ap((i, X), \mathfrak{A})).
\]
This shows that $(a'', \pr_1(U'')) \Lambda(H) \fcbael{B}{A}$. Moreover, we obtain with Statement~(\ref{eq-lamev1-12+}) that
\begin{equation}\label{eq-lamev1-13}
((a, b), U) (\Lambda(H) \times \Id\nolimits_{A'}) ((\fcbael{B}{A}, b'), \{ \fcbael{B}{A} \} \times \pr_2(U')).
\end{equation}
Now, note that $\ap((b', \pr_2(U')), \mathfrak{A}) = V$ and $\mathfrak{B}_2(b', \pr_2(U')) = j$. Therefore, it follows with Statement~(\ref{eq-lamev1-12}) that
\[
(\mathfrak{B}_2(b', \pr_2(U')), \ap((b', \pr_2(U')), \mathfrak{A})) \vdash'' c.
\]
Let $\mathbb{Z} = \{ \fcbael{B}{A} \} \times \pr_2(U')$. Then $\mathbb{Z} \in \Con_{(A' \rightarrow A'') \times A'}(\fcbael{B}{A}, b')$. 
In addition, we have that $((\fcbael{B}{A}, b'), \mathbb{Z}) \Ev c$. With Statement~(\ref{eq-lamev1-13}) we now gain that 
\[
((a, b), U) ((\Lambda(H) \times \Id\nolimits_{A'}) \circ \Ev) c
\] 
as was to be shown.
\end{proof}

\begin{lemma}\label{lem-lamev2}
$\Lambda((G \times \Id_{A'}) \circ \Ev) = G$, for all $\apmap{G}{A}{A' \rightarrow A''}$.
\end{lemma}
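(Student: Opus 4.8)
The plan is to prove the equality of the two relations $\Lambda(H)$ and $G$, where $H := (G \times \Id_{A'}) \circ \Ev$, by establishing both inclusions. First I would record a usable description of $H$. Unfolding the product morphism exactly as in the proof of Lemma~\ref{lem-lamev1} (where $G \times \Id_{A'}$ acts through $G$ on the first coordinate and through entailment on the second) together with the definition of $\Ev$, one gets: $((a,b), U)\, H\, c$ holds iff there is a consistent set $((\fcbael{E}{D}, \bar b), \overline U)$ of $(A' \rightarrow A'') \times A'$ with $(a, \pr_1(U))\, G\, (\fcbael{E}{D}, \pr_1(\overline U))$, $(b, \pr_2(U)) \vdash' (\bar b, \pr_2(\overline U))$, and $(\mathfrak{E}_2(\bar b, \pr_2(\overline U)), \ap((\bar b, \pr_2(\overline U)), \bigcup \pr_{2,1}(\overline U))) \vdash'' c$.

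For the inclusion $G \subseteq \Lambda(H)$, I would assume $(a, S)\, G\, \fcbael{B}{A}$ and verify Conditions~\ref{dn-lambda}(\ref{dn-lambda-1}) and~(\ref{dn-lambda-2}). The engine is the interpolation property of $G$ (Lemma~\ref{pn-amint}, applied to the single target token $\fcbael{B}{A}$): it yields $(c, U) \in \Con$ and a consistent set $(\fcbael{E}{D}, \pair{\mathcal{E} | \mathcal{D}})$ of the function space with $(a, S) \vdash (c, U)$, $(c, U)\, G\, (\fcbael{E}{D}, \pair{\mathcal{E} | \mathcal{D}})$ and $(\fcbael{E}{D}, \pair{\mathcal{E} | \mathcal{D}}) \vdash_\rightarrow \fcbael{B}{A}$. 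Now for each $((i, X), e) \in \mathfrak{A}$, Condition~\ref{dn-fctsp}(\ref{dn-fctsp-3-1}) gives $(\mathfrak{E}_2(i, X), \ap((i, X), \bigcup \mathcal{D})) \vdash'' e$; after replacing $(i, X)$ by an interpolant via the Global Interpolation Property~(\ref{eq-gip}) and Lemma~\ref{lem-jextprop}(\ref{lem-jextprop-3}) so that the argument is entailed by $(i, X)$, I assemble the middle token from $\pair{\mathcal{E} | \mathcal{D}}$ and obtain $((a, i), S \times X)\, H\, e$ from the description of $H$ above, using Lemma~\ref{lem-amstrong3} to lift the $G$-relation from $(c, U)$ up to $(a, S)$. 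Condition~\ref{dn-lambda}(\ref{dn-lambda-2}) follows in the same manner from Condition~\ref{dn-fctsp}(\ref{dn-fctsp-3-2}), the witness $k$ it supplies providing the required $j \sim b$.

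For the reverse inclusion $\Lambda(H) \subseteq G$, I would assume Conditions~\ref{dn-lambda}(\ref{dn-lambda-1},\ref{dn-lambda-2}) and reconstruct $(a, S)\, G\, \fcbael{B}{A}$. Unfolding each $H$-relation coming from the two clauses through the description of $H$ produces function-space tokens that are $G$-related to $(a, S)$, together with entailment data on the $A''$-side. The task is then to merge these finitely many tokens into a single consistent family $\pair{\mathcal{W} | \mathcal{V}} \in \Con_\rightarrow(\fcbael{W}{V})$ with $(a, S)$ (equivalently an interpolant $(c, U)$) $G$-related to all of it, and to check $(\fcbael{W}{V}, \pair{\mathcal{W} | \mathcal{V}}) \vdash_\rightarrow \fcbael{B}{A}$; then Axiom~\ref{dn-am}(\ref{dn-am-1}) together with Lemma~\ref{lem-amstrong3} yields $(a, S)\, G\, \fcbael{B}{A}$. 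Verifying the two clauses of $\vdash_\rightarrow$ for $\fcbael{B}{A}$ — clause~(\ref{dn-fctsp-3-1}) from the $\Ev$-entailments of the first $\Lambda$-condition and clause~(\ref{dn-fctsp-3-2}) with the correct consistency witness from the second — requires constructing the witnesses $\mathfrak{W}_2$ by the same inductive bookkeeping (the ``$t_{a, \mathfrak{J}}$''-construction) used in Lemmas~\ref{lem-fc11} and~\ref{lem-lambam}.

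The main obstacle throughout is witness bookkeeping under a non-reflexive entailment: since $(i, X) \not\vdash X$ in general, a step $((i, X), e) \in \mathfrak{A}$ cannot be activated at its own argument, so every passage between a function token and its constituent steps must be routed through interpolation (the Global Interpolation Property~(\ref{eq-gip}) and Lemma~\ref{pn-amint}) and through the $X$-equivalence calculus (Lemmas~\ref{lem-witeqprop} and~\ref{lem-witeqam}), all while keeping the second-component witnesses $\mathfrak{B}_2(\cdot)$ and $\mathfrak{W}_2(\cdot)$ aligned up to $\sim[\cdot]$. This is precisely the delicate point already encountered in Lemmas~\ref{lem-lambam} and~\ref{lem-lamev1}, and I expect the reverse inclusion to essentially re-run that inductive construction, now driven by the hypotheses defining $\Lambda(H)$.
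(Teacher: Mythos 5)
Your first inclusion, $G \subseteq \Lambda((G \times \Id_{A'}) \circ \Ev)$, is essentially the paper's own argument: interpolate on the $G$-side, read off clauses~\ref{dn-fctsp}(\ref{dn-fctsp-3-1}) and~(\ref{dn-fctsp-3-2}) for the entailed token, replace each argument $(i, X)$ by an interpolant entailing $\ucl((i, X), \pr_1(\bigcup \mathcal{D} \cup \mathfrak{D}))$ so that Lemma~\ref{lem-jextprop}(\ref{lem-jextprop-3}) keeps $\ap(\cdot\,, \bigcup\mathcal{D})$ and the witness $\mathfrak{E}_2(\cdot)$ unchanged, and take the middle token of the composition to be $\pair{\mathcal{E} | \mathcal{D}} \times Y$. (The paper skips your left interpolant $(c, U)$ and works at $(a, S)$ directly via Lemma~\ref{pn-amright} instead of Lemma~\ref{pn-amint}, but your detour through Lemma~\ref{lem-amstrong3} is harmless.)

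The reverse inclusion, however, contains a genuine gap, located exactly at your ``merging'' step. You propose to produce the consistent family $\pair{\mathcal{W} | \mathcal{V}} \in \Con_\rightarrow(\fcbael{W}{V})$, together with the fact that $(a,S)$ is $G$-related to all of it, by re-running the inductive $t_{a, \mathfrak{J}}$-construction of Lemmas~\ref{lem-fc11} and~\ref{lem-lambam}. That tool cannot do this job: the construction builds a token of $A' \rightarrow A''$ by hand, and you would then have to prove $(a, S)\, G\, \fcbael{W}{V}$ for it. But $G$ is an \emph{abstract} approximable mapping; unlike $\vdash_\rightarrow$, $\St(H)$ and $\Lambda(H)$ in Lemmas~\ref{lem-fc11}, \ref{lem-approx-state} and~\ref{lem-lambam} --- which are explicitly defined relations whose membership conditions can be checked pointwise for a constructed token --- new $G$-facts can only be generated by $G$'s own axioms. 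This is precisely why the paper's proof of this direction contains no associate construction at all: after unfolding the hypotheses into $(a, S)\, G\, (\pair{\mathfrak{B}^{(\nu)} | \mathfrak{A}^{(\nu)}}, \pr_1(\mathbb{Z}_\nu))$ plus entailment data, it applies Lemma~\ref{pn-amright} \emph{to $G$}, which hands over a single pair $(\fcbael{E}{D}, \fcbaset{E}{D}) \in \Con_{A' \rightarrow A''}$ with $(a, S)\, G\, (\fcbael{E}{D}, \fcbaset{E}{D})$ and $(\fcbael{E}{D}, \fcbaset{E}{D}) \vdash_{A' \rightarrow A''}$ all of the unfolded tokens; the merge is supplied by $G$'s interpolation property, not constructed. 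What remains is to verify $(\fcbael{E}{D}, \fcbaset{E}{D}) \vdash_{A' \rightarrow A''} \fcbael{W}{V}$, where the consistency witness demanded by clause~\ref{dn-fctsp}(\ref{dn-fctsp-3-2}) comes from Condition~\ref{dn-infsys}(\ref{dn-infsys-11}) in $A''$, and its $\sim$-alignment with the second components of $\fcbael{W}{V}$ comes from your second $\Lambda$-hypothesis together with witness generation (Condition~\ref{dn-am}(\ref{dn-am-8})) for the composite mapping, yielding some $k$ with $\{ j, \bar{\jmath} \} \in \Con''(k)$. Once the merge is routed through Lemma~\ref{pn-amright} in this way, the reverse direction is considerably lighter than your sketch anticipates.
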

\begin{proof}
Let $(a, V) \in \Con$ and $\fcbael{W}{V} \in A' \rightarrow A''$ so that
\begin{equation}\label{eq-lamev2-0}
(a, V) \Lambda((G \times \Id\nolimits_{A'}) \circ \Ev) \fcbael{W}{V}.
\end{equation}
Moreover, assume that $\mathfrak{V} = \set{((i_\nu, X_\nu), e_\nu)}{1 \le\nu \le n}$. Then we have for every $1 \le \nu \le n$ that
\[
((a, i_\nu), V \times X_\nu) ((G \times \Id\nolimits_{A'}) \circ \Ev) e_\nu.
\]
Thus, there are $((\pair{\mathfrak{B}^{(\nu)} | \mathfrak{A}^{(\nu)}}, b_\nu), \mathbb{Z}_\nu) \in \Con_{(A' \rightarrow A'') \times A'}$ with
\begin{gather}
((a, i_\nu), V \times X_\nu) (G \times \Id\nolimits_{A'}) ((\pair{\mathfrak{B}^{(\nu)} | \mathfrak{A}^{(\nu)}}, b_\nu), \mathbb{Z}_\nu)  \label{eq-lamev2-1} \\
((\pair{\mathfrak{B}^{(\nu)} | \mathfrak{A}^{(\nu)}}, b_\nu), \mathbb{Z}_\nu)  \Ev e_\nu. \label{eq-lamev2-2}
\end{gather}
So, we obtain:
\begin{gather}
(a, V) G (\pair{\mathfrak{B}^{(\nu)} | \mathfrak{A}^{(\nu)}}, \pr_1(\mathbb{Z}_\nu)) \label{eq-lamev2-3} \\
(i_\nu, X_\nu) \vdash' (b_\nu, \pr_2(\mathbb{Z}_\nu)) \label{eq-lamev2-4} \\
(\mathfrak{B}_2^{(\nu)}(b_\nu, \pr_2(\mathbb{Z}_\nu)), \ap((b_\nu, \pr_2(\mathbb{Z}_\nu)), \bigcup \pr_{2,1}(\mathbb{Z}_\nu)) \vdash'' e_\nu. \label{eq-lamev2-5}
\end{gather}
Because of Statement~(\ref{eq-lamev2-4}) we moreover have that
\begin{gather}
\{ \mathfrak{B}_2^{(\nu)}(b_\nu, \pr_2(\mathbb{Z}_\nu)) \} \in \Con'(\mathfrak{B}_2^{(\nu)}(i_\nu, X_\nu)) \label{eq-lamev2-6} \\
\ap((b_\nu, \pr_2(\mathbb{Z}_\nu)), \bigcup \pr_{2,1}(\mathbb{Z}_\nu)) \subseteq \ap((i_\nu, X_\nu), \bigcup \pr_{2,1}(\mathbb{Z}_\nu)) \label{eq-lamev2-7}
\end{gather}
and hence that
\begin{equation} \label{eq-lamev2-8}
(\mathfrak{B}_2^{(\nu)}(i_\nu, X_\nu), \ap((i_\nu, X_\nu), \bigcup \pr_{2,1}(\mathbb{Z}_\nu))) \vdash'' e_\nu.
\end{equation}
By Statement~(\ref{eq-lamev2-3}) and Lemma~\ref{pn-amright} there exists $(\fcbael{E}{D}, \fcbaset{E}{D}) \in \Con_{A' \rightarrow A''}$ with
\begin{gather}
(a, V) G (\fcbael{E}{D}, \fcbaset{E}{D}) \label{eq-lamev2-9} \\
(\fcbael{E}{D}, \fcbaset{E}{D}) \vdash_{A' \rightarrow A''} (\pair{\mathfrak{B}^{(\nu)} | \mathfrak{A}^{(\nu)}}, \pr_1(\mathbb{Z}_\nu)), \label{eq-lamev2-10}
\end{gather}
for all $1 \le \nu \le n$. Thus, $\{ \pair{\mathfrak{B}^{(\nu)} | \mathfrak{A}^{(\nu)}} \} \in \Con_{A' \rightarrow A''}(\fcbael{E}{D})$. By Condition~\ref{dn-fctsp}(\ref{dn-fctsp-2-1}) it follows that 
\[
\mathfrak{E}_2(i_\nu, X_\nu) \sim \mathfrak{B}_2^{(\nu)}(i_\nu, X_\nu) \,[\ap((i_\nu, X_\nu), \mathfrak{A}^{(\nu)})].
\] 
Since $\pr_1(\mathbb{Z}_\nu) \in \Con_{A' \rightarrow A''}(\pair{\mathfrak{B}^{(\nu)} | \mathfrak{A}^{(\nu)}})$, we hence obtain with Condition~\ref{dn-fctsp}(\ref{dn-fctsp-2-2}) that moreover
\[
\mathfrak{E}_2(i_\nu, X_\nu) \sim \mathfrak{B}_2^{(\nu)}(i_\nu, X_\nu) \,[\ap((i_\nu, X_\nu), \bigcup \pr_{2,1}(\mathbb{Z}_\nu))].
\]
As a consequence of Statement~(\ref{eq-lamev2-8}) we therefore gain that
\[
(\mathfrak{E}_2(i_\nu, X_\nu), \ap((i_\nu, X_\nu), \bigcup \pr_{2,1}(\mathbb{Z}_\nu))) \vdash'' e_\nu.
\]
Because it follows with Statement~(\ref{eq-lamev2-10}) that
\[
(\mathfrak{E}_2(i_\nu, X_\nu), \ap((i_\nu, X_\nu), \bigcup \mathcal{D})) \vdash'' \ap((i_\nu, X_\nu), \bigcup \pr_{2,1}(\mathbb{Z}_\nu)),
\]
we conclude that
\begin{equation*}\label{eq-lamev2-11}
(\mathfrak{E}_2(i_\nu, X_\nu), \ap((i_\nu, X_\nu), \bigcup \mathcal{D})) \vdash'' e_\nu,
\end{equation*}
from which we additionally gain that for $((t, S), c) \in \mathfrak{W}$,
\[
(\mathfrak{E}_2(t, S), \bigcup \set{\ap((d, U), \bigcup \mathcal{D})}{(d, U) \in \ds((t, S), \mathfrak{V})}) \vdash'' \rs((t, S), \mathfrak{V}).
\]
Because of Axiom~\ref{dn-infsys}(\ref{dn-infsys-11}) there is thus some $j \in A''$ so that
\begin{gather*}
\rs((t, S), \mathfrak{V}) \in \Con''(j) \\
(\mathfrak{E}_2(t, S), \bigcup \set{\ap((d, U), \bigcup \mathcal{D})}{(d, U) \in \ds((t, S), \mathfrak{V})}) \vdash'' j. \label{eq-lamev2-12+}
\end{gather*}
It follows that also
\[
(\mathfrak{E}_2(t, S), \ap((t, S), \bigcup \mathcal{D})) \vdash'' (j, \rs((t, S), \mathfrak{V})).
\]

By definition, $(t, S) \vdash' \ucl((t,S), \pr_1(\bigcup \mathcal{D} \cup \mathfrak{D}))$. Hence, there is some $(b, T) \in \Con'$ with
\begin{gather*}
(t, S) \vdash' (b, T) \\
(b, T) \vdash' \ucl((t,S), \pr_1(\bigcup \mathcal{D} \cup \mathfrak{D})).
\end{gather*}
Then 
\[
\ap((b, T), \bigcup \mathcal{D}) = \ap((t, S), \bigcup \mathcal{D})\,\,\text{and}\,\,\mathfrak{E}_2(b, T) = \mathfrak{E}_2(t, S).
\]
Thus, we also have that
\begin{equation*}\label{eq-lamev2-12}
(\mathfrak{E}_2(b, t), \ap((b, T), \bigcup \mathcal{D})) \vdash'' (j, \rs((t, S), \mathfrak{V})).
\end{equation*}

So far we have shown that
\begin{gather*}
((a, t), V \times S) (G \times \Id\nolimits_{A'}) ((\fcbael{E}{D}, b), \fcbaset{E}{D} \times T) \\
((\fcbael{E}{D}, b), \fcbaset{E}{D} \times T) \Ev (j, \rs((t, S), \mathfrak{V})),
\end{gather*}
from which we obtain that
\[
((a, t), V \times S) ((G \times \Id\nolimits_{A'}) \circ \Ev) (j, \rs((t, S), \mathfrak{V})).
\]
Because of Statement~(\ref{eq-lamev2-0}) and Condition~\ref{dn-lambda}(\ref{dn-lambda-2}) there is also some $\bar{\jmath} \in A''$ so that 
\[
((a, t), V \times S) ((G \times \Id\nolimits_{A'}) \circ \Ev) \bar{\jmath} \,\,\text{and}\,\, \bar{\jmath} \sim c \,[\rs((t, S), \mathfrak{V})].
\]
With Condition~\ref{dn-am}(\ref{dn-am-7}) there then some $k \in A''$ such that $\{ j, \bar{\jmath} \} \in \Con''(k)$ and $((a, t), V \times S) ((G \times \Id\nolimits_{A'}) \circ \Ev) k$. It follows that also $j \sim c \,[\rs((t, S), \mathfrak{V})]$. So, we have shown that
\[
(\fcbael{E}{D}, \fcbaset{E}{D}) \vdash_{A' \rightarrow A''} \fcbael{W}{V},
\]
from which we obtain with Statement~(\ref{eq-lamev2-9}) that $(a, V) G \fcbael{W}{V}$.

Next, conversely, let $(a, V) \in \Con$ and $\fcbael{W}{V} \in A' \rightarrow A''$ with
\[
(a, V) G \fcbael{W}{V}.
\]
Then there is some $(\fcbael{B}{A}, \fcbaset{G}{F}) \in \Con_{A' \rightarrow A''}$ with
\begin{gather}
(a, V) G (\fcbael{B}{A}, \fcbaset{G}{F}) \\
(\fcbael{B}{A}, \fcbaset{G}{F}) \vdash_{A' \rightarrow A''} \fcbael{W}{V}. \label{eq-lamev2-13}
\end{gather}
Let $\mathfrak{V} = \set{(i_\nu, X_\nu), e_\nu)}{1 \le \nu \le m}$. Then we obtain for $1 \le \nu \le m$ that
\[
(\mathfrak{B}_2(i_\nu, X_\nu), \ap((i_\nu, X_\nu), \bigcup \mathcal{F})) \vdash'' e_\nu.
\]
Moreover, there are $(b_\nu, Y_\nu) \in \Con'$, for $1 \le \nu \le m$, with
\begin{gather*}
(i_\nu, X_\nu) \vdash' (b_\nu, Y_\nu) \\
(b_\nu, Y_\nu) \vdash' \ucl((i_\nu, X_\nu), \pr_1(\bigcup \mathcal{F} \cup \mathcal{A})).
\end{gather*}
It follows that
\[
\ap((i_\nu, X_\nu), \bigcup\mathcal{F}) = \ap((b_\nu, Y_\nu), \bigcup\mathcal{F}) \,\,\text{and}\,\, 
\mathfrak{B}_2(i_\nu, X_\nu) = \mathfrak{B}_2(b_\nu, Y_\nu).
\]
Thus,
\[
(\mathfrak{B}_2(b_\nu, Y_\nu), \ap((b_\nu, Y_\nu), \bigcup \mathcal{F})) \vdash'' e_\nu.
\]
Set $\mathbb{Z}_\nu = \fcbaset{G}{F} \times Y_\nu$. Then $\mathbb{Z}_\nu \in \Con_{(A' \rightarrow A'') \times A'}(\fcbael{B}{A}, b_\nu)$ and
\begin{gather*}
((a, i_\nu), V \times X_\nu) (G \times \Id\nolimits_{A'}) ((\fcbael{B}{A}, b_\nu), \mathbb{Z}_\nu) \\
((\fcbael{B}{A}, b_\nu), \mathbb{Z}_\nu) \Ev e_\nu,
\end{gather*}
which means we have that
\[
((a, i_\nu), V \times X_\nu) ((G \times \Id\nolimits_{A'}) \circ \Ev) e_\nu.
\]
It remains to verify Condition~\ref{dn-lambda}(\ref{dn-lambda-2}). Let to this end $((c, T), d) \in \mathfrak{W}$. Because of Statement~(\ref{eq-lamev2-13}) and Condition~\ref{dn-fctsp}(\ref{dn-fctsp-3-2}) there is some $k \in A''$ with 
\begin{gather}
k \sim d \,[\rs((c, T), \mathfrak{V})] \label{eq-lamev2-14} \\
(\mathfrak{B}_2(c, T), \bigcup \set{\ap((e, U), \bigcup \mathcal{F})}{(e, U) \in \ds((c, T), \mathfrak{V})}) \vdash'' k.
\end{gather}
It follows that also
\[
(\mathfrak{B}_2(c, T), \ap((c, T), \bigcup \mathcal{F})) \vdash'' (k, \rs((c, T), \mathfrak{V})).
\]

Let $(\bar{b}, \overline{Y}) \in \Con'$ such that
\begin{gather*}
(c, T) \vdash' (\bar{b}, \overline{Y}) \\
(\bar{b}, \overline{Y}) \vdash' \ucl((c, T), \pr_1(\bigcup \mathcal{F} \cup \mathfrak{A})).
\end{gather*}
Then 
\[
\ap((c, T), \bigcup \mathcal{F}) = \ap((\bar{b}, \overline{Y}), \bigcup \mathcal{F}) \,\,\text{and}\,\, \mathfrak{B}_2(c, T) = \mathfrak{B}_2(\bar{b}, \overline{Y})
\]
and hence
\[
(\mathfrak{B}_2(\bar{b}, \overline{Y}), \ap((\bar{b}, \overline{Y}), \bigcup \mathcal{F})) \vdash'' (k, \rs((c, T), \mathfrak{V})).
\]
Set $\mathbb{Z} = \fcbaset{G}{F} \times \overline{Y}$. Then $\mathbb{Z} \in \Con_{(A' \rightarrow A'') \times A'}(\fcbael{B}{A}, \bar{b}))$ and
\begin{gather*}
((a, c), V \times T) (G \times \Id\nolimits_{A'})((\fcbael{B}{A}, \bar{b}), \mathbb{Z}) \\
((\fcbael{B}{A}, \bar{b}), \mathbb{Z}) \Ev (k, \rs((c, T), \mathfrak{V})),
\end{gather*}
which implies that
\[
((a, c), V \times T) ((G \times \Id\nolimits_{A'}) \circ \Ev) k.
\]
Because of Statement~(\ref{eq-lamev2-14}) this shows that 
\[
(a, V) \Lambda((G \times \Id\nolimits_{A'}) \circ \Ev) \fcbael{W}{V}. \qedhere
\]
\end{proof}

\begin{proposition}\label{pn-exp}
Let $A$ and $A'$ be information system with witnesses. Then $(A \rightarrow A', \Ev)$ is their exponent in $\mathbf{ISW}$.
\end{proposition}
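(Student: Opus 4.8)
The plan is to read off the defining universal property of an exponent directly from the four preceding lemmas, so that the proof becomes the formal observation that currying and uncurrying are mutually inverse. Recall that in a category with finite products an object $E$ together with a morphism $\mathrm{ev}\colon E \times A \to A'$ is an exponent of $A$ and $A'$ exactly when, for every object $B$ and every morphism $H\colon B \times A \to A'$, there is a \emph{unique} morphism $g\colon B \to E$ with $(g \times \Id_A) \circ \mathrm{ev} = H$, where $\circ$ is read in the diagrammatic order of Section~\ref{sec-am} (first $g \times \Id_A$, then $\mathrm{ev}$). Here $E = A \rightarrow A'$ and $\mathrm{ev} = \Ev$. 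The object is an information system with witnesses by Proposition~\ref{pn-fctinfsys}, and $\apmap{\Ev}{(A \rightarrow A') \times A}{A'}$ is an approximable mapping by Lemma~\ref{lem-evalam}; thus the candidate exponent together with its evaluation is in place, and it remains only to verify the universal property.

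First I would establish existence of the transpose. Given an arbitrary information system with witnesses $B$ and an approximable mapping $H\colon B \times A \Vdash A'$, set $g = \Lambda(H)$. Lemma~\ref{lem-lambam}, read with the roles of its $A,A',A''$ played by $B,A,A'$, shows that $\apmap{\Lambda(H)}{B}{A \rightarrow A'}$ is an approximable mapping, so $g$ is indeed a morphism of $\mathbf{ISW}$. The product morphism $g \times \Id_A$ exists because $\mathbf{ISW}$ is closed under finite products (recalled at the start of Section~\ref{sec-cc}), and Lemma~\ref{lem-lamev1} gives precisely $(\Lambda(H) \times \Id_A) \circ \Ev = H$. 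Hence $g$ satisfies the required equation.

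For uniqueness, suppose $\apmap{G}{B}{A \rightarrow A'}$ is any morphism with $(G \times \Id_A) \circ \Ev = H$. Applying $\Lambda$ and invoking Lemma~\ref{lem-lamev2} under the same relabelling yields $G = \Lambda((G \times \Id_A) \circ \Ev) = \Lambda(H) = g$. Thus the transpose is unique, and the assignments $H \mapsto \Lambda(H)$ and $G \mapsto (G \times \Id_A) \circ \Ev$ are mutually inverse bijections between the morphisms $B \times A \Vdash A'$ and the morphisms $B \Vdash A \rightarrow A'$. This is exactly the universal property, whence $(A \rightarrow A', \Ev)$ is the exponent of $A$ and $A'$ in $\mathbf{ISW}$.

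Since all the substantive work resides in the cited lemmas, the proof of the proposition is essentially formal. The only point I would take care over is the bookkeeping of the relabelling: Definition~\ref{dn-lambda} and Lemmas~\ref{lem-lambam}, \ref{lem-lamev1}, \ref{lem-lamev2} are stated for $\Lambda(H)$ with source $A$, argument $A'$ and target $A''$, whereas the exponent of $A$ and $A'$ demands an arbitrary source $B$ with argument $A$ and target $A'$; one must confirm that this substitution is compatible with the composition convention, so that $(g \times \Id_A)\circ\Ev$ genuinely denotes the desired composite. I expect no real obstacle here, merely the need to keep the indices straight — the genuine difficulty of the construction lives in the $\eta$-rule, Lemma~\ref{lem-lamev2}, whose inductive construction of the associate $\mathfrak{B}$ we are entitled to assume.
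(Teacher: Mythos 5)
Your proposal is correct and takes essentially the same route as the paper: the paper leaves Proposition~\ref{pn-exp} as the immediate assembly of Proposition~\ref{pn-fctinfsys} and Lemmas~\ref{lem-evalam}, \ref{lem-lambam}, \ref{lem-lamev1} and \ref{lem-lamev2}, which is exactly your existence-plus-uniqueness argument via $\Lambda$ and the two composition identities. Your explicit attention to the relabelling of $A, A', A''$ to $B, A, A'$ and to the diagrammatic order of composition is sound and, if anything, makes the formal step more transparent than the paper's silent treatment.
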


With Propositions~\ref{prop-fct-alg} and \ref{prop-fct-bc} we moreover have that if both $A$ and $A'$ satisfy Condition~(\ref{alg}), (\ref{bc}), or both of them, then $(A \rightarrow A', \Ev)$ is their exponent in \textbf{aISW}, \textbf{bcISW}, and \textbf{abcISW}, respectively.

As we have already seen, \text{ISW} as well as \textbf{aISW}, \textbf{bcISW}, and \textbf{abcISW} contain a terminal object. Moreover, we have shown how to construct the categorical product of information systems with witnesses.

\begin{theorem}\label{thm-ccc}
The category \textbf{ISW} of information systems with witnesses and approximable mappings as well as its full subcategories \textbf{aISW}, \textbf{bcISW}, and \textbf{abcISW}, respectively, of information systems with witnesses satisfying Condition~(\ref{alg}), (\ref{bc}), or both of them are Cartesian closed.
\end{theorem}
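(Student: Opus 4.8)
The plan is to assemble the standard criterion for Cartesian closure from the pieces already in place: a category is Cartesian closed exactly when it has a terminal object, binary products, and, for every pair of objects, an exponential object carrying an evaluation morphism that enjoys the usual universal property. For $\mathbf{ISW}$ all three ingredients are available. The one-point system $T$ is a terminal object, the product $(A_\times, \Pr_1, \Pr_2)$ has been exhibited explicitly, and Proposition~\ref{pn-exp} states that $(A \rightarrow A', \Ev)$ is the exponent of $A$ and $A'$. Assembling these yields the Cartesian closure of $\mathbf{ISW}$ at once, so the first task is simply to record that Proposition~\ref{pn-exp} already packages the complete universal property.

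To make that explicit I would argue as follows. Given any $\apmap{H}{A \times A'}{A''}$, Lemma~\ref{lem-lambam} supplies a morphism $\apmap{\Lambda(H)}{A}{A' \rightarrow A''}$, and Lemma~\ref{lem-lamev1} gives $(\Lambda(H) \times \Id_{A'}) \circ \Ev = H$, which is existence of a transpose. For uniqueness, suppose $\apmap{G}{A}{A' \rightarrow A''}$ also satisfies $(G \times \Id_{A'}) \circ \Ev = H$; then Lemma~\ref{lem-lamev2} forces $G = \Lambda((G \times \Id_{A'}) \circ \Ev) = \Lambda(H)$. Hence the assignments $H \mapsto \Lambda(H)$ and $G \mapsto (G \times \Id_{A'}) \circ \Ev$ are mutually inverse, producing the bijection $\mathbf{ISW}(A \times A', A'') \cong \mathbf{ISW}(A, A' \rightarrow A'')$ that witnesses the exponential; naturality comes for free from the universal property of $\Ev$.

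Finally I would transfer the conclusion to the full subcategories $\mathbf{aISW}$, $\mathbf{bcISW}$, and $\mathbf{abcISW}$. From the results recalled from~\cite{sp?} each of them has a terminal object and is closed under finite products, so only closure under the exponent construction remains. This is exactly what Propositions~\ref{prop-fct-alg} and~\ref{prop-fct-bc} deliver: if $A$ and $A'$ both satisfy Condition~(\ref{alg}) (respectively, if $A'$ satisfies Condition~(\ref{bc}), respectively both), then $A \rightarrow A'$ again satisfies the corresponding condition and so lies in the subcategory. Because these subcategories are full, $\Ev$ and every transpose $\Lambda(H)$ are automatically morphisms there, and the bijection of the previous paragraph restricts verbatim. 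The genuine difficulties of the whole development lie earlier, in the function space construction of Section~\ref{sec-fctsp} and in the two conversion identities (Lemmas~\ref{lem-lamev1} and~\ref{lem-lamev2}); for the theorem itself the only point needing care is the passage to full subcategories, namely confirming that $T$ trivially satisfies Conditions~(\ref{alg}) and~(\ref{bc}) (it has only the token $\Delta$) and that closure under products together with Propositions~\ref{prop-fct-alg} and~\ref{prop-fct-bc} keeps every object occurring in the adjunction bijection inside the subcategory.
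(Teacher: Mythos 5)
Your proposal is correct and follows essentially the same route as the paper: the theorem is obtained by assembling the terminal object and binary products recalled from~\cite{sp?} with the exponent of Proposition~\ref{pn-exp} (whose content is exactly Lemmas~\ref{lem-lambam}, \ref{lem-lamev1} and~\ref{lem-lamev2}, which give existence and uniqueness of the transpose just as you spell out), and then transferring to the full subcategories via Propositions~\ref{prop-fct-alg} and~\ref{prop-fct-bc}. Your added care about the asymmetry of the hypotheses (both systems needing~(\ref{alg}) versus only $A'$ needing~(\ref{bc})) and about the terminal object lying in the subcategories matches the paper's implicit use of the corresponding facts from~\cite{sp?}.
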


\section{Final remarks}\label{sec-finrem}

This paper is a continuation of Ref.\ \cite{sp21} where information systems with witnesses were introduced as a logic-style representation of L-domains: the category of  these information systems with approximable mappings as morphisms was shown to be equivalent to the category of L-domains with Scott continuous functions. As demonstrated in Ref. \cite{ju89}, the latter category is one of the two Cartesian closed full subcategories of the category of continuous domains. It follows in particular that the category of information systems with witnesses  is Cartesian closed as well. In the present paper a direct construction of exponentiation in this category is presented. 

Logic-style representations of domains---in particular those forming a Cartesian closed category---allow incorporating the domains into proof assistants, as done in the Minlog system, developed by the Munich logic group (cf. Ref.\ \cite{min}).

Minlog  is an interactive proof system based on first order natural deduction calculus. It is intended to reason about higher-type computable functionals, using minimal rather than classical or intuitionistic logic. Minlog implements a \emph{theory of computable functionals}. The underlying semantics is the Scott-Ershov model of partial continuous functionals, with \emph{free algebras} as base types. These algebras are viewed as domains represented by Scott's information systems, whose tokens are constructor trees possibly involving the symbol $\ast$ (``no information") (Ref.\ \cite{beal}).

By using information systems with witnesses instead of Scott's information systems a larger class of data structures and computable functionals can be considered. Examples of topological hyperspaces are known that are L-domains with respect to superset inclusion, but are not bounded-complete (cf.\  e.g.\ Ref.\ \cite[p.\ 58]{ju89}).

\end{document}